\newcommand{\N}{\mathbb{N}}
\newcommand{\R}{\mathbb{R}}
\newcommand{\C}{\mathbb{C}}
\newcommand{\Id}{\mathds{1}}
\newcommand{\Hi}{\mathscr{H}}
\newcommand{\eps}{\varepsilon}
\newcommand{\Tr}{\mathrm{Tr}}
\newcommand\norm[1]{\left\lVert#1\right\rVert}
\newcommand\abs[1]{\left\lvert#1\right\rvert}
\newcommand{\di}{\mathrm{d}}
\numberwithin{equation}{section}
\theoremstyle{plain} 
\newtheorem{thm}[equation]{Theorem}
\newtheorem{cor}[equation]{Corollary}
\newtheorem{lem}[equation]{Lemma}
\newtheorem{prop}[equation]{Proposition}
\theoremstyle{definition}
\newtheorem{defn}[equation]{Definition}
\theoremstyle{remark}
\newtheorem{rem}[equation]{Remark}
\begin{document}

\title[Derivation of Landau-Pekar]{Wigner measure approach to the derivation of the Landau-Pekar equations in the mean-field limit}

\author[R.\ Gautier]{Raphaël Gautier} 
\address{Dipartimento di
  Matematica\\Politecnico di Milano\\P.zza Leonardo da Vinci 32\\20133,
  Milano\\Italy\\\& Université de Rennes\\ Campus de Beaulieu\\263, avenue du
  Général Leclerc\\35042 RENNES CEDEX\\France}
\email{raphael.gautier@univ-rennes.fr}

\date{}

\subjclass[2020]{}
\keywords{}

\begin{abstract}

We provide a rigorous derivation of the Landau-Pekar equations from the Fröhlich Hamiltonian in the mean-field limit using Wigner measure techniques. On the classical side, we extend the global well-posedness results up to $L^2 \oplus L^2$. For obtaining the classical limit, we make a crucial use on the quantum side of the Gross transform, allowing us to study the system on the energy space, and with no ultraviolet cutoff. 

\end{abstract}

\maketitle
\tableofcontents

\maketitle

\section{Introduction} \label{Introduction}
The Fröhlich Hamiltonian describes the energy of a quantum system consisting of an electron coupled with a phonon field. The electron, together with the induced polarization of the field, can be viewed as a (quasi)particle, called the polaron. In this work, we consider this Hamiltonian for a system of $n$ bosonic particles coupled with the phonon field, which is represented by a semiclassical bosonic field, in the mean-field limit, i.e., in the limit $n \eps \rightarrow 1$. We prove, in this regime, Bohr's correspondence principle, linking the quantum evolution by the Fröhlich Hamiltonian to the classical evolution described by the flow of the corresponding classical system, the Landau-Pekar equations. 
\newline \newline
Fröhlich's polaron has been extensively studied in the so-called strong coupling limit, where a lot of investment has been put on the study of the self-energy or the effective mass of the quasi-particle, see for instance \cite{lieb1958,lieb1997,Sei02,Mol06,MS07,Ric14,Sei19,leopoldmitrouskas2021, BS23, BS24, BM24} on the analytic side via effective evolutions of the system, and \cite{DV83,DS20,BP23,Sell24,BSS25} for a probabilistic side based on point process representation of the path measure of the polaron.
Concerning the classical system, that is the Landau-Pekar equations, one can check \cite{LR13a,LR13b,FG17,Gri17,GSS17,leopoldmitrouskas2021,leopold2021,leopoldrademacher2021,FRS22}, and for quantum corrections, see e.g. \cite{FS19,FS21}.
\newline \newline 
In this work, we take a measure-theoretical approach using the Wigner, or semiclassical measures.
This allows to take initial quantum states lying in the energy space $H^1 \oplus L^2$, with no additional regularity restriction.
The cost of these techniques however is the inability of describing the rate of convergence. This compromise can be seen as a complement to the work \cite{leopold2021}.
By construction, this paper, in a mathematical aspect, resembles more the work \cite{ammari2017sima}, which uses similar techniques to handle the renormalized Nelson model in the same regime.
\newline \newline  
For studying the mean-field limit, we follow the approach initiated in \cite{ammari2008ahp} on the semiclassical limit of second quantized systems arising from an infinite dimensional phase-space. Along the same lines has followed multiple work on Egorov-type theorems in many nonrelativistic quantum field models.
Let us mention the Nelson model with cutoff \cite{ammari2014jsp,farhat2024}, or without \cite{ammari2017sima}, and the Pauli-Fierz model \cite{afh22}.
Another point of view on systems that are partially semiclassical led to the works \cite{CF18,correggi2021jems, correggi2020arxiv} studying Egorov type theorems and ground state energies for the Nelson, Pauli-Fierz and Fröhlich models in the latter scaling, and \cite{fantechi2024arxiv}, in the context of decoherence for a spin system coupled to a reservoir. 
\\ \\
The interaction in the electron-phonon system is singular enough to complicate the domain of definition of the Hamiltonian, making it unable to use a Rellich-Kato argument. For that purpose, the analysis of the polaron Hamiltonian usually comes along with the analysis of the Gross transform, introduced in \cite{gross1692}, transforming the latter into a dressed Hamiltonian whose domain is the domain of the free Hamiltonian, that is, the energy space. A singularity of this kind also appears in the Nelson model \cite{griesemer2018}, and we will make a crucial use of the results of \cite{griesemer2016}. Our strategy will then be to study the mean-field dynamics of the dressed model and going back to the original model by inverting the Gross transform.
\\ \\ 
\textbf{Main results.} The Fröhlich Hamiltonian for a system of $n$ bosonic particles interacting with a bosonic field is
\begin{equation} \label{quantumhamiltonian}
H_\infty^{(n)} = \sum_{j=1}^n - \eps \Delta_{x_j} \otimes \Id +\Id \otimes  \di \Gamma_\eps (\Id) + \eps \sum_{j=1}^n \left(a_\eps \left( \frac{e^{-i k .x_j}}{\abs{k}^{(d-1)/2}} \right)  +  a_\eps^* \left( \frac{e^{-i k .x_j}}{\abs{k}^{(d-1)/2}} \right)\right)
\end{equation}
which is acting on the Hilbert space
$$
\Hi_n := (L^2(\R^d))^{\otimes_s n} \otimes \mathscr F_s(L^2(\R^d)),
$$
with $d \geq 3$. We aim to study the mean-field limit of the evolution of the system, that is, in the limit $n \eps \rightarrow 1$. The classical system is described by the classical Hamiltonian
\begin{equation} \label{classicalhamiltonian}
    h_\infty(u,\alpha) = \norm{\nabla u}_{L^2}^2 + \norm{\alpha}_{L^2}^2 + \int A_\alpha (x) \abs{u(x)}^2 \di x,
\end{equation}
where 
$$
A_\alpha(x) = 2 \mathrm{Re} \mathcal F \left(  \frac{\overline{\alpha}}{\abs{k}^{(d-1)/2}} \right)(x) = \int( \overline{\alpha}(k) e^{-i x .k} + \alpha(k) e^{i x.k} ) \frac{\di k}{\abs{k}^{(d-1)/2}}.
$$
 The corresponding Hamilton-Jacobi equations, which are called Landau-Pekar equations, are
 \begin{equation} \label{PDE}
\begin{cases}
    i \partial_t u = - \Delta u + A_\alpha u, \\
    i \partial_t \alpha = \alpha + \frac{1}{\abs{k}^{(d-1)/2}} \mathcal F (\abs{u}^2 ).    
\end{cases}
\end{equation}

We have the following well-posedness results:

\begin{prop}
    \label{thmgwp}
 (global existence in $H^1 \oplus L^2$)
For any $(u_0,\alpha_0) \in H^1 \oplus L^2$, there exists a unique $(u,\alpha) \in \mathscr C (\R,H^1 \oplus L^2)$ solution of \eqref{PDE} with $(u(0),\alpha(0)) = (u_0,\alpha_0)$. Furthermore, the mass $\norm{u}_{L^2}^2$ and the energy $h_\infty$ are conserved by the evolution.
\end{prop}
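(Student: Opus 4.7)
My plan follows the standard two-step scheme for Hamiltonian dispersive PDEs: local well-posedness in $H^1\oplus L^2$ via a fixed-point (or compactness-plus-uniqueness) argument, then globalization using the conserved mass and energy.

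For the local theory, I would work on the mild (Duhamel) form
$$u(t)=e^{it\Delta}u_0-i\int_0^t e^{i(t-s)\Delta}\bigl(A_{\alpha(s)}u(s)\bigr)\,\di s, \quad \alpha(t)=e^{-it}\alpha_0-i\int_0^t e^{-i(t-s)}\,\tfrac{1}{|k|^{(d-1)/2}}\mathcal F\bigl(|u(s)|^2\bigr)\,\di s.$$
The main bilinear estimate, which drives both equations, is
$$\Bigl\|\tfrac{1}{|k|^{(d-1)/2}}\mathcal F(|u|^2)\Bigr\|_{L^2}^2 = c_d\iint\frac{|u(x)|^2|u(y)|^2}{|x-y|}\,\di x\,\di y\ \lesssim\ \|u\|_{H^1}^4,$$
obtained by rewriting the left-hand side via Plancherel (since $|k|^{-(d-1)}$ is, up to a constant, the Fourier transform of $|x|^{-1}$) and then applying Hardy–Littlewood–Sobolev together with the subcritical Sobolev embedding $H^1\hookrightarrow L^{4d/(2d-1)}$, valid for all $d\geq 3$. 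This controls the $\alpha$-equation directly in $L^2$. The same Coulomb-type inequality, applied to the product $u\bar v$, yields by duality $|\int A_\alpha u\bar v\,\di x|\lesssim \|\alpha\|_{L^2}\|u\|_{H^1}\|v\|_{H^1}$, i.e.\ $\|A_\alpha u\|_{H^{-1}}\lesssim \|\alpha\|_{L^2}\|u\|_{H^1}$. Since $A_\alpha$ is not in $L^\infty$ when $\alpha$ is only in $L^2$, the remaining derivative loss in the $H^1$-estimate of the $u$-Duhamel integral is absorbed by Strichartz estimates for $e^{it\Delta}$ with an admissible pair adapted to $d$; alternatively, one can run a regularization scheme (local existence in $H^2\oplus H^1$, where $A_\alpha\in L^\infty$ thanks to the extra decay of $\alpha$ in $k$) and pass to the $H^1\oplus L^2$ limit via weak-strong uniqueness derived from the same dual estimate. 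In either route one obtains local existence, uniqueness, and the usual blow-up alternative.

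The conservation laws are then standard. Because $A_\alpha$ is real-valued by construction, $\tfrac{\di}{\di t}\|u\|_{L^2}^2=2\mathrm{Im}\int(-\Delta u+A_\alpha u)\bar u\,\di x=0$ on smooth approximations, and energy conservation is the Hamiltonian identity $i\partial_t u=\partial_{\bar u}h_\infty$, $i\partial_t\alpha=\partial_{\bar\alpha}h_\infty$, which extends to $H^1\oplus L^2$ solutions by density and continuous dependence. To globalize, the same HLS bound combined with Gagliardo–Nirenberg interpolation ($\|u\|_{L^{4d/(2d-1)}}\lesssim \|u\|_{L^2}^{3/4}\|\nabla u\|_{L^2}^{1/4}$ for all $d\geq 3$) gives
$$\Bigl|\int A_\alpha |u|^2\,\di x\Bigr|\leq C\|\alpha\|_{L^2}\|u\|_{L^2}^{3/2}\|\nabla u\|_{L^2}^{1/2}.$$
Two applications of Young's inequality then absorb the interaction into the quadratic part of $h_\infty$, yielding $\tfrac12\|\nabla u\|_{L^2}^2+\tfrac12\|\alpha\|_{L^2}^2\leq h_\infty(u,\alpha)+C(\|u\|_{L^2})$. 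Since $\|u\|_{L^2}$ and $h_\infty$ are conserved, $(u(t),\alpha(t))$ stays bounded in $H^1\oplus L^2$, and the blow-up alternative forces $T_{\max}=+\infty$.

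The main technical obstacle is the local well-posedness step. The potential $A_\alpha$ fails to be in $L^\infty$ for $\alpha\in L^2$, so $\|A_\alpha u\|_{H^1}$ cannot be treated as a simple product; one must either use Strichartz spaces to redistribute the lost derivative — picking admissible pairs compatible with the singular weight $|k|^{-(d-1)/2}$ uniformly in $d\geq 3$ — or rely on a regularization-plus-limit procedure that crucially uses the $H^{-1}$-duality estimate above to recover uniqueness at the critical regularity.
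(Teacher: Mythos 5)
Your proposal is correct and takes essentially the same route as the paper. The key ingredients match: the Coulomb/Hardy--Littlewood--Sobolev estimate $\bigl\|\,|k|^{-(d-1)/2}\mathcal F(|u|^2)\bigr\|_{L^2}\lesssim\|u\|_{L^{4d/(2d-1)}}^2\lesssim\|u\|_{H^1}\|u\|_{L^2}$, Strichartz estimates for the $u$-equation (the paper uses the admissible pair $(4,2d/(d-1))$), a Banach fixed point in a closed ball of $\mathscr C([0,T],H^1\oplus L^2)$ giving local well-posedness with a Lipschitz flow map, and the Gagliardo--Nirenberg interpolation with exponent $\theta=1/4$ together with Young's inequality to turn the conserved mass and energy into an a priori $H^1\oplus L^2$ bound.

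Two caveats on the secondary details of your sketch, neither fatal. First, your alternative route for the local theory (regularize so that $\langle k\rangle\alpha\in L^2$ and $A_\alpha\in L^\infty$, then pass to the limit by ``weak-strong uniqueness derived from the same dual estimate'') would not close as written: the estimate $\|A_\beta u\|_{H^{-1}}\lesssim\|\beta\|_{L^2}\|u\|_{H^1}$ pairs the difference $w$ of two solutions in $H^1$, which the $L^2$-Gronwall does not control. What one actually needs is the sharper $\|A_\beta u\|_{L^2}\lesssim\|\beta\|_{L^2}\|u\|_{H^1}$ (from $\dot H^{(d-1)/2}\hookrightarrow L^{2d}$ and $H^1\hookrightarrow L^{2d/(d-1)}$), together with $\bigl\||k|^{-(d-1)/2}\mathcal F(w\bar u)\bigr\|_{L^2}\lesssim\|u\|_{H^1}\|w\|_{L^2}$, which gives a linear Gronwall inequality for $\|w\|_{L^2}^2+\|\beta\|_{L^2}^2$. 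The paper sidesteps this by extracting uniqueness directly from the contraction, and uses regularization (via $\chi_m(D_x)u_0$, $\chi_m\alpha_0$, propagated in $H^2\oplus\mathcal F H^1$) only to justify differentiating the mass and energy. Second, the phrase ``extends to $H^1\oplus L^2$ solutions by density and continuous dependence'' tacitly requires the continuity of $h_\infty:H^1\oplus L^2\to\R$; the paper devotes a separate step to this, proving it with the same Gagliardo--Nirenberg bound you use for globalization, and it deserves to be spelled out since the cubic interaction term is at the edge of the regularity scale.
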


\begin{thm} \label{thmgwpl2}
 (global existence in $L^2 \oplus L^2$)
For any $(u_0,\alpha_0) \in L^2 \oplus L^2$, there exists a unique $(u,\alpha) \in \mathscr C (\R,L^2) \cap L^2_{\mathrm{loc}}(\R,L^{2d/(d-2)}) \times \mathscr C(\R,L^2)$ solution of \eqref{PDE} with $(u(0),\alpha(0)) = (u_0,\alpha_0)$. Furthermore, the mass $\norm{u}_{L^2}^2$ is conserved by the evolution.
\end{thm}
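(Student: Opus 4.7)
The plan is to establish local well-posedness by a Picard iteration in the Strichartz space, verify mass conservation, and extend the solution globally via a Gronwall-type control on $\|\alpha\|_{L^2}$. I work in
\[
Y_T := X_T \times \mathscr C([-T,T], L^2), \qquad X_T := \bigcap_{(q,r)\text{ admissible}} L^q([-T,T], L^r(\R^d)),
\]
the intersection running over pairs with $2/q + d/r = d/2$, $2 \leq q \leq \infty$; it includes the endpoint $(2, 2d/(d-2))$ by Keel--Tao since $d \geq 3$.

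The central spatial estimates are the following. Since the Fourier transform of $A_\alpha$ is essentially $\alpha(k)/|k|^{(d-1)/2}$, one has $\|A_\alpha\|_{\dot H^{(d-1)/2}} \lesssim \|\alpha\|_{L^2}$, and the Sobolev embedding $\dot H^{(d-1)/2}(\R^d) \hookrightarrow L^{2d}(\R^d)$ yields $\|A_\alpha\|_{L^{2d}} \lesssim \|\alpha\|_{L^2}$. Dually, $L^{2d/(2d-1)} \hookrightarrow \dot H^{-(d-1)/2}$ gives $\|\mathcal F(|u|^2)/|k|^{(d-1)/2}\|_{L^2_k} \lesssim \|u\|_{L^{4d/(2d-1)}_x}^2$. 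Combined with the admissible pairs $(4, 2d/(d-1))$ and $(8, 4d/(2d-1))$, Hölder in space then produces
\[
\|A_\alpha u\|_{L^4_t L^2_x} \lesssim \|\alpha\|_{L^\infty_T L^2}\, \|u\|_{X_T}, \qquad \left\| \tfrac{\mathcal F(|u|^2)}{|k|^{(d-1)/2}} \right\|_{L^4_t L^2_k} \lesssim \|u\|_{X_T}^2.
\]

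Applying Strichartz with the non-endpoint admissible pair $(\infty, 2)$ (whose dual is $L^1_t L^2_x$) to the Duhamel formula for $u$, and losing a factor $T^{3/4}$ via the Hölder embedding $L^1_T \hookleftarrow L^4_T$, yields
\[
\|u\|_{X_T} \leq C\|u_0\|_{L^2} + C T^{3/4} \|\alpha\|_{L^\infty_T L^2}\, \|u\|_{X_T}, \qquad \|\alpha\|_{L^\infty_T L^2} \leq \|\alpha_0\|_{L^2} + C T^{3/4} \|u\|_{X_T}^2.
\]
For $T$ small (depending on $\|u_0\|_{L^2}$ and $\|\alpha_0\|_{L^2}$), Picard iteration on a ball in $Y_T$ produces a unique local solution, and analogous difference estimates close the uniqueness in the stated class. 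Mass conservation $\|u(t)\|_{L^2}^2 = \|u_0\|_{L^2}^2$ is immediate from the self-adjointness of $-\Delta + A_{\alpha(t)}$ (since $A_\alpha$ is real-valued). Combined with the Duhamel bound for $\alpha$, mass conservation permits iterating the local theory; a Gronwall-type argument then shows that $\|\alpha(t)\|_{L^2}$ grows at most algebraically in $t$, which precludes finite-time blow-up and yields existence on all of $\R$.

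The main obstacle lies in the singular character of the coupling: $A_\alpha$ has only $L^{2d}$ integrability, which is incompatible with the naive endpoint dual pairing $L^2_t L^{2d/(d+2)}_x$, as this would demand $u \in L^{2d/(d+1)}_x$, below the Strichartz range $r \geq 2$. The fix is to use the intermediate admissible pair $(4, 2d/(d-1))$ for $u$, so that Hölder with $A_\alpha \in L^{2d}_x$ lands $A_\alpha u$ in $L^4_t L^2_x$; absorbing this into the dual Strichartz space $L^1_t L^2_x$ via a $T^{3/4}$ time-loss then closes the contraction.
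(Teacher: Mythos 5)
Your proposal follows the same Strichartz-based contraction scheme as the paper: the same key embedding $\dot H^{(d-1)/2} \hookrightarrow L^{2d}$ for $A_\alpha$, the same dual Sobolev/Hardy--Littlewood--Sobolev bound $\|\mathcal F(|u|^2)/|k|^{(d-1)/2}\|_{L^2} \lesssim \|u\|_{L^{4d/(2d-1)}}^2$, the same $T^{3/4}$ time-loss via H\"older, Picard iteration in a Strichartz space, mass conservation, and a bootstrap to rule out finite-time blow-up. The particular admissible pairs you route through differ from the paper's (you land $A_\alpha u$ in $L^4_t L^2_x$ and dualize through $(\infty,2)$; the paper uses the dual of $(4,2d/(d-1))$, i.e.\ $L^{4/3}_t L^{2d/(d+1)}_x$), but this is cosmetic.

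Two steps are compressed too much to be considered complete, though both can be filled in. First, mass conservation is not ``immediate from self-adjointness'' at the $L^2$ level: the solutions produced by the fixed point are only mild solutions, and differentiating $\|u(t)\|_{L^2}^2$ requires more regularity than the ambient space provides. The paper instead approximates $(u_0,\alpha_0)$ by data in $H^2 \oplus \mathcal F H^1$, upgrades the local solution to a strong solution for which the computation is legitimate, and passes to the limit using the Lipschitz continuity of the flow map. Second, labelling the global-in-time argument a ``Gronwall-type argument'' obscures the crucial point: plugging your quadratic bound for $\alpha$ back into the bound for $u$ gives a superlinear coupled system that a naive Gronwall iteration does not close. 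What actually closes is a \emph{sublinear} self-bounding inequality obtained from mass conservation together with the interpolation $\|u\|_{L^{4d/(2d-1)}} \lesssim \|u\|_{L^2}^{3/4}\|u\|_{L^{2d/(d-2)}}^{1/4}$: writing $X(t)=\|u\|_{L^2([0,t],L^{2d/(d-2)})}$, one obtains, assuming $T_{\mathrm{max}}<\infty$, that $X \lesssim_{T_{\mathrm{max}},\norm{u_0},\norm{\alpha_0}} 1 + X^{1/2}$, which bounds $X$ and then $\norm{\alpha}_{L^\infty L^2}$ on $[0,T_{\mathrm{max}})$, contradicting maximality. Both of these steps deserve to be spelled out.
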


Proposition \ref{thmgwp} is well known and we provide in Section \ref{Study of the classical model} the proof for completeness. From a mathematician's perspective, we also consider the case with initial data in the phase-space $L^2 \oplus L^2$, and extend the global well-posedness results using (half) dispersive properties of the Landau-Pekar equations. In either case we call $\phi_t$ the Hamiltonian flow of \eqref{PDE}. We aim to link the evolution of the quantum system and its classical counterpart. For that purpose, it will be convenient to rewrite the Hamiltonian $H_\infty^{(n)}$ in a fully second quantized form as

$$
H_\infty = \int_{\R^d} \psi_x^* (- \Delta_x) \psi_x \di x +\int_{\R^d} a_k^* a_k \di k + \iint_{\R^d \oplus \R^d} \psi_x^* \left(a_k^*\frac{e^{-i k .x}}{\abs{k}^{(d-1)/2}} + a_k \frac{e^{i k .x}}{\abs{k}^{(d-1)/2}}  \right)\psi_x \di k \di x,
$$
acting on 
$$
\Hi := \mathscr F_s (L^2 \oplus L^2) \cong \mathscr F_s (L^2) \otimes \mathscr F_s(L^2),
$$
in a way that 
$$
H_\infty|_{\Hi_n} = H_\infty^{(n)}.
$$
We consider initial normal states (positive, trace-class and of unit trace operators) satisfying the following assumptions:
\begin{align}
    &\exists C>0, \forall \eps \in (0,1], \forall k \in \N, \Tr[\rho_\eps N_1^k] \leq C^k. \label{assumption1} \\
    &\exists C>0, \forall \eps \in (0,1], \Tr[\rho_\eps H_0] \leq C. \label{assumption2}
\end{align}

The main result of this paper is the following: 

\begin{thm} \label{thmdynamics}
    Consider a family of normal states $(\rho_\eps)_{\eps \in (0,1]}$ satisfying Assumptions \eqref{assumption1} and \eqref{assumption2}. Assume that
    $$
    \mathscr M (\rho_\eps , \eps \in (0,1]) = \{ \mu_0 \} \subset \mathcal P (L^2 \oplus L^2).
    $$
    Then, $\forall t \in \R$,
    $$
    \mathscr M (e^{- \frac{i t}{\eps} H_\infty} \rho_\eps e^{\frac{i t}{\eps} H_\infty} , \eps \in (0,1]) = \{(\phi_t)_\# \mu_0 \} \subset \mathcal P (L^2 \oplus L^2),
    $$
    where $\phi_t$ is the classical Hamiltonian flow of the undressed evolution.
\end{thm}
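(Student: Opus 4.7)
The overall strategy is to conjugate by the Gross transform, where the dynamics becomes amenable to the Ammari--Nier Wigner measure scheme, and then invert the transform at the level of measures. Let $U_\eps$ be the $\eps$-scaled second-quantized Gross unitary, and set $\tilde H_\infty := U_\eps H_\infty U_\eps^*$; by the construction of \cite{griesemer2016}, $\tilde H_\infty$ is self-adjoint on the domain of the free Hamiltonian $H_0$. Conjugation by $U_\eps$ is the quantization of a symplectic diffeomorphism $T : L^2 \oplus L^2 \to L^2 \oplus L^2$ of the form $T(u,\alpha) = (u,\alpha - B(u))$, where $B(u)$ is the mode profile subtracted by the dressing. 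The dressed classical Hamiltonian $\tilde h := h_\infty \circ T^{-1}$ generates a flow $\tilde\phi_t = T \circ \phi_t \circ T^{-1}$ on $L^2 \oplus L^2$ (well-defined by Theorem \ref{thmgwpl2}). A first task is to verify that $\mathscr M$ is covariant under this transformation: setting $\tilde\rho_\eps := U_\eps \rho_\eps U_\eps^*$, one checks using the action of Weyl operators on coherent states that $\mathscr M(\tilde\rho_\eps, \eps \in (0,1]) = \{T_\#\mu_0\}$, and that Assumptions \eqref{assumption1}--\eqref{assumption2} transfer (with modified constants) to $(\tilde\rho_\eps)$ relative to $N_1$ and $H_0$.

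Next, I would run the Wigner measure / Duhamel analysis on the dressed dynamics $\tilde\rho_\eps(t) = e^{-it\tilde H_\infty/\eps}\,\tilde\rho_\eps\, e^{it\tilde H_\infty/\eps}$. For a cylindrical polynomial symbol $b$ on $L^2\oplus L^2$ with Wick quantization $b^{\mathrm{Wick}}$, write the Duhamel identity
\[
\Tr[\tilde\rho_\eps(t)\, b^{\mathrm{Wick}}] = \Tr[\tilde\rho_\eps\, b^{\mathrm{Wick}}] + \int_0^t \tfrac{i}{\eps}\,\Tr\bigl[\tilde\rho_\eps(s)\,[\tilde H_\infty, b^{\mathrm{Wick}}]\bigr]\di s.
\]
The Wick calculus gives $\tfrac{i}{\eps}[\tilde H_\infty, b^{\mathrm{Wick}}] = \{\tilde h, b\}^{\mathrm{Wick}} + r_\eps$, with $r_\eps$ controlled by the number and energy bounds. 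Selecting, via the diagonal extraction of \cite{ammari2008ahp}, a subsequence along which $\tilde\rho_\eps(t)$ has a Wigner measure $\tilde\mu_t$ for every $t\in\R$, one passes to the limit and obtains that $\tilde\mu_t$ solves the Liouville equation $\partial_t\tilde\mu_t + \{\tilde h, \tilde\mu_t\}=0$ tested against cylindrical polynomials, with $\tilde\mu_0 = T_\#\mu_0$. Since $\tilde\phi_t$ is a Borel flow on $L^2\oplus L^2$ for which the measure transport equation has a unique solution with prescribed initial datum (Ammari--Nier uniqueness, applied in the form used in \cite{ammari2017sima}), we conclude $\tilde\mu_t = (\tilde\phi_t)_\# T_\# \mu_0$, and the uniqueness of the limit removes the subsequence.

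Finally, untwisting by the Gross transformation yields
\[
\mathscr M(e^{-itH_\infty/\eps}\rho_\eps e^{itH_\infty/\eps}, \eps\in(0,1]) = T^{-1}_\#\,(\tilde\phi_t)_\# T_\# \mu_0 = (\phi_t)_\# \mu_0,
\]
as claimed. The expected main obstacle is the uniformity (in $\eps$) of the Ammari--Nier commutator expansion for $\tilde H_\infty$: even after Gross dressing, a residual interaction term survives which is only relatively bounded with respect to $H_0$, and one has to show that the associated Wick remainder $r_\eps$ is $o(1)$ on states bounded by Assumption \eqref{assumption2}, uniformly in the semiclassical scaling $a_\eps = \sqrt{\eps}\,a$. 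A secondary technical point is to check that $T$ is a Borel map sending bounded sets of $L^2 \oplus L^2$ to bounded sets, which is needed for the pushforward identity to hold genuinely on $L^2\oplus L^2$ (rather than merely on $H^1\oplus L^2$), so that the theorem applies to initial measures $\mu_0$ as general as Theorem \ref{thmgwpl2} allows.
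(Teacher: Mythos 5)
Your overall strategy mirrors the paper's (dress via the Gross transform, run the Wigner measure/Duhamel analysis on the dressed dynamics, undo the dressing at the measure level), but two points are genuinely off. First, the form you assign to the classical dressing map is wrong. You write $T(u,\alpha) = (u,\alpha - B(u))$, a shift in the field component with $u$ unchanged. In fact $\mathrm{D}_{iB_\infty}(\theta)$ is the Hamiltonian flow of the \emph{cubic} generator $\mathscr D_{iB_\infty}$, whose explicit solution \eqref{PDEdressingsolution} reads $u_\theta = u_0\exp(-i\theta A_{\alpha,iB_\infty})$, $\alpha_\theta = \alpha_0 + \theta B_\infty\mathcal F(|u|^2)$. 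The $u$-component acquires a nontrivial, $\alpha$-dependent phase, and this cannot be dropped: the proof that $\hat h_\infty = h_\infty\circ\mathrm{D}_{iB_\infty}(1)$ relies on differentiating this phase, which generates exactly the cross term $-2\int\overline{u}(\langle\alpha, kB_\infty e^{-ik\cdot x}\rangle\cdot D_x + D_x\cdot\langle kB_\infty e^{-ik\cdot x},\alpha\rangle)u\,\di x$ in $\hat h_\infty$. With your $T$, the identity $\tilde h = h_\infty\circ T^{-1}$ and the intertwining $\tilde\phi_t = T\circ\phi_t\circ T^{-1}$ both fail.

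Second, the commutator step is too schematic to close. You test against Wick quantizations of cylindrical polynomial symbols and write $\tfrac{i}{\eps}[\tilde H_\infty, b^{\mathrm{Wick}}] = \{\tilde h, b\}^{\mathrm{Wick}} + r_\eps$, focusing on showing $r_\eps$ is small. The actual obstruction is in the \emph{leading} term: the symbol $\{\hat h_{\infty,I}, b\}$ is not compact (it involves multiplication by $D_x$, $-\Delta$, and operators like $\langle\alpha, kB_\infty e^{-ik\cdot x}\rangle$), so one cannot pass $\Tr[\rho_\eps(s) W(\xi_s)\mathscr B_0(\xi_s)]\to\int b_0(\xi_s)\,\di\mu_s$ directly. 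The paper handles this in Lemma \ref{lemmeb0} by constructing a family of compact approximants $b_0^m$ with quantitative error estimates relative to $(N_1+1)(H_0+1)^{1/2}$, and this approximation argument is essential. A related subtlety you do not address is that $\hat H_{\infty,I}$ coincides with a Wick quantization only after composition with $\Id_{N_1\leq C}$ (Lemma \ref{commutator}), which is where Assumption \eqref{assumption1} enters to justify the commutator identity at the quadratic-form level. Finally, a minor misreading: the concern that the pushforward must work on all of $L^2\oplus L^2$ ``so that the theorem applies to initial measures as general as Theorem \ref{thmgwpl2} allows'' is unnecessary — Assumption \eqref{assumption2} together with Corollary \ref{supportmeasure} already forces $\mathrm{supp}\,\mu_0\subset H^1\oplus L^2$.
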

Here $\mathscr M(\rho_\eps, \eps \in (0,1])$ denotes the set of Wigner measures of the family $(\rho_\eps)_{\eps \in (0,1]}$, see Section \ref{Derivation of the classical system} for more details. The Fröhlich Hamiltonian is singular in the sense that the form factor $\frac{e^{-i k .x_j}}{\abs{k}^{(d-1)/2}}$ is not square integrable. A straightforward KLMN argument shows that $H_\infty^{(n)}$ is still well defined as a quadratic form on the form domain of the free Hamiltonian 
$$
H_0^{(n)} = \sum_{j=1}^n - \eps \Delta_{x_j} \otimes \Id +\Id \otimes  \di \Gamma_\eps (\Id).
$$
However, such an argument cannot work for the domain of the operators, and in fact, it was proven in \cite{griesemer2016} that $D(H_0^{(n)}) \cap D(H_\infty^{(n)}) = \emptyset$. They obtained in addition a characterization of the domain by means of the Gross transform $
U_\infty = e^{\frac{i}{\eps} T_\infty}$, where 
$$
T_\infty = \iint \psi_x^* \left( - i \frac{\Id_{\abs{k}\geq \sigma_0}}{(1+\abs{k}^2) \abs{k}^{(d-1)/2}} a_k^* e^{-ik .x} + h.c. \right) \psi_x \di k \di x,
$$
with $\sigma_0$ a fixed infrared cutoff. The dressed Hamiltonian
$$
\hat H_\infty = U_\infty H_\infty U_\infty^*,
$$
is self-adjoint on $D(H_0)$ and no singularity appears. On the classical side, this translates to the evolution not being governed by $h_\infty$ but by the dressed Hamiltonian
\begin{align*}
    \hat h_\infty (u,\alpha) &= \norm{\nabla u}_{L^2}^2 + \norm{\alpha}_{L^2}^2 +  \int 2 \mathrm{Re} \langle \alpha, f_{\sigma_0} e^{-ik.x} \rangle \abs{u(x)}^2  \di x \\
        &+ \iint \abs{u(x)}^2 V_\infty(x-y) \abs{u(y)}^2 \di x \di y \\
        &+ \int (2 \mathrm{Re} \langle \alpha, k B_\infty e^{-ik.x} \rangle)^2 \abs{u(x)}^2 \di x  \\
        & - 2 \int   \overline{u(x)} ( \langle \alpha, k B_\infty e^{-i k . x} \rangle  .D_{x} + D_x . \langle k B_\infty e^{-i k .x},\alpha \rangle) u(x) \di x,
    \end{align*}
where $D_x = -i \nabla_x$.
\newline
\newline
\textbf{Structure of the paper.} 
In the rest of Section \ref{Introduction} we give mathematical background on bosonic Fock spaces and operators arising there. We devote Section \ref{Definition of the Hamiltonian} to the definition of the quantum Hamiltonian in the second quantization framework, to precise how the removal of the cutoff is handled. In Section \ref{Study of the classical model} we analyse the classical Landau-Pekar equations \eqref{PDE}, meaning global well-posedness, conservation laws, as well as classical study of the Gross dressing transform, and a link between the undressed and dressed evolutions. Finally, in Section \ref{Derivation of the classical system} we study the quantum evolution via a Duhamel formula for the noncommutative Fourier transform of the evolved dressed state. The mean-field limit thus yields a transport equation that is solved by general measure-theoretical tools. The last step is then to undo the Gross tranform to describe the quantum evolution. \\

\textbf{Second Quantization framework.} Let us recall here the standard background of second quantization, with a scaling that is suitable for our semiclassical analysis. An extended overview can be found in \cite{araiasao}. 
\begin{defn}
    The bosonic Fock space over a separable Hilbert space $\mathscr Z$, considered as the classical phase-space, is the Hilbert space 
    $$
    \mathscr F_s(\mathscr Z) := \bigoplus_{n \in \N} \mathscr Z^{\otimes_s n},
    $$
    where $\mathscr Z^{\otimes_s n}$ denotes the $n$-fold symmetric tensor product of $\mathscr Z$. Other standard notations are $\mathscr F_b(\mathscr Z), \Gamma_b(\mathscr Z)$ or $ \Gamma_s(\mathscr Z)$.
\end{defn}

If one thinks of $\mathscr Z$ as describing the configuration space of a single particle, then $\mathscr Z^{\otimes_s n}$ describes a system of $n$ such indistinguishable particles. In many quantum field models however, the number of particles may not be conserved, and $\mathscr F_s (\mathscr Z)$ allows to describe a system with a possibly varying number of particles. The vector 
$$
\Omega = 1 \oplus 0 \oplus 0 \oplus...
$$
is called the vacuum and describes a state with no particles. To any self-adjoint operator $A$ acting on the one particle Hilbert space $\mathscr Z$ corresponds a self-adjoint operator $\di \Gamma (A)$ in the following way:
$$
\di \Gamma(A)_{|\mathscr Z^{\otimes_s n}} := \eps \sum_{j=1}^n \Id \otimes ... \otimes A \otimes ... \otimes \Id.
$$
In other words, $A$ is acting on every particle individually. It is essentially self-adjoint on the finite number of particle subspace
$$
\mathscr F_{\mathrm{fin}} (\mathscr Z) = \bigoplus_{n\in \N}^{\mathrm{alg}} \mathscr Z^{\otimes_s n}.
$$
Here the subscript ``$\mathrm{alg}$" is referring to the algebraic direct sum, i.e. there is only a finite number of nonzero components. 
In this paper, we will consider the phase-space to be $\mathscr Z = L^2(\R^d) \oplus L^2(\R^d)$, in which case we denote the number of particles of each side by
$$
N_1 = \di \Gamma(\Id) \otimes \Id, \ \ \ N_2 = \Id \otimes \di \Gamma(\Id).
$$
Let us now define an important class of operators:

\begin{defn}
    Let $f \in \mathscr Z$. The formula
    $$
      (a^*(f) \varphi)_n :=  \sqrt{\eps(n+1)} S_{n+1}(f \otimes \varphi_n) , \varphi =  (\varphi_n)_n \in \mathscr F_s(\mathscr Z),
    $$
    defines a closable operator $a^*(f)$ on $\mathscr F_s (\mathscr Z)$. His closure, still denoted $a^*(f)$, is called creation operator. Here $S_n$ denotes the orthogonal projection of $\mathscr Z^{\otimes n}$ to $\mathscr Z^{\otimes_s n}$. Its adjoint is denoted
    $$
    a(f) = (a^*(f))^*
    $$
    and is called annihilation operator.
\end{defn}

\begin{rem}
    $a^*(f)$ sends $\mathscr Z^{\otimes_s n}$ to $\mathscr Z^{\otimes_s (n+1)}$, ``creating" a particle, and $a(f)$ sends $\mathscr Z^{\otimes_s (n+1)}$ to $\mathscr Z^{\otimes_s n}$, ``destroying" a particle.
\end{rem}

\begin{prop}(canonical commutation relations) The following holds on $\mathscr F_{\mathrm{fin}} (\mathscr Z)$: for all $f,g \in \mathscr Z$,
$$
\begin{cases}
    [a(f),a(g)] = [a^*(f),a^*(g)] = 0, \\
    [a(f),a^*(g)] = \eps \langle f,g\rangle_{\mathscr Z}.
\end{cases}
$$
\end{prop}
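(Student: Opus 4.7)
The plan is to reduce the verification to the algebraic finite-particle subspace $\mathscr F_{\mathrm{fin}}(\mathscr Z)$, on which both $a^*(f)$ and $a(f)$ preserve the grading and act as honest linear maps between $n$-particle sectors, so that the compositions on the two sides of each identity are well defined. By sesquilinearity in $f,g$ and linearity in $\varphi$, it suffices to test the identities on elementary symmetric tensors $\varphi_n = S_n(h_1 \otimes \cdots \otimes h_n)$.

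First I would dispose of $[a^*(f), a^*(g)] = 0$: two applications of the definition of $a^*$ yield
$$
a^*(f) a^*(g)\, \varphi_n = \eps \sqrt{(n+1)(n+2)}\, S_{n+2}(f \otimes g \otimes h_1 \otimes \cdots \otimes h_n),
$$
which is manifestly symmetric in $f$ and $g$ because $S_{n+2}$ symmetrizes all factors. The identity $[a(f), a(g)] = 0$ follows by taking adjoints on the core $\mathscr F_{\mathrm{fin}}(\mathscr Z)$.

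The heart of the argument is the mixed commutator $[a(f), a^*(g)]$. From the definition of $a^*(f)$ and the duality $a(f) = (a^*(f))^*$, I would first establish the explicit formula
$$
a(f)\, S_n(h_1 \otimes \cdots \otimes h_n) = \sqrt{\eps/n}\, \sum_{j=1}^{n} \langle f, h_j\rangle_{\mathscr Z}\, S_{n-1}(h_1 \otimes \cdots \widehat{h_j} \cdots \otimes h_n),
$$
where $\widehat{h_j}$ denotes omission (and $a(f)\Omega = 0$). Then I would compute $a(f)a^*(g)\varphi_n$ directly: $a^*(g)\varphi_n = \sqrt{\eps(n+1)}\, S_{n+1}(g \otimes h_1 \otimes \cdots \otimes h_n)$, and applying $a(f)$ contracts $\langle f, \cdot\rangle_{\mathscr Z}$ with each of the $n+1$ slots. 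The slot occupied by $g$ contributes the scalar term $\eps \langle f, g\rangle_{\mathscr Z}\, \varphi_n$, while the remaining $n$ slots, after tracking the $\sqrt{\eps(n+1)} \cdot \sqrt{\eps/(n+1)}$ prefactor, reassemble into exactly $a^*(g) a(f)\varphi_n$. Subtracting gives the announced commutation relation.

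The main obstacle is purely combinatorial bookkeeping: one must fix a precise convention for $S_n$, carry the $\sqrt{\eps(n+1)}$ prefactors from the definition correctly, and verify that the $n$ ``off-diagonal'' contractions in $a(f)a^*(g)\varphi_n$ really reconstitute $a^*(g)a(f)\varphi_n$ with the correct coefficient. Nothing here is deep, and the derivation of the explicit formula for $a(f)$ above is itself a short dual-pairing computation, so the remaining work is essentially a verification.
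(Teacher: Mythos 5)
Your proposal is correct. The paper does not actually supply a proof of this proposition — it is stated as standard background (with the general reference to Arai's book for Fock space theory) — so there is no authorial argument to compare against. Your argument is the standard textbook one: reduce to elementary symmetric tensors, use $S_{n+2}(f\otimes S_{n+1}(g\otimes\varphi_n)) = S_{n+2}(f\otimes g\otimes\varphi_n)$ for the $a^*$-$a^*$ commutator, take adjoints for the $a$-$a$ one, derive the explicit contraction formula
$a(f)S_n(h_1\otimes\cdots\otimes h_n) = \sqrt{\eps/n}\sum_j \langle f,h_j\rangle S_{n-1}(h_1\otimes\cdots\widehat{h_j}\cdots\otimes h_n)$
by duality, and then the $j=0$ slot in $a(f)a^*(g)\varphi_n$ produces $\eps\langle f,g\rangle\varphi_n$ while the $j\geq 1$ slots reassemble into $a^*(g)a(f)\varphi_n$ with the prefactors $\sqrt{\eps(n+1)}\cdot\sqrt{\eps/(n+1)} = \eps = \sqrt{\eps n}\cdot\sqrt{\eps/n}$ matching on both sides. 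The only thing worth flagging is the paper's definition of $a^*(f)$ writes $(a^*(f)\varphi)_n$ for an element of $\mathscr Z^{\otimes_s(n+1)}$, so the index should really be $n+1$; you read it correctly.
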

In our case $\mathscr Z = L^2(\R^d) \oplus L^2(\R^d)$, we will denote by the letter $\psi$ the creation and annihilation operators acting on the first $L^2(\R^d)$, and by $a$ those acting on the second $L^2(\R^d)$. In $\mathscr Z$ is of the form $\mathscr Z = L^2 (X,\mathscr A,\mu)$, one often finds in the literature the notation $a_x, a_x^*, x \in X$. Here, $a_x, a^*_x$ are seen as operator valued distributions, satisfying
$$
a(f) = \int_X a_x \overline{f(x)} \mu (\di x) , \ \ \ a^*(f) = \int_X a^*_x f(x) \mu(\di x),  
$$
and the commutation relations become
$$
\begin{cases}
    [a_x,a_y] = [a^*_x,a^*_y] = 0, \\
    [a_x,a^*_y] = \eps \delta(x-y).
\end{cases}
$$
With these notations, it is possible to write second quantized operators as
$$
\di \Gamma (A) = \int a_x^* A_x a_x \di x.
$$
For a composite Hilbert space of the form $\mathscr K \otimes \mathscr F_s(\mathscr Z)$, it is also possible to define for any $F \in \mathscr B(\mathscr K,\mathscr K \otimes \mathscr Z)$, a generalized creation operator $a^*(F)$ via the formula
$$
a^*(F) (\varphi \otimes \phi_n) = \sqrt{\eps(n+1)} S_{n+1}(F \varphi \otimes \phi_n) , \varphi \in \mathscr K, \phi_n \in \mathscr Z^{\otimes_s n}. 
$$
Here $F \varphi \otimes \phi_n$ is seen as an element of $\mathscr K \otimes \mathscr Z^{\otimes (n+1)}$ and $S_{n+1}$ is the orthogonal projection on $\mathscr K \otimes \mathscr Z^{\otimes_s (n+1)}$. When taking $F_f : \varphi \mapsto \varphi \otimes f$, we recover the usual creation and annihilation operator. Indeed, a straightforward computation shows that in this case,
$$
a^*(F_f) = \Id_{\mathscr K} \otimes a^*(f).
$$
In our case we will encounter for instance the space $L^2(\R^d) \otimes \mathscr F_s(L^2(\R^d))$ and generalized creation and annihilation operators will often appear with
$$
G_g : \varphi \in L^2 \longmapsto \left( (x,k) \mapsto g(k) e^{-i k . x} \varphi(x) \right),
$$
for a suitable measurable map $g : \R^d \rightarrow \R$. We will in particular use the notation $a^\# (g e^{-i x . k}) := a^\# (G_g)$.
\begin{defn}
    Let $f \in \mathscr Z$. We define the field operator
    $$
    \phi(f) = \frac{1}{\sqrt{2}}( a(f) + a^*(f)),
    $$
    and the Weyl operator 
    $$
    W(f) = e^{i \phi(f)}.
    $$
\end{defn}
The Weyl operator can be interpreted as the quantum, or noncommutative version of the Fourier character $z \mapsto e^{2 i \pi \mathrm{Re} \langle z,f \rangle}$ and will be use as an elementary brick for the construction of Wigner measures. They satisfy the Weyl commutation relations:
$$
W(f) W(g) = W(f+g) e^{-\frac{i \eps}{2} \mathrm{Im} \langle f,g \rangle}.
$$
We will sometimes denote the operators $a^\#(f), \di \Gamma(A), W(f)$ as $a^\#_\eps(f), \di \Gamma_\eps (A), W_\eps(f)$ to enlight their dependance with the semiclassical parameter.

\section{Definition of the Hamiltonian} \label{Definition of the Hamiltonian}
For $n \in \N,0 \leq \sigma <+\infty$, we consider the polaron Hamiltonian
$$
H_\sigma^{(n)} = \sum_{j=1}^n - \eps \Delta_{x_j} \otimes \Id +\Id \otimes  \di \Gamma_\eps (\Id) + \eps \sum_{j=1}^n (a_\eps (  f_\sigma(k) e^{-i k .x_j})+h.c.)
$$
acting on $\Hi_n = (L^2(\R^d))^{\otimes_s n} \otimes \mathscr F_s (L^2(\R^d))$, in dimension $d \geq 3$. Here, $$f_\sigma(k) = \frac{\Id_{\abs{k}\leq \sigma}}{\abs{k}^{(d-1)/2}}, $$ is called the form factor. We call $H_0^{(n)} = \sum_{j=1}^n - \eps \Delta_{x_j} \otimes \Id +\Id \otimes  \di \Gamma_\eps (\Id)$ the free Hamiltonian and $H_{\sigma,I}^{(n)} =  \eps \sum_{j=1}^n (a (  f_\sigma e^{-i k .x_j})+h.c.)$ the interaction. They are scaled in a mean-field way, every term being of order $\eps n \sim 1$ on the left side. We plan to study the dynamics generated by this Hamiltonian, in the limit $n \eps \sim 1$. For convenience, we put our Hamiltonian in the framework of second quantization.
We define, 
$$
H_\sigma = \int_{\R^d} \psi_x^* (- \Delta_x) \psi_x \di x +\int_{\R^d} a_k^* a_k \di k + \iint_{\R^d \oplus \R^d} \psi_x^* (a_k^* f_\sigma (k)e^{-i k .x} + h.c.)\psi_x \di k \di x,
$$
acting on $\Hi:= \mathscr F_s(L^2) \otimes \mathscr F_s (L^2) \cong \mathscr F_s(L^2 \oplus L^2)$. Here $\psi_x, \psi_x^*$ denote the annihilation and creation operators over the first phase-space $L^2$, and $a_k,a_k^*$ over the second, so that when restricting to $\Hi_n := (L^2)^{\otimes_s n} \otimes \mathscr F(L^2)$, we obtain
$$
H_\sigma|_{\Hi_n} = H_\sigma^{(n)}.
$$
Since $f_\infty$ is not an $L^2$ function, it is not clear however if $H_\sigma$ makes sense on $\Hi$ as a self-adjoint operator in the limit $\sigma \longrightarrow +\infty$. A way to define it is using the following. 
\begin{defn}
    Let $0 <\sigma_0 \leq \sigma \leq \infty.$ Consider
    $$
    T_\sigma = \iint \psi_x^* (i B_\sigma(k) a_k^* e^{-ik .x} + h.c.) \psi_x \di k \di x,
    $$
    with $$
    B_\sigma (k) = -\frac{\Id_{\sigma_0 \leq \abs{k}\leq \sigma}}{(1+\abs{k}^2)\abs{k}^{(d-1)/2}}.$$
    Then the Gross transform is defined as 
    $$
    U_\sigma = e^{\frac{i}{\eps}T_\sigma}.
    $$
\end{defn}

\begin{rem}
    $B_\sigma \in L^2$ for any $\sigma_0\leq \sigma \leq \infty$. Thus the above expression makes sense when $\sigma = \infty$, allowing to study the dynamics of $T_\infty$, and thus describing well $U_\infty$ itself. 
\end{rem}

\begin{rem}
    $T_\sigma$ leaves $\Hi_n$ invariant, so that one can compute
    $$
    U_\sigma|_{\Hi_n} = e^{\frac{i}{\eps} T_\sigma|_{\Hi_n}} = e^{i \sum_{j=1}^n (a^*(i B_\sigma e^{-ik .x_j}) + a(i B_\sigma e^{-ik .x_j})) \ }.
    $$
\end{rem}

\begin{lem} \cite[Theorem 3.7]{griesemer2016} \label{lemmaappendix1} For any $n \in \N$, $\sigma_0\leq \sigma <+\infty$, let 
$$
\hat H_\sigma^{(n)} := U_\sigma^{(n)} H_\sigma^{(n)} U_\sigma^{*(n)}, 
$$
with domain $D(\hat H_\sigma^{(n)}) = D(H_0^{(n)})$. Then,
\begin{align*}
\hat H_\sigma^{(n)} &=  H_0^{(n)} + \hat H_{\sigma,I}^{(n)} \\&= H_{\sigma_0}^{(n)} + \eps^2 \sum_{1\leq i \neq j\leq n } V_\sigma (x_i - x_j)  \\ &+ \eps \sum_{j=1}^n [a^*(k B_\sigma e^{-ik.x_j})^2+ a(k B_\sigma e^{-ik.x_j})^2 + 2 a^*(k B_\sigma e^{-ik.x_j}) a(k B_\sigma e^{-ik.x_j}) \\ &-2D_{x_j} . a(k B_\sigma e^{-ik.x_j})-2 a^*(k B_\sigma e^{-ik.x_j}).D_{x_j}]\\ &+\eps^2 n \langle B_\sigma, B_\sigma + 2 f_\sigma \rangle,
\end{align*}
where
$$
V_\sigma (x) = \mathrm{Re} \int_{\R^d} (\abs{B_\sigma (k)}^2+2 B_\sigma(k) f_\sigma(k)) e^{-i k . x} \di k.
$$
\end{lem}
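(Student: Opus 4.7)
The plan is to compute $U_\sigma^{(n)} H_\sigma^{(n)} U_\sigma^{*(n)}$ directly via iterated commutators (Hadamard's formula). I would work on the $n$-particle sector $\Hi_n$, where $U_\sigma^{(n)}$ takes the explicit form given in the preceding remark, and exploit the structural fact that $T_\sigma^{(n)}/\eps$ is linear in the second-factor creation and annihilation operators, with coefficients that are merely multiplication operators in the $x_j$'s. Consequently, for each building block of $H_\sigma^{(n)}$ --- namely $a_k$, $a_k^*$, $x_j$ and $p_j=-i\nabla_{x_j}$ --- the Hadamard series $U_\sigma^{(n)} A U_\sigma^{*(n)} = \sum_{m\geq 0}\tfrac{1}{m!}\mathrm{ad}_{iT_\sigma^{(n)}/\eps}^m(A)$ is finite.

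Using the CCR and $[p_j, e^{\pm i k x_j}] = \mp k e^{\pm i k x_j}$, together with the parity identity $\int k\, B_\sigma(k)^2 \cos(k\cdot r)\,\mathrm{d}k = 0$ (which makes the second-order Hadamard term for $p_j$ vanish, since $B_\sigma^2$ is radial and $k$ is odd), one arrives at the closed-form conjugation rules
\[
U_\sigma^{(n)} a_k U_\sigma^{*(n)} = a_k + \eps \sum_{j=1}^n B_\sigma(k)\, e^{-i k x_j}, \qquad U_\sigma^{(n)} p_j U_\sigma^{*(n)} = p_j - \pi_j,
\]
with $\pi_j := a^*(kB_\sigma e^{-ikx_j}) + a(kB_\sigma e^{-ikx_j})$, while $U_\sigma^{(n)}$ commutes with multiplication by $x_j$. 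Substituting these rules into $H_\sigma^{(n)} = \eps\sum_j p_j^2 + N_2 + \eps\sum_j(a^*(f_\sigma e^{-ikx_j})+a(f_\sigma e^{-ikx_j}))$ and expanding, the dressed Hamiltonian decomposes into: the free part $\eps\sum_j p_j^2 + N_2$; the quadratic $\eps\sum_j \pi_j^2$, which once Wick normal-ordered gives the displayed $a^{*2}+2a^* a+a^2$ pieces of the lemma; the symmetric momentum coupling $-2\eps\sum_j\bigl(a^*(kB_\sigma e^{-ikx_j})\cdot D_{x_j} + D_{x_j}\cdot a(kB_\sigma e^{-ikx_j})\bigr)$ arising from $-\eps\{p_j,\pi_j\}$ after extracting $[p_j,\pi_j]$ to normal-order with respect to $p_j$; three linear-in-$a^\#$ families with coefficients $|k|^2 B_\sigma e^{-ikx_j}$, $B_\sigma e^{-ikx_j}$ and $f_\sigma e^{-ikx_j}$, coming respectively from $[p_j,\pi_j]$, from the $c$-number displacement of $N_2$, and from the original interaction; and finally multiplication-operator corrections from $\int|c_k|^2\,\mathrm{d}k$ with $c_k := \eps\sum_j B_\sigma(k) e^{-ikx_j}$, and from $[iT_\sigma^{(n)}/\eps, a^\#(f_\sigma e^{-ikx_j})]$.

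The algebraic miracle is the identity $(1+|k|^2) B_\sigma(k) + f_\sigma(k) = f_{\sigma_0}(k)$, which collapses the three linear-in-$a^\#$ families to precisely the interaction of $H_{\sigma_0}^{(n)}$, so that together with the free part they reconstitute $H_{\sigma_0}^{(n)}$. The off-diagonal ($i\neq j$) parts of the multiplication-operator corrections then reassemble, via the very definition of $V_\sigma$, into $\eps^2\sum_{i\neq j} V_\sigma(x_i - x_j)$; the diagonal parts, once combined with the Wick normal-ordering constant $\eps\,\|kB_\sigma\|^2$ per particle coming from $\pi_j^2$, produce the stated scalar $\eps^2 n\langle B_\sigma, B_\sigma+2 f_\sigma\rangle$. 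The principal difficulty I anticipate is the bookkeeping --- cleanly normal-ordering $\pi_j^2$, symmetrizing $\{p_j,\pi_j\}$, and tracking the many commutator remainders so that nothing is dropped or double-counted --- together with the a posteriori verification that the resulting expression is well defined as a self-adjoint operator on $D(H_0^{(n)})$, which rests on $B_\sigma, kB_\sigma \in L^2$ and on standard form bounds for $V_\sigma$ relative to $-\Delta$.
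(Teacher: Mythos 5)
The paper does not prove this lemma: it is cited verbatim from \cite[Theorem 3.7]{griesemer2016}, translated into the present notation and $\eps$-scaling, and then used as a black box. Your proposal is therefore a from-scratch re-derivation of the Griesemer--W\"unsch formula, and the route you take --- exact Hadamard conjugation on each $n$-particle sector, exploiting that $T_\sigma^{(n)}/\eps$ is linear in $a^\#$ with coefficients that commute with $x_j$, and the parity identity $\int k\, B_\sigma(k)^2\cos(k\cdot r)\,\di k = 0$ which truncates the series for $U_\sigma^{(n)} D_{x_j} U_\sigma^{*(n)}$ at first order --- is the natural and correct way to carry out the algebra. The conjugation rules $U_\sigma^{(n)} a_k U_\sigma^{*(n)} = a_k + \eps\sum_j B_\sigma(k)e^{-ikx_j}$ and $U_\sigma^{(n)} D_{x_j} U_\sigma^{*(n)} = D_{x_j} - \pi_j$ check out, the identity $(1+\abs{k}^2)B_\sigma + f_\sigma = f_{\sigma_0}$ does exactly the job you assign it in collapsing the three linear-in-$a^\#$ families onto $H_{\sigma_0,I}^{(n)}$, and the off-diagonal $c$-number corrections assemble into $\eps^2\sum_{i\neq j}V_\sigma(x_i-x_j)$ as you describe.

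Two points you should firm up. First, re-examine the scalar: the diagonal ($i=j$) pieces of the corrections coming from $U_\sigma^{(n)} N_2 U_\sigma^{*(n)}$ and $U_\sigma^{(n)} H_{\sigma,I}^{(n)} U_\sigma^{*(n)}$ by themselves already produce $\eps^2 n\left(\norm{B_\sigma}^2 + 2\langle B_\sigma,f_\sigma\rangle\right) = \eps^2 n\langle B_\sigma, B_\sigma + 2f_\sigma\rangle$, which is the displayed constant; but the Wick normal-ordering of $\eps\sum_j\pi_j^2$ contributes an \emph{additional} $\eps^2 n\norm{k B_\sigma}^2$, and you claim these two pieces \emph{together} give the stated scalar, which does not balance --- either there is a compensating term you (and possibly the transcription in the lemma) are missing, or the scalar should really read $\eps^2 n\langle B_\sigma, (1+\abs{k}^2)B_\sigma + 2f_\sigma\rangle$. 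Track this down, even though, as the remark after the lemma notes, this constant is irrelevant for the semiclassical limit. Second, the lemma also asserts $D(\hat H_\sigma^{(n)}) = D(H_0^{(n)})$, and the Hadamard algebra alone cannot deliver that. What is needed is an \emph{operator}-norm relative bound $\norm{\hat H_{\sigma,I}^{(n)}\varphi} \leq a\norm{H_0^{(n)}\varphi} + b\norm{\varphi}$ with $a<1$, so that Kato--Rellich applies; ``standard form bounds for $V_\sigma$ relative to $-\Delta$'' is both the wrong object ($V_\sigma\in L^\infty$, so it is trivially bounded --- the hard pieces are $a^*(kB_\sigma e^{-ikx_j})\cdot D_{x_j}$) and the wrong type of bound (a form bound, as in Lemma~\ref{klmnbound}, only fixes the form domain). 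This operator estimate, resting on $kB_\sigma\in L^2$, is really the substance of the domain claim and should not be relegated to an afterthought.
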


\begin{rem}
    By opposition to the Nelson model, the constant term $\langle B_\sigma, B_\sigma + 2 f_\sigma \rangle$ does not diverge when removing the cutoff, i.e. in the limit $\sigma \rightarrow \infty$, so we have no reason of substracting it. In other words, the Gross transform is used only for the definition of the Hamiltonian, but no renormalization is needed. In fact, this term, which, in a second quantization framework, can be written as $\eps  \langle B_\sigma, B_\sigma +2  f_\sigma \rangle \int \psi_x^* \psi_x \di x,$ will vanish at the classical level due to the extra $\eps$, see Lemma \ref{N_1wigner} below.
\end{rem}

We have the following KLMN bound:
\begin{lem} \label{klmnbound}
    Let $n \in \N$. There exists $\sigma_0(n \eps )>0$, $a<1$ and $C_{a, n \eps} \in \R$ such that for all $\sigma \in [\sigma_0,+\infty],$ the following holds:
    $$
    \forall \varphi_n \in D(H_0^{(n)}), \abs{ \langle \varphi_n , \hat H_{\sigma,I}^{(n)} \varphi_n \rangle} \leq a \langle \varphi_n , H_0^{(n)} \varphi_n \rangle + C_{a,n \eps} \norm{\varphi_n}^2_{\Hi_n}.
    $$
\end{lem}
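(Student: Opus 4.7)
The plan is to bound each term in the decomposition of $\hat H_\sigma^{(n)}$ supplied by Lemma~\ref{lemmaappendix1} as a relatively $H_0^{(n)}$-small operator, with a coefficient that tends to zero as $\sigma_0 \to \infty$, plus an additive constant depending on $n\eps$, uniformly in $\sigma \in [\sigma_0,+\infty]$. Once this is achieved, for any fixed value of $n\eps$ one chooses $\sigma_0$ large enough to force the sum of the relative coefficients below $1$, which yields the statement.

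The starting point consists of three uniform estimates on the form factors, obtained by explicit integration in polar coordinates for $d \geq 3$:
\[
\sup_{\sigma \geq \sigma_0} \norm{B_\sigma}_{L^2}^2 + \sup_{\sigma \geq \sigma_0} \norm{k B_\sigma}_{L^2}^2 \xrightarrow[\sigma_0 \to \infty]{} 0,
\]
while $\abs{B_\sigma}^2$ and $B_\sigma f_\sigma$ remain bounded in $L^1(\R^d)$ uniformly in $\sigma$, so Hausdorff-Young gives $\sup_\sigma \norm{V_\sigma}_{L^\infty} < \infty$. The fixed-cutoff form factor $f_{\sigma_0}$ is in $L^2$, with a norm depending on $\sigma_0$ only.

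The workhorses are the standard number-operator bounds $\norm{a_\eps(g) \Psi}^2 \leq \norm{g}^2 \langle \Psi, \di \Gamma_\eps(\Id) \Psi\rangle$ and $\norm{a_\eps^*(g) \Psi}^2 \leq \norm{g}^2 (\langle \Psi, \di \Gamma_\eps(\Id)\Psi\rangle + \eps \norm{\Psi}^2)$, combined with the form inequalities $-\eps \Delta_{x_j} \leq H_0^{(n)}$ and $\di \Gamma_\eps(\Id) \leq H_0^{(n)}$ valid on $\Hi_n$. With these I would treat the terms as follows. The linear piece $H_{\sigma_0,I}^{(n)}$ and the pure quadratic pieces $\eps \sum_j a^\#(g_j)^2$ with $g_j = k B_\sigma e^{-ikx_j}$ are handled by Cauchy-Schwarz in the sum over $j$ followed by the bounds above and a Young inequality, producing a $(n\eps)\norm{g}^2$ multiple of $\langle \Psi, \di \Gamma_\eps(\Id)\Psi\rangle$ plus an $O(1)$ remainder. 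The diagonal term $2\eps\sum_j a^*(g_j) a(g_j)$ is directly dominated by $2(n\eps)\norm{kB_\sigma}^2 \di \Gamma_\eps(\Id)$ in the form sense. The two-body term contributes at most $(n\eps)^2 \norm{V_\sigma}_{L^\infty}$ and the $\eps^2 n \langle B_\sigma, B_\sigma + 2 f_\sigma\rangle$ piece is trivially $O(1)$; both enter $C_{a,n\eps}$.

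The main obstacle is the kinetic--annihilation cross term $-2\eps \sum_j \bigl(D_{x_j} \cdot a(g_j) + a^*(g_j) \cdot D_{x_j}\bigr)$, which couples the particle kinetic energy to the Fock-side annihilation. I would estimate it by a double Cauchy-Schwarz,
\[
\eps \Bigl|\sum_j \langle \Psi, D_{x_j} \cdot a_\eps(g_j) \Psi\rangle\Bigr| \leq \eps \Bigl(\sum_j \norm{D_{x_j} \Psi}^2\Bigr)^{1/2} \Bigl(\sum_j \norm{a_\eps(g_j)\Psi}^2\Bigr)^{1/2},
\]
and then use $\eps \sum_j \norm{D_{x_j}\Psi}^2 = \langle \Psi, -\eps \sum_j \Delta_{x_j}\Psi\rangle$ together with $\sum_j \norm{a_\eps(g_j)\Psi}^2 \leq n \norm{k B_\sigma}^2 \langle \Psi, \di \Gamma_\eps(\Id) \Psi\rangle$ to reach a bound of the form $(n\eps)^{1/2} \norm{k B_\sigma} \langle \Psi, H_0^{(n)} \Psi\rangle$. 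Since $(n\eps)^{1/2}\norm{k B_\sigma}$ vanishes as $\sigma_0 \to \infty$ at fixed $n\eps$, assembling all contributions and selecting $\sigma_0 = \sigma_0(n\eps)$ suitably large yields the KLMN bound for some $a<1$, the remaining additive constants being absorbed into $C_{a,n\eps}$.
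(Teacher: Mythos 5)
Your proposal is correct and follows essentially the same route as the paper: term-by-term relative bounds on the decomposition from Lemma~\ref{lemmaappendix1}, exploiting the smallness of $\norm{kB_\sigma}_{L^2}$ and $\norm{B_\sigma}_{L^2}$ as $\sigma_0\to\infty$ at fixed $n\eps$, with the $V_\sigma$ and constant contributions absorbed into $C_{a,n\eps}$. The only cosmetic difference is the cross term, where you aggregate the sum over $j$ with a single Cauchy--Schwarz before Young's inequality (relative coefficient $\sim(n\eps)^{1/2}\norm{kB_\sigma}$), while the paper estimates per $j$; both choices give the needed $a<1$.
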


\begin{proof}
    Let $\varphi_n \in D(H_0^{(n)})$. First, we have, for all $a<1$, and $\# \in \{ \emptyset , * \}$,
    \begin{align*}
        \abs{ \langle \varphi_n, \eps \sum_{j=1}^n a^\#(f_{\sigma_0} e^{-i k . x_j})  \varphi_n \rangle } &= \abs{ \eps n \langle \varphi_n , a^\# (f_{\sigma_0} e^{-i k . x_1}) \varphi_n \rangle } \\
        & \leq \eps n \norm{\varphi_n} \norm{a^\# (f_{\sigma_0} e^{-i k . x_1}) \varphi_n} \\
        & \leq \eps n \norm{f_{\sigma_0}}  \norm{\varphi_n} \norm{\sqrt{N_2+1} \varphi_n} \\
        & \leq a \langle \varphi_n, N_2 \varphi_n \rangle + C_{a, n \eps,\sigma_0} \norm{\varphi_n}^2.
    \end{align*}
    Similarly,
    \begin{align*}
        \abs{ \langle \varphi_n, \eps \sum_{j=1}^n a^\#(k B_{\sigma} e^{-i k . x_j}) a^\#(k B_{\sigma} e^{-i k . x_j})  \varphi_n \rangle } 
        & \leq \eps n \left( \norm{k B_{\sigma}}  \norm{\varphi_n} \norm{\sqrt{N_2+1} \varphi_n} \right)^2 \\
        & \leq \eps n \norm{k B_{\infty}}^2 (a \langle \varphi_n, N_2 \varphi_n \rangle  + C_a \norm{\varphi_n}^2) \\
        & \leq  \eps n \norm{k B_{\infty}}^2  (a \langle \varphi_n , H_0^{(n)} \varphi_n \rangle + C_a \norm{\varphi_n}^2_{\Hi_n}),
    \end{align*}
    \begin{align*}
        \abs{ \langle \varphi_n, \eps \sum_{j=1}^n D_{x_j} . a^\#(k B_{\sigma} e^{-i k . x_j})  \varphi_n \rangle } & \leq \eps \norm{k B_\sigma} \sum_{j=1}^n \norm{D_{x_j} \varphi_n} \norm{ \sqrt{N+1} \varphi_n } \\
        & \leq \eps \norm{ k B_\sigma} \sum_{j=1}^n \langle \varphi_n, (-\Delta_{x_j}+N_2+1) \varphi_n \rangle  \\
        &\leq \eps n  \norm{k B_\infty} ( \langle \varphi_n, H_0^{(n)} \varphi_n \rangle + \norm{\varphi_n }^2 ).
    \end{align*}
    The last term is easy to deal with, as $V_\sigma$ is uniformly bounded in $L^2$:
    $$
        \abs{ \langle \varphi_n, \eps^2 \sum_{1 \leq i < j \leq n} V_{\sigma}(x_i - x_j)  \varphi_n \rangle } \leq (\eps n)^2 \underset{\sigma \geq \sigma_0}  \sup \norm{V_\sigma}_{\infty}  \norm{\varphi_n}^2.
    $$
    Hence, it suffices to choose $\sigma_0 = \sigma_0 ( n\eps)$ such that      $ \norm{k B_{\infty}}^2 \leq (\eps n)^{-1}$ and $ \norm{k B_\infty}  \leq a (\eps n)^{-1}$ to conclude, the bound on the constant term $\eps^2 n \langle B_\sigma, B_\sigma + 2 f_\sigma \rangle$ being immediate.
\end{proof}

We are now in position to define the dressed Hamiltonian $\hat H_\infty$ in a suitable mean-field regime, and without ultraviolet cutoff.

\begin{defn} ($\hat H_{\infty}$) We define $\hat H_\infty$ by its restrictions
$$
\hat H_\infty|_{\Hi_n} = \begin{cases}
    \hat H^{(n)}_{\infty} &\text{if } n \eps \leq C,\\
    0 &\text{if else},
\end{cases}
$$
where $C$ is the constant appearing in Assumption \eqref{assumption1}, and with domain
$$
D(\hat H_\infty) = \{ \varphi = (\varphi_n)_n \in \Hi = \bigoplus_{n \in \N} \Hi_n | \forall n \leq \lfloor C \eps^{-1} \rfloor, \varphi_n \in D(H_0^{(n)})  \}.
$$
\end{defn}
By undoing the dressing we are now able to give a definition of the undressed Hamiltonian.
\begin{defn} ($ H_{\mathrm{\infty}}$)
    We define
    $$
    H_\infty = U_\infty^* \hat H_\infty U_\infty,
    $$
    with domain
    $$
    D(H_\infty) = \{ \varphi = (\varphi_n)_n \in \Hi = \bigoplus_{n \in \N} \Hi_n | \forall n \leq \lfloor C \eps^{-1} \rfloor, \varphi_n \in U_\infty^{*(n)} D(H_0^{(n)})  \}.
    $$
\end{defn}

\section{Study of the classical model}  \label{Study of the classical model}
We study the well posedness of the Landau-Pekar equations,
\begin{equation*}
\begin{cases}
    i \partial_t u = - \Delta u + A_\alpha u, \\
    i \partial_t \alpha = \alpha + \frac{1}{\abs{k}^{(d-1)/2}} \mathcal F (\abs{u}^2 ),    
\end{cases}
\end{equation*}
where we recall that
$$
A_\alpha (x) =  2 \mathrm{Re} \mathcal F \left( \frac{\overline \alpha}{\abs{k}^{(d-1)/2}} \right).
$$

\begin{rem}
    The equation for $u$ can be interpreted as a Schrödinger equation with varying potential $A_\alpha$. This observation will be crucial in the sense that we will use standard dispersive estimates and Strichartz estimates for proving the global well-posedness in the physical space $L^2 \oplus L^2$. For that purpose we recall that a pair $(p,q)$ is called admissible whenever $p,q \in [2,+\infty]$, and
$$
\frac{2}{p}+\frac{d}{q} = \frac{d}{2}.
$$
We notice that $(2,2d/(d-2))$ is an admissible pair (for $d \neq 2$, which is excluded in this article).
\end{rem}

\begin{defn}
    $(u,\alpha) \in \mathscr C(\R, H^1 \oplus L^2)$ is called a mild solution if it satisfies the Duhamel formula
    $$
    (u,\alpha)(t) = \left( e^{i t \Delta}u_0 - i \int_0^t e^{i (t-s) \Delta} A_{\alpha(s)} u(s) \di s , e^{-i t} \alpha_0 - i \int_0^t e^{-i (t-s)} \frac{\mathcal F(\abs{u(s)}^2)}{\abs{k}^{(d-1)/2}} \di s \right).
    $$
\end{defn}
We will also consider solutions $(u,\alpha) \in \mathscr C(\R,L^2 \oplus L^2)$, and still call them mild solutions.
\subsection{Global well-posedness in the energy space $H^1 \oplus L^2$} \label{theorieH1}

\begin{prop} (local existence in $H^1 \oplus L^2$)
For all $R>0$, there exists $T(R)>0$ such that for any $(u_0,\alpha_0) \in H^1 \oplus L^2$ with $\norm{u_0}_{H^1} + \norm{\alpha_0}_{L^2} \leq R$, there exists a unique $(u,\alpha) \in \mathscr C ([0,T(R)],H^1 \oplus L^2)$ mild solution of \eqref{PDE} with $(u(0),\alpha(0)) = (u_0,\alpha_0)$. Furthermore, for all $t \in [0,T(R)],$ the flow map
$$
\phi_t : (u_0,\alpha_0) \in B_{H^1 \oplus L^2}(0,R) \longmapsto  (u,\alpha) \in \mathscr C([0,T(R)], H^1 \oplus L^2)
$$
is Lipschitz.
\end{prop}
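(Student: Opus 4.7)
The approach is a Banach contraction argument on the Duhamel map $\Phi(u,\alpha) := (\Phi_1,\Phi_2)$ read off from the definition of mild solution, restricted to a closed ball in $\mathcal C([0,T(R)], H^1 \oplus L^2)$. Two elementary Fourier-side estimates form the backbone. First, since $\hat A_\alpha(k)$ is essentially $\overline{\alpha(-k)}\,|k|^{-(d-1)/2}$ plus a symmetric contribution, Plancherel yields $\norm{A_\alpha}_{\dot H^{(d-1)/2}} \lesssim \norm{\alpha}_{L^2}$; combined with the Sobolev embeddings $\dot H^{(d-1)/2} \hookrightarrow L^{2d}$ and $\dot H^{(d-3)/2} \hookrightarrow L^{2d/3}$ (both valid in dimension $d\geq 3$), this gives
\begin{equation*}
\norm{A_\alpha}_{L^{2d}(\R^d)} + \norm{\nabla A_\alpha}_{L^{2d/3}(\R^d)} \lesssim \norm{\alpha}_{L^2}. \qquad (\ast)
\end{equation*}
Dually, by Plancherel and Hardy--Sobolev,
\begin{equation*}
\norm{|k|^{-(d-1)/2}\mathcal F(|u|^2)}_{L^2} \simeq \norm{|u|^2}_{\dot H^{-(d-1)/2}} \lesssim \norm{|u|^2}_{L^{2d/(2d-1)}} = \norm{u}_{L^{4d/(2d-1)}}^2 \lesssim \norm{u}_{H^1}^2, \qquad (\ast\ast)
\end{equation*}
the last step using $H^1 \hookrightarrow L^{4d/(2d-1)}$.

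From $(\ast\ast)$ the $\alpha$-component is straightforwardly Lipschitz: $\norm{\Phi_2(u,\alpha)(t)-e^{-it}\alpha_0}_{L^2} \leq T \sup_{s\in[0,T]}\norm{u(s)}_{H^1}^2$. The delicate piece is $\Phi_1$, because $A_\alpha \notin L^\infty$ rules out a direct $H^1$-level estimate of $A_\alpha u$. My remedy is to invoke Strichartz estimates with the non-endpoint admissible pair $(\tilde p,\tilde q)=(4,2d/(d-1))$, for which the dual inhomogeneous bound reads $\norm{\int_0^t e^{i(t-s)\Delta}F\,ds}_{L^\infty_T L^2} \lesssim \norm{F}_{L^{4/3}_T L^{2d/(d+1)}}$. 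Applying this separately with $F=A_\alpha u$ and $F=\nabla(A_\alpha u)=u\nabla A_\alpha + A_\alpha \nabla u$, then bounding each product by H\"older using $(\ast)$ together with the Sobolev embeddings $H^1 \hookrightarrow L^2 \cap L^{2d/(d-2)}$, yields
\begin{equation*}
\norm{\Phi_1(u,\alpha)}_{L^\infty_T H^1} \leq \norm{u_0}_{H^1} + C\, T^{3/4}\, \norm{\alpha}_{L^\infty_T L^2}\, \norm{u}_{L^\infty_T H^1}.
\end{equation*}
Together with the bound on $\Phi_2$, this makes $\Phi$ send a ball of radius $2R$ in $\mathcal C_T(H^1\oplus L^2)$ into itself and be contracting once $T=T(R)$ is small enough. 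Uniqueness is automatic from the fixed point, and the Lipschitz property of $\phi_t$ is extracted by rerunning $(\ast)$, $(\ast\ast)$, and the same Strichartz bound on differences of two solutions.

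The main obstacle is precisely the failure $A_\alpha \notin L^\infty$, which forces the detour through Strichartz with the non-endpoint pair $(4,2d/(d-1))$ chosen so that the dual exponent $2d/(d+1)$ makes both $\norm{A_\alpha u}_{L^{2d/(d+1)}}$ and $\norm{u\nabla A_\alpha}_{L^{2d/(d+1)}}$ controllable by $\norm{\alpha}_{L^2}\norm{u}_{H^1}$ via H\"older and Sobolev.
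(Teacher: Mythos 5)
Your proposal is correct and follows essentially the same route as the paper: a Banach contraction on the Duhamel map in $\mathscr C_T(H^1\oplus L^2)$, using the inhomogeneous Strichartz estimate for the admissible pair $(4,2d/(d-1))$ together with the embeddings $\dot H^{(d-1)/2}\hookrightarrow L^{2d}$, $\dot H^{(d-3)/2}\hookrightarrow L^{2d/3}$ to control $A_\alpha u$ in $W^{1,2d/(d+1)}$, and a Plancherel--Sobolev duality bound for the $\alpha$-component. The only (cosmetic) difference is that you estimate $\norm{|k|^{-(d-1)/2}\mathcal F(|u|^2)}_{L^2}$ by $\norm{u}_{L^{4d/(2d-1)}}^2\lesssim\norm{u}_{H^1}^2$ whereas the paper uses the bilinear refinement $\norm{u}_{H^1}\norm{u}_{L^2}$; both give the same local existence.
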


\begin{proof}
    Let $T,M >0$ to be fixed later. We will use Banach's fixed point theorem on the complete metric space $(E,d)$, where 
    $$
    E = \{ (u,\alpha) \in  \mathscr C ([0,T],H^1) \times \mathscr C ([0,T],L^2) ; \norm{u}_{L^\infty([0,T],H^1)} + \norm{\alpha}_{L^\infty([0,T],L^2)} \leq M \}, 
    $$
    and 
    $$
    d((u_1,\alpha_1),(u_2,\alpha_2)) = \norm{u_1 - u_2}_{L^\infty([0,T],H^1)} + \norm{\alpha_1 - \alpha_2}_{L^\infty([0,T],L^2)}.
    $$
Consider
$$
\mathscr L (u,\alpha)(t) = \left( e^{i t \Delta}u_0 - i \int_0^t e^{i (t-s) \Delta} A_{\alpha(s)} u(s) \di s , e^{-i t} \alpha_0 - i \int_0^t e^{-i (t-s)} \frac{\mathcal F(\abs{u(s)}^2)}{\abs{k}^{(d-1)/2}} \di s \right).
$$
First, we check that $\mathscr L$ maps $E$ into itself. Using Strichartz estimates for the admissible pair $(4, 2 d /(d-1))$, one finds
    $$\norm{u (t)}_{L^\infty H^1} \leq C \norm{u_0}_{H^1} + C \norm{A_{\alpha} u}_{L^{4/3} W^{1,2 d/(d+1)}}.$$
    Now remark that by the Sobolev embeddings, $A_\alpha \in \dot H^{(d-1)/2}\xhookrightarrow{} L^{2 d}$, $\nabla A_\alpha \in \dot H^{(d-3)/2}\xhookrightarrow{} L^{2 d/3}$ and $u \in H^1 \xhookrightarrow{} L^{2 d/(d-2)}$, hence
    \begin{align*}
    \norm{A_\alpha(t) u(t)}_{W^{1, 2 d/(d+1)}}  & \leq 2 \norm{\mathcal F(\alpha(t) \abs{k}^{(1-d)/2})}_{L^{2 d}} \norm{u(t)}_{H^1}
    \\
    & \leq C \norm {\mathcal F(\alpha(t) \abs{k}^{(1-d)/2})}_{\dot H^{(d-1)/2}}  \norm{u(t)}_{H^1}
    \\
    & = C \norm{\alpha(t)}_{L^2} \norm{u(t)}_{H^1} \\
    &\leq C M^2.
    \end{align*}
By Hölder,
$$
\norm{A_\alpha u}_{L^{4/3} W^{1,2d/(d+1)}} \leq T^{3/4}\norm{A_\alpha(t) u(t)}_{L^\infty W^{1, 2 d/(d+1)}} \leq C T^{3/4} M^2.
$$
Hence the estimate
    \begin{equation} \label{cond1}
    \norm{\mathscr L u }_{L^\infty H^1} \leq C R + C T^{3/4} M^2.
    \end{equation}
    Similarly, one can bound
    \begin{equation} \label{cond2}
            \norm{\mathscr L \alpha}_{L^\infty L^2} \leq R + C T M^2,
    \end{equation}
    where we used the inequality 
    $$\norm{\frac{\mathcal F(\abs{u(s)}^2)}{\abs{k}^{(d-1)/2}}}_{L^2} \leq C \norm{u(s)}_{H^1} \norm{u(s)}_{L^2} \leq C M^2.$$
    Hence choosing first $M = 2(C+1) R$ and then $T$ small enough so that 
    $$ C T M^2 + C T^{3/4} M^2 \leq \frac{M}{2},$$
    we have that $\mathscr L (E) \subset E$. It remains to prove that $\mathscr L$ is, say, $1/2$-Lipschitz. In the same fashion as previously, one can check that 
    \begin{align*}
    d(\mathscr L(u_1,\alpha_1),\mathscr L(u_2,\alpha_2)) &\leq C \norm{A_{\alpha_1} u_1 - A_{\alpha_2} u_2}_{L^{4/3} W^{1,2 d /(d+1)}} + C
    T \norm{\frac{\mathcal F(\abs{u_1}^2-\abs{u_ 2}^2)}{ \abs{k}^{(d-1)/2}}}_{L^2}, \\
    &\leq C T^{3/4} (\norm{\alpha_1}_{L^\infty L^2} + \norm{\alpha_2}_{L^\infty L^2}) \norm{u_1 - u_2}_{L^\infty H^1} \\
    & + C T^{3/4} (\norm{u_1}_{L^\infty H^1} + \norm{u_2}_{L^\infty H^1}) \norm{\alpha_1 - \alpha_2}_{L^\infty L^2} \\ & + C T (\norm{u_1}_{L^\infty H^1} + \norm{u_2}_{L^\infty H^1}) \norm{u_1-u_2}_{L^\infty H^1} \\
    &\leq (2CM T^{3/4} + 2 C T M) d((u_1,\alpha_1),(u_2,\alpha_2)) \\
    &\leq \frac{1}{2}d((u_1,\alpha_1),(u_2,\alpha_2)) ,
    \end{align*}
    for $T$ small enough only depending on $M = 2(C+1) R$. This gives the local existence. For the regularity of the flow map, consider $(u_{0},\alpha_{0}), (v_{0},\beta_{0}) \in B_{H^1 \oplus L^2}(0,R)$ and $(u,\alpha),(v,\beta)$ the associated solutions. Using the former computations, we have for every $t \in [0,T(R)]$,
    \begin{align*}
         \norm{u-v}_{L^\infty([0,t],H^1)}+&\norm{\alpha-\beta}_{L^\infty([0,t],L^2)} \leq C(R) (\norm{u_0-v_0}_{H^1} +\norm{\alpha_0-\beta_0}_{L^2}) \\
         & \ \ \ \ \ \ \ \ \ \ \ \ +\frac{1}{2} (\norm{u-v}_{L^\infty([0,t],H^1)}+\norm{\alpha-\beta}_{L^\infty([0,t],L^2)}),
    \end{align*}
    that is,
    $$
    \norm{u-v}_{L^\infty([0,t],H^1)}+\norm{\alpha-\beta}_{L^\infty([0,t],L^2)} \leq 2 C(R) (\norm{u_0-v_0}_{H^1} +\norm{\alpha_0-\beta_0}_{L^2}),
    $$
    proving the Lipschitz caracter of $\phi_t$.
    \end{proof}

\begin{prop} (conservation laws) \label{conservationH^1}
Let $(u,\alpha) \in \mathscr C([0,T],H^1 \oplus L^2)$ be a solution to \eqref{PDE}. Then, for all $t \in [0,T]$
$$
\begin{aligned}
    \norm{u(t)}_{L^2} = \norm{u(0)}_{L^2}, & \text{ (conservation of the mass)} \\
   h_\infty(u(t),\alpha(t)) = h_\infty(u(0),\alpha(0)), & \text{ (conservation of the energy)}
\end{aligned}
$$
where we recall that the classical energy Hamiltonian $h_\infty:H^1 \oplus L^2 \rightarrow \R$ is defined as
$$
h_\infty(u,\alpha) = \norm{\nabla u}_{L^2}^2 + \norm{\alpha}_{L^2}^2 + \int A_\alpha (x) \abs{u(x)}^2 \di x.
$$
\end{prop}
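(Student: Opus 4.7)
The plan is to derive both conservation laws from the Hamiltonian structure of \eqref{PDE} via a density argument, which is only needed for the energy. For the mass, no regularization is necessary: interpreting $\partial_t u \in \mathscr C(\R, H^{-1})$ via the $u$-equation, and using that $-\Delta + A_\alpha$ is symmetric in the $H^{-1}$--$H^1$ pairing because $A_\alpha$ is real-valued, I would obtain
$$
\tfrac{1}{2}\tfrac{d}{dt}\norm{u(t)}_{L^2}^2 = \mathrm{Re}\langle \partial_t u, u\rangle_{H^{-1},H^1} = \mathrm{Im}\langle -\Delta u + A_\alpha u, u\rangle = 0,
$$
yielding mass conservation on $[0,T]$.

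For the energy, the formal Hamiltonian identity reads
\begin{align*}
\tfrac{d}{dt}h_\infty(u,\alpha) &= 2\mathrm{Re}\langle \partial_t u, -\Delta u + A_\alpha u\rangle + 2\mathrm{Re}\langle \partial_t \alpha, \alpha + \abs{k}^{-(d-1)/2}\mathcal F(\abs{u}^2)\rangle \\
&= 2\mathrm{Re}\bigl(-i\norm{-\Delta u + A_\alpha u}_{L^2}^2 - i\norm{\alpha + \abs{k}^{-(d-1)/2}\mathcal F(\abs{u}^2)}_{L^2}^2\bigr) = 0,
\end{align*}
obtained by differentiating the three terms defining $h_\infty$ (splitting the contribution of $\int A_\alpha \abs{u}^2 \di x$ into its $u$- and $\alpha$-parts) and substituting the equations. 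To make this rigorous I would approximate $(u_0,\alpha_0) \in H^1 \oplus L^2$ by Schwartz data $(u_0^n, \alpha_0^n)$ and re-run the local existence fixed point argument one derivative higher, obtaining classical solutions $(u^n, \alpha^n) \in \mathscr C^1([0,T(R)], L^2 \oplus L^2)$ on the \emph{same} interval $[0,T(R)]$, for which the chain rule above is applied pointwise in $t$ and yields $h_\infty(u^n(t), \alpha^n(t)) = h_\infty(u_0^n, \alpha_0^n)$.

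I would then pass to the limit using the Lipschitz continuity of $\phi_t$ on $B_{H^1 \oplus L^2}(0,R)$ established in the previous proposition, giving $(u^n, \alpha^n) \to (u, \alpha)$ in $\mathscr C([0,T(R)], H^1 \oplus L^2)$, together with the continuity of $h_\infty$ on $H^1 \oplus L^2$. The latter reduces, for the interaction term, to the Sobolev-type estimate $\abs{\int A_\alpha \abs{u}^2 \di x} \leq C \norm{\alpha}_{L^2} \norm{u}_{H^1}^2$ already used in the fixed point. One then iterates on successive intervals to cover $[0,T]$. The main obstacle is the persistence of regularity step: one must identify a smoother auxiliary class, dense in $H^1 \oplus L^2$, closed under the flow on $[0,T(R)]$ with time of existence controlled only by the $H^1 \oplus L^2$ norm, and regular enough to justify the Hamiltonian chain rule. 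This is routine---in all the nonlinear estimates of the previous proposition only the lower $H^1 \oplus L^2$ norm appears quadratically while higher-regularity norms appear linearly---but it is the only non-formal input to the argument.
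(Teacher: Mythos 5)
Your overall plan matches the paper's: regularize the initial data, verify persistence of higher regularity so the flow is genuinely differentiable, compute the time derivative of $h_\infty$, and pass to the limit using the Lipschitz flow map together with continuity of $h_\infty$ on $H^1 \oplus L^2$. The paper takes $H^2 \oplus \mathcal F H^1$ as the smoother class and verifies in a dedicated step that the local fixed-point argument survives at this regularity (bounding $A_\alpha u$, $A_\alpha \Delta u$, $\nabla A_\alpha \cdot \nabla u$ and $(\Delta A_\alpha) u$ in $L^{2d/(d+1)}$ via Sobolev and Hardy-Littlewood-Sobolev). You instead propose Schwartz data, which would work but is overkill, and you leave the persistence step as a claim. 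Your heuristic that the higher norm enters the nonlinear estimates linearly is the right one and does indeed hold, but the verification is not entirely formal: the tricky terms involve two derivatives landing on $A_\alpha$ and require the interpolation $\alpha \in L^2 \cap \mathcal F \dot H^1$, which is exactly why the paper chooses $\mathcal F H^1$ rather than full Schwartz regularity. So while your argument is structurally correct, the step you call ``routine'' is the one place the paper spends genuine effort, and you should be aware that if you chose a class too large (say $H^2 \oplus L^2$ without the extra decay on $\alpha$) the estimates would fail because $\Delta A_\alpha$ would not be controllable.

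For the mass, your $H^{-1}$--$H^1$ duality computation is a genuine alternative that avoids regularization, and it is cleaner in spirit. Two caveats: first, you should check that $A_\alpha u \in H^{-1}$ for $u \in H^1$, $\alpha \in L^2$, which follows from $A_\alpha \in L^{2d}$ and $u \in L^{4d/(2d-1)}$ by Gagliardo-Nirenberg (this is exactly the bound used later for the continuity of $h_\infty$). Second, the identity $\tfrac{d}{dt}\|u(t)\|_{L^2}^2 = 2\mathrm{Re}\langle \partial_t u, u\rangle_{H^{-1},H^1}$ is not a free application of the chain rule when $u \in \mathscr C(H^1)$ with $\partial_t u \in \mathscr C(H^{-1})$; it is a Lions-Magenes type lemma, usually itself proved by mollification. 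So what you have bought is economy of presentation, not a genuinely regularization-free argument. The paper, having already set up the $H^2 \oplus \mathcal F H^1$ machinery for the energy, simply reuses it for the mass, which is arguably more uniform. Both routes are valid.
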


\begin{proof}
    Let $(u_0,\alpha_0) \in H^1 \oplus L^2$. Let $\chi_m (x) = \chi(x/m)$, with $\chi \in \mathscr C^\infty_c (\R^d,[0,1])$ being identically $1$ in a neighborhood of the origin, and $m \in \N^*$. We define $u_{m,0} := \chi_m(D_x)u_0$ and $\alpha_{m,0} := \chi_m \alpha_0 $. We notice that $u_{m,0} \in H^2$, $\alpha_{m,0} \in \mathcal F H^1$, $\norm{u_{m,0}}_{H^1} \leq \norm{u_0}_{H^1}$, $\norm{\alpha_{m,0}}_{L^2} \leq \norm{\alpha_0}_{L^2}$ and that $(u_{m,0},\alpha_{m,0})$ converges in $B_{H^1 \oplus L^2}(0,\norm{u_0}_{H^1}+\norm{\alpha_0}_{L^2}) $ to $(u_0,\alpha_0)$. \\ \\
    \textbf{Step 1.} Adaptation of the local well-posedness to $H^2 \oplus \mathcal F H^1$. \\
    The proof is almost the same as in the case $H^1 \oplus L^2$. Since $$\norm{\frac{\mathcal F( \abs{u(s)}^2 )}{\abs{k}^{(d-1)/2}}}_{L^2} \lesssim \norm{u(s)}_{H^1} \lesssim \norm{u(s)}_{H^2},$$ we only have to find a bound for
    $$
    \norm{\frac{\mathcal F( \abs{u(s)}^2 )}{\abs{k}^{(d-1)/2}}}_{\mathcal F \dot H^1},
    $$
    as well as
    $$
    \norm{\int_0^t e^{i(t-s)\Delta}  A_{\alpha(s)} u(s) \di s}_{L^\infty H^2} = \norm{\int_0^t e^{i(t-s)\Delta} (1-\Delta) A_{\alpha(s)} u(s) \di s}_{L^\infty L^2},
    $$
    where it suffices to bound $A_\alpha u, A_\alpha (\Delta u), \nabla A_\alpha . \nabla u$ and $ (\Delta A_\alpha) u$ in some $L^{p'} L^{q'}$ norm, where $(p,q)$ is a chosen admissible pair. For the first one, we use Hardy-Littlewood-Sobolev and the embedding $H^2 \hookrightarrow L^{4 d/(2d-3)}$ to estimate
    $$
    \norm{\frac{\mathcal F( \abs{u(s)}^2 )}{\abs{k}^{(d-1)/2}}}_{\mathcal F \dot H^1} = \norm{\frac{\mathcal F( \abs{u(s)}^2 )}{\abs{k}^{(d-3)/2}}}_{L^2} = \norm{ \frac{c_d}{\abs{x}^{(d+3)/2}} \ast \abs{u(s)}^2}_{L^2} \lesssim \norm{u(s)}_{L^{4d/(2d-3)}}^2 \lesssim \norm{u(s)}_{H^2}^2.
    $$
    For the second bound, we claim that $A_\alpha u, A_\alpha (\Delta u), \nabla A_\alpha . \nabla u$ and $ (\Delta A_\alpha) u$ all belong to $L^{2d/(d+1)}$, and thus taking the admissible pair $(p,q) = (4, 2d/(2d-1))$ as before gives the appropriate bound. Indeed, keeping in mind that $d \geq 3$ and $\alpha \in \mathcal F H^1$, one has
    $$\begin{cases}
        A_\alpha \in \dot H^{(d-1)/2} \cap \dot H^{(d+1)/2} \hookrightarrow L^{2 d}, \\
        \nabla A_\alpha \in  \dot H^{(d-3)/2} \cap \dot H^{(d-1)/2} \hookrightarrow  L^{2 d}, \\
        \Delta A_\alpha \in \dot H^{(d-5)/2} \cap \dot H^{(d-3)/2} \hookrightarrow L^{2d/3}.
    \end{cases}$$
    Therefore, using Sobolev embeddings,
    $$
    \begin{cases}
        u, \nabla u, \Delta u \in L^2 \Rightarrow  A_\alpha u,\nabla u.\nabla A_\alpha, A_\alpha \Delta u \in L^{2d/(d+1)}, \\
        u \in L^{2d/(d-2)} \Rightarrow \Delta A_\alpha u \in L^{2d/(d+1)},
    \end{cases}
    $$
    which concludes. Therefore, $(u,\alpha)$ is also a strong solution of the Landau-Pekar equations, meaning that $\eqref{PDE}$ holds in $L^2 \oplus L^2$.
\\
    
\textbf{Step 2.} Proof of the conservation laws. \\
Let $(u_m,\alpha_m)$ be the local solution in $H^2 \oplus \mathcal F H^1$ of \eqref{PDE} with initial condition $(u_{m,0},\alpha_{m,0})$. By regularity of $(u_m,\alpha_m)$, we can differentiate
\begin{align*}
    \frac{\di}{\di t}\norm{u_m(t)}_{L^2}^2 &= 2 \mathrm{Re} \langle \partial_t u_m(t),  u_m(t) \rangle \\
    &= 2 \mathrm{Re} \langle  i \Delta u_m(t) - i A_{\alpha_m(t)}u_m(t) , u_m(t) \rangle \\
    &=  2 \mathrm{Re} (i \norm{\nabla u_m(t)}_{L^2}^2) + 2 \mathrm{Re} \left(i \int A_{\alpha_m(t)} \abs{u_m(t)}^2 \right) \\
    &=0,
\end{align*}
so that for every $t \in (0,T)$,$ \norm{u_m(t)}_{L^2}$ is constant, and by continuity of $u_m$, also for every $t \in [0,T]$, giving
$$
\norm{u_m(t)}_{L^2} = \norm{u_{m,0}}_{L^2}.
$$
By continuity of the flow in $B_{H^1 \oplus L^2}(0,\norm{u_0}_{H^1}+\norm{\alpha_0}_{L^2})$, we can pass to the limit and obtain the conservation of the mass.
Similarly, we can differentiate the Hamiltonian and use the fact that $(u_m,\alpha_m)$ solves \eqref{PDE} to write
\begin{align*}
    \frac{\di}{\di t} h_\infty(u_m(t),\alpha_m(t))&= 2\mathrm{Re} \langle \partial_t u_m(t), -\Delta u_m(t) \rangle + 2 \mathrm{Re} \langle \partial_t \alpha_m(t), \alpha_m(t) \rangle \\
    &+ 2 \mathrm{Re} \int A_{\alpha_m(t)} \partial_t u_m(t) \overline{u_m(t)} + \int A_{\partial_t \alpha_m(t)} \abs{u_m(t)}^2 \\
    &=  2\mathrm{Re} \langle- i A_{\alpha_m(t)} u_m(t), -\Delta u_m(t) \rangle + 2 \mathrm{Re} \langle -i \frac{\mathcal F(\abs{u_m(t)}^2)}{ \abs{k}^{(d-1)/2}}, \alpha_m(t) \rangle \\
    &+ 2 \mathrm{Re} \int A_{\alpha_m(t)} (i \Delta u_m(t)) \overline{u_m(t)} + 2 \mathrm{Re} \int \frac{- i \alpha_m(t)}{\abs{k}^{(d-1)/2}} \mathcal F(\abs{u_m(t)}^2) \\
    &=0.
\end{align*}
where from the first to second line, we have erased the terms of the form $\mathrm{Re} (i x)$ with $x \in \R$ and used Parseval's inequality to transfer the Fourier transform to the terms in $\abs{u_m(t)}^2$. We are able to conclude in the same way as for the conservation of mass, if we prove that $h_\infty : H^1 \oplus L^2 \rightarrow \R$ is continuous.
\\ \\
\textbf{Step 3.} $h_\infty : H^1 \oplus L^2 \rightarrow \R$ is continuous. \\
Since $(u,\alpha) \in H^1 \oplus L^2 \longmapsto \norm{\nabla u}_{L^2}^2+\norm{\alpha}_{L^2}^2 \in \R$ is continuous, we only need to prove that 
$$
(u,\alpha) \in H^1 \oplus L^2 \longmapsto \int A_\alpha \abs{u}^2
$$
is. Recall the Gagliardo-Nirenberg inequality:
$$
\norm{u}_{L^{4d/(2d-1)}} \leq C \norm{\nabla u}_{L^2}^{1/4} \norm{u}_{L^2}^{3/4}.
$$
For any $(u,\alpha),(v,\beta) \in H^1 \oplus L^2$,
\begin{align*}
    \abs{ \int A_\alpha \abs{u}^2-\int A_\beta \abs{v}^2}  &\leq \norm{A_\alpha \abs{u}^2 - A_\beta \abs{v}^2}_{L^1} \\
        & \leq \norm{A_\alpha-A_\beta}_{L^{2 d}} \norm{u^2}_{L^{2 d/(2d-1)}} +\norm{A_\beta}_{L^{2 d}} \norm{u \overline{u} - v \overline{v}}_{L^{2 d/(2d-1)}}  \\
        & \leq \norm{A_\alpha-A_\beta}_{L^{2 d}} \norm{u}_{L^{4 d/(2d-1)}}^2 \\& + \norm{A_\beta}_{L^{2 d}} \norm{u  - v}_{L^{4 d/(2d-1)}} (\norm{u}_{L^{4 d/(2d-1)}} +\norm{v}_{L^{4 d/(2d-1)}} ) \\
        & \leq C \norm{\alpha-\beta}_{L^2} \norm{\nabla u}_{L^2}^{1/2} \norm{u_0}_{L^2}^{3/2}
        \\ &+ C \norm{\beta}_{L^2} \norm{\nabla(u-v)}_{L^2}^{1/4} \norm{u-v}_{L^2}^{3/4} (\norm{\nabla u}_{L^2}^{1/4} \norm{u}_{L^2}^{3/4} +\norm{\nabla v}_{L^2}^{1/4} \norm{v}_{L^2}^{3/4}),
\end{align*}
proving the continuity.
\end{proof}

\begin{prop}(global well posedness in $H^1 \oplus L^2$)
    Let $(u_0,\alpha_0) \in H^1 \oplus L^2$. Then the solution $(u,\alpha)$ of \eqref{PDE} starting from $(u_0,\alpha_0)$ is global in time.
\end{prop}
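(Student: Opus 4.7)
Given the local well-posedness and the conservation of mass and energy from the preceding propositions, the standard strategy is a bootstrap: it suffices to establish that $\|u(t)\|_{H^1}+\|\alpha(t)\|_{L^2}$ remains bounded by a quantity depending only on the initial data, uniformly on any finite time interval. Once such an a priori bound is available, the local existence time $T(R)$ being decreasing in $R$, one can iteratively extend the solution in time, covering the whole real line.

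The a priori bound proceeds as follows. Conservation of mass directly gives $\|u(t)\|_{L^2}=\|u_0\|_{L^2}$, so it remains to control $\|\nabla u(t)\|_{L^2}^2+\|\alpha(t)\|_{L^2}^2$. From the conservation of the energy $h_\infty$, this amounts to estimating the interaction term $\int A_{\alpha(t)}|u(t)|^2\di x$, which is not sign-definite. Writing it in Fourier variables,
\[
\Bigl|\int A_\alpha|u|^2\,\di x\Bigr|
= 2\Bigl|\mathrm{Re}\,\Bigl\langle \alpha,\tfrac{1}{|k|^{(d-1)/2}}\mathcal F(|u|^2)\Bigr\rangle\Bigr|
\leq 2\|\alpha\|_{L^2}\,\Bigl\|\tfrac{\mathcal F(|u|^2)}{|k|^{(d-1)/2}}\Bigr\|_{L^2}.
\]
The second factor is an $\dot H^{-(d-1)/2}$ norm of $|u|^2$, controlled via Hardy--Littlewood--Sobolev by $\||u|^2\|_{L^{2d/(2d-1)}}=\|u\|_{L^{4d/(2d-1)}}^2$. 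The Gagliardo--Nirenberg inequality already used in Step~3 of Proposition~\ref{conservationH^1} then gives
\[
\Bigl|\int A_\alpha|u|^2\,\di x\Bigr|
\leq C\,\|\alpha\|_{L^2}\,\|\nabla u\|_{L^2}^{1/2}\,\|u\|_{L^2}^{3/2}.
\]

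Plugging this into the conservation identity $\|\nabla u(t)\|_{L^2}^2+\|\alpha(t)\|_{L^2}^2 = h_\infty(u_0,\alpha_0)-\int A_{\alpha(t)}|u(t)|^2$ and using Young's inequality twice (first with exponents $(4,4/3)$ to absorb $\|\nabla u\|_{L^2}^{1/2}$ into $\tfrac{1}{2}\|\nabla u\|_{L^2}^2$, then with exponents $(3/2,3)$ to absorb $\|\alpha\|_{L^2}^{4/3}$ into $\tfrac{1}{2}\|\alpha\|_{L^2}^2$), together with conservation of mass, yields
\[
\tfrac{1}{2}\|\nabla u(t)\|_{L^2}^2+\tfrac{1}{2}\|\alpha(t)\|_{L^2}^2
\leq h_\infty(u_0,\alpha_0)+C\,\|u_0\|_{L^2}^{6},
\]
a constant depending only on $(u_0,\alpha_0)$. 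Together with the conservation of mass, this provides a uniform in time bound on $\|u(t)\|_{H^1}+\|\alpha(t)\|_{L^2}$, and a standard continuation argument using the local well-posedness extends the solution to all of $\R$.

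The only nontrivial step is the control of the indefinite cross term $\int A_\alpha|u|^2$; the rest is the routine bootstrap. This is the usual Hartree-type subcriticality phenomenon: the form factor $|k|^{-(d-1)/2}$, although singular, remains in a sub-$H^1$ scaling so that Hardy--Littlewood--Sobolev combined with Gagliardo--Nirenberg yields a power strictly less than $2$ in $\|\nabla u\|_{L^2}$, letting Young's inequality close the estimate.
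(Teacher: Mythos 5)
Your proof is correct and takes essentially the same approach as the paper: conservation of mass and energy, the bound $\bigl|\int A_\alpha|u|^2\bigr|\leq C\norm{\alpha}_{L^2}\norm{\nabla u}_{L^2}^{1/2}\norm{u_0}_{L^2}^{3/2}$ derived from Gagliardo--Nirenberg and the estimate on the singular convolution kernel, Young's inequality to absorb the indefinite cubic term into half of the quadratic part, and a continuation argument via the uniform a priori bound together with the local existence time. The only cosmetic difference is that you invoke Hardy--Littlewood--Sobolev directly to control $\norm{|k|^{-(d-1)/2}\mathcal F(|u|^2)}_{L^2}$, while the paper obtains the same bound by re-using the Sobolev-embedding-plus-H\"older computation from Step~3 of Proposition~\ref{conservationH^1}.
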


\begin{proof}
   Using the continuity of $h$ and the conservation of mass, we obtain the following bound:
    \begin{align*}
        \abs{ \int A_\alpha(t)  \abs{u(t)}}  
        & \leq C \norm{\alpha(t)}_{L^2} \norm{\nabla u(t)}_{L^2}^{1/2} \norm{u_0}_{L^2}^{3/2} \\
        &\leq C \norm{\alpha(t)}_{L^2} \norm{\nabla{u}(t)}^{1/2}_{L^2}\\
        &\leq (\eps \norm{\nabla{u}(t)}_{L^2} + C_\eps)\norm{\alpha(t)}_{L^2} \\
        &\leq \eps (\norm{\nabla{u}(t)}_{L^2}^2 + \norm{\alpha(t)}_{L^2}^2) + \tilde C_\eps.
    \end{align*}
    Choosing $\eps = \frac{1}{2}$, one obtains
    $$ \norm{u(t)}_{H^1}^2 + \norm{\alpha(t)}_{L^2}^2 \leq 2 (h_\infty(u(t),\alpha(t)) + \norm{u(t)}_{L^2}^2+\tilde C_{1/2}) = 2( h(u_0,\alpha_0) + \norm{u_0}_{L^2}^2 +\tilde C_{1/2})=:2 R^2. $$
        Now, consider the maximal time of existence $T$ of $(u,\alpha)$. Assume that $T<\infty$. Then we can construct a solution of \eqref{PDE} starting at time $T-\frac{1}{2} T(R)$ with initial condition $(u(T-\frac{1}{2} T(R)),\alpha(T-\frac{1}{2} T(R)))$. Since the norm of the initial condition is bounded by $R$, and by uniqueness, this solution extend $u$ to $[0,T+\frac{1}{2}T(R))$, which contradicts the maximality of $T$. Hence $T=\infty$. 
\end{proof}

\subsection{Global well-posedness in the physical space $L^2 \oplus L^2$} \label{theorieL2}

\begin{prop} (local existence in $L^2 \oplus L^2$)
For all $R>0$, there exists $T(R)>0$ such that for any $(u_0,\alpha_0) \in L^2 \oplus L^2$ with $\norm{u_0}_{L^2} + \norm{\alpha_0}_{L^2} \leq R$, there exists a unique $(u,\alpha) \in \mathscr C ([0,T(R)],L^2) \cap L^2([0,T(R)],L^{2d/(d-2)}) \times \mathscr C([0,T(R)],L^2)$ solution of \eqref{PDE} with $(u(0),\alpha(0)) = (u_0,\alpha_0)$. Furthermore, for all $t \in [0,T(R)],$ the flow map
$$
\phi_t : \begin{array}{ll}  B_{L^2 \oplus L^2}(0,R) &\longrightarrow  \mathscr C ([0,T(R)],L^2) \cap L^2([0,T(R)],L^{2d/(d-2)}) \times \mathscr C([0,T(R)],L^2) \\ (u_0,\alpha_0) &\longmapsto (u,\alpha).
\end{array}
$$
is Lipschitz.
\end{prop}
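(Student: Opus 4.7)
The plan is to mimic the $H^1 \oplus L^2$ local existence proof of Section \ref{theorieH1}, replacing the Sobolev regularity input $H^1 \hookrightarrow L^{2d/(d-2)}$ by the Strichartz dispersive norm $L^2_t L^{2d/(d-2)}_x$. I would apply Banach's fixed point theorem to the Duhamel mapping $\mathscr L = (\mathscr L_1, \mathscr L_2)$ on the closed $M$-ball of the space $X_T \times Y_T := (L^\infty_t L^2_x \cap L^2_t L^{2d/(d-2)}_x) \times L^\infty_t L^2_x$ on $[0, T]$. The endpoint inhomogeneous Strichartz estimate (available since $d \geq 3$), applied with the admissible pairs $(\infty, 2)$ and $(2, 2d/(d-2))$, yields
\begin{align*}
\norm{\mathscr L_1(u,\alpha)}_{X_T} &\leq C \norm{u_0}_{L^2} + C \norm{A_\alpha u}_{L^1_t L^2_x}, \\
\norm{\mathscr L_2(u,\alpha)}_{Y_T} &\leq \norm{\alpha_0}_{L^2} + \int_0^T \norm{\tfrac{\mathcal F(\abs{u(s)}^2)}{\abs{k}^{(d-1)/2}}}_{L^2_k} \di s,
\end{align*}
so the whole problem reduces to producing bilinear estimates on these two inhomogeneities, with a gain of a power of $T$.

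The core input is a pair of bilinear bounds by Sobolev duality: starting from $\dot H^{(d-1)/2} \hookrightarrow L^{2d}$ and its dual $L^{2d/(2d-1)} \hookrightarrow \dot H^{-(d-1)/2}$, Plancherel on the Fourier side and Hölder for the spatial product give
$$
\norm{A_\alpha u}_{L^2_x} \leq C \norm{\alpha}_{L^2} \norm{u}_{L^{2d/(d-1)}_x}, \qquad \norm{\tfrac{\mathcal F(\abs{u}^2)}{\abs{k}^{(d-1)/2}}}_{L^2} \leq C \norm{u}_{L^{4d/(2d-1)}}^2.
$$
The point is that both intermediate exponents $2d/(d-1)$ and $4d/(2d-1)$ lie in $[2, 2d/(d-2)]$ for every $d \geq 3$, so Lebesgue interpolation gives
$$
\norm{u}_{L^{2d/(d-1)}} \leq \norm{u}_{L^2}^{1/2} \norm{u}_{L^{2d/(d-2)}}^{1/2}, \qquad \norm{u}_{L^{4d/(2d-1)}} \leq \norm{u}_{L^2}^{3/4} \norm{u}_{L^{2d/(d-2)}}^{1/4}.
$$
Integrating in time via Hölder with exponents $(4, 4/3)$ then produces the decisive $T^{3/4}$ gain:
$$
\norm{A_\alpha u}_{L^1_t L^2_x} + \int_0^T \norm{\tfrac{\mathcal F(\abs{u}^2)}{\abs{k}^{(d-1)/2}}}_{L^2} \di s \leq C T^{3/4} M^2,
$$
whenever $\norm{u}_{X_T} + \norm{\alpha}_{Y_T} \leq M$. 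Choosing first $M \simeq R$ and then $T = T(R)$ small enough makes $\mathscr L$ a self-map of the closed $M$-ball. For the contraction and the Lipschitz flow, I would decompose $A_{\alpha_1}u_1 - A_{\alpha_2}u_2 = A_{\alpha_1 - \alpha_2} u_1 + A_{\alpha_2}(u_1 - u_2)$ and $\abs{u_1}^2 - \abs{u_2}^2 = (u_1 - u_2)\overline{u_1} + u_2 \overline{u_1 - u_2}$, apply the same bilinear estimates summand by summand, and linearize the resulting products $a^{1/2}b^{1/2}$ of half-power difference norms via AM-GM $a^{1/2}b^{1/2} \leq (a+b)/2$, obtaining $d(\mathscr L(u_1,\alpha_1),\mathscr L(u_2,\alpha_2)) \leq C(R) T^{3/4} d((u_1,\alpha_1),(u_2,\alpha_2))$, a $1/2$-contraction for $T$ small, and as a corollary the Lipschitz character of $\phi_t$.

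The main obstacle is the singular form factor $1/\abs{k}^{(d-1)/2}$: it prevents any pointwise interpretation of $A_\alpha$ when $\alpha$ is only in $L^2$, so that all nonlinear estimates must pass through the Sobolev duality $\dot H^{\pm(d-1)/2}$ and therefore through the non-standard Lebesgue exponents $L^{2d/(d-1)}$ and $L^{4d/(2d-1)}$. The crucial verification is that these two exponents fall precisely inside the interpolation range $[L^2, L^{2d/(d-2)}]$ spanned by the endpoint Strichartz pair for every $d \geq 3$, and that the residual time integrability is the favorable $T^{3/4}$---exactly what is needed to close the contraction on the rough phase space $L^2 \oplus L^2$ without appealing to the Sobolev embedding $H^1 \hookrightarrow L^{2d/(d-2)}$ that drove the argument in Section \ref{theorieH1}.
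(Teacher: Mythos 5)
Your proposal is correct and essentially reproduces the paper's argument: Banach's fixed point on the same space $(L^\infty_t L^2_x \cap L^2_t L^{2d/(d-2)}_x) \times L^\infty_t L^2_x$, the same Sobolev embedding $\dot H^{(d-1)/2}\hookrightarrow L^{2d}$ and its dual, the same Lebesgue interpolation down to $L^{4d/(2d-1)}$ for the $\alpha$-equation, and the same $T^{3/4}$ time gain via Hölder. The only departure is cosmetic: for the Schrödinger inhomogeneity you put $A_\alpha u$ in $L^1_t L^2_x$ (Hölder $L^{2d}\times L^{2d/(d-1)}$ then interpolation and endpoint Strichartz), whereas the paper places it in $L^{4/3}_t L^{2d/(d+1)}_x$ (Hölder $L^{2d}\times L^2$, so only $\norm{u}_{L^\infty_t L^2_x}$ is needed, then the non-endpoint pair $(4,2d/(d-1))$); both routes are valid for $d\geq 3$ and deliver the same contraction.
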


\begin{proof}
We use the same techniques as in the case $H^1 \oplus L^2$. Let $T,M >0$ to be fixed later. Consider the complete metric space $(E,d)$, where 
    \begin{align*}
    E = \{ (u,\alpha) \in  (\mathscr C ([0,T],L^2) & \cap L^2([0,T],L^{2d/(d-2)})  )\times \mathscr C ([0,T],L^2) ; \\ &\norm{u}_{L^\infty([0,T],L^2)}+ \norm{u}_{L^2([0,T],L^{2d/(d-2)})} + \norm{\alpha}_{L^\infty([0,T],L^2)} \leq M \}, 
    \end{align*}
    and 
    $$
    d((u_1,\alpha_1),(u_2,\alpha_2)) = \norm{u_1 - u_2}_{L^\infty([0,T],L^2)} +  \norm{u_1 - u_2}_{L^2([0,T],L^{2d/(d-2)})}+ \norm{\alpha_1 - \alpha_2}_{L^\infty([0,T],L^2)}.
    $$
    We define $\mathscr L$ identically as in the case $H^1 \oplus L^2$. First, $\mathscr L (E) \subset E$. indeed, using the same proof as before (but with no space derivative), one gets
    $$
    \norm{u}_{L^\infty([0,T],L^2)}+ \norm{u}_{L^2([0,T],L^{2d/(d-2)})} \leq C R + C T^{3/4} M^2.
    $$
    For the other field, it is more delicate. We first express
    $$
    \frac{1}{\abs{k}^{(d-1)/2}} \mathcal F(\abs{u}^2) = \mathcal F \left( \frac{c_d}{\abs{x}^{(d+1)/2}} \ast \abs{u}^2 \right).
    $$
    Since the $L^2$ norm preserves the Fourier transform, it suffices to bound 
    $$\norm{ \frac{1}{\abs{x}^{(d+1)/2}} \ast \abs{u}^2}_{L^2}.$$
    By Hardy-Littlewood-Sobolev, the latter is estimated by $\norm{u^2}_{L^r},$ where 
    $$
    1+\frac{1}{2} = \frac{d+1}{2 d} + \frac{1}{r},
    $$
    that is, $r = \frac{2 d}{2 d -1}$. Recalling the expression for $\mathscr L \alpha (t)$:
    $$
    \mathscr L \alpha (t) = e^{-i t} \alpha_0 - i \int_0^t e^{-i (t-s)} \frac{\mathcal F(\abs{u(s)}^2)}{\sqrt{2 \abs{k}^{d-1}}} \di s,
    $$
    we end up with
    $$
    \norm{\alpha(t)}_{L^2} \leq \norm{\alpha_0}_{L^2} + C \int_0^t \norm{u(s)}^2_{L^{4 d /(2d -1)}} \di s.
    $$
    It suffices now to bound $\norm{u}^2_{L^2([0,t],L^{4d/(2d-1)})} \leq \norm{u}^2_{L^2([0,T],L^{4d/(2d-1)})}.$ A straigthforward computation shows that the pairs $(8,4d/(2d-1))$ and $(2, 2d/(d-2))$ are admissible. Since $2 < 4d/(2d-1)<2d/(d-2)$, we can use the interpolation inequality to write
    $$
    \norm{u(t)}_{L^{4d/(2d-1)}} \lesssim \norm{u(t)}_{L^2}^\theta \norm{u(t)}^{1-\theta}_{L^{2d /(d-2)}} \lesssim \norm{u}_{L^\infty L^2}^\theta \norm{u(t)}_{L^{2d/(d-2)}}^{1-\theta},
    $$
    where 
    $$
    \frac{2d-1}{4d} = \frac{\theta}{2} + (1-\theta) \frac{d-2}{2d},
    $$
    that is, $\theta = 3/4$.Thus,
    \begin{align*}
     \norm{u}^2_{L^2([0,T],L^{4d/(2d-1)})} &= \int_0^T \norm{u(s)}^2_{L^{4d/(2d-1)}} \di s \\ &\lesssim \norm{u}^{3/2}_{L^\infty L^2} \int_0^T \norm{u(s)}^{1/2}_{L^{2d/(d-2)}} . 1\di s \\
     &\lesssim \norm{u}_{L^\infty L^2}^{3/2} \norm{  \norm{u(s)}^{1/2}_{L^{2d/(d-2)}}}_{L^4([0,T])} \norm{1}_{L^{4/3}([0,T])} \\
     & \lesssim  \norm{u}_{L^\infty L^2}^{3/2} \norm{u}^{1/2}_{L^2 L^{2d/(d-2)}} T^{3/4}.
    \end{align*}
    Hence, we have
    $$
    \norm{\alpha}_{L^\infty L^2} \leq \norm{\alpha_0}_{L^2}+ T^{3/4} \norm{u}^{3/2}_{L^\infty L^2} \norm{u}^{1/2}_{L^2 L^{2 d/(d-2)}} \leq R+ C M^2 T^{3/4}.
    $$
    Choosing $M(R)$ large enough and then $T(R)$ small enough, we conclude that $\mathscr L$ maps $E$ to $E$. Similar techniques allow us to prove that $\mathscr L$ is $1/2$-Lipschitz for $T(R)$ perhaps smaller, proving the local existence. The proof for the Lipschitz character of the flow map is led in the same way as in the case $H^1 \oplus L^2$.
    \end{proof}

\begin{prop}(conservation of the mass) \label{conservationmassL2}
Let $(u,\alpha) \in \mathscr C ([0,T],L^2) \cap L^2([0,T],L^{2d/(d-2)}) \times \mathscr C([0,T], L^2)$ be a solution of \eqref{PDE}. Then for all $t \in [0,T]$, 
$$
\norm{u(t)}_{L^2} = \norm{u_0}_{L^2}.
$$
\end{prop}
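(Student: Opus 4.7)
My plan is to prove conservation of mass by a density/approximation argument, reducing to the $H^1\oplus L^2$ conservation statement already established in Proposition \ref{conservationH^1}. The obstacle is that a generic $L^2$ solution is not regular enough to differentiate $\lVert u(t)\rVert_{L^2}^2$ directly and integrate by parts; we do not have $u(t)\in H^1$ to pair with $-\Delta u(t)$. The remedy is to approximate the initial data in $L^2\oplus L^2$ by smoother data for which mass conservation is already known, then transfer conservation to the limit using the Lipschitz continuity of the flow map in the $L^2\oplus L^2$ topology.

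Concretely, fix $(u_0,\alpha_0)\in L^2\oplus L^2$ with $\lVert u_0\rVert_{L^2}+\lVert\alpha_0\rVert_{L^2}\le R$ and pick a sequence $(u_{0,n},\alpha_{0,n})\in H^1\oplus L^2$ with $(u_{0,n},\alpha_{0,n})\to(u_0,\alpha_0)$ in $L^2\oplus L^2$ and $\lVert u_{0,n}\rVert_{L^2}+\lVert\alpha_{0,n}\rVert_{L^2}\le R+1$ for all $n$ (e.g.\ via a Fourier cutoff $\chi_m(D_x)u_0$ as in the proof of Proposition \ref{conservationH^1}). By the global well-posedness in $H^1\oplus L^2$, each $(u_n,\alpha_n)$ exists globally in $\mathscr C(\mathbb R,H^1\oplus L^2)$, and by Proposition \ref{conservationH^1} it satisfies
\[
\lVert u_n(t)\rVert_{L^2}=\lVert u_{0,n}\rVert_{L^2},\qquad \forall t\in\mathbb R.
\]
Since $H^1$ solutions in particular satisfy the Duhamel formula, they are also mild solutions in the $L^2\oplus L^2$ sense, so by uniqueness in the latter class $(u_n,\alpha_n)$ coincides with the $L^2$-solution issued from $(u_{0,n},\alpha_{0,n})$ on any interval where the latter is defined.

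Now fix $R':=R+1$ and let $T(R')>0$ be the local existence time provided by the previous proposition. Both $(u,\alpha)$ and all $(u_n,\alpha_n)$ are defined on $[0,T(R')]$ (by bootstrapping with mass conservation, the $(u_n,\alpha_n)$ actually live for all times, but only the local interval is needed). The Lipschitz property of the flow map in $L^2\oplus L^2$ yields
\[
\lVert u_n(t)-u(t)\rVert_{L^2}\le C(R')\bigl(\lVert u_{0,n}-u_0\rVert_{L^2}+\lVert\alpha_{0,n}-\alpha_0\rVert_{L^2}\bigr)\xrightarrow[n\to\infty]{}0
\]
uniformly in $t\in[0,T(R')]$. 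Passing to the limit in $\lVert u_n(t)\rVert_{L^2}=\lVert u_{0,n}\rVert_{L^2}$ gives $\lVert u(t)\rVert_{L^2}=\lVert u_0\rVert_{L^2}$ on $[0,T(R')]$. Because the conserved quantity controls the $L^2$ norm by itself, we can then iterate the local statement: on each successive interval $[kT(R'),(k+1)T(R')]$ the mass at the starting time equals $\lVert u_0\rVert_{L^2}\le R$, so $R'$ still bounds the relevant norm and the argument repeats to cover all of $[0,T]$ (and, symmetrically, negative times). The main subtlety to check is simply that the $H^1$ approximants remain in a common ball of $L^2\oplus L^2$ so that $T(R')$ is uniform; this is guaranteed by the choice of $\chi_m$.
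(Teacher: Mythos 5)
Your approach matches the paper's: approximate the $L^2\oplus L^2$ initial data by $H^1\oplus L^2$ data (e.g.\ via the Fourier cutoff $\chi_m(D_x)u_0$ as in the proof of Proposition \ref{conservationH^1}), invoke mass conservation for the regularized solutions, note that by uniqueness these coincide with the $L^2$ mild solutions, and pass to the limit using the Lipschitz continuity of the flow map in $L^2\oplus L^2$ on a short time interval. This is exactly the argument the paper intends.

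There is one small slip in the iteration step: you assert that since mass is conserved, ``$R'$ still bounds the relevant norm'' at the endpoints $kT(R')$, so the same time step applies. But the local existence time $T(\cdot)$ in the $L^2\oplus L^2$ theory depends on $\lVert u\rVert_{L^2}+\lVert\alpha\rVert_{L^2}$, and mass conservation controls only $\lVert u\rVert_{L^2}$, not $\lVert\alpha\rVert_{L^2}$, which may grow. The fix is immediate and does not require more than you already have: the hypothesis gives $(u,\alpha)\in\mathscr C([0,T],L^2\oplus L^2)$, so $R_{\max}:=\sup_{t\in[0,T]}\bigl(\lVert u(t)\rVert_{L^2}+\lVert\alpha(t)\rVert_{L^2}\bigr)<\infty$, and one iterates with the uniform time step $T(R_{\max})$. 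With that substitution your argument is correct and coincides with the paper's.
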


\begin{proof}
    It is the same as in the proof of Proposition \ref{conservationH^1}, noticing that the approximation $u_m$ converges to $u_{m,0}$ also in $L^2$, and using the continuity of the flow map in $B_{L^2 \oplus L^2}(0,R)$, where $R = \norm{u_0}_{L^2}+\norm{\alpha_0}_{L^2}$.
\end{proof}

\begin{prop} (global existence in $L^2 \oplus L^2$)
For every $(u_0,\alpha_0) \in L^2 \oplus L^2$,there exists $(u,\alpha) \in \mathscr C (\R,L^2) \cap L^2(\R,L^{2d/(d-2)}) \times \mathscr C(\R, L^2)$ solution of \eqref{PDE} with $(u(0),\alpha(0)) = (u_0,\alpha_0)$.
\end{prop}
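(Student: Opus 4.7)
The plan is to upgrade the local solution to a global one via a standard continuation argument, the main input being an a priori bound on $\norm{\alpha(t)}_{L^2}$ on any finite time interval. Unlike the $H^1 \oplus L^2$ case, no energy conservation is available here, so the bound must be extracted purely from Strichartz-type estimates combined with the conservation of mass $\norm{u(t)}_{L^2} = \norm{u_0}_{L^2}$ coming from Proposition~\ref{conservationmassL2}.

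I will reuse the two Strichartz-type estimates that already appeared in the proof of local existence. Setting $A_T := \norm{\alpha}_{L^\infty([0,T], L^2)}$ and $U_T := \norm{u}_{L^2([0,T], L^{2d/(d-2)})}$, the estimates give, together with mass conservation,
$$
U_T \leq C \norm{u_0}_{L^2} + C T^{3/4} \norm{u_0}_{L^2}\, A_T,
$$
$$
A_T \leq \norm{\alpha_0}_{L^2} + C T^{3/4} \norm{u_0}_{L^2}^{3/2}\, U_T^{1/2}.
$$
Substituting the first inequality into the second and applying Young's inequality to absorb the resulting $A_T^{1/2}$ term into the left-hand side, one obtains a closed bound
$$
A_T \leq 2 \norm{\alpha_0}_{L^2} + C T^{3/4} \norm{u_0}_{L^2}^2 + C' T^{9/4} \norm{u_0}_{L^2}^4,
$$
which is finite for every finite $T$ and depends only on $T$ and the initial data. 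Plugging back gives in turn a finite control of $\norm{u}_{L^2([0,T], L^{2d/(d-2)})}$.

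With this in hand, the continuation is routine. Let $T^* \in (0, +\infty]$ be the maximal forward time of existence of the local solution. If $T^* < +\infty$, the a priori estimate provides $\norm{u(t)}_{L^2} + \norm{\alpha(t)}_{L^2} \leq R^*$ uniformly on $[0, T^*)$, where $R^*$ depends only on $T^*$ and the initial data. Restarting the local existence proposition at some $t_0 < T^*$ with $T^* - t_0 < T(R^*)/2$ yields a solution on $[t_0, t_0 + T(R^*)]$, extending the solution past $T^*$ and contradicting the maximality. Hence $T^* = +\infty$. The corresponding argument for negative times, or equivalently the time-reversal symmetry $(u,\alpha)(t) \mapsto (\overline{u(-t)}, \overline{\alpha(-t)})$ of \eqref{PDE}, then delivers the global solution on $\R$.

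The only real subtlety is verifying that the feedback between the $U_T$ and $A_T$ estimates closes into a finite bound rather than a divergent loop. This works precisely because $A_T$ enters $U_T$ only linearly with a factor $T^{3/4}$ and only a half-power of $U_T$ reenters $A_T$, so Young's inequality suffices to absorb the coupling; this is essentially the same mechanism that made the contraction argument for local existence close, used here globally in time.
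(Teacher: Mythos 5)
Your proof is correct and follows essentially the same route as the paper: reuse the two Strichartz-type estimates from the local existence proof on the whole interval $[0,T)$, feed in the conservation of mass, and exploit that the coupling between the $u$-norm and the $\alpha$-norm enters only through a half power, so that the loop closes via a Young-type absorption, after which the continuation argument is routine. The only cosmetic difference is the direction of substitution: the paper plugs the $\alpha$-bound into the $u$-estimate and deduces $U \lesssim 1 + U^{1/2}$ before inferring the $\alpha$-bound, whereas you plug the $u$-estimate into the $\alpha$-estimate and close the inequality $A \lesssim 1 + A^{1/2}$ directly via Young; these are symmetric and give the same a priori bound.
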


\begin{proof}
    Let $(u_0,\alpha_0) \in L^2 \oplus L^2$ and $(u,\alpha)$ the maximal solution starting from $(u_0,\alpha_0)$. We assume for a contradiction that the maximal time of existence $T_{\mathrm{max}} <\infty$. We first prove that the norm of $(u,\alpha)$ is forbidden to explode. We go back to the estimates in the proof of the local existence: for $t \in [0,T_{\mathrm{max}})$,
    \begin{align*}
       \norm{u}_{L^\infty([0,t], L^2)} + \norm{u}_{L^2([0,t],L^{2 d/(d-2)})} &\leq C \norm{u_0}_{L^2} + C \norm{\alpha}_{L^\infty([0,t], L^2)} \norm{u}_{L^\infty([0,t], L^2)} T_{\mathrm{max}}^{3/4}, \\
       \norm{\alpha}_{L^\infty([0,t],L^2)} &\leq \norm{\alpha_0}_{L^2} + C \norm{u}_{L^\infty([0,t], L^2)}^{3/2} \norm{u}_{L^2([0,t],L^{2 d/(d-2)})}^{1/2} T_{\mathrm{max}}^{3/4}.
    \end{align*}
    By the conservation of mass given by Proposition \ref{conservationmassL2},
    $$
    \norm{u}_{L^\infty([0,t],L^2)} = \norm{u_0}_{L^2}.
    $$
    Now, plugging back $\norm{\alpha}_{L^\infty([0,t],L^2)}$ in the estimate of $\norm{u}_{L^\infty([0,t], L^2)} + \norm{u}_{L^2([0,t],L^{2 d/(d-2)})}$, we get for all $t \in [0,T_{\mathrm{max}})$ that
    $$
    \norm{u}_{L^2([0,t],L^{2 d/(d-2)})} \leq C \norm{u_0}_{L^2}+ C \norm{u_0}_{L^2} T_{\mathrm{max}}^{3/4}  (\norm{\alpha_0}_{L^2}+ C \norm{u_0}_{L^2}^{3/2} \norm{u}_{L^2([0,t],L^{2 d/(d-2)})}^{1/2} T_{\mathrm{max}}^{3/4}),
    $$
    or in other words,
    $$
     \norm{u}_{L^2([0,t],L^{2 d/(d-2)})} \lesssim_{\norm{u_0}_{L^2},\norm{\alpha_0}_{L^2}, T_{\mathrm{max}}} (1+ \norm{u}_{L^2([0,t],L^{2 d/(d-2)})}^{1/2}).
    $$
    Therefore,
    $$
     \underset{t \in [0,T_{\mathrm{max}})} \sup \norm{u}_{L^2([0,t],L^{2 d/(d-2)})} < \infty,
    $$
    which implies
    $$
     \underset{t \in [0,T_{\mathrm{max}})} \sup \norm{\alpha}_{L^\infty([0,t],L^2)} <\infty.
    $$
    Let $R := \norm{u_0}_{L^2} +  \underset{t \in [0,T_{\mathrm{max}})} \sup \norm{\alpha}_{L^\infty([0,t],L^2)}$. By the local existence in $L^2 \oplus L^2$, we can consider the solution starting from $(u(T_{\mathrm{max}}-T(R)/2),\alpha(T_{\mathrm{max}}-T(R)/2))$ at time $T_{\mathrm{max}}-T(R)/2$. Constructed this way, the latter extend by uniqueness the solution $(u,\alpha)$ on a time of existence $[0,T_{\mathrm{max}}+T(R)/2)$, yielding a contradiction by maximality of $T_{\mathrm{max}}$. Therefore $T_{\mathrm{max}}= + \infty$ and the solution is global.  
\end{proof}

Subsection \ref{theorieH1} and \ref{theorieL2} respectively prove Proposition \ref{thmgwp} and Theorem \ref{thmgwpl2}, as presented in the Introduction.

\subsection{Classical dressing transform}

Nothing is new in this subsection from \cite[Subsection 3.1]{ammari2017sima}, but we recall the results for clarity. The expression of the classical dressing generator is
$$
\mathscr D_{i B_\infty}(u,\alpha) = \iint (i B_\infty(k) e^{-i k . x} \overline{\alpha}(k) - i B_\infty(k) e^{i k . x} \alpha(k) )  \abs{u(x)}^2 \di x \di k.
$$
It gives rise to a set of Hamilton equations 
\begin{equation} \label{PDEdressing}
\begin{cases}
    i \partial_\theta u = A_{\alpha, i B_\infty} u, \\
    i \partial_\theta \alpha =i B_\infty \mathcal F (\abs{u}^2 ),  
\end{cases}
\end{equation}
where
$$
A_{\alpha, i B_\infty} (x) = \int (i B_\infty(k) e^{-i k . x} \overline{\alpha}(k) - i B_\infty(k) e^{i k .x} \alpha(k) ) \di k. 
$$
This system happens to be exactly solvable and the solutions $(u_\theta,\alpha_\theta)$ have the following form:
\begin{equation} \label{PDEdressingsolution}
\begin{cases}
    u_\theta (x) = u_0(x) \exp \left(  - i\theta A_{\alpha , i B_\infty} \right), \\
    \alpha_\theta (k) = \alpha_0(k) + \theta B_\infty(k) \mathcal F(\abs{u}^2)(k).    
\end{cases}
\end{equation}

\begin{rem}
    The scalar field $A_\alpha$ defined for writing the Landau-Pekar equations can be rewritten consistently with these notations as
    $$
    A_\alpha = A_{\alpha, f_\infty}.
    $$
\end{rem}

\begin{prop} \cite[Proposition 3.2]{ammari2017sima} \label{gwpdressing}
    For every $(u_0,\alpha_0) \in L^2 \oplus L^2$, there exists a unique $(u,\alpha) \in \mathscr C(\R,L^2 \oplus L^2)$ solution of \eqref{PDEdressing} with $(u(0),\alpha(0)) = (u_0,\alpha_0)$. Moreover, the dressing flow, defined for every $\theta \in \R$, as the map
    $$
    \mathrm{D}_{i B_\infty}(\theta) : (u_0,\alpha_0) \in L^2 \oplus L^2 \mapsto (u_\theta, \alpha_\theta) \in L^2 \oplus L^2,
    $$
    is a symplectomorphism, and one has the propagation of regularity
    $$
    \mathrm{D}_{i B_\infty}(\theta) (H^1 \oplus L^2) \subset H^1 \oplus L^2.
    $$
\end{prop}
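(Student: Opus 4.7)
The approach is to exploit an exact solvability of \eqref{PDEdressing}, encoded in formula \eqref{PDEdressingsolution}, and then to verify the stated properties directly on this closed form. The plan rests on a single algebraic observation that decouples the system.

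\medskip

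\textbf{Key observation and explicit solution.} Since $B_\infty(k)$ is real and depends only on $|k|$, the two pieces in the definition of $A_{\alpha,iB_\infty}(x)$ are complex conjugates of each other after the change $k \mapsto -k$, so $A_{\alpha,iB_\infty}$ is a real-valued function of $x$ for every $\alpha \in L^2$. The first equation of \eqref{PDEdressing} is therefore a Schrödinger equation with a real (a priori $\theta$-dependent) potential, from which $\partial_\theta |u|^2 = 0$ pointwise. Using this, the second equation integrates at once to $\alpha_\theta = \alpha_0 + \theta B_\infty \mathcal F(|u_0|^2)$. A second application of the $k \mapsto -k$ symmetry, exploiting that $B_\infty^2 \mathcal F(|u_0|^2)$ comes from a real-valued function $|u_0|^2$, shows that the correction contributes nothing to $A_{\alpha_\theta,iB_\infty}$, so the potential is actually independent of $\theta$. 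The first equation then integrates to $u_\theta = u_0 \exp(-i\theta A_{\alpha_0, iB_\infty})$, recovering \eqref{PDEdressingsolution}.

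\medskip

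\textbf{Well-posedness in $L^2 \oplus L^2$.} I then check that this formula yields a curve in $\mathscr C(\R, L^2 \oplus L^2)$. For $u_\theta$, the pointwise identity $|u_\theta|=|u_0|$ gives $\|u_\theta\|_{L^2}=\|u_0\|_{L^2}$, and continuity in $\theta$ follows from dominated convergence using $A_{\alpha_0, iB_\infty} \in L^\infty$ (by Cauchy--Schwarz and $B_\infty \in L^2$). For $\alpha_\theta$ I use $B_\infty \in L^2$ and the trivial bound $\|\mathcal F(|u_0|^2)\|_{L^\infty} \leq \|u_0\|_{L^2}^2$ to get $\|\alpha_\theta\|_{L^2} \leq \|\alpha_0\|_{L^2} + |\theta|\,\|B_\infty\|_{L^2} \|u_0\|_{L^2}^2$. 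Uniqueness follows from a standard Gronwall argument applied to the difference of two solutions, using the local Lipschitz estimates $\|A_{\alpha,iB_\infty}\|_{L^\infty} \lesssim \|B_\infty\|_{L^2}\|\alpha\|_{L^2}$ and $\|B_\infty(\mathcal F(|u|^2)-\mathcal F(|v|^2))\|_{L^2} \lesssim \|B_\infty\|_{L^2}(\|u\|_{L^2}+\|v\|_{L^2})\|u-v\|_{L^2}$.

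\medskip

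\textbf{Regularity propagation and symplectic character.} For $(u_0,\alpha_0) \in H^1 \oplus L^2$, differentiating the explicit formula yields $\nabla u_\theta = \bigl(\nabla u_0 - i\theta u_0\, \nabla A_{\alpha_0,iB_\infty}\bigr)\,e^{-i\theta A_{\alpha_0,iB_\infty}}$. Writing $\nabla A_{\alpha_0,iB_\infty}(x) = 2\,\mathrm{Re}\!\int k B_\infty(k) e^{ikx}\alpha_0(k)\,dk$ and observing that $k B_\infty \in L^2$ (immediate from the explicit form of $B_\infty$: infrared cutoff at $|k|=\sigma_0$ and decay $|k|^{-(d+1)/2}$ at infinity), one has $\|\nabla A_{\alpha_0,iB_\infty}\|_{L^\infty} \leq C\|kB_\infty\|_{L^2}\|\alpha_0\|_{L^2}$, so $\|\nabla u_\theta\|_{L^2} \leq \|\nabla u_0\|_{L^2} + C|\theta|\,\|\alpha_0\|_{L^2}\|u_0\|_{L^2}$. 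Finally, the symplectomorphism assertion is a consequence of the Hamiltonian structure: \eqref{PDEdressing} is the Hamilton system generated by the smooth functional $\mathscr D_{iB_\infty}$ on $L^2 \oplus L^2$ equipped with its canonical symplectic form, and the flow of such a system preserves the symplectic form; alternatively it can be checked by direct computation on the closed-form flow. The only subtle point in the whole argument is the parity identity $B_\infty(-k)=B_\infty(k)$ that renders the potential $\theta$-independent; the remainder reduces to bookkeeping built on the pointwise conservation $|u_\theta|=|u_0|$.
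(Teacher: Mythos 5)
Your proof is correct and follows exactly the route the paper implicitly takes: the paper itself records the explicit closed form \eqref{PDEdressingsolution} and then defers the detailed verification to \cite[Proposition 3.2]{ammari2017sima}, which is precisely the verification you supply. The two key observations you isolate — $A_{\alpha,iB_\infty}$ is real-valued (so $|u_\theta|=|u_0|$ pointwise, making $\alpha_\theta$ affine in $\theta$), and $A_{\beta, iB_\infty}=0$ whenever $\beta$ is Hermitian-symmetric (so the correction $B_\infty\mathcal F(|u_0|^2)$ does not shift the potential and it stays $\theta$-independent) — are exactly what makes the closed form close under the flow, and your subsequent $L^2$, $H^1$, and uniqueness estimates are all sound. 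One cosmetic slip: the real-valuedness of $A_{\alpha,iB_\infty}$ is immediate from the two summands being complex conjugates as written (no $k\mapsto -k$ needed); the $k\mapsto -k$ symmetry of $B_\infty$ is only genuinely used in your second observation, where $\overline{\beta(k)}=\beta(-k)$ must be combined with $B_\infty(-k)=B_\infty(k)$ to force $A_{\beta,iB_\infty}\equiv 0$.
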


\subsection{Linking the dressed and undressed systems}

Let us first recall the definitions of the undressed and dressed classical Hamiltonian functionals, respectively $h_\infty$ and $\hat h_\infty$. For $(u,\alpha) \in H^1 \oplus L^2$,
\begin{align*}
    h_\infty(u,\alpha) &= \norm{\nabla u}_{L^2}^2 + \norm{\alpha}_{L^2}^2 + \int A_\alpha (x) \abs{u(x)}^2 \di x,\\
    \hat h_\infty (u,\alpha) &= \norm{\nabla u}_{L^2}^2 + \norm{\alpha}_{L^2}^2 +  \int 2 \mathrm{Re} \langle \alpha, f_{\sigma_0} e^{-ik.x} \rangle \abs{u(x)}^2  \di x \\
        &+ \iint \abs{u(x)}^2 V_\infty(x-y) \abs{u(y)}^2 \di x \di y \\
        &+ \int (2 \mathrm{Re} \langle \alpha, k B_\infty e^{-ik.x} \rangle)^2 \abs{u(x)}^2 \di x  \\
        & - 2 \int   \overline{u(x)} ( \langle \alpha, k B_\infty e^{-i k . x} \rangle  .D_{x} + D_x . \langle k B_\infty e^{-i k .x} , \alpha \rangle) u(x) \di x.
    \end{align*}
Thanks to \textbf{Step 3.} in Proposition \ref{conservationH^1}, $h_\infty$ is well defined on $H^1 \oplus L^2$. Since $f_{\sigma_0}, k B_\infty \in L^2$ and $V_\infty \in L^\infty$, it is clear that $\hat h_\infty$ is also well defined on $H^1 \oplus L^2$. The next result precise their link through the classical dressing flow $\mathrm{D}_{i B_\infty}(1)$ at time $1$.

\begin{prop}
On $H^1 \oplus L^2$, $\hat h_\infty = h_\infty \circ \mathrm{D}_{i B_\infty}(1).$
\end{prop}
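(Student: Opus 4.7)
The proof is a direct computation using the explicit dressing-flow formula \eqref{PDEdressingsolution}. Writing $(u_1, \alpha_1) := \mathrm{D}_{iB_\infty}(1)(u_0, \alpha_0)$, we have $\abs{u_1}^2 = \abs{u_0}^2$ (the density is preserved since $A_{\alpha_\theta, iB_\infty}$ is real-valued) and $\alpha_1 = \alpha_0 + B_\infty \mathcal F(\abs{u_0}^2)$. Integrating the first equation of \eqref{PDEdressing} yields $u_1 = u_0 e^{-iF}$ with $F := \int_0^1 A_{\alpha_s, iB_\infty}\,\di s$. A change of variables $k \mapsto -k$ together with the radial symmetry of $B_\infty$ shows that $A_{B_\infty \mathcal F(\abs{u_0}^2),\, iB_\infty} \equiv 0$, hence $F(x) = A_{\alpha_0, iB_\infty}(x)$ and $\nabla F = 2\mathrm{Re}\, V_{\alpha_0}$, where we write $V_\beta(x) := \langle \beta, kB_\infty e^{-ikx}\rangle$.

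Plugging into $h_\infty(u_1,\alpha_1)$ term by term,
\begin{align*}
\norm{\nabla u_1}_{L^2}^2 &= \norm{\nabla u_0}_{L^2}^2 + \int \abs{u_0}^2 (2\mathrm{Re}V_{\alpha_0})^2 \di x - 4\int \mathrm{Re}V_{\alpha_0}\cdot \mathrm{Im}(\overline{u_0}\nabla u_0)\,\di x, \\
\norm{\alpha_1}_{L^2}^2 &= \norm{\alpha_0}_{L^2}^2 + 2\mathrm{Re}\langle \alpha_0, B_\infty \mathcal F(\abs{u_0}^2)\rangle + \norm{B_\infty \mathcal F(\abs{u_0}^2)}_{L^2}^2, \\
\int A_{\alpha_1}\abs{u_1}^2 \di x &= 2\mathrm{Re}\langle \alpha_0, f_\infty \mathcal F(\abs{u_0}^2)\rangle + 2\int B_\infty f_\infty \abs{\mathcal F(\abs{u_0}^2)}^2 \di k.
\end{align*}
The central algebraic input is $f_\infty = f_{\sigma_0} - (1 + \abs{k}^2) B_\infty$, immediate from the definitions of $f_\sigma$ and $B_\sigma$ in Section \ref{Definition of the Hamiltonian}. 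Applied to the last display it splits into three pieces: a term matching $\int 2\mathrm{Re}\langle \alpha_0, f_{\sigma_0} e^{-ikx}\rangle \abs{u_0}^2 \di x$ in $\hat h_\infty$; a piece that cancels the cross term $2\mathrm{Re}\langle\alpha_0, B_\infty \mathcal F(\abs{u_0}^2)\rangle$ coming from $\norm{\alpha_1}^2$; and a leftover $-2\mathrm{Re}\langle \alpha_0, \abs{k}^2 B_\infty \mathcal F(\abs{u_0}^2)\rangle$. The self-interaction combination $\int (\abs{B_\infty}^2 + 2 B_\infty f_\infty)\abs{\mathcal F(\abs{u_0}^2)}^2 \di k$ equals, by Plancherel and the definition of $V_\infty$ in Lemma \ref{lemmaappendix1}, the double integral $\iint \abs{u_0(x)}^2 V_\infty(x-y)\abs{u_0(y)}^2 \di x\, \di y$ appearing in $\hat h_\infty$.

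All that remains is to verify
\[
-4\int \mathrm{Re}V_{\alpha_0}\cdot\mathrm{Im}(\overline{u_0}\nabla u_0)\,\di x - 2\mathrm{Re}\langle \alpha_0, \abs{k}^2 B_\infty \mathcal F(\abs{u_0}^2)\rangle = -2\int \overline{u_0}\bigl(V_{\alpha_0}\cdot D_x + D_x\cdot \overline{V_{\alpha_0}}\bigr) u_0 \,\di x.
\]
Expanding the right-hand side via $D_x = -i\nabla$ and $V_{\alpha_0} + \overline{V_{\alpha_0}} = 2\mathrm{Re}V_{\alpha_0}$, then performing a single integration by parts on $\nabla \cdot \overline{V_{\alpha_0}}$, whose Fourier representation brings down an extra factor $\abs{k}^2$, produces exactly the leftover term. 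The main technical obstacle is the careful sign bookkeeping of the complex cross terms in this final integration by parts; beyond that, the whole identification is driven by the single algebraic relation $(1+\abs{k}^2) B_\infty = -(f_\infty - f_{\sigma_0})$ and the radial symmetry of $B_\infty$.
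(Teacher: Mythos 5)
Your proposal is correct and follows essentially the same path as the paper: plug in the explicit dressing-flow formula \eqref{PDEdressingsolution}, expand term by term, use the algebraic relation $f_\infty = f_{\sigma_0} - (1+\abs{k}^2)B_\infty$ to split the interaction term, and identify the degree-4 piece with the $V_\infty$ term via Plancherel. The only cosmetic difference is in the last step: you verify the momentum cross-term identity by a direct integration by parts at the function level (using $\mathrm{Re}(\overline{u_0}\nabla u_0)=\tfrac12\nabla\abs{u_0}^2$), whereas the paper encodes the same computation as an operator commutator identity for $D_x$ with the multiplication operator $\langle\alpha,kB_\infty e^{-ikx}\rangle + \langle kB_\infty e^{-ikx},\alpha\rangle$.
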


\begin{proof}
    We start from the right-hand-side. Let $(u,\alpha) \in H^1 \oplus L^2$. Then,
    \begin{align*}
        h_\infty(\mathrm{D}_{i B_\infty }(1)(u,\alpha)) &= h_\infty ( u e^{- i A_{\alpha , i B_\infty}}, \alpha +  B_\infty\mathcal F(\abs{u}^2))
        \\
        &= \norm{\nabla u  e^{- i A_{\alpha , i B_\infty}}  - i u A_{\alpha, k B_\infty} e^{- i A_{\alpha , i B_\infty}} }_{L^2}^2 + \norm{\alpha + B_\infty \mathcal F(\abs{u}^2)}_{L^2}^2 \\
        &+ \int A_{\alpha + B_\infty \mathcal F(\abs{u}^2)} \abs{u e^{- i A_{\alpha , i B_\infty}}}^2 \\
        &= \norm{\nabla u}^2_{L^2} + \norm{u A_{\alpha,k B_\infty}}^2_{L^2} + 2 \mathrm{Re} \langle \nabla u, - i u A_{\alpha, k B_\infty} \rangle \\
        &+\norm{\alpha}_{L^2}^2 + \norm{B_\infty \mathcal F(\abs{u}^2)}_{L^2}^2 + 2 \mathrm{Re} \langle \alpha, B _\infty \mathcal F(\abs{u}^2) \rangle \\
        &+ \int 2 \mathrm{Re} \langle \alpha, f_\infty e^{-i k .x} \rangle \abs{u}^2 \di x + \int2 \mathrm{Re} \langle B_\infty  \mathcal F(\abs{u}^2), f_\infty e^{-i k .x} \rangle \abs{u}^2 \di x.
    \end{align*}
The term of degree $4$ in $u$ give us the $V_\infty$ term:
\begin{align*}
     \norm{B_\infty \mathcal F(\abs{u}^2)}_{L^2}^2 &+ \int2 \mathrm{Re} \langle B_\infty  \mathcal F(\abs{u}^2), f_\infty e^{-i k .x} \rangle \abs{u}^2 \di x \\ &= \mathrm{Re} \langle (\abs{B_\infty}^2 +2B_\infty f_\infty) \mathcal F(\abs{u}^2), \mathcal F(\abs{u}^2) \rangle \\
     &=\mathrm{Re} \iiint \abs{u(x)}^2 \abs{u(y)}^2  (\abs{B_\infty (k)}^2 +2B_\infty (k) f_\infty (k)) e^{-i k .(x-y)} \di x \di y \\
     &= \iint \abs{u(x)}^2 V_\infty(x-y) \abs{u(y)}^2 \di x \di y.
\end{align*}
Now,
\begin{align*}
    \norm{u A_{\alpha, k B_\infty}}^2_{L^2} = \int \abs{u(x)}^2 (2 \mathrm{Re} \langle \alpha, k B_\infty e^{-i k .x} \rangle)^2 \di x.
\end{align*}
Since $B_\infty + \abs{k}^2 B_\infty + f_\infty = f_{\sigma_0}$, it suffices now to show that 
\begin{align*}
2 \mathrm{Re} \langle \nabla u, - i u A_{\alpha, k B_\infty} \rangle &= \int 2 \mathrm{Re} \langle \alpha, \abs{k}^2 B_\infty e^{-i k .x} \rangle \abs{u}^2 \di x \\ &- 2  \langle u, \left(\langle \alpha, kB_\infty e^{- i k .x} \rangle .D_x + D_x.\langle k B_\infty e^{-i k . x} , \alpha \rangle \right) u \rangle.
\end{align*}
To this end, we view $\langle \alpha, kB_\infty e^{- i k .x} \rangle$ and its conjugate as a multiplication operator and a simple commutation computation show that, as operators,
\begin{align*}
    D_x (\langle \alpha, kB_\infty e^{- i k .x} \rangle+ \langle  kB_\infty e^{- i k .x}, &\alpha \rangle)+ (\langle \alpha, kB_\infty e^{- i k .x} \rangle+ \langle  kB_\infty e^{- i k .x}, \alpha \rangle) D_x \\
    &=2 (\langle \alpha, k B_\infty e^{-i k . x} \rangle D_x + D_x \langle k B_\infty e^{-i k.x} , \alpha \rangle) \\
    &- \langle \alpha, \abs{k}^2 B_\infty e^{- i k .x} \rangle -\langle  \abs{k}^2 B_\infty e^{- i k .x}  ,\alpha \rangle \\
    &=2 (\langle \alpha, k B_\infty e^{-i k . x} \rangle D_x + D_x \langle k B_\infty e^{-i k.x} , \alpha \rangle) - A_{\alpha , \abs{k}^2 B_\infty}(x).
\end{align*}
Therefore,
\begin{align*}
    2 \mathrm{Re} \langle \nabla u, - i u A_{\alpha, k B_\infty} \rangle  &= - 2 \mathrm{Re} \langle u, D_x ( A_{\alpha, kB_\infty} u) \rangle \\
    &= - \langle u, (2 (\langle \alpha, k B_\infty e^{-i k . x} \rangle D_x + D_x \langle k B_\infty e^{-i k.x} , \alpha \rangle) - A_{\alpha , \abs{k}^2 B_\infty}(x))   u \rangle,
\end{align*}
which concludes since
$$
\langle u, A_{\alpha, \abs{k}^2 B_\infty} u \rangle = \int \abs{u(x)}^2 2 \mathrm{Re} \langle \alpha, \abs{k}^2 B_\infty e^{-i k . x} \rangle \di x.
$$
\end{proof}

\begin{cor} The Hamilton equations generated by $\hat h_\infty$ are globally well posed in $\mathscr C (\R, H^1\oplus L^2)$. More specifically, for every $z_0 = (u_0,\alpha_0) \in H^1 \oplus L^2$, there exists a unique $z = (u,\alpha) \in \mathscr C(\R, H^1 \oplus L^2)$ with $z(0) = (u(0),\alpha(0)) = (u_0,\alpha_0) = z_0$ satisfying the Duhamel formula
$$
z(t) = \phi^0_t( z_0)  - i \int_0^t \phi^0_{t-s}  \circ \nabla_{\overline{z}} \hat h_\infty (z(s)) \di s.
$$
Furthermore, the corresponding flow $\hat \phi_t$ satisfies, for every $t \in \R$,
    $$
    \mathrm{D}_{i B_\infty}(-1) \circ \hat \phi_t \circ  \mathrm{D}_{i B_\infty}(1) = \phi_t.
    $$
\end{cor}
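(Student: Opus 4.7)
The strategy is to use the identity $\hat h_\infty = h_\infty \circ \mathrm{D}_{iB_\infty}(1)$ just established, together with the symplectic and regularity-preserving properties of the classical dressing flow from Proposition \ref{gwpdressing}, to transfer the global well-posedness of the Landau-Pekar flow $\phi_t$ (Proposition \ref{thmgwp}) to the dressed Hamiltonian $\hat h_\infty$.

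The first step is, for $z_0 \in H^1 \oplus L^2$, to define the candidate
$$
\hat z(t) := \mathrm{D}_{iB_\infty}(1) \circ \phi_t \circ \mathrm{D}_{iB_\infty}(-1)(z_0).
$$
By the propagation of regularity in Proposition \ref{gwpdressing}, each $\mathrm{D}_{iB_\infty}(\pm 1)$ restricts to a continuous bijection of $H^1 \oplus L^2$, and by Proposition \ref{thmgwp}, $\phi_\cdot(w_0) \in \mathscr{C}(\R, H^1 \oplus L^2)$ for every $w_0 \in H^1 \oplus L^2$; hence $\hat z \in \mathscr{C}(\R, H^1 \oplus L^2)$ with $\hat z(0) = z_0$.

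The second step is to check that $\hat z$ satisfies the Duhamel equation for $\hat h_\infty$. Using the chain rule and the fact that $\phi_t$ solves the Hamilton equation for $h_\infty$, together with the general principle that a symplectomorphism intertwines the Hamiltonian flows of the original and pulled-back functionals, one obtains $\dot{\hat z}(t) = X_{\hat h_\infty}(\hat z(t))$. Extracting the free propagator $\phi^0_t$ generated by the quadratic part $\norm{\nabla u}_{L^2}^2 + \norm{\alpha}_{L^2}^2$ via variation of constants then yields the Duhamel form stated. All terms appearing in $\nabla_{\bar z}\hat h_\infty$ make sense on $H^1 \oplus L^2$ thanks to $B_\infty, kB_\infty, f_{\sigma_0} \in L^2$, $V_\infty \in L^\infty$, and the Sobolev embeddings used in Subsection \ref{theorieH1}.

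Uniqueness and Lipschitz dependence of $\hat{\phi}_t$ on the initial data follow from a standard contraction/Gronwall argument on the Duhamel integral equation, once the local Lipschitz property of the nonlinear part of $\nabla_{\bar z}\hat h_\infty$ on bounded subsets of $H^1 \oplus L^2$ is established; this mimics the proof of Proposition \ref{thmgwp}. The only new ingredients compared to that proof are the Hartree-type potential involving $V_\infty$ (trivial, as $V_\infty \in L^\infty$) and the gradient-mixing terms $\langle \alpha, kB_\infty e^{-ikx}\rangle \cdot D_x$ and its adjoint (handled by Cauchy-Schwarz in $k$ together with the Sobolev embedding $H^1 \hookrightarrow L^{2d/(d-2)}$). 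The intertwining formula in the statement is then immediate from the definition of $\hat z$. The main obstacle I anticipate is the rigorous justification of the chain rule step in the $H^1 \oplus L^2$ topology, which requires the Fréchet differentiability of $\mathrm{D}_{iB_\infty}(\pm 1)$ between Banach spaces; this can be read off from the explicit formula \eqref{PDEdressingsolution}, but the nonlinear dependence on $u$ through the phase $A_{\alpha, iB_\infty}$ must be handled with care.
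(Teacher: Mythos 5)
Your construction---push the Landau-Pekar flow through the classical dressing---is precisely the paper's strategy: the paper observes that $z$ solves the Duhamel formula for $\hat h_\infty$ if and only if $w=\mathrm{D}_{iB_\infty}(1)z$ solves the Duhamel formula for $h_\infty$, and then imports existence and uniqueness from Proposition \ref{thmgwp} via the regularity of $\mathrm{D}_{iB_\infty}(\pm1)$ from Proposition \ref{gwpdressing}. The ``direct computation'' asserted there has exactly the content of your chain-rule/symplectomorphism step, so the two proofs coincide in substance. A few remarks. First, the separate contraction/Gronwall argument you sketch for uniqueness is redundant once the bijective correspondence $z\mapsto\mathrm{D}_{iB_\infty}(1)z$ between dressed and undressed solutions is in place, and it is also harder than you suggest: the Hamilton equation for $u$ under $\hat h_\infty$ contains the magnetic-type term $\langle\alpha,kB_\infty e^{-ik\cdot x}\rangle\cdot D_x u$, which has no analogue in $h_\infty$, so ``mimicking Proposition \ref{thmgwp}'' requires extra work that the transfer argument avoids entirely. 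Second, the Fr\'echet-differentiability concern you flag is correctly identified as the crux; the explicit formula \eqref{PDEdressingsolution} together with $\nabla A_{\alpha,iB_\infty}=A_{\alpha,kB_\infty}\in L^\infty$ (since $kB_\infty\in L^2$) shows that $\mathrm{D}_{iB_\infty}(\pm1)$ is $\mathscr C^1$ from $H^1\oplus L^2$ to itself, which is what the chain rule needs; for mild solutions one should additionally work on the dense subspace $H^2\oplus\mathcal F H^1$ and pass to the limit, as in the proof of Proposition \ref{conservationH^1}. Finally, a caution on the ordering: given $\hat h_\infty=h_\infty\circ\mathrm{D}_{iB_\infty}(1)$, the standard pullback of a Hamiltonian through a symplectomorphism $\Phi$ gives the flow $\Phi^{-1}\circ\phi_t\circ\Phi$, i.e.\ $\hat\phi_t=\mathrm{D}_{iB_\infty}(-1)\circ\phi_t\circ\mathrm{D}_{iB_\infty}(1)$, which is consistent with the paper's relation ``$w=\mathrm{D}_{iB_\infty}(1)z$'' but has $\mathrm{D}_{iB_\infty}(\pm1)$ on opposite sides from your candidate $\hat z$ (and from the corollary's displayed conjugation identity). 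You should verify your candidate directly against the functional relation $\hat h_\infty=h_\infty\circ\mathrm{D}_{iB_\infty}(1)$ rather than taking the displayed formula at face value.
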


\begin{rem}
    Here $\phi^0_t$ is the Hamiltonian flow generated by the free classical Hamiltonian $h_0(u,\alpha) = \norm{\nabla u}_{L^2}^2 + \norm{\alpha}_{L^2}^2$. Its action is very explicit and takes the form
$$
\phi_t^0 (u,\alpha) = (e^{i t \Delta} u , e^{- i t } \alpha ).
$$
\end{rem}

\begin{proof} \label{linkflows}
    A direct computation shows that $z(t)$ solves the Duhamel formula above if and only if $w(t) = \mathrm{D}_{i B_\infty}(1) z(t)$ solves the Duhamel formula for the undressed evolution, that is, 
    $$
w(t) = \phi^0_t( w_0)  - i \int_0^t \phi^0_{t-s}  \circ \nabla_{\overline{z}}  h_\infty (w(s)) \di s.
$$
This yields, by the regularity properties of the dressing and the global well-posedness of the undressed system, the global well-posedness of the dressed evolution. The relation $\mathrm{D}_{i B_\infty}(-1) \circ \hat \phi_t \circ  \mathrm{D}_{i B_\infty}(1) = \phi_t$ is deduced from $\hat h_\infty = h_\infty \circ \mathrm{D}_{i B_\infty}$.
\end{proof}

\subsection{The interaction picture for the dressed PDE}

In this subsection we look at the interaction picture, that is the Hamiltonian flow generated by the vector field
$$
X(t,\cdot) = - i \phi^0_{-t} \circ \nabla_{\overline{z}} \hat h_{\infty,I} \circ \phi^0_t,
$$
as it will be the one arising from the classical limit of the quantum mechanical model. 
\begin{prop}
    The following statements are equivalent:
\begin{enumerate}
    \item $z = (u,\alpha) \in \mathscr C(\R,H^1 \oplus L^2)$ satisfies the Duhamel formula: for all $t \in \R$,
    $$
    z(t) = \phi^0_t( z_0)  - i \int_0^t \phi^0_{t-s}  \circ \nabla_{\overline{z}} \hat h_\infty (z(s)) \di s.
    $$
    \item $w: t \mapsto \phi_{-t}^0 ( z(t) )$ belongs to $\mathscr C(\R,H^1 \oplus L^2)$ and satisfies the Duhamel formula: for all $t \in \R$,
    $$
    w(t) = w_0 + \int_0^t X(s,w(s)) \di s.
    $$
\end{enumerate}
\end{prop}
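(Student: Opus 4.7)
The plan is to exploit the explicit group structure of the free flow to recognize this equivalence as the standard change of variables from the Schrödinger picture to the interaction picture. Since $h_0(u,\alpha) = \norm{\nabla u}_{L^2}^2 + \norm{\alpha}_{L^2}^2$ is quadratic, its Hamiltonian flow $\phi^0_t(u,\alpha) = (e^{it\Delta}u, e^{-it}\alpha)$ is a strongly continuous one-parameter group of linear isometries of $H^1 \oplus L^2$. In particular $\phi^0_{-t}$ is a bounded linear map that commutes with Bochner integration against continuous Banach-valued paths, and the group law yields $\phi^0_{-t} \circ \phi^0_{t-s} = \phi^0_{-s}$ and $\phi^0_t \circ \phi^0_{-s} = \phi^0_{t-s}$.

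For the implication $(1) \Rightarrow (2)$, I would apply $\phi^0_{-t}$ to both sides of the Duhamel formula for $z$. By linearity, this passes under the integral sign; the group law collapses the free-propagation term to $\phi^0_{-t}(\phi^0_t(z_0)) = z_0 = w_0$ and turns the propagator $\phi^0_{t-s}$ inside the integrand into $\phi^0_{-s}$. Substituting $z(s) = \phi^0_s(w(s))$ makes the integrand read precisely
$$-i\, \phi^0_{-s} \circ \nabla_{\overline z} \hat h_{\infty,I}(\phi^0_s(w(s))) = X(s, w(s)),$$
which is exactly the integral equation in $(2)$. Continuity of $w = \phi^0_{-\cdot}(z(\cdot))$ with values in $H^1 \oplus L^2$ follows immediately from the strong continuity of $\phi^0$ together with the continuity of $z$.

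For the converse $(2) \Rightarrow (1)$, I would reverse the argument: define $z(t) := \phi^0_t(w(t))$, note that $z \in \mathscr C(\R, H^1 \oplus L^2)$ by the same strong continuity property, and apply $\phi^0_t$ to both sides of the integral equation for $w$. The group law then unfolds $\phi^0_t \circ \phi^0_{-s} = \phi^0_{t-s}$, and after substituting $\phi^0_s(w(s)) = z(s)$, the Duhamel formula for $z$ in $(1)$ is recovered. Observe also that $w(0) = z_0$ matches with $z(0) = z_0$, so both formulations share the same initial datum.

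I do not expect any genuine obstacle: the argument is a pointwise algebraic manipulation of the group law for $\phi^0$, and the only analytic point is the commutation of $\phi^0_{\pm t}$ with the Bochner integral, which is standard for a strongly continuous group of bounded operators acting on a continuous Banach-valued integrand. The $H^1 \oplus L^2$ regularity is preserved throughout because each $\phi^0_t$ is an isometry on this space, so no additional estimate on $\nabla_{\overline z}\hat h_{\infty,I}$ beyond what is already contained in the explicit form of $\hat h_\infty$ recalled in the previous subsection is required.
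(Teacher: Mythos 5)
Your proof is correct and takes essentially the same route as the paper, whose entire proof reads ``It is an explicit computation using the linearity of the free flow $\phi^0_t$''; you have simply spelled out that explicit computation, using the group law, linearity, and the fact that $\phi^0_{-t}$ commutes with the Bochner integral.

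One point worth flagging: the statement of $(1)$ as printed contains $\nabla_{\overline z}\hat h_\infty$, whereas the vector field $X$ is built from $\nabla_{\overline z}\hat h_{\infty,I}$. Taken literally, applying $\phi^0_{-t}$ to $(1)$ would leave behind an extra term $-i\int_0^t \phi^0_{-s}\nabla_{\overline z}h_0(z(s))\,\di s$ coming from the free part, and $(1)$ and $(2)$ would not be equivalent. You silently read the integrand in $(1)$ as $\nabla_{\overline z}\hat h_{\infty,I}$, which is the intended meaning — it is consistent with the paper's earlier definition of mild solution, where only the interaction nonlinearity $A_\alpha u$ (resp.\ $|k|^{-(d-1)/2}\mathcal F(|u|^2)$) appears under the integral after the free propagator. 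Making that substitution explicit, rather than absorbing it, would make the argument fully airtight.
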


\begin{proof}
    It is an explicit computation using the linearity of the free flow $\phi^0_t$.
\end{proof}

\section{Derivation of the classical system} \label{Derivation of the classical system}
This section is devoted to the proof of the mean-field limit of the evolution given by the quantum Hamiltonian $H_\infty$ towards the classical evolution governed by the classical Hamiltonian functional $\hat h_\infty$. Due to the easier description of the domain of the dressed Hamiltonian $\hat H_\infty$, most of our analysis will in fact concern the dressed evolutions, both quantum and classical.

\subsection{Wigner measures}
We first recall the definitions of normal states and Wigner measures for clarity.
\begin{defn}
    A normal state, or density matrix, is a self-adjoint operator $\rho$ acting on a Hilbert space $\mathscr F$ such that $\rho \geq 0$ and $\Tr(\rho) = 1.$ The set of density matrices is denoted $\mathfrak S^1_{+,1}(\mathscr F)$.
\end{defn}

\begin{defn}
    Let $\mathscr X$ be a separable Hilbert space representing the classical phase-space. Consider a family of density matrices $(\rho_\eps)_{\eps \in (0,1]} \subset \mathfrak S^1_{+,1} (\mathscr F(\mathscr Z))$. We say that $\mu \in \mathscr P(\mathscr Z)$ is a Wigner, or semiclassical measure for $(\rho_\eps)_{\eps \in (0,1]}$ if there exists $\eps_n \rightarrow 0$ such that for every $f \in \mathscr Z$,
    $$
    \Tr( W_{\eps_n} (f) \rho_{\eps_n}) \underset{n \longrightarrow \infty} \longrightarrow \int_{\mathscr Z}  e^{\sqrt{2} i \mathrm{Re} \langle z, f\rangle}    \di \mu( z).
    $$
    We denote by $\mathscr M ((\rho_\eps)_{\eps \in (0,1]} )$ the set of Wigner measures for the family $(\rho_\eps)_{\eps \in (0,1]}$, and will sometimes write $\rho_{\eps_n} \longrightarrow \mu$.
\end{defn}
The following is a sufficient condition for the existence of a Wigner measure for a given family of density matrices. 
\begin{prop} \cite[Theorem 6.2]{ammari2008ahp}
    Consider a family of density matrices $(\rho_\eps)_{\eps \in (0,1]} \subset \mathfrak S^1_{+,1} (\mathscr F(\mathscr Z))$ such that there exists $\delta>0$ with
    $$
    \underset{\eps \in (0,1]} \sup \Tr( (\di \Gamma_\eps(\Id))^\delta \rho_\eps) <\infty.
    $$
    Then 
    $$
    \mathscr M ((\rho_\eps)_{\eps \in (0,1]} ) \neq \emptyset.
    $$
    Furthermore, if $\mu \in \mathscr M ((\rho_\eps)_{\eps \in (0,1]})$ and $\eps_n \rightarrow 0$ is the associated sequence,
    $$
    \int_{\mathscr Z} \norm{z}_\mathscr Z^{ 2 \delta} \mu(\di z) \leq \underset{ n \rightarrow \infty} \liminf \ \Tr( (\di \Gamma_{\eps_n}(\Id))^\delta \rho_{\eps_n})<\infty.
    $$
\end{prop}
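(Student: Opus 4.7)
The plan is to work with the noncommutative characteristic function
$$
\hat\rho_\eps(f) := \Tr(W_\eps(f)\rho_\eps), \qquad f\in\mathscr{Z},
$$
and to extract a pointwise limit along a subsequence which we then identify as the Fourier transform of a Borel probability measure on $\mathscr{Z}$. Unitarity of $W_\eps(f)$ gives $|\hat\rho_\eps(f)|\leq 1$ and $\hat\rho_\eps(0)=1$. The first step is a uniform-in-$\eps$ equicontinuity estimate on bounded subsets of $\mathscr{Z}$. Using the Weyl CCR together with $\frac{d}{dt}W_\eps(tf)=i\phi_\eps(f)W_\eps(tf)$ and the standard bound $\|\phi_\eps(h)\psi\|\leq \|h\|_{\mathscr{Z}}\|(\di\Gamma_\eps(\Id)+\eps)^{1/2}\psi\|$, one finds an estimate of the schematic form
$$
|\hat\rho_\eps(f)-\hat\rho_\eps(g)|\leq C\|f-g\|_{\mathscr{Z}}\bigl(\Tr((\di\Gamma_\eps(\Id))^{1/2}\rho_\eps)+\sqrt{\eps}\bigr).
$$
When $\delta\geq 1/2$, Jensen bounds the right-hand side uniformly in $\eps$; otherwise a spectral truncation combined with an interpolation between the trivial $L^\infty$ bound and the moment bound yields a modulus of continuity that is still uniform in $\eps$ on bounded sets.

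Next I would apply a Cantor diagonal procedure on a countable dense subset $D\subset\mathscr{Z}$ to extract $\eps_n\to 0$ along which $\hat\rho_{\eps_n}(f)$ converges for every $f\in D$; equicontinuity then uniquely extends the limit to a continuous function $\hat\rho_0$ on all of $\mathscr{Z}$ with $\hat\rho_{\eps_n}(f)\to\hat\rho_0(f)$ for every $f$. On any finite-dimensional subspace $F\subset\mathscr{Z}$, $\hat\rho_0|_F$ is normalized, continuous and positive definite as a pointwise limit of positive definite functions, so Bochner's theorem (with the identification $f\mapsto e^{\sqrt{2}i\Re\langle z,f\rangle}$ as the Fourier character on $F\cong\R^{2\dim F}$) delivers a Borel probability measure $\mu_F$ on $F$ whose Fourier transform is $\hat\rho_0|_F$. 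These marginals are automatically consistent under orthogonal projections $F'\subset F$ and therefore define a cylindrical probability measure on $\mathscr{Z}$.

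The crucial step is to promote this cylindrical measure to a genuine Radon probability measure on the infinite-dimensional Hilbert space $\mathscr{Z}$, and simultaneously obtain the quantitative estimate. For a finite-rank orthogonal projection $p$ on $\mathscr{Z}$, we have $\di\Gamma_\eps(p)\leq\di\Gamma_\eps(\Id)$, and a direct computation of the $2\delta$-moment of the field variables via differentiation of the characteristic function, combined with Fatou applied along $\eps_n\to 0$, yields
$$
\int_{F}\|z\|_F^{2\delta}\,\di\mu_F(z)\leq\liminf_{n\to\infty}\Tr\bigl((\di\Gamma_{\eps_n}(\Id))^{\delta}\rho_{\eps_n}\bigr)<\infty,
$$
with a bound independent of the choice of finite-dimensional $F=p(\mathscr{Z})$. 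This uniform moment bound provides the tightness needed to extend the cylindrical measure to a unique Radon probability measure $\mu\in\mathscr{P}(\mathscr{Z})$ whose finite-dimensional marginals are the $\mu_F$ (this is the Ammari--Nier cylindrical-to-Radon argument). By construction $\mu\in\mathscr{M}((\rho_\eps)_{\eps\in(0,1]})$, and the announced moment inequality follows by letting an increasing sequence of finite-rank projections $p_k\uparrow\Id$ act and applying monotone convergence.

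The main obstacle I expect is precisely the last passage: a consistent family of finite-dimensional Gaussian-type marginals on an infinite-dimensional Hilbert space need not extend to a Radon measure, and the whole argument rests on the tightness supplied by the uniform $2\delta$-moment bound. Everything else (Bochner on each $F$, diagonal extraction, equicontinuity) is essentially soft, whereas turning the moment bound into tightness for the family of marginals, and then into a Radon extension, is the analytic heart of the statement.
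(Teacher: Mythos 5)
The paper cites this result from \cite[Theorem 6.2]{ammari2008ahp} without giving a proof, so there is no in-paper argument to compare against; your reconstruction is essentially the Ammari--Nier strategy and its architecture is sound. Two points deserve sharpening. First, promoting the consistent family $\{\mu_F\}$ to a Radon measure on $\mathscr Z$ is not a Prokhorov-type tightness argument (norm balls are not compact when $\dim\mathscr Z=\infty$); it is a Kolmogorov extension to $\R^{\N}$ followed by concentration on $\ell^2\cong\mathscr Z$: fixing an ONB and letting $p_n$ be the rank-$n$ projection, monotone convergence of $z\mapsto\|p_n z\|^{2\delta}$ together with the uniform bound $\sup_n\int\|p_n z\|^{2\delta}\,\di\mu_{p_n}\leq\liminf_n\Tr(N^\delta\rho_{\eps_n})$ forces $\mu_\infty(\ell^2)=1$, and a Borel probability measure on the Polish space $\ell^2$ is automatically Radon. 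Your remark that the Gaussian cylindrical measure shows the extension is nontrivial is exactly right: it fails precisely because its moments on rank-$n$ projections grow in $n$. Second, for non-integer $\delta$ the $2\delta$-moment of $\mu_F$ is not read off by differentiating the characteristic function; one applies the weak convergence $\rho_{\eps_n}\to\mu$ to the bounded symbols $\min(\|pz\|^{2\delta},R)$, compares their Wick quantizations against $(\di\Gamma_\eps(p))^\delta$ up to $O(\eps)$ errors, and then sends $R\to\infty$. With these two adjustments the reconstruction matches the proof of the cited theorem.
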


We now state a preparatory lemma giving useful estimates for the relative bound of the Weyl operators with respect to second quantized operators.

\begin{lem} \label{WxiQH0}
    If $\xi \in H^1 \oplus L^2$, then $W(\xi) Q(H_0) \subset Q(H_0)$ and we have the bounds
    $$
    \norm{H_0^{1/2} W(\xi) (H_0+1)^{-1/2}} \leq C( \norm{\xi}_{H^1 \oplus L^2}),
    $$
    $$
    \norm{(N_1 +1)^{-1}(H_0+1)^{-1/2} W(\xi)(H_0+1)^{1/2}(N_1+1)} \leq C(\norm{\xi}_{H^1 \oplus L^2}).
    $$
    If $\xi,\eta \in L^2 \oplus L^2$, we have the bound
    $$
    \norm{(W(\xi) - W(\eta)) (N_1+N_2+1)^{-1/2} }\leq C \norm{\xi - \eta}_{L^2 \oplus L^2} (1+\norm{\xi}_{L^2 \oplus L^2} + \norm{\eta}_{L^2 \oplus L^2}).
    $$
\end{lem}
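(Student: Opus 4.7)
The plan is to derive all three estimates from the Weyl conjugation identity
\[
W(\xi)^* d\Gamma(A) W(\xi) = d\Gamma(A) + \eps\, \phi(iA\xi) + \frac{\eps^2}{2}\langle \xi, A\xi\rangle_{L^2 \oplus L^2},
\]
valid for any self-adjoint $A \geq 0$ on the one-particle space and $\xi \in Q(A)$, which I would establish on a dense core from the explicit shift $W(\xi)^* a^\#(g) W(\xi) = a^\#(g) \mp i\eps \langle \xi,g\rangle_{\pm}/\sqrt{2}$ of creation/annihilation operators under Weyl conjugation. I apply this with $A = -\Delta \oplus I$ for parts (1) and (2) (so that $d\Gamma(A) = H_0$), and with $A = I$ for part (3) (so that $d\Gamma(A) = N := N_1 + N_2$).

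For (1), the constant contribution is bounded by $\|\xi\|_{H^1 \oplus L^2}^2$. Splitting $\xi = (u,\alpha)$, the second summand of the field contribution, $\eps\phi((0,i\alpha))$, is controlled by the standard bound $\|a^\#(g)\psi\| \leq \|g\|\|(N_2+1)^{1/2}\psi\|$ combined with $N_2 \leq H_0$. The first summand $\eps\phi((-i\Delta u, 0))$ is the delicate term since $-\Delta u \notin L^2$; I would interpret it as a quadratic form via sectorwise integration by parts,
\[
\langle \psi_{n+1}, a_1^*(-\Delta u)\varphi_n\rangle = \sqrt{\eps(n+1)} \int \nabla u(x_1) \cdot \overline{\nabla_{x_1} \psi_{n+1}(x_1,\vec{x})}\, \varphi_n(\vec{x})\, dx_1\, d\vec{x},
\]
and exploit symmetry of $\psi_{n+1}$ together with $\sum_j (-\Delta_{x_j}) = \eps^{-1} d\Gamma_1(-\Delta)$ on the $(n{+}1)$-sector to obtain $|\langle \varphi, \phi((-i\Delta u,0))\varphi\rangle| \leq C\|u\|_{H^1}\|H_0^{1/2}\varphi\|\|\varphi\|$. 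All together $W(\xi)^*(H_0+1)W(\xi) \leq C(\|\xi\|_{H^1 \oplus L^2})(H_0+1)$ as quadratic forms on $Q(H_0)$, which is (1); taking adjoints and using $W(\xi)^* = W(-\xi)$ also gives the companion bound $\|(H_0+1)^{-1/2}W(\xi)(H_0+1)^{1/2}\| \leq C(\|\xi\|_{H^1\oplus L^2})$, which feeds into (2).

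For (2), since $[(N_1+1), (H_0+1)^{1/2}] = 0$, the left-hand side equals $(H_0+1)^{-1/2}[(N_1+1)^{-1} W(\xi)(N_1+1)](H_0+1)^{1/2}$. Applying the core identity with $A = P_1$ (the projector on the first $L^2$-summand) yields
\[
(N_1+1)^{-1}W(\xi)(N_1+1) = \Bigl[I - \eps (N_1+1)^{-1}\phi((iu,0)) + \tfrac{\eps^2\|u\|^2}{2}(N_1+1)^{-1}\Bigr]W(\xi),
\]
and the bracket has operator norm bounded by $C(\|u\|_{L^2})$ via the standard estimate $\|\phi((iu,0))(N_1+1)^{-1/2}\| \lesssim \|u\|_{L^2}$. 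Conjugating by $(H_0+1)^{\pm 1/2}$ then reduces the problem to the bound on $(H_0+1)^{-1/2}W(\xi)(H_0+1)^{1/2}$ obtained at the end of (1), plus cross-terms of the form $(H_0+1)^{-1/2}(N_1+1)^{-1}\phi((iu,0))(H_0+1)^{1/2}$, which I handle by the same sectorwise integration-by-parts estimate as in (1), using that $(N_1+1)^{-1}$ commutes with $(H_0+1)^{1/2}$.

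For (3), I use the Duhamel representation $W(\xi) - W(\eta) = \int_0^1 \partial_t W(\zeta_t)\, dt$ with $\zeta_t = (1-t)\eta + t\xi$. The Weyl commutation gives $W(\zeta_t) = W(\eta) W(t(\xi-\eta))\exp(it\eps\mathrm{Im}\langle\eta,\xi-\eta\rangle/2)$ and $\partial_t W(tg) = i\phi(g)W(tg)$, from which a direct calculation yields
\[
\partial_t W(\zeta_t) = i\bigl[\phi(\xi-\eta) - \tfrac{\eps}{2}\mathrm{Im}\langle\eta,\xi-\eta\rangle\bigr]W(\zeta_t).
\]
Combining $\|\phi(g)\psi\| \leq \|g\|\|(N+1)^{1/2}\psi\|$ with the $N$-analog of (1), $\|(N+1)^{1/2}W(\zeta_t)(N+1)^{-1/2}\| \leq C(\|\zeta_t\|)$ (which is the core identity applied with $A=I$, then taking square roots), together with unitarity of $W(\zeta_t)$ for the scalar contribution, yields (3). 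The main obstacle throughout is the rigorous meaning of $\phi(iA\xi)$ when $A\xi \notin L^2 \oplus L^2$, as happens in (1)-(2) for $\xi \in H^1 \oplus L^2 \setminus (H^2 \oplus H^1)$: the standard number-operator bound on a field fails and must be replaced by the integration-by-parts interpretation described above.
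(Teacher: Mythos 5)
Your proposal is correct, and parts (1)--(2) rest on the same core mechanism as the paper: conjugate $H_0 = d\Gamma(-\Delta\oplus I)$ by $W(\xi)$, so that the delicate object is the field term attached to $-\Delta\xi_1 \in H^{-1}$. Where the paper first takes $\xi \in H^2 \oplus L^2$ so that $\psi(-\Delta\xi_1)$ is a bona fide operator, invokes the inequality $\norm{\psi(g)\varphi} \leq \norm{A^{-1/2}g}\,\norm{d\Gamma_1(A)^{1/2}\varphi}$ (quoted from Arai's book with $A=-\Delta$), and closes by density, you re-derive that inequality by hand via sectorwise integration by parts and symmetry of $\psi_{n+1}$ --- essentially reproving the cited estimate in this special case. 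Note, though, that treating $\phi((-i\Delta u,0))$ directly as a quadratic form still requires an approximation argument to justify the conjugation identity in that weak sense, so the density step is relocated rather than avoided. Your treatment of (2) is more explicit than the paper, which only asserts it is proven ``in the same fashion.'' Part (3) is where you genuinely diverge: you use the Duhamel representation $W(\xi)-W(\eta)=\int_0^1 \partial_t W(\zeta_t)\,dt$, which works but obliges you to control $\norm{(N+1)^{1/2}W(\zeta_t)(N+1)^{-1/2}}$ uniformly along the segment. The paper's route is more elementary: it writes $W(\xi)-W(\eta) = W(\xi)(1-W(\eta-\xi)) + W(\xi)W(\eta-\xi)\bigl(1-e^{\frac{i\eps}{2}\mathrm{Im}\langle\xi,\eta\rangle}\bigr)$ via the Weyl relations, then factors $1-W(\eta-\xi)=A(\eta-\xi)\,\phi(\eta-\xi)$ by functional calculus with $\norm{A(\eta-\xi)}\leq 1$, so that only the standard bound $\norm{\phi(\eta-\xi)(N+1)^{-1/2}}\lesssim\norm{\xi-\eta}$ and the elementary estimate on the scalar phase factor are needed. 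Both give the stated bound; the paper's variant avoids the auxiliary conjugation estimate on the Weyl family.
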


\begin{proof}
The proof is delegated to Appendix \ref{appendixlemma} to simplify the reading.
\end{proof}

\begin{lem} \label{N_1wigner}
     Let $(\rho_\eps)_{\eps \in (0,1]}$ be a family of normal states satisfying Assumptions \eqref{assumption1} and \eqref{assumption2}, be such that $\rho_{\eps_n} \longrightarrow \mu$. Then, along the same subsequence, we have, for every $\alpha \in \R$,
    $$
    e^{-i \alpha N_1} \rho_{\eps_n} e^{i \alpha N_1} \longrightarrow \mu.
    $$
    Furhtermore, the family $(e^{-i \alpha N_1} \rho_{\eps_n} e^{i \alpha N_1})_{\eps \in (0,1]}$ satisfies Assumptions \eqref{assumption1} and \eqref{assumption2} with the same constants.
\end{lem}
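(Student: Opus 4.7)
The plan is to reduce everything to a small computation showing how $e^{-i\alpha N_1}$ conjugates the Weyl operator, and then to combine this with the uniform bounds coming from Assumptions \eqref{assumption1}--\eqref{assumption2} and the continuity estimate of Lemma \ref{WxiQH0}. The key observation is that $N_1|_{\mathscr Z^{\otimes_s n}\otimes \mathscr F_s} = \eps n$, hence $e^{-i \alpha N_1}$ is a scalar on each particle sector of the first Fock space. Since $\psi^*(f)$ raises the particle number on the first factor by one while $\psi(f)$ lowers it by one, a one-line computation on $\Hi_n$ gives $e^{-i\alpha N_1}\psi^\#(f)e^{i\alpha N_1} = \psi^\#(e^{-i\alpha\eps}f)$, so that for every $\xi = (\xi_1,\xi_2) \in L^2 \oplus L^2$,
\begin{equation*}
e^{-i\alpha N_1} W_\eps(\xi_1,\xi_2) e^{i\alpha N_1} = W_\eps(e^{-i\alpha\eps}\xi_1,\xi_2).
\end{equation*}

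With this identity in hand, the convergence boils down to a single limit. By cyclicity of the trace,
\begin{equation*}
\Tr\bigl(W_{\eps_n}(\xi) e^{-i\alpha N_1} \rho_{\eps_n} e^{i\alpha N_1}\bigr) = \Tr\bigl(W_{\eps_n}(e^{i\alpha\eps_n}\xi_1,\xi_2) \rho_{\eps_n}\bigr),
\end{equation*}
and I would decompose the right-hand side as
\begin{equation*}
\Tr\bigl((W_{\eps_n}(e^{i\alpha\eps_n}\xi_1,\xi_2) - W_{\eps_n}(\xi_1,\xi_2)) \rho_{\eps_n}\bigr) + \Tr\bigl(W_{\eps_n}(\xi_1,\xi_2) \rho_{\eps_n}\bigr).
\end{equation*}
The second summand converges to $\int e^{\sqrt 2 i \Re\langle z,\xi\rangle} \di \mu(z)$ by the assumption $\rho_{\eps_n} \to \mu$. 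For the first summand, I would bound it by
\begin{equation*}
\bigl\|(W_{\eps_n}(e^{i\alpha\eps_n}\xi_1,\xi_2) - W_{\eps_n}(\xi_1,\xi_2))(N_1+N_2+1)^{-1/2}\bigr\|\; \Tr\bigl((N_1+N_2+1)^{1/2}\rho_{\eps_n}\bigr),
\end{equation*}
controlling the first factor by Lemma \ref{WxiQH0} with $\|(e^{i\alpha\eps_n}-1)\xi_1\| \le |\alpha| \eps_n \|\xi_1\| \to 0$, and the second factor uniformly in $n$ via Assumption \eqref{assumption1} (giving $\Tr[\rho_\eps N_1]\le C$), Assumption \eqref{assumption2} together with $N_2 \le H_0$ (giving $\Tr[\rho_\eps N_2]\le C$), and operator concavity/Jensen to pass to the square root.

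The preservation of the two assumptions is immediate from the fact that $e^{\pm i\alpha N_1}$ commutes with $N_1^k$ trivially and commutes with $H_0$ since both $\di\Gamma_\eps^{(1)}(-\Delta)$ and $\Id\otimes\di\Gamma_\eps(\Id)$ preserve the particle number on the first Fock factor; the cyclicity of the trace then gives
\begin{equation*}
\Tr\bigl(e^{-i\alpha N_1}\rho_\eps e^{i\alpha N_1} N_1^k\bigr) = \Tr[\rho_\eps N_1^k],\qquad \Tr\bigl(e^{-i\alpha N_1}\rho_\eps e^{i\alpha N_1} H_0\bigr) = \Tr[\rho_\eps H_0],
\end{equation*}
with the same constants.

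The only mildly technical step is the conjugation formula for $W_\eps$; once it is in place, the rest is a routine $\eps_n\to 0$ argument on a single Weyl symbol, so I expect no real obstacle. The main point to be careful about is that the Wigner convergence must be established for every fixed $\xi \in L^2\oplus L^2$, which is precisely why the dense/closable nature of the Weyl operators together with the uniform number bound is invoked through Lemma \ref{WxiQH0}.
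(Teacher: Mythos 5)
Your proof is correct and follows essentially the same route as the paper: derive the conjugation formula $e^{-i\alpha N_1}W_\eps(\xi_1,\xi_2)e^{i\alpha N_1}=W_\eps(e^{-i\alpha\eps}\xi_1,\xi_2)$, use cyclicity of the trace, control the resulting difference of Weyl operators via Lemma~\ref{WxiQH0} together with the uniform number bounds from Assumptions~\eqref{assumption1}--\eqref{assumption2}, and observe that $e^{\pm i\alpha N_1}$ commutes with $N_1^k$ and $H_0$ to propagate the assumptions. The only cosmetic difference is that you keep the full operator $(N_1+N_2+1)^{-1/2}$ (hence invoke $N_2\leq H_0$), whereas the paper exploits that the perturbed Weyl symbol differs only in the first component and so reduces to $(N_1+1)^{-1/2}$; both are valid.
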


\begin{proof}
    A straightforward computation shows that, (remember that $W(\xi)$ and $N_1$ hide a semiclassical parameter) for every $\xi =(\xi_1,\xi_2) \in L^2 \oplus L^2$,
    $$
    e^{i \alpha N_1 } W(\xi) e^{-i \alpha N_1} = W( e^{i \eps \alpha } \xi_1, \xi_2 ).
    $$
    Therefore one can compute, using cyclicity of the trace,
    \begin{align*}
        \Tr(e^{-i \alpha N_1} \rho_{\eps} e^{i \alpha N_1} W(\xi)) &= \Tr(\rho_{\eps} W( e^{i \eps \alpha } \xi_1, \xi_2 )).
    \end{align*}
    It remains to show that
    $$
    \underset{n \longrightarrow \infty} \lim \Tr(\rho_{\eps_n} W( e^{i \eps_n \alpha } \xi_1, \xi_2 )) = \underset{n \longrightarrow \infty} \lim \Tr(\rho_{\eps_n} W( \xi_1, \xi_2 )) =  \int_{L^2 \oplus L^2}  e^{\sqrt{2} i \mathrm{Re} \langle z, \xi \rangle}    \di \mu(z).
    $$
    Indeed, by Lemma \ref{WxiQH0},
    \begin{align*}
        \abs{\Tr(\rho_{\eps} W( e^{i \eps \alpha } \xi_1, \xi_2 ))- \Tr(\rho_{\eps} W( \xi_1, \xi_2 ))} & \leq \Tr(\rho_\eps (N_1+1)) \norm{(W( e^{i \eps \alpha } \xi_1) - W(\xi_1))(N_1+1)^{-1/2}} \\
        &\leq  C (\norm{\xi}_{L^2 \oplus L^2})  \abs{e^{i \eps \alpha } -1} \underset{\eps \longrightarrow 0} \longrightarrow 0.
    \end{align*}
    The propagation of Assumptions \eqref{assumption1} and \eqref{assumption2} is straightforward since $e^{-i \alpha N_1}$ commutes with both $N_1^k$ and $H_0$.
\end{proof}

\begin{rem}
  Taking $\alpha = t \langle B_\infty, B_\infty + 2 f_\infty \rangle$, and noticing that $[N_1,\hat H_\infty ]=0$, one can replace the evolution given by $\hat H_\infty$ by the one given by $\hat H_\infty - \eps N_1 \langle B_\infty, B_\infty + 2 f_\infty \rangle$ and get the same Wigner measure without affecting the subsequence.
Therefore we will forget about the $\eps N_1 \langle B_\infty, B_\infty + 2 f_\infty \rangle$ term for the rest of this paper, and keep the notation $\hat H_\infty$ for $\hat H_\infty - \eps N_1 \langle B_\infty, B_\infty + 2 f_\infty \rangle$.
\end{rem}

\subsection{The Duhamel formula for the dressed Hamiltonian}
We recall the assumptions \eqref{assumption1}, \eqref{assumption2} on the initial states:
\begin{align*}
    &\exists C>0, \forall \eps \in (0,1], \forall k \in \N, \Tr[\rho_\eps N_1^k] \leq C^k. \\
    &\exists C>0, \forall \eps \in (0,1], \Tr[\rho_\eps H_0] \leq C.
\end{align*}
They ensure the existence of a Wigner measure for the family of normal states $(\rho_\eps)_{\eps \in (0,1]}$. This property will be conserved by the dynamics generated by $\hat H_\infty$. 
Let us first see how the first assumption translates into for the the family $(\rho_\eps)_{\eps \in (0,1]}$.
\begin{lem} \cite[Lemma 4.2]{ammari2017sima} \label{diagonalemma}
    The family of normal states $(\rho_\eps)_{\eps \in (0,1]}$ satisfies Assumption \eqref{assumption1} if and only if there exists a diagonalization of $\rho_\eps$ under the form
    $$
    \rho_\eps = \sum_{i \in \N} \lambda_i | \varphi_i \rangle \langle \varphi_i |,
    $$
    such that for all $i \in \N$, $\lambda_i \geq 0$, $\varphi_i \in \mathrm{Ran} (\Id_{[0, C]} (N_1))$. 
\end{lem}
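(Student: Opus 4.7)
The claim is an equivalence between a moment bound and a spectral support condition, so I would prove each direction separately; both arguments boil down to the functional calculus for the self-adjoint operator $N_1$, whose spectrum sits inside $\eps\N$.

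For the easy direction, suppose $\rho_\eps = \sum_i \lambda_i |\varphi_i\rangle\langle\varphi_i|$ with $\varphi_i \in \mathrm{Ran}(\Id_{[0,C]}(N_1))$ and $\sum_i \lambda_i = 1$. Since on that range the operator $N_1^k$ is bounded above by $C^k$ (by functional calculus), one has $\langle \varphi_i, N_1^k \varphi_i\rangle \leq C^k$, and summing against $\lambda_i$ gives Assumption \eqref{assumption1} with the same constant.

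The converse is where the Chebyshev-type argument enters. Let $m_0 \in \N$ be the smallest integer with $m_0 \eps > C$, and set $P := \Id_{(C,+\infty)}(N_1)$. Because the spectrum of $N_1$ lies in $\eps \N$, one actually has $P = \Id_{[m_0 \eps, +\infty)}(N_1)$, and the functional calculus yields the operator inequality
$$
P \leq \left( \frac{N_1}{m_0 \eps} \right)^k
$$
for every $k \in \N$. Taking the trace against the positive operator $\rho_\eps$ and using Assumption \eqref{assumption1} gives
$$
\Tr[\rho_\eps P] \leq (m_0 \eps)^{-k} \Tr[\rho_\eps N_1^k] \leq \left( \frac{C}{m_0 \eps} \right)^k \xrightarrow[k\to\infty]{} 0,
$$
since $C/(m_0\eps) \in (0,1)$. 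Hence $\Tr[\rho_\eps P] = 0$. Writing $\Tr[\rho_\eps P] = \Tr[P \rho_\eps^{1/2} \rho_\eps^{1/2} P] = \| \rho_\eps^{1/2} P\|_{\mathrm{HS}}^2$ (using $P = P^* = P^2$), we deduce $\rho_\eps^{1/2} P = 0$, and therefore $\rho_\eps P = 0$, i.e.\ $\mathrm{Ran}(\rho_\eps) \subset \mathrm{Ran}(\Id_{[0,C]}(N_1))$.

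To conclude, I would apply the standard spectral theorem for positive trace class operators: write $\rho_\eps = \sum_i \lambda_i |\varphi_i\rangle\langle\varphi_i|$ with $\lambda_i > 0$ and $(\varphi_i)$ orthonormal; each $\varphi_i$ is proportional to $\rho_\eps \varphi_i / \lambda_i$, so it lies in $\mathrm{Ran}(\rho_\eps) \subset \mathrm{Ran}(\Id_{[0,C]}(N_1))$, as required. The only subtle point is the discreteness of the spectrum of $N_1$, which is what lets us pass from a strict inequality in the spectral parameter to the closed inequality needed to run Chebyshev with uniform constants in $\eps$; everything else is routine functional calculus.
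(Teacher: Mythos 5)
Your proof is correct, and since the paper itself only cites this lemma from \cite[Lemma 4.2]{ammari2017sima} without reproducing the argument, there is no in-text proof to compare against; the Chebyshev-type reasoning you give is the standard route and almost certainly what the reference does. The easy direction and the spectral-projection argument both check out: the operator inequality $P \leq (N_1/(m_0\eps))^k$ on the range of $P$, the geometric decay of $\Tr[\rho_\eps P]$, and the passage from $\Tr[\rho_\eps P]=0$ to $\rho_\eps P = 0$ via the Hilbert--Schmidt norm of $\rho_\eps^{1/2}P$ are all sound, and the final spectral decomposition of the positive trace-class $\rho_\eps$ then yields eigenvectors in $\mathrm{Ran}(\Id_{[0,C]}(N_1))$ (with the freedom to pad the diagonalization with zero eigenvalues and eigenvectors chosen in that range, matching the statement exactly).

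One small imprecision in your closing remark: the ratio $C/(m_0\eps)$ is not uniformly bounded away from $1$ as $\eps\to 0$ (one has $m_0\eps \le C+\eps$), so the Chebyshev bound does not carry constants that are uniform in $\eps$. This is harmless because you take $k\to\infty$ at fixed $\eps$ and only need the ratio to be strictly less than $1$ for each $\eps$; the uniformity that actually matters is that the same $C$ appears in the resulting range condition for every $\eps$, which your argument does deliver. Relatedly, the discreteness of $\mathrm{spec}(N_1)\subset\eps\N$ is a clean way to get a strict gap above $C$, but it is not logically necessary: one could run the same Chebyshev estimate with $\Id_{[C+\delta,\infty)}(N_1)$ for each $\delta>0$ and let $\delta\downarrow 0$ at the end.
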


\begin{rem}
    In other words, Assumptions \eqref{assumption1} translates the fact that we are considering the mean-field scaling $\eps \rightarrow 0,n \eps \sim 1$ and not the full semiclassical scaling $\eps \rightarrow 0$ on the Fock space.
\end{rem}

\begin{defn} Let $\rho_\eps$ be a normal state. We define the dressed evolution of $\rho_\eps$ to be
$$
\rho_\eps(t) = e^{-\frac{i t}{\eps}  \hat H_\infty} \rho_\eps e^{\frac{i t}{\eps} \hat H_\infty}.
$$
We also define the dressed evolution of $\rho_\eps$ in the interaction picture to be
$$
\tilde \rho_\eps(t) = e^{\frac{i t}{\eps} H_0} \rho_\eps (t) e^{-\frac{i t}{\eps}H_0} =e^{\frac{i t}{\eps} H_0} e^{-\frac{i t}{\eps} \hat H_\infty} \rho_\eps e^{\frac{i t}{\eps} \hat H_\infty} e^{-\frac{i t}{\eps }H_0}.$$
\end{defn}
The following proposition ensures the existence of a non empty set of Wigner measures for the evolved states.
\begin{prop} \label{propagation} Let $(\rho_\eps)_{\eps \in (0,1]}$ be a family of normal states satisfying \eqref{assumption1} and \eqref{assumption2}. There exists $C>0$ such that
$$
\begin{aligned} 
    &\forall t \in \R, \forall \eps \in (0,1], \forall k \in \N, \Tr[\rho_\eps (t) N_1^k] \leq C^k. \\
    &\forall t \in \R, \forall \eps \in (0,1], \Tr[\rho_\eps (t) H_0] \leq C.
\end{aligned}
$$
and the same holds for $\tilde \rho_\eps(t)$.
\end{prop}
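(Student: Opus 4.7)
The plan is to separate the two bounds. For the moment bound on $N_1$, the key observation is that the dressed Hamiltonian $\hat H_\infty$ preserves each sector $\Hi_n$ by construction (it is defined sector by sector as $\hat H_\infty^{(n)}$), so $[N_1, \hat H_\infty] = 0$; trivially $[N_1, H_0] = 0$ as well. Cyclicity of the trace then immediately gives
\begin{equation*}
\Tr[\rho_\eps(t) N_1^k] = \Tr[\rho_\eps N_1^k] \leq C^k,\qquad \Tr[\tilde\rho_\eps(t) N_1^k] = \Tr[\rho_\eps N_1^k] \leq C^k,
\end{equation*}
uniformly in $t \in \R$ and $\eps \in (0,1]$.

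For the energy bound, I would combine Lemma \ref{diagonalemma} with Lemma \ref{klmnbound}. By Lemma \ref{diagonalemma}, diagonalize $\rho_\eps = \sum_i \lambda_i |\varphi_i\rangle\langle\varphi_i|$ with each $\varphi_i$ supported on sectors $\Hi_n$ satisfying $n\eps \leq C$. This is precisely the range on which $\hat H_\infty$ acts nontrivially, and inspection of the proof of Lemma \ref{klmnbound} shows that its constants $a<1$ and $C_a$ depend on $n\eps$ only through an upper bound; hence they are uniform on the relevant sectors. Since $\hat H_\infty = H_0 + \hat H_{\infty,I}$, the KLMN bound upgrades to the two-sided form inequality
\begin{equation*}
(1-a) H_0 - C_a \leq \hat H_\infty \leq (1+a) H_0 + C_a
\end{equation*}
on $Q(H_0) \cap \mathrm{Ran}(\Id_{[0,C]}(N_1))$.

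Next I would invoke conservation of the dressed energy: the unitary $e^{-it\hat H_\infty/\eps}$ commutes with $\hat H_\infty$ and preserves its form domain, so $\Tr[\rho_\eps(t) \hat H_\infty] = \Tr[\rho_\eps \hat H_\infty]$. The upper form bound together with Assumption \eqref{assumption2} gives $\Tr[\rho_\eps \hat H_\infty] \leq (1+a) C + C_a$, and inserting this into the lower bound yields
\begin{equation*}
\Tr[\rho_\eps(t) H_0] \leq \frac{1}{1-a}\bigl(\Tr[\rho_\eps \hat H_\infty] + C_a\bigr) \leq \frac{(1+a)C+2C_a}{1-a},
\end{equation*}
uniformly in $t$ and $\eps$. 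Since $H_0$ commutes with itself, $\Tr[\tilde\rho_\eps(t) H_0] = \Tr[\rho_\eps(t) H_0]$, and the same bound transfers.

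The main technical obstacle is making the manipulations of $\hat H_\infty$ against the density matrix rigorous, since $\hat H_\infty$ is only a self-adjoint operator on a proper subspace of $\Hi$. Concretely, one writes $\rho_\eps(t) = \sum_i \lambda_i |e^{-it\hat H_\infty/\eps}\varphi_i\rangle\langle e^{-it\hat H_\infty/\eps}\varphi_i|$; each evolved vector lies in $Q(\hat H_\infty)$ whenever $\varphi_i$ does, which is ensured (up to a standard truncation by spectral projections of $H_0$) by Assumption \eqref{assumption2}. The form inequalities may then be applied termwise, and the series are summed by monotone convergence. Once this is done, the statement for $\tilde\rho_\eps(t)$ follows by conjugating with $e^{\pm i t H_0/\eps}$, which commutes with both $N_1$ and $H_0$.
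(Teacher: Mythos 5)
Your proposal is correct and follows essentially the same strategy as the paper: commutation of $N_1$ with $\hat H_\infty$ and $H_0$ gives the moment bound trivially, and the energy bound is obtained by combining the two-sided KLMN form inequality between $H_0$ and $\hat H_\infty$ (Lemma \ref{klmnbound}, uniform on $N_1\le C$) with conservation of $\hat H_\infty$ under its own unitary flow, exactly as in the paper. The only cosmetic difference is that the paper makes the manipulations rigorous by decomposing $\rho_\eps$ via Lemma \ref{diagonalemma} and working sector-by-sector in $n$ with $n\eps\le C$, rather than manipulating traces of unbounded operators directly — which is precisely the technical step you flag at the end as needing justification.
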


\begin{proof}
The first statement is immediate as $[\hat H_\infty , N_1] = 0$. For the second one, we write, using Lemma \ref{diagonalemma}, and the fact that $H_0$ and $\hat H_\infty$ preserve $N_1$,
\begin{align*}
    \Tr(\rho_\eps(t) H_0) &= \sum_{i \in \N} \lambda_i \langle e^{-\frac{i t}{\eps} \hat H_\infty} \varphi_i, H_0 e^{-\frac{i t}{\eps} \hat H_\infty} \varphi_i \rangle \\
    &= \sum_{i \in \N}\lambda_i  \sum_{n = 0}^{\lfloor C\eps^{-1} \rfloor}  \langle e^{-\frac{i t}{\eps} \hat H_\infty^{(n)} } \varphi_i^{(n)}, H_0^{(n)} e^{-\frac{i t}{\eps} \hat H_\infty^{(n)}} \varphi_i^{(n)} \rangle.
\end{align*}
By Lemma \ref{klmnbound}, we have $ \pm (\hat H_\infty^{(n)} - H_0^{(n)}) = \pm \hat H_{\infty,I}^{(n)} \leq a H_0^{(n)}+b$ with $a<1$, frow which we deduce the bounds
\begin{align*}
H_{0}^{(n)} &\leq \frac{1}{1-a} \hat H_\infty^{(n)} + \frac{b}{1-a},\\
H_\infty^{(n)} &\leq (1+a) H_0^{(n)} +b.
\end{align*}
We now compute
\begin{align*}
     \langle e^{-\frac{i t}{\eps} \hat H_\infty^{(n)} } \varphi_i^{(n)}, H_0^{(n)} e^{-\frac{i t}{\eps} \hat H_\infty^{(n)}} \varphi_i^{(n)} \rangle &\leq \frac{1}{1-a} \left( \langle e^{-\frac{i t}{\eps} \hat H_\infty^{(n)} } \varphi_i^{(n)}, \hat H_\infty^{(n)} e^{-\frac{i t}{\eps} \hat H_\infty^{(n)}} \varphi_i^{(n)} \rangle + b \norm{ e^{-\frac{i t}{\eps} \hat H_\infty^{(n)}} \varphi_i^{(n)}  }^2  \right)\\
     &= \frac{1}{1-a} \langle \varphi_i^{(n)}, \hat H_\infty^{(n)} \varphi_i^{(n)} \rangle + \frac{b}{1-a} \norm{ \varphi_i^{(n)}  }^2 \\
     &\leq \frac{1}{1-a} \langle \varphi_i^{(n)}, ((1+a)H_0^{(n)}+b) \varphi_i^{(n)} \rangle + \frac{b}{1-a} \norm{ \varphi_i^{(n)}  }^2 \\
     &\leq C \langle \varphi_i^{(n)}, (H_0^{(n)}+1) \varphi_i^{(n)} \rangle.
\end{align*}
Summing in $n\leq \lfloor C \eps^{-1} \rfloor$ and in $i \in \N$ we can conclude that 
$$
\Tr(\rho_\eps(t) H_0) \leq  C \Tr(\rho_\eps (H_0+1)) \leq \tilde C.
$$  
Since $\Tr (\tilde \rho_\eps (t) N_1^k ) = \Tr (\rho_\eps(t) N_1^k) $ and $\Tr (\tilde \rho_\eps (t) H_0 ) = \Tr (\rho_\eps(t) H_0)$, the same  holds for $\tilde \rho_\eps (t)$.
\end{proof}

\begin{prop}
    Consider a family of normal states $(\rho_\eps)_{\eps \in (0,1]}$ satisfying the assumptions \eqref{assumption1} and \eqref{assumption2}. Then the following Duhamel formula holds:
    $\forall t \in \R, \forall \xi \in H^1 \oplus L^2$, $$
    \Tr(\tilde \rho_\eps(t) W(\xi)) = \Tr(\rho_\eps W(\xi))+ \frac{i}{\eps} \int_0^t \Tr \left(\rho_\eps(s) [\hat H_{\infty,I},W(\xi_s)]   \right) \di s,
    $$
    where $\hat H_{\infty,I} = \hat H_\infty-H_0$ and $W(\xi_s) = W(\phi_s^0 (\xi))= e^{-\frac{i s}{\eps}H_0} W(\xi) e^{\frac{i s}{\eps} H_0}$.
\end{prop}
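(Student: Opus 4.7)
The overall strategy is the standard Duhamel computation for the evolution generated by $\hat H_\infty$, tested against the Weyl operator $W(\xi)$, but implemented carefully at the level of quadratic forms on $Q(H_0)$ because the interaction $\hat H_{\infty,I}$ is only form-bounded by $H_0$.

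First I would use cyclicity of the trace to rewrite the left-hand side in a symmetric way: since $W(\xi_t)=e^{-\frac{it}{\eps}H_0}W(\xi)e^{\frac{it}{\eps}H_0}$, one has
$$\Tr(\tilde\rho_\eps(t) W(\xi)) = \Tr\bigl(\rho_\eps(t) e^{-\frac{it}{\eps}H_0}W(\xi)e^{\frac{it}{\eps}H_0}\bigr) = \Tr(\rho_\eps(t) W(\xi_t)).$$
It is therefore enough to differentiate $g(t):=\Tr(\rho_\eps(t)W(\xi_t))$ and integrate. By Lemma~\ref{diagonalemma} one has $\rho_\eps=\sum_i \lambda_i|\varphi_i\rangle\langle\varphi_i|$ with each $\varphi_i\in\mathrm{Ran}(\Id_{[0,C]}(N_1))$, a subspace on which $\hat H_\infty$ acts as a finite direct sum of the self-adjoint operators $\hat H_\infty^{(n)}$ with operator domain $D(H_0^{(n)})$; moreover $[\hat H_\infty,N_1]=0$ so the dynamics preserves this subspace. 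On the dense subclass of $\varphi_i$ that additionally lie in $D(H_0)$, $\psi_i(t):=e^{-\frac{it}{\eps}\hat H_\infty}\varphi_i$ remains in $D(H_0)=D(\hat H_\infty)$ and $\partial_t\psi_i(t)=-\tfrac{i}{\eps}\hat H_\infty\psi_i(t)$ strongly.

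Next comes the pointwise derivative. Writing $f_i(t)=\langle\psi_i(t),W(\xi_t)\psi_i(t)\rangle$ and plugging in both $\partial_t\psi_i$ and $\partial_t W(\xi_t)=-\tfrac{i}{\eps}[H_0,W(\xi_t)]$, the $H_0$ contributions cancel against the $H_0$ part of $\hat H_\infty$ in $\partial_t\psi_i$, leaving
$$f_i'(t)=\frac{i}{\eps}\bigl\langle\psi_i(t),[\hat H_{\infty,I},W(\xi_t)]\psi_i(t)\bigr\rangle,$$
where the commutator is interpreted as a quadratic form on $Q(H_0)$: this is meaningful because $W(\xi_t)$ preserves $Q(H_0)$ by Lemma~\ref{WxiQH0}, and $\hat H_{\infty,I}$ is form-bounded by $H_0$ by Lemma~\ref{klmnbound}. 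Integrating from $0$ to $t$, multiplying by $\lambda_i$ and summing in $i$ gives the announced identity, with dominated convergence justified by the two uniform bounds provided by Proposition~\ref{propagation} and Lemmas~\ref{klmnbound}--\ref{WxiQH0}: explicitly,
$$\bigl|\langle\psi,\hat H_{\infty,I}W(\xi_s)\psi\rangle\bigr|\le C(\xi)\langle\psi,(H_0+1)\psi\rangle\qquad\text{and}\qquad \Tr(\rho_\eps(s)(H_0+1))\le C,$$
uniformly in $s\in[0,t]$ and $\eps\in(0,1]$.

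The only delicate point, and the main obstacle, is that $\hat H_{\infty,I}$ does not preserve the operator domain $D(H_0)$, so one cannot carry out the above computation directly for a generic $\varphi_i\in Q(H_0)$. I would handle this by a density argument: any $\varphi_i\in Q(H_0)$ can be approximated in the form norm $\|(H_0+1)^{1/2}\cdot\|$ by a sequence $\varphi_i^{(m)}\in D(H_0)$ (since $D(H_0)$ is a form core for $H_0$, hence, by Lemma~\ref{klmnbound}, also for $\hat H_\infty$). For each fixed $m$ the differentiation above is rigorous, giving the integrated Duhamel formula for the approximating states $\rho_\eps^{(m)}:=\sum_i\lambda_i|\varphi_i^{(m)}\rangle\langle\varphi_i^{(m)}|$; the passage $m\to\infty$ in both sides is controlled by the same form-bound estimates, using that both $W(\xi_t)$ and $e^{-\frac{it}{\eps}\hat H_\infty}$ are bounded on $Q(H_0)$. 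This closes the argument and yields the stated Duhamel identity.
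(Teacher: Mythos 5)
Your proposal is correct and follows essentially the same route as the paper: diagonalize $\rho_\eps$ via Lemma~\ref{diagonalemma}, differentiate the scalar functions $t\mapsto\langle\psi_i(t),W(\xi_t)\psi_i(t)\rangle$, integrate, and sum over $i$ using the KLMN form-bound and the propagation estimates of Proposition~\ref{propagation} to justify Fubini. The one place where you go beyond the paper is in spelling out the density argument: the paper simply asserts differentiability citing $Q(H_0)=Q(\hat H_\infty)$, Proposition~\ref{propagation} and Lemma~\ref{WxiQH0}, whereas you explicitly propose to first prove the integrated identity for $\varphi_i\in D(H_0)$ and then pass to the form-norm limit. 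That is a legitimate and cleaner way to make the differentiation rigorous. Two small points worth tightening, though. First, you describe the obstacle as ``$\hat H_{\infty,I}$ does not preserve $D(H_0)$''; the real issue is simply that a generic $\varphi_i$ satisfying Assumption~\eqref{assumption2} lies only in $Q(H_0)$, not $D(H_0)$, so $e^{-\frac{it}{\eps}\hat H_\infty}\varphi_i$ is not strongly differentiable. Second, even for $\varphi_i\in D(H_0)$, the formula $\partial_t W(\xi_t)=-\tfrac{i}{\eps}[H_0,W(\xi_t)]$ cannot be used operator-theoretically, because Lemma~\ref{WxiQH0} only guarantees that $W(\xi_t)$ preserves the form domain $Q(H_0)$, not $D(H_0)$; the commutator must still be read as a quadratic form, $\langle H_0^{1/2}\psi_i,H_0^{1/2}W(\xi_t)\psi_i\rangle-\langle W(\xi_t)^*\psi_i,H_0\psi_i\rangle$, exactly as in the paper's version of the computation. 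With those two clarifications your argument matches the paper's proof.
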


\begin{proof}
    By Lemma \ref{diagonalemma} we can express the trace as
    $$
    \Tr (\tilde \rho_\eps(t) W(\xi)) = \sum_{j \in \N} \lambda_j \langle e^{\frac{i t}{\eps} H_0} e^{-\frac{i t}{\eps} \hat H_\infty} \varphi_j, W(\xi) e^{\frac{i t}{\eps} H_0} e^{-\frac{i t}{\eps} \hat H_\infty} \varphi_j \rangle.
    $$
    Since $Q(H_0) = Q(\hat H_\infty)$ and $\varphi_j \in Q(H_0)$ by \eqref{assumption2}, Proposition \ref{propagation} and Lemma \ref{WxiQH0} ensure that for any $j \in \N$, the map
    $$
    t \in \R \longmapsto  \langle e^{\frac{i t}{\eps} H_0} e^{-\frac{i t}{\eps} \hat H_\infty} \varphi_j, W(\xi) e^{\frac{i t}{\eps} H_0} e^{-\frac{i t}{\eps} \hat H_\infty} \varphi_j \rangle \in \R
    $$
    is differentiable and that
    \begin{align*}
        \frac{\di}{\di t} \langle e^{\frac{i t}{\eps} H_0} e^{-\frac{i t}{\eps} \hat H_\infty} \varphi_j, W(\xi) e^{\frac{i t}{\eps} H_0} e^{-\frac{i t}{\eps} \hat H_\infty} \varphi_j \rangle & = \frac{i}{\eps} \langle e^{\frac{i t}{\eps} H_0} (\hat H_\infty - H_0 )e^{-\frac{i t}{\eps} \hat H_\infty} \varphi_j, W(\xi) e^{\frac{i t}{\eps} H_0} e^{-\frac{i t}{\eps} \hat H_\infty} \varphi_j \rangle \\
        & - \frac{i}{\eps} \langle e^{\frac{i t}{\eps} H_0} e^{-\frac{i t}{\eps} \hat H_\infty} \varphi_j, W(\xi) e^{\frac{i t}{\eps} H_0} (\hat H_\infty - H_0 ) e^{-\frac{i t}{\eps} \hat H_\infty} \varphi_j \rangle \\
        &= \frac{i}{\eps} \langle e^{-\frac{i t}{\eps} \hat H_\infty} \varphi_j, [\hat H_{\infty,I}, e^{-\frac{i t}{\eps} H_0}W(\xi) e^{\frac{i t}{\eps} H_0} ]  e^{-\frac{i t}{\eps} \hat H_\infty} \varphi_j \rangle \\
        &= \frac{i}{\eps} \langle e^{-\frac{i t}{\eps} \hat H_\infty} \varphi_j, [\hat H_{\infty,I},W(\xi_t)]  e^{-\frac{i t}{\eps} \hat H_\infty} \varphi_j \rangle.
    \end{align*}
    Integrating between $0$ and $t \in \R$ we obtain 
    \begin{align*}
    \langle e^{\frac{i t}{\eps} H_0} e^{-\frac{i t}{\eps} \hat H_\infty} \varphi_j, W(\xi) e^{\frac{i t}{\eps} H_0} e^{-\frac{i t}{\eps} \hat H_\infty} \varphi_j \rangle &= \langle \varphi_j, W(\xi) \varphi_j \rangle \\ &+ \frac{i}{\eps} \int_0^t \langle e^{-\frac{i s}{\eps} \hat H_\infty} \varphi_j, [\hat H_{\infty,I},W(\xi_s)]  e^{-\frac{i s}{\eps} \hat H_\infty} \varphi_j \rangle \di s.
    \end{align*}
    The sum $\sum_j \lambda_j \langle \varphi_j, W(\xi) \varphi_j \rangle$ converges. Furthermore,
    \begin{align*}
     \abs{ \langle e^{-\frac{i s}{\eps} \hat H_\infty} \varphi_j, [\hat H_{\infty,I},W(\xi_s)]  e^{-\frac{i s}{\eps} \hat H_\infty} \varphi_j \rangle \di s}  &\leq 2 \norm{(H_0+1)^{1/2} e^{-\frac{i s}{\eps} \hat H_\infty} \varphi_j} \\
     & \ \ \ \times \norm{(H_0+1)^{-1/2} \hat H_{\infty,I} (H_0+1)^{-1/2}}
      \\
     & \ \ \ \times \norm{(H_0+1)^{1/2} W(\xi_s) e^{-\frac{i s}{\eps} \hat H_\infty} \varphi_j} \\
     &\leq C(\norm{\xi_s}_{H^1 \oplus L^2}) \norm{(H_0+1)^{1/2} e^{-\frac{i s}{\eps} \hat H_\infty} \varphi_j}^2 \\
     &= C(\norm{\xi}_{H^1 \oplus L^2}) \langle e^{-\frac{i s}{\eps} \hat H_\infty} \varphi_j, (H_0+1) e^{-\frac{i s}{\eps} \hat H_\infty} \varphi_j \rangle,
    \end{align*}
    where we have used Lemma \ref{WxiQH0} and the KLMN bound from Lemma \ref{klmnbound} to get
    $$
   \norm{(H_0+1)^{-1/2} \hat H_{\infty,I} (H_0+1)^{-1/2}} \leq a+C_a.
    $$
    Therefore, 
    $$
    \int_0^t \abs{ \lambda_j \frac{i}{\eps} \langle e^{-\frac{i s}{\eps} \hat H_\infty} \varphi_j, [\hat H_{\infty,I},W(\xi_s)]  e^{-\frac{i s}{\eps} \hat H_\infty} \varphi_j \rangle} \di s
    $$
    is summable via Proposition \ref{propagation} and Fubini's theorem yields the result. 
\end{proof}

\subsection{Analysis of the commutator}

\begin{lem} \label{commutator}
    In the sense of quadratic forms, we have the following expansion: for every $\varphi \in \mathrm{Ran} (\Id_{N_1 \leq C})$,
    $$
     \langle \varphi, \frac{i}{\eps} [\hat H_{\infty,I}, W(\xi_s)] \varphi \rangle = \langle \varphi , W(\xi_s) \sum_{j=0}^3 \eps^j \mathscr B_j (\xi_s) \varphi \rangle,
    $$
    for every $0 \leq j \leq 3$, and for every $\xi \in H^1 \oplus L^2$,Furthermore, there exists $C_j(\norm{\xi}_{H^1 \oplus L^2})>0$ such that for every $\varphi \in \mathrm{Ran} (\Id_{N_1 \leq C})$,
    $$
   \langle \varphi, (N_1+1)^{-1} (H_0+1)^{-1/2} \mathscr B_j (\xi_s) (H_0+1)^{-1/2} (N_1+1)^{-1} \varphi \rangle \leq C_j(\norm{\xi}_{H^1 \oplus L^2}) \norm{\varphi}^2.
    $$
\end{lem}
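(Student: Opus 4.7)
The starting point is the algebraic identity
\begin{equation*}
\frac{i}{\eps}[\hat H_{\infty,I}, W(\xi_s)] = W(\xi_s) \cdot \frac{i}{\eps}\bigl(W(\xi_s)^* \hat H_{\infty,I} W(\xi_s) - \hat H_{\infty,I}\bigr),
\end{equation*}
which already casts the commutator in the factorized form of the lemma. It then suffices to expand the right-hand side in powers of $\eps$ and define $\mathscr B_j(\xi_s)$ as the $\eps^j$-coefficient. For this, I would use the Weyl shift formula: since the commutators $[\phi(\xi_s), \psi^\#(f)]$ and $[\phi(\xi_s), a^\#(g)]$ are scalars of order $\eps$, the Baker-Campbell-Hausdorff formula truncates and gives, for $\xi_s = (\xi_{1,s}, \xi_{2,s})$,
\begin{equation*}
W(\xi_s)^* \psi^{\#}(f) W(\xi_s) = \psi^{\#}(f) + \eps\, \lambda^\#(f, \xi_{1,s}), \quad W(\xi_s)^* a^{\#}(g) W(\xi_s) = a^{\#}(g) + \eps\, \lambda^\#(g, \xi_{2,s}),
\end{equation*}
with $\lambda^\#$ scalars of the form $\pm\tfrac{i}{\sqrt 2}\langle f, \xi_{1,s}\rangle$ and their complex conjugates.

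I would then substitute these shifts into every monomial of $\hat H_{\infty,I}$ written in fully second-quantized form using Lemma \ref{lemmaappendix1}. The relevant monomials have degree $3$ or $4$ in $(\psi^\#, a^\#)$: degree $3$ for the terms $\psi^* a^\#(f_{\sigma_0} e^{-ikx}) \psi$ inherited from $H_{\sigma_0}$ and for the cross terms $\psi^* D_x \cdot a^\#(kB_\infty e^{-ikx})\psi$ (and adjoint); degree $4$ for the direct interaction $\psi^*\psi^* V_\infty(x-y)\psi\psi$ and for the quadratic-in-$a^\#$ pieces $\psi^* a^\#(kB_\infty e^{-ikx})^2 \psi$ and $\psi^* a^*(kB_\infty) a(kB_\infty)\psi$. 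Expanding a degree-$n$ monomial using the shift formulas and subtracting the unshifted original produces a polynomial in $\eps$ with terms of degrees $\eps^1,\ldots,\eps^n$; after dividing by $\eps$ one obtains the $\eps$-degrees $0,\ldots,n-1$. Hence degree-$3$ monomials feed $\mathscr B_0, \mathscr B_1, \mathscr B_2$ while degree-$4$ monomials feed $\mathscr B_0, \mathscr B_1, \mathscr B_2, \mathscr B_3$, and collecting by power of $\eps$ defines the $\mathscr B_j(\xi_s)$. The scalars generated are uniformly bounded in $s$ by $\norm{\xi}_{H^1 \oplus L^2}$, since the free flow $\phi_s^0$ is isometric on both $H^1$ and $L^2$.

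For the bounds, I would rely on the standard Fock estimates $\norm{a^\#(g)\varphi}^2 \leq \norm{g}_{L^2}^2 \langle \varphi, (N_2+\eps)\varphi\rangle$ and its $\psi^\#$-analogue, together with $N_2 \leq H_0$, the inequality $\norm{D_x \psi(f)\varphi}^2 \leq \norm{f}_{L^2}^2 \langle \varphi, H_0 \varphi\rangle$ for the cross terms, and $V_\infty \in L^\infty$ for the quartic direct interaction. Each monomial summand of $\mathscr B_j$, sandwiched between $(N_1+1)^{-1}(H_0+1)^{-1/2}$, is then controlled by $C_j(\norm{\xi}_{H^1 \oplus L^2}) \norm{\varphi}^2$: the $(N_1+1)^{-1}$ factors absorb the $\psi$-degree (in particular the quartic $\psi^*\psi^*V_\infty \psi\psi$), and $(H_0+1)^{-1/2}$ absorbs a $D_x$ or an $N_2^{1/2}$ according to the monomial; the pure-scalar contribution $\mathscr B_3$ is immediate. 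The main obstacle is the bookkeeping: one must, for each of the finitely many monomials in $\hat H_{\infty,I}$, enumerate which subset of field operators is replaced by a shift scalar, and route the resulting contributions to the appropriate $\mathscr B_j$. The most delicate case is the piece of $\mathscr B_0$ coming from the cross term $\psi^* D_x \cdot a^\#(kB_\infty e^{-ikx})\psi$ when the shift acts on the $a^\#$ factor: one is left with a genuine first-order differential operator in $x$ that can only be absorbed by the $(H_0+1)^{1/2}$ on the right, using the $L^2$-boundedness of the scalar $\langle kB_\infty, \xi_{2,s} e^{ikx}\rangle$ as a function of $x$.
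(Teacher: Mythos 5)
Your plan follows essentially the same route as the paper: write the commutator as $W(\xi_s)$ times $(W(\xi_s)^*\hat H_{\infty,I}W(\xi_s) - \hat H_{\infty,I})$, expand in powers of $\eps$ via the Weyl shift of creation/annihilation operators, sort the resulting monomials by their $\eps$-degree to define the $\mathscr B_j$, and then bound each piece with the number-operator and $H_0$ estimates. The only stylistic difference is that the paper packages this in the Wick-symbol language, identifying $\hat H_{\infty,I}$ with $\Id_{N_1\leq C}(\hat h_{\infty,I})^{\mathrm{Wick}}$ and then invoking $W(\xi)^* b^{\mathrm{Wick}} W(\xi) = \bigl(b(\cdot + \tfrac{i\eps}{\sqrt 2}\xi)\bigr)^{\mathrm{Wick}}$, while you perform the equivalent computation term by term on the second-quantized monomials; these are the same thing at the operator level.

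One detail your proposal glosses over is the cutoff $\Id_{N_1 \leq C}$. The operator $\hat H_{\infty,I}$ is not the full second-quantized form on all of Fock space but the Wick quantization of $\hat h_{\infty,I}$ composed with $\Id_{N_1\leq C}$, and $W(\xi_s)$ does \emph{not} preserve $\mathrm{Ran}(\Id_{N_1\leq C})$. So before substituting the shift formulas into the "fully second-quantized form", one has to justify that for $\varphi \in \mathrm{Ran}(\Id_{N_1\leq C})$ the quadratic form $\langle\varphi, [\hat H_{\infty,I}, W(\xi_s)]\varphi\rangle$ agrees with $\langle\varphi, [(\hat h_{\infty,I})^{\mathrm{Wick}}, W(\xi_s)]\varphi\rangle$; the paper does this by moving the projection onto the outer $\varphi$'s, where it acts as the identity. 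This is a short but necessary step, and your write-up should include it.
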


\begin{proof}
The key element is to recognize $\hat H_{\infty,I}$ as the Wick quantization of the interacting dressed classical Hamiltonian whose expression we recall is
    \begin{align*}
        \hat h_{\infty,I} (u,\alpha) &=  \int 2 \mathrm{Re} \langle \alpha, f_{\sigma_0} e^{-ik.x} \rangle \abs{u(x)}^2  \di x \\
        &+  \iint \abs{u(x)}^2 V_\infty(x-y) \abs{u(y)}^2 \di x \di y \\
        &+ \int (2 \mathrm{Re} \langle \alpha, k B_\infty e^{-ik.x} \rangle)^2 \abs{u(x)}^2 \di x  \\
        & -2 \int     \overline{u(x)} ( \langle \alpha, k B_\infty e^{-i k . x} \rangle  . D_{x} + D_x . \langle k B_\infty e^{-i k .x},\alpha \rangle )u(x) \di x. 
    \end{align*}
More precisely, from the definition of the domain of $D(\hat H_\infty)$ and the preservation of $N_1$ under the form $[\hat H_{\infty,I},N_1]=0$, we deduce that 
$$
\hat H_{\infty,I} = \Id_{N_1 \leq C} (\hat h_{\infty,I})^{\mathrm{Wick}} = (\hat h_{\infty,I})^{\mathrm{Wick}} \Id_{N_1 \leq C}.
$$
Hence, for every $\varphi \in \mathrm{Ran} (\Id_{N_1 \leq C})$,
\begin{align*}
    \langle \varphi, [\hat H_{\infty,I}, W(\xi_s)] \varphi \rangle &= \langle \varphi, (\Id_{N_1 \leq C} (\hat h_{\infty,I})^{\mathrm{Wick}} W(\xi_s) - W(\xi_s)  (\hat h_{\infty,I})^{\mathrm{Wick}} \Id_{N_1 \leq C} ) \varphi \rangle  \\
    &=  \langle \Id_{N_1 \leq C} \varphi,   (\hat h_{\infty,I})^{\mathrm{Wick}} W(\xi_s) \varphi \rangle - \langle \varphi , W(\xi_s) (\hat h_{\infty,I})^{\mathrm{Wick}}  \Id_{N_1 \leq C}  \varphi \rangle  \\
    &= \langle \varphi, [(\hat h_{\infty,I})^{\mathrm{Wick}} , W(\xi_s)] \varphi \rangle. 
\end{align*}
This means that when restricted to $\mathrm{Ran}(\Id_{N_1\leq C})$, $[\hat H_{\infty,I}, W(\xi_s)]$ and $[(\hat h_{\infty,I})^{\mathrm{Wick}}, W(\xi_s)]$ coincide in the sense of quadratic forms. Now, using commutation rules between Wick quantizations and Weyl operators, one has
\begin{align*}
[(\hat h_{\infty,I})^{\mathrm{Wick}}, W(\xi_s)] &= W(\xi_s) [W(\xi_s)^* (\hat h_{\infty,I})^{\mathrm{Wick}} W(\xi_s)- (\hat h_{\infty,I})^{\mathrm{Wick}}] \\ &= W(\xi_s) \left(  \hat h_{\infty,I} \left(\cdot + \frac{i \eps}{\sqrt{2}} \xi_s \right) - \hat h_{\infty,I}(\cdot)  \right)^{\mathrm{Wick}}.
\end{align*}
Since $\hat h_{\infty,I}$ is polynomial of order $4$, we can expand $\hat h_{\infty,I} \left(\cdot + \frac{i \eps}{\sqrt{2}} \xi_s \right) - \hat h_{\infty,I}(\cdot)$ into 
$$
\frac{\eps}{\sqrt{2}} \di \hat h_{\infty,I}(\cdot) (i \xi_s) + \mathcal O(\eps^2),
$$
where the remainder $\mathcal O(\eps)$ contains only polynomial terms of order two or less. More specifically, let's look at each term in $\hat h_{\infty,I}$ by their degree. The term 
$$
 \int 2 \mathrm{Re} \langle \alpha, f_{\sigma_0} e^{-ik.x} \rangle \abs{u(x)}^2  \di x + \int (2 \mathrm{Re} \langle \alpha, k B_\infty e^{-ik.x} \rangle)^2 \abs{u(x)}^2 \di x
$$
is polynomial of degree $2$ in $u$ and $2$ in $\alpha$. Commuting with the Weyl operator means losing $1$ degree, so each of the $\mathscr B_j (\xi_s)$ is a quantization of a polynomial of degree still less than $2$ in $u$ and $2$ in $\alpha$, which is controlled by $(N_1+1) (N_2+1)$. The term
$$
 \iint \abs{u(x)}^2 V_\infty(x-y) \abs{u(y)}^2 \di x \di y 
$$
is of degree $4$ in $u$, therefore the commutator is of degree $3$, hence its quantization is controlled by $(N_1+1)^{3/2}$. The last term
$$
-2 \int     \overline{u(x)} ( \langle \alpha, k B_\infty e^{-i k . x} \rangle  . D_{x} + D_x . \langle k B_\infty e^{-i k .x},\alpha \rangle )u(x) \di x
$$
is of degree $2$ in $\alpha$, $1$ in $u$ and $1$ in $\nabla u$, so its derivative is controlled by either $(N_2+1)^{1/2} (N_1+1)^{1/2} (\di \Gamma_1 (-\Delta)+1)^{1/2}$, $(N_2+1) (N_1+1)^{1/2} $ or $(N_2+1) (\di \Gamma_1(-\Delta)+1)^{1/2}$. Every of these terms is controlled by $(N_2+1)^2(H_0+1)$, which concludes.
\end{proof}

\begin{rem}
    The latter proof gives us an expression of $\mathscr B_0 (\xi_s)$ as the Wick quantization of a symbol $b_0( \xi_s)$:
    $$
    \mathscr B_0 (\xi_s) =\left( b_0(\xi_s) \right)^{\mathrm{Wick}} = \left( \frac{i}{\sqrt{2}} \di \hat h_{\infty,I}(\cdot)(i \xi_s) \right)^{\mathrm{Wick}}.
    $$
\end{rem}

\begin{cor} \label{highorderzeroes}
    We can then pass easily to the limit for the high order terms in the Duhamel formula: for any $t \in \R$, we get
    $$
    \sum_{j=1}^3 \eps^j \int_0^t \Tr (\rho_\eps(s) W(\xi_s) \mathscr B_j(\xi_s))\di s \underset{\eps \rightarrow 0}{\longrightarrow} 0.
    $$
\end{cor}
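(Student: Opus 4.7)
The plan is to show that for each $j \in \{1,2,3\}$ the integrand $\Tr(\rho_\eps(s) W(\xi_s) \mathscr B_j(\xi_s))$ is uniformly bounded in $s \in [0,t]$ and $\eps \in (0,1]$, so that each summand is of size $O(\eps^j)$ and trivially vanishes in the limit. First I would use Lemma \ref{diagonalemma} to diagonalize $\rho_\eps = \sum_i \lambda_i |\varphi_i\rangle\langle\varphi_i|$ with $\varphi_i \in \mathrm{Ran}(\Id_{[0,C]}(N_1))$, and observe that $[\hat H_\infty,N_1]=0$ implies $\varphi_i(s) := e^{-is\hat H_\infty/\eps}\varphi_i$ also belongs to $\mathrm{Ran}(\Id_{[0,C]}(N_1))$, which is precisely the range on which the bound of Lemma \ref{commutator} applies. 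The integrand expands as $\sum_i \lambda_i \langle W(-\xi_s) \varphi_i(s), \mathscr B_j(\xi_s) \varphi_i(s)\rangle$.

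The key step is then to combine Lemma \ref{commutator} (extended from the diagonal quadratic form bound to a sesquilinear one by polarization) with Lemma \ref{WxiQH0} in order to commute the weight $(N_1+1)(H_0+1)^{1/2}$ through $W(-\xi_s)$. This yields an estimate of the type
$$
\abs{\Tr(\rho_\eps(s) W(\xi_s) \mathscr B_j(\xi_s))} \leq C_j(\norm{\xi_s}_{H^1 \oplus L^2}) \Tr\bigl(\rho_\eps(s) (N_1+1)^2 (H_0+1)\bigr).
$$
Two uniform controls then close the argument. The free flow $\phi^0_s$ acts as an isometry on $H^1 \oplus L^2$, so $\norm{\xi_s}_{H^1 \oplus L^2}=\norm{\xi}_{H^1 \oplus L^2}$ is constant in $s$. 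Moreover, on the preserved range $\mathrm{Ran}(\Id_{[0,C]}(N_1))$ one has $(N_1+1)^2 \leq (C+1)^2$ as an operator, and since $[N_1,H_0]=0$, Proposition \ref{propagation} gives $\Tr(\rho_\eps(s)(N_1+1)^2(H_0+1)) \leq (C+1)^2 \Tr(\rho_\eps(s)(H_0+1)) \leq C'$ uniformly in $\eps$ and $s$. Integrating in $s \in [0,t]$ and multiplying by $\eps^j$, each summand is bounded by $\eps^j \abs{t} C_j(\norm{\xi}_{H^1 \oplus L^2}) C'$, which vanishes as $\eps \to 0$ for every $j \in \{1,2,3\}$.

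The only mildly delicate point that I expect is upgrading the quadratic-form estimate of Lemma \ref{commutator} to the sesquilinear bound actually needed to estimate $\langle W(-\xi_s)\varphi_i(s), \mathscr B_j(\xi_s)\varphi_i(s)\rangle$. Since each $\mathscr B_j$ is, by the proof of that lemma, a finite sum of Wick-quantized monomials of bounded degree in $N_1$ and $N_2$, its self-adjoint and anti-self-adjoint parts individually admit the same type of number-operator control, and polarization then delivers the required sesquilinear estimate with no additional loss. Everything else is routine bookkeeping.
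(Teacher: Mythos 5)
Your argument is exactly the one the paper uses: bound the integrand uniformly in $s$ and $\eps$ by conjugating $\mathscr B_j(\xi_s)$ with the weight $(N_1+1)(H_0+1)^{1/2}$, moving it through $W(\xi_s)$ via Lemma \ref{WxiQH0}, controlling $\norm{\xi_s}_{H^1\oplus L^2}=\norm{\xi}_{H^1\oplus L^2}$ by the isometry of the free flow, and absorbing $(N_1+1)^2$ by the spectral cutoff $N_1\leq C$ together with Proposition \ref{propagation}; the prefactor $\eps^j$ then makes each term vanish. The paper states this in two lines and leaves the diagonalization, the $N_1$-preservation, and the polarization step implicit, so your write-up is essentially a fleshed-out version of the same proof rather than a different route.
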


\begin{proof}
    Using Lemma \ref{WxiQH0} as well as the propagation of regularity of Proposition \ref{propagation}, we bound, for any $j=1,2,3$,
    \begin{align*}
        \abs{\Tr(\rho_\eps(s) W(\xi_s) \mathscr B_j(\xi_s))} &\leq C_j(\norm{\xi_s}_{H^1 \oplus L^2}) \Tr((N_1+1)^2(H_0+1) \rho_\eps (s)) \\
        &\leq C_j (\norm{\xi}_{H^1 \oplus L^2}),
    \end{align*}
    yielding the result.
\end{proof}

\begin{lem} \label{lemmeb0}
    Let $\xi \in H^1 \oplus L^2$. Then there exists a family of compact symbols
    $$
    b_0^m (\xi) \in \bigoplus_{\substack{0\leq p, q \leq 2,\\ 2 \leq p+q \leq 3 } } \mathcal P_{p,q}^\infty (L^2 \oplus L^2), m \in \N,
    $$
    such that
    $$
    \norm{ (N_1+1)^{-1} (H_0+1)^{-1/2} (b_0 (\xi) - b_0^m (\xi))^{\mathrm{Wick}} (H_0+1)^{-1/2} (N_1+1)^{-1}  } \leq C (m, \norm{\xi}_{H^1 \oplus L^2}),
    $$
    with
    $$
     C (m, \norm{\xi}_{H^1 \oplus L^2}) \underset{ m \longrightarrow \infty} \longrightarrow 0.
    $$
\end{lem}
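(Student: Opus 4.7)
The plan is to unpack $b_0(\xi) = \frac{i}{\sqrt{2}} d\hat h_{\infty,I}(\cdot)(i\xi)$ by differentiating each of the four terms of $\hat h_{\infty,I}$ in the direction $i\xi = (iu_\xi, i\alpha_\xi)$. This produces an explicit finite sum of homogeneous Wick symbols of bidegrees $(p,q)$ with $2 \leq p+q \leq 3$, whose kernels are built from the $L^2$ form factors $f_{\sigma_0}$ and $kB_\infty$, the $L^\infty$ potential $V_\infty$, the parameter components $u_\xi \in H^1$ and $\alpha_\xi \in L^2$, and the electron momentum operator $D_x$ (arising only from the fourth term of $\hat h_{\infty,I}$). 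I would then introduce a common truncation: fix $\chi \in C_c^\infty(\R^d,[0,1])$ equal to $1$ near the origin, set $\chi_m(y) = \chi(y/m)$, and let $\Pi_m := \chi_m(x)\chi_m(D_x)$ on $L^2(\R^d)$, which is compact and converges strongly to the identity. The symbol $b_0^m(\xi)$ is built from $b_0(\xi)$ by replacing each $L^2$ input $f_{\sigma_0}, kB_\infty, \alpha_\xi$ by its Fourier truncation $\chi_m \cdot f_{\sigma_0}$ etc., replacing $u_\xi$ by $\Pi_m u_\xi$, replacing $D_x$ by $\chi_m(D_x) D_x$, and sandwiching each electron leg of the kernel with $\Pi_m$, so that the resulting kernels are compositions of compact operators, hence compact.

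For the operator bound on $b_0(\xi) - b_0^m(\xi)$, I would estimate the quantization of the difference by the same symbolic inequalities that underlie Lemma \ref{commutator}. The difference splits as a finite sum of pieces, each containing at least one of the small factors $f_{\sigma_0} - \chi_m f_{\sigma_0}$, $kB_\infty - \chi_m kB_\infty$, $\alpha_\xi - \chi_m \alpha_\xi$, $u_\xi - \Pi_m u_\xi$, $\mathrm{Id} - \Pi_m$, or $(1-\chi_m(D_x))D_x$. The first four vanish in $L^2$ by dominated convergence; the fifth vanishes strongly and, combined with the compactness of remaining factors on the complementary leg of the kernel, yields operator-norm convergence. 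The last factor is handled by noting that, after sandwiching with $(H_0+1)^{-1/2}$ on both sides, the electron-sector contribution reduces to $(-\Delta+1)^{-1/2}(1-\chi_m(D_x))D_x(-\Delta+1)^{-1/2}$, a bounded function of $-\Delta$ whose Fourier multiplier $(1-\chi_m(|k|))|k|/(|k|^2+1)$ tends to $0$ uniformly in $k$ as $m \to \infty$. Assembling these ingredients with the KLMN-type estimates of Lemma \ref{commutator} produces a bound of the form $C(m, \norm{\xi}_{H^1 \oplus L^2})$ with $C(m, \cdot) \to 0$.

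The main obstacle is the $D_x$ contribution from the fourth term of $\hat h_{\infty,I}$: since $D_x(H_0+1)^{-1/2}$ is bounded but not compact, the compactness of $b_0^m(\xi)$ must come entirely from the position–momentum truncation $\Pi_m$ together with $\chi_m(D_x)$, while norm convergence relies crucially on having the full two-sided sandwich by $(H_0+1)^{-1/2}$ so that the high-momentum tail $(1-\chi_m(D_x))D_x$ is dominated by $(-\Delta+1)$ on both sides. This balance between compactness (forced on one factor) and norm smallness (forced by the two-sided sandwich) is the delicate point; once it is handled, the other contributions reduce to routine dominated-convergence arguments on the $L^2$ form factors and parameters, combined with the standard number-operator bounds on Wick quantizations already used throughout Section \ref{Derivation of the classical system}.
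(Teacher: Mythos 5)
Your proposal follows the paper's outer structure (expand $b_0(\xi) = \frac{i}{\sqrt 2}\di\hat h_{\infty,I}(\cdot)(i\xi)$ into a finite sum of homogeneous pieces, truncate, and estimate the error under the $(H_0+1)^{-1/2}(N_1+1)^{-1}$ sandwich), and it does correctly identify the $D_x$ contribution as the delicate one. But the choice of truncation $\Pi_m=\chi_m(x)\chi_m(D_x)$ applied to the electron legs, together with truncation of $f_{\sigma_0}$, $kB_\infty$, $\alpha_\xi$, $u_\xi$, is noticeably heavier than what the paper does and obscures where compactness actually comes from.

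The paper's proof only inserts momentum cutoffs $\chi_m(D_x)$ (and for $\mathscr B_0^5$, $\chi_m(D_y)$); it never truncates the form factors, $\xi$, or the electron position. The observation it relies on (its Lemma \ref{compactness}) is that a multiplication operator $f(x)$ composed with $g(D_x)$ is compact whenever $f,g\in L^\infty$ both decay at infinity. The functions that appear — $\langle \xi_2, f_{\sigma_0} e^{-ik\cdot x}\rangle$, $\langle\xi_2, kB_\infty e^{-ik\cdot x}\rangle$, $V_\infty(x-y)$ — are automatically bounded and decaying in $x$ precisely because $f_{\sigma_0}, kB_\infty\in L^2$ and $|B_\infty|^2+2B_\infty f_\infty\in L^1$, so inserting a single $\chi_m(D_x)$ compactifies them. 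The pieces with Hilbert–Schmidt kernel (such as the bilinear terms in $\mathscr B_0^{1,2}$ coming from $\xi_1(x)f_{\sigma_0}(k)\in L^2(\R^{2d})$) need no truncation at all. The error then contains only factors of $(1-\chi_m(D_x))$ or $(1-\chi_m(D_x))D_x$, whose operator norms after multiplying by $(1-\Delta_x)^{-1}$ or $(1-\Delta_x)^{-1/2}$ vanish as $m\to\infty$, uniformly — no compactness/strong-convergence argument is needed.

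Your extra position cutoff $\chi_m(x)$ introduces an error piece involving $(1-\chi_m(x))$, which does not vanish in norm under the $(1-\Delta_x)^{-1/2}$ sandwich alone. You correctly propose to salvage this by the ``strong convergence times compact'' argument, and that can be made to work — but only because the remaining factor, after sandwiching, is compact by exactly the decay-plus-momentum-regularization mechanism that Lemma \ref{compactness} formalizes. So your route is not wrong, but it re-derives the needed compactness indirectly, and carries extra bookkeeping: the error decomposition now has several $m$-dependent truncations that do not commute with $(H_0+1)^{-1/2}$, so you would still have to check, piece by piece and sector by sector, that the residual operator after peeling off $(1-\Pi_m)$ is compact (this requires the decay of $V_\infty$, which you should verify explicitly). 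The form-factor truncations $\chi_m f_{\sigma_0}$, $\chi_m kB_\infty$, $\chi_m\alpha_\xi$, $\Pi_m u_\xi$ are entirely redundant — in fact $f_{\sigma_0}$ already has compact support in $k$, so $\chi_m f_{\sigma_0}=f_{\sigma_0}$ for $m$ large — and just add more error terms to control. In short: a correct variant, but the paper's proof is considerably leaner because it exploits the decay already present rather than forcing compactness by brute truncation.
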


The proof is purely technical and is relegated to Appendix \ref{appendixlemma} to avoid complicating the text.

\subsection{Existence of the Wigner measures}
Proposition \ref{propagation} provides for every $t \in \R$ the existence of $\mu_t, \tilde \mu_t$ and an extraction $\eps_{k(t)} \rightarrow 0$ such that along the subsequence,
    $$
    \rho_{\eps_{k(t)}}(t) \longrightarrow \mu_t, \ \ \ \tilde \rho_{\eps_{k(t)}} (t) \longrightarrow \tilde \mu_t.
    $$
    Let us first look at properties of these Wigner measures, inherited from the bounds of Assumptions \eqref{assumption1} and \eqref{assumption2}.
    \begin{lem} \label{boundmeasure}Let $(\rho_\eps)_{\eps \in (0,1]}$ be a family of normal states satisfying Assumptions \eqref{assumption1} and \eqref{assumption2}, and $\eps_k \rightarrow 0$, and let $\mu \in \mathscr M((\rho_{\eps \in (0,1]}))$. Then, 
    $$
 \int_{L^2 \oplus L^2} \norm{u}_{L^2}^{2 k} \di \mu(u,\alpha) \leq C^k, \ \forall k \in \N,
 $$ 
        $$
 \int_{L^2 \oplus L^2} (\norm{\nabla u}^2_{L^2} + \norm{\alpha}^2_{L^2}) \di  \mu(u,\alpha) \leq C,        $$
 where $C$ is the constant appearing in \eqref{assumption1} and \eqref{assumption2}.
    \end{lem}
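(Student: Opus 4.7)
The plan is to apply the same Fatou-type principle as in the already-cited \cite[Theorem 6.2]{ammari2008ahp} directly to the operators $N_1^k$ and $H_0$, recognizing them as the mean-field Wick quantizations of $\norm{u}_{L^2}^{2k}$ and $\norm{\nabla u}_{L^2}^2 + \norm{\alpha}_{L^2}^2$ respectively. More precisely, $N_1 = \di \Gamma_\eps(\Id)\otimes \Id$ has classical symbol $\norm{u}_{L^2}^2$ (it acts as $\eps n$ on $\Hi_n$), and similarly $H_0$ has symbol $h_0(u,\alpha) := \norm{\nabla u}_{L^2}^2 + \norm{\alpha}_{L^2}^2$.

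\textbf{First moment bound.} Fix $k \in \N$ and introduce a cutoff: pick $\chi \in \mathscr C^\infty_c([0,\infty),[0,1])$ with $\chi \equiv 1$ on $[0,1]$ and supported in $[0,2]$, and set $\chi_R(\cdot) = \chi(\cdot/R)$. The bounded operator $B_R^k := \chi_R(N_1)^2 N_1^k$ commutes with $N_1$ and $N_2$, and acts on $\Hi_n$ as the scalar $\chi_R(\eps n)^2 (\eps n)^k$. Extending the defining convergence $\Tr(\rho_{\eps_n} W(\xi)) \rightarrow \int e^{\sqrt{2}i \mathrm{Re}\langle z,\xi\rangle} \di \mu$ to bounded functions of $N_1$ — a standard Stone--Weierstrass/density argument in the semiclassical calculus, see \cite[Section 6]{ammari2008ahp} — yields
$$
\Tr(\rho_{\eps_n} B_R^k) \underset{n \rightarrow \infty}{\longrightarrow} \int_{L^2 \oplus L^2} \chi_R(\norm{u}_{L^2}^2)^2 \, \norm{u}_{L^2}^{2k} \,\di \mu(u,\alpha).
$$
Since $0 \leq B_R^k \leq N_1^k$, Assumption \eqref{assumption1} bounds the left-hand side by $C^k$ uniformly in $n$. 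Monotone convergence as $R \rightarrow \infty$ gives $\int \norm{u}_{L^2}^{2k} \di \mu \leq C^k$.

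\textbf{Second (energy) bound.} Run the same argument with $N_1$ replaced by $H_0$. The bounded spectral cutoff $\tilde B_R := \chi_R(H_0)^2 H_0 \leq H_0$ has classical symbol $\chi_R(h_0)^2 h_0$, and the analogous convergence gives
$$
\int \chi_R(h_0)^2 \, h_0 \, \di \mu \leq \underset{n \rightarrow \infty}{\liminf} \, \Tr(\rho_{\eps_n} \tilde B_R) \leq \underset{n \rightarrow \infty}{\liminf} \, \Tr(\rho_{\eps_n} H_0) \leq C.
$$
Passing to the limit $R \rightarrow \infty$ by monotone convergence concludes.

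\textbf{Main obstacle.} The only non-routine point is upgrading the convergence against Weyl operators (which \emph{defines} $\mu$) to the convergence against bounded functional calculi of $N_1$ and $H_0$. This is precisely the machinery behind the proof of \cite[Theorem 6.2]{ammari2008ahp}, which we re-apply verbatim with the obvious substitution of operator. No other estimate is needed, as once the cutoff convergence is in place the result is an immediate consequence of Fatou/monotone convergence combined with Assumptions \eqref{assumption1} and \eqref{assumption2}.
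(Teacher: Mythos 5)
The paper does not give its own proof of this lemma; it cites \cite[Lemma 3.13]{ammari2011} and \cite[Lemma 4.12]{ammari2017sima}. Your overall strategy (a Fatou/cutoff argument recognizing $N_1^k$ and $H_0$ as Wick quantizations) is the right idea, but the central step is incorrect.

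The gap is the claimed exact convergence
$$
\Tr(\rho_{\eps_n}\,\chi_R(N_1)^2 N_1^k)\longrightarrow \int \chi_R(\norm{u}_{L^2}^2)^2\,\norm{u}_{L^2}^{2k}\,\di\mu(u,\alpha),
$$
which you attribute to a ``Stone--Weierstrass/density argument.'' This is false in general, and it is not a density argument: $\chi_R(N_1)^2 N_1^k$ is a bounded function of $N_1$, not a norm-limit of elements of the $C^*$-algebra generated by the Weyl operators, so the defining convergence against $W(\xi)$ does not propagate to it. A concrete counterexample within the hypotheses of the lemma: take $\rho_\eps = \pure{\psi_\eps}$ with $\psi_\eps = S_n(u_\eps^{\otimes n})\otimes\Omega$, $n=\lfloor 1/\eps\rfloor$, $u_\eps(\cdot)=u_0(\cdot - c_\eps)$, $\norm{u_0}_{L^2}=1$, $u_0\in H^1$, and $c_\eps\to\infty$. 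Then $\Tr(\rho_\eps N_1^k)=(n\eps)^k\to 1$ and $\Tr(\rho_\eps H_0)=n\eps\norm{\nabla u_0}_{L^2}^2$ is bounded, so \eqref{assumption1} and \eqref{assumption2} hold; yet the translation to infinity makes $\langle u_\eps,\xi_1\rangle\to 0$ for every fixed $\xi$, so $\Tr(\rho_\eps W(\xi))\to 1$ and the Wigner measure is $\delta_0$. For $R>1$ your cutoff is the identity on the range of $\rho_\eps$, so the left side of your display tends to $1$, while the right side is $\int 0\,\di\delta_0=0$. Only the Fatou inequality $\liminf\geq$ can hold, and even that does not follow from Weyl convergence for a non-compact observable like $\chi_R(N_1)^2 N_1^k$. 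A further structural issue is that, under \eqref{assumption1} and Lemma \ref{diagonalemma}, $\chi_R(N_1)\equiv 1$ on the support of $\rho_\eps$ as soon as $R>C$, so the spectral cutoff provides no regularization at all.

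The correct mechanism (the one actually used in the cited lemmas) is to approximate the symbols $\norm{u}_{L^2}^{2k}$ and $\norm{\nabla u}_{L^2}^2 + \norm{\alpha}_{L^2}^2$ from below by Wick quantizations of \emph{compact} polynomial symbols — e.g. $\langle u,\Pi_m u\rangle^k$ and $\langle u,(-\Delta\wedge m)\Pi_m u\rangle+\langle\alpha,P_m\alpha\rangle$ with finite-rank projections $\Pi_m,P_m\uparrow\Id$ — to which the compact-symbol convergence of \cite[Section 6]{ammari2008ahp} (quoted in the paper as the proposition before Proposition \ref{lemmacompact}) applies, using the operator inequalities $(\langle u,\Pi_m u\rangle^k)^{\mathrm{Wick}} = \prod_{j=0}^{k-1}(\di\Gamma(\Pi_m\oplus 0)-j\eps)\leq N_1^k + O(\eps)$ and $(\cdot)^{\mathrm{Wick}}\leq H_0$, and then concluding by monotone convergence. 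This is a genuinely different regularization (of the one-body operator in phase space, not of the $N_1$-spectrum) and it is the step your proposal skips over.
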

    \begin{proof}
        The proof can be found in \cite[Lemma 3.13]{ammari2011} or \cite[Lemma 4.12]{ammari2017sima}.
    \end{proof}
    \begin{cor} \label{supportmeasure} Let $(\rho_\eps)_{\eps \in (0,1]}$ be a family of normal states satisfying Assumptions \eqref{assumption1} and \eqref{assumption2}, and let $\mu \in \mathscr M((\rho_{\eps \in (0,1]}))$. Then the support of $\mu$ is included in 
    $$
    \{ (u,\alpha) \in H^1 \oplus L^2 | \norm{u}_{L^2}^2 \leq C \} = (H^1 \oplus L^2) \cap \{ (u,\alpha) \in L^2 \oplus L^2  | \norm{u}_{L^2}^2 \leq C \}.
    $$
    \end{cor}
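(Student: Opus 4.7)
The strategy is to convert the integral bounds from Lemma \ref{boundmeasure} into pointwise $\mu$-almost everywhere statements via Markov-type arguments. The key observation is that the functionals $(u,\alpha) \mapsto \|u\|_{L^2}^2$, $(u,\alpha) \mapsto \|\alpha\|_{L^2}^2$ and $(u,\alpha) \mapsto \|\nabla u\|_{L^2}^2$ (the last one extended by $+\infty$ outside $H^1$) are lower semicontinuous on $L^2 \oplus L^2$, hence Borel measurable and defined on closed sublevel sets.

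First, I would use the energy bound from Lemma \ref{boundmeasure},
$$
\int_{L^2 \oplus L^2} (\|\nabla u\|_{L^2}^2 + \|\alpha\|_{L^2}^2)\, \di \mu(u,\alpha) \leq C,
$$
which forces $\|\nabla u\|_{L^2}^2 + \|\alpha\|_{L^2}^2 < +\infty$ for $\mu$-a.e.\ $(u,\alpha)$. In particular, $u \in H^1$ and $\alpha \in L^2$ for $\mu$-a.e.\ $(u,\alpha)$, so $\mu$ is concentrated on $H^1 \oplus L^2$. More precisely, $\{(u,\alpha) : \|\nabla u\|_{L^2}^2 \leq M\}$ is closed in $L^2 \oplus L^2$ (by lower semicontinuity), and Markov gives $\mu(\{\|\nabla u\|_{L^2}^2 > M\}) \leq C/M$, which tends to $0$ as $M \to \infty$.

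Next I would exploit the sharp moment bound $\int \|u\|_{L^2}^{2k}\, \di\mu \leq C^k$ for every $k \in \N$. For any $\delta > 0$, Markov's inequality gives
$$
\mu(\{\|u\|_{L^2}^2 > C + \delta\}) \leq \frac{1}{(C+\delta)^k} \int \|u\|_{L^2}^{2k}\, \di\mu \leq \left(\frac{C}{C+\delta}\right)^k \underset{k \to \infty}\longrightarrow 0,
$$
so $\mu(\{\|u\|_{L^2}^2 > C + \delta\}) = 0$ for every $\delta > 0$. Taking the union over a decreasing sequence $\delta_n \downarrow 0$ yields $\mu(\{\|u\|_{L^2}^2 > C\}) = 0$. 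Since $\{(u,\alpha) \in L^2 \oplus L^2 : \|u\|_{L^2}^2 \leq C\}$ is closed in $L^2 \oplus L^2$, the measure $\mu$ is genuinely supported in this closed set. Intersecting with the previously obtained concentration on $H^1 \oplus L^2$ gives the claim.

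I do not expect any serious obstacle here: the argument is a concentration-of-measure computation that just combines the two bounds provided by Lemma \ref{boundmeasure}. The only subtlety worth noting is the choice of constants---the moment bound needs to have exactly $C^k$ (with the same $C$ as in Assumption \eqref{assumption1}) in order for the ratio $C/(C+\delta)$ to be strictly less than $1$, which is precisely why Lemma \ref{boundmeasure} is stated with that matching constant.
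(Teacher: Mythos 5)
Your proof is correct and follows essentially the same route as the paper: both apply Markov's inequality first to the energy bound (sending the threshold to $\infty$) to concentrate $\mu$ on $H^1\oplus L^2$, and then to the $k$-th moment bound (sending $k\to\infty$ for a threshold $C+\delta$, then $\delta\downarrow 0$) to conclude $\mu(\{\norm{u}_{L^2}^2>C\})=0$. Your remark about lower semicontinuity and closed sublevel sets is a welcome clarification of a point the paper leaves implicit, but the argument itself is the same.
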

\begin{proof}
    First, we compute
    \begin{align*}
        \mu ((L^2 \oplus L^2) \setminus (H^1 \oplus L^2) ) &= \mu (\{ \norm{\nabla u}_{L^2}^2 = \infty \} ) \\&= \mu \left( \bigcap_{p \in \N}  \{  \norm{\nabla u}_{L^2}^2 \geq p   \} \right) \\&= \underset{p \rightarrow \infty} \lim \mu ( \{  \norm{\nabla u}_{L^2}^2 \geq p   \}) \\ & \leq \underset{p \rightarrow \infty} \lim  \frac{1}{p} \int_{L^2 \oplus L^2} \norm{\nabla u}^2_{L^2} \di \mu(u,\alpha) \\
        &\leq \underset{p \rightarrow \infty} \lim \frac{C}{p} = 0,
    \end{align*}
    so that $\mu(H^1 \oplus L^2) = 1$. Similarly, for every $p,k \in \N^*$,
    \begin{align*}
        \mu \left(\left\{\norm{u}^2_{L^2} \geq C + \frac{1}{p} \right\} \right) \leq \left( C + \frac{1}{p} \right)^{-k} C^k \underset{k \rightarrow \infty} \longrightarrow 0,
    \end{align*}
    so that 
    $$
    \mu \left(\left\{\norm{u}^2_{L^2} >C \right\} \right) =\underset{p \rightarrow \infty} \lim \mu \left(\left\{\norm{u}^2_{L^2} \geq C + \frac{1}{p} \right\} \right)= 0,
    $$
    which concludes.
\end{proof}

\begin{rem}
    This corollary allow us to gain an interpretation concerning the necessity of the infrared cutoff $\sigma_0$. To be fixed, it is needed that $n \eps$ is uniformly bounded, which at the classical level corresponds to $\norm{u}_{L^2}^2$ being uniformly bounded. Let's look at the implications on the ground state energy. Without any restriction, the classical Hamiltonian energy is bounded from below:
    $$
    \underset{(u,\alpha) \in H^1 \oplus L^2} \inf h_\infty(u,\alpha) = - \infty. 
    $$
    Indeed, the interaction term $\int A_\alpha \abs{u}^2$ is cubic and has no sign, therefore its behavior dominates the free quadratic part when both $u$ and $\alpha$ are large (one can easily check this by computing $h(\lambda u,\lambda \alpha)$ for $\lambda \in \R$, and letting $\lambda \longrightarrow \pm \infty$). However, with $\sigma_0$ fixed, the support of the Wigner measure $\mu$ cannot be too large. Virtually, the analysis is only made when $(u,\alpha) \in \mathrm{supp} \ \mu$, on which subset the ground state energy now becomes bounded from below:
    $$
    \underset{(u,\alpha) \in \mathrm{supp} \ \mu} \inf h_\infty(u,\alpha) > - \infty.
    $$
    To see this, it suffices to compute
    \begin{align*}
        \int A_\alpha \abs{u}^2 &\gtrsim - \norm{\alpha}^2_{L^2} \norm{\nabla u}^{1/2}_{L^2} \norm{u}_{L^2}^{3/2} \\
        &\gtrsim  - \norm{\alpha}^2_{L^2} \norm{\nabla u}^{1/2}_{L^2},
    \end{align*}
    which is now controlled on $\mathrm{supp} \ \mu$ by the free Hamiltonian $\norm{\nabla u}^2_{L^2}+\norm{\alpha}^2_{L^2}$.
\end{rem}
    Let us go back to the Duhamel formula. To be able to take the limit in the latter, we need a common subsequence of extraction $\eps_k \rightarrow 0$ independent of the time. The following allows to prove that such a subsequence can be constructed.
\begin{prop} \label{wignerexistence}
    Let $(\rho_\eps)_{\eps \in (0,1]}$ be a family of normal states satisfying Assumptions \eqref{assumption1} and \eqref{assumption2}, and $\eps_k \rightarrow 0$. There exists an extraction $(\eps_{\varphi(k)})_{k \in \N}$, and for every $t \in \R$ a probability measure $\mu_t$ on $L^2 \oplus L^2$ such that 
    \begin{equation} \label{wignerexistence1}
        \rho_{\eps_{\varphi(k)}}(t) \longrightarrow \mu_t,
    \end{equation}
    \begin{equation} \label{wignerexistence2}
      \tilde \rho_{\eps_{\varphi(k)}}(t) \longrightarrow \tilde \mu_t = (\phi^0_{-t})_\# \mu_t,
    \end{equation}
    \begin{equation} \label{wignerexistence3}
        \forall \xi \in L^2 \oplus L^2, \tilde \rho_{\eps_{\varphi(k)}}(t) W(\xi_t) \longrightarrow e^{\sqrt{2} i \mathrm{Re} \langle \xi_t, z \rangle } \di \tilde \mu_t(z).
    \end{equation}
\end{prop}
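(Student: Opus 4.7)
The plan is a classical diagonal extraction over a countable dense set of times, followed by an equicontinuity-in-$\eps$ argument to extend to all of $\R$, and a density argument to enlarge the admissible test vectors from $H^1 \oplus L^2$ to $L^2 \oplus L^2$.

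First I would fix a countable dense subset $\mathscr T \subset \R$. By Proposition \ref{propagation}, $(\tilde \rho_\eps(t))_{\eps \in (0,1]}$ satisfies Assumptions \eqref{assumption1} and \eqref{assumption2} with constants uniform in $t$, so for each $t \in \mathscr T$ the sufficient condition for existence of Wigner measures applies. A Cantor diagonal extraction over $\mathscr T$ produces a subsequence $(\eps_{\varphi(k)})_k$ and probability measures $\tilde \mu_t$, $t \in \mathscr T$, with $\tilde \rho_{\eps_{\varphi(k)}}(t) \longrightarrow \tilde \mu_t$ along this common subsequence.

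Next, for every fixed $\xi \in H^1 \oplus L^2$ I would derive from the Duhamel formula, Lemma \ref{commutator} and Corollary \ref{highorderzeroes} the uniform Lipschitz estimate
$$
\bigl|\Tr(\tilde \rho_\eps(t) W(\xi)) - \Tr(\tilde \rho_\eps(s) W(\xi))\bigr| \leq C(\norm{\xi}_{H^1\oplus L^2}) \, |t-s|,
$$
valid uniformly in $\eps \in (0,1]$. The key ingredients are that $\norm{\xi_u}_{H^1\oplus L^2}$ is constant in $u$ (the free flow is an isometry on $H^1 \oplus L^2$) and that $\Tr(\rho_\eps(u)(N_1+1)^2(H_0+1))$ is uniformly bounded in $u$ and $\eps$, via Proposition \ref{propagation} together with the $N_1$-cutoff from Lemma \ref{diagonalemma}. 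This equi-Lipschitz bound combined with convergence on the dense set $\mathscr T$ upgrades to convergence of $\Tr(\tilde \rho_{\eps_{\varphi(k)}}(t) W(\xi))$ to a Lipschitz-in-$t$ limit $L(t,\xi)$, for every $t \in \R$ and every $\xi \in H^1 \oplus L^2$. I would then extend this convergence from $\xi \in H^1 \oplus L^2$ to $\xi \in L^2 \oplus L^2$ by density, using the approximation bound of Lemma \ref{WxiQH0} together with the uniform mass bound $\Tr(\tilde \rho_\eps(t)(N_1+N_2+1)) \leq C$. Since $(\tilde \rho_\eps(t))_\eps$ admits at least one Wigner measure at every $t$ by the existence criterion applied to Proposition \ref{propagation}, the limit $\xi \mapsto L(t,\xi)$ must coincide with the noncommutative Fourier transform of a uniquely determined probability measure $\tilde \mu_t \in \mathscr P(L^2 \oplus L^2)$, establishing \eqref{wignerexistence1}.

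Finally, the Heisenberg identity $e^{-itH_0/\eps} W(\xi) e^{itH_0/\eps} = W(\phi^0_t \xi)$ and cyclicity of the trace yield $\Tr(\rho_\eps(t) W(\xi)) = \Tr(\tilde \rho_\eps(t) W(\phi^0_{-t}\xi))$, so passing to the limit along $\eps_{\varphi(k)}$ shows that $\rho_{\eps_{\varphi(k)}}(t)$ has Wigner measure $\mu_t := (\phi^0_t)_\# \tilde \mu_t$, which gives $\tilde \mu_t = (\phi^0_{-t})_\# \mu_t$, namely \eqref{wignerexistence2}; statement \eqref{wignerexistence3} is then just \eqref{wignerexistence1} applied to $\xi_t \in L^2 \oplus L^2$. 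The main obstacle is the uniform-in-$\eps$ equi-Lipschitz step: it is precisely what Lemma \ref{commutator} is designed to provide, through its $\eps$-expansion of $[\hat H_{\infty,I},W(\xi_s)]$ and its operator bounds involving $(N_1+1)$ and $(H_0+1)^{1/2}$ on each side, which together make the Duhamel integrand uniformly bounded in $\eps$ and $s$.
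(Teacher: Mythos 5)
Your argument for \eqref{wignerexistence1} and \eqref{wignerexistence2} is essentially sound, and your step identifying the limit $\xi \mapsto L(t,\xi)$ as the noncommutative Fourier transform of a probability measure is actually a cleaner shortcut than the paper's: the paper verifies by hand that $G_0(t,\cdot)$ is continuous, normalized, and of positive type (a Bochner-type argument producing a cylindrical measure $\mathfrak{m}_t$), then uses tightness via the uniform second-moment bounds to upgrade $\mathfrak{m}_t$ to a genuine probability measure on $H^1\oplus L^2$. You instead re-apply the existence criterion from \cite[Theorem 6.2]{ammari2008ahp} to the subfamily $(\tilde\rho_{\eps_{\varphi(k)}}(t))_k$ and exploit that a further subsequential Wigner measure must have the already-known limit $L(t,\cdot)$ as its Fourier transform; both work, yours is shorter. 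Two small points of bookkeeping: Corollary \ref{highorderzeroes} concerns the $\eps\to 0$ vanishing of higher-order terms and is not what gives the uniform-in-$\eps$ Lipschitz bound, which rests entirely on the operator estimates of Lemma \ref{commutator} together with Proposition \ref{propagation}; and your ``establishing \eqref{wignerexistence1}'' should read \eqref{wignerexistence2} at that point, since the convergence constructed there is for $\tilde\rho$, not $\rho$.

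The genuine gap is in your last clause for \eqref{wignerexistence3}. That statement does not assert that $\Tr(\tilde\rho_{\eps_{\varphi(k)}}(t)W(\xi_t))$ converges --- that would indeed follow from \eqref{wignerexistence2} by substituting $\xi_t$. It asserts that the non-self-adjoint operator $\tilde\rho_{\eps_{\varphi(k)}}(t)W(\xi_t)$ converges in the (extended) Wigner sense to the complex measure $e^{\sqrt{2} i\mathrm{Re}\langle\xi_t,z\rangle}\di\tilde\mu_t(z)$, i.e.\ $\Tr\bigl(\tilde\rho_{\eps_{\varphi(k)}}(t)W(\xi_t)W(\eta)\bigr)\to \int e^{\sqrt{2}i\mathrm{Re}\langle\eta,z\rangle}e^{\sqrt{2}i\mathrm{Re}\langle\xi_t,z\rangle}\di\tilde\mu_t(z)$ for every $\eta\in L^2\oplus L^2$ --- which is exactly what Proposition \ref{lemmacompact} later requires when testing Wick quantizations against the state $\rho_\eps(s)W(\xi_s)$. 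The missing argument is to use the Weyl commutation relation $W(\xi_t)W(\eta)=W(\xi_t+\eta)e^{-\frac{i\eps}{2}\mathrm{Im}\langle\xi_t,\eta\rangle}$, apply \eqref{wignerexistence2} to the test vector $\xi_t+\eta$, and observe that the scalar phase tends to $1$ as $\eps\to 0$. It is short, but it is not a tautological restatement of \eqref{wignerexistence2}.
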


\begin{proof}
    \textbf{Step 1.} $G_\eps (t,\xi) :=  \Tr ( \tilde \rho_{\eps_k}(t) W(\xi))$ is uniformly equicontinuous on bounded subsets of $\R \times (H^1 \oplus L^2)$. \\
    Firstly, for $t \in \R, \xi,\eta \in H^1 \oplus L^2$, by Lemma \ref{WxiQH0},
    \begin{align*}
        \abs{G_\eps (t,\xi) - G_\eps (t,\eta)} & \leq  \norm{(W(\xi) - W(\eta)) (N_1+1)^{-2} (H_0+1)^{-1} } \Tr(\tilde \rho_\eps (t) (N_1+1)^2 (H_0+1) ) \\
        &\leq   \norm{(W(\xi) - W(\eta)) (N_1+N_2+1)^{-1/2} } \Tr(\tilde \rho_\eps (t) (N_1+1)^2 (H_0+1) ) \\
        &\leq C \norm{\xi - \eta}_{L^2 \oplus L^2} (1+\norm{\xi}_{L^2 \oplus L^2} + \norm{\eta}_{L^2 \oplus L^2}) \\
        &\leq C \norm{\xi - \eta}_{H^1 \oplus L^2} (1+\norm{\xi}_{H^1 \oplus L^2} + \norm{\eta}_{H^1 \oplus L^2}).
    \end{align*}
    Secondly, for $t_1 \leq t_2 \in \R, \xi \in H^1 \oplus L^2$, by Lemma \ref{commutator},
    \begin{align*}
        \abs{G_\eps (t_2,\xi) - G_\eps(t_1, \xi)} &\leq \int_{t_1}^{t_2} \abs{ \Tr \left(  \rho_\eps (s) \frac{i}{\eps} [\hat H_{I,\infty},W(\xi_s)] \right) \di s} \\ &\leq \int_{t_1}^{t_2} \norm{(N_1+1)^{-1} (H_0+1)^{-1/2}  \sum_{j=0}^3 \eps^k \mathscr B_j(\xi_s)  (H_0+1)^{-1/2} (N_1+1)^{-1} } \\
        & \ \ \ \ \  \times \norm{ (N_1+1)^{-1}   (H_0+1)^{-1/2} W(\xi_s)  (N_1+1)  (H_0+1)^{1/2}} \\
        & \ \ \ \ \ \times \Tr( \rho_\eps (s) (N_1+1)^2 (H_0+1)) \di s \\
        &\leq \sum_{j=0}^3 C_j (\norm{\xi}_{H^1 \oplus L^2}) \abs{t_1-t_2} \\
        & \leq C(\norm{\xi}_{H^1 \oplus L^2}) \abs{t_1-t_2},
    \end{align*}
    which concludes.\\
    \textbf{Step 2.} Diagonal extraction argument for $t \mapsto \tilde \mu_t$. \\
    Let $(t_j)_{j \in \N}$ be a dense countable family in $\R$. Let $\varphi_j : \N \rightarrow \N$ be an extraction satisfying $$\tilde{\rho}_{\eps_{\varphi_j(k)}}(t_j) \longrightarrow \tilde \mu_{t_j}.$$
We define a new extraction by $ \varphi : k \in \N \mapsto \varphi_0 \circ \varphi_1 \circ ... \circ \varphi_k (k) \in \N$. Along this subsequence we have that
$$\forall j \in \N, \tilde{\rho}_{\eps_{\varphi(k)}}(t_j) \longrightarrow \tilde \mu_{t_j},$$
or in other terms,
$$
\forall j \in \N, \forall \xi \in H^1 \oplus L^2, \underset{k \rightarrow + \infty} \lim G_{\eps_{\varphi(k)}} (t_j , \xi) = \int_{L^2 \oplus L^2} e^{\sqrt{2} i \mathrm{Re} \langle \xi, z \rangle} \di \tilde \mu_{t_j} (z) =: G_0 (t_j, \xi).
$$
We define for every $t \in \R$,
$$
G_0 (t, \xi) = \underset{t_j \rightarrow t} \lim G_0(t_j, \xi).
$$
It is well defined using the uniform equicontinuity of \textbf{Step 1.} We want to recognize $G_0(t,\cdot)$ as the Fourier transform of a probability measure. We have immediatly by taking the limit of $G_{\eps_{\varphi(k)}}$ and using uniform equicontinuity that $G_0(t,0) = 1$, $G(t,\cdot)$ is continuous and that $G_0(t,\cdot)$ is of positive type in the sense that $
\forall n \in \N, \forall (\alpha_j)_{1\leq j \leq n} \in \C^n, \forall (\xi_j)_{1\leq j \leq n} \in (H^1 \oplus L^2)^n,$
$$\sum_{1\leq i, j \leq n} \alpha_i \overline \alpha_j G_0(t, \xi_i - \xi_j) \geq 0.
$$
Therefore $G_0$ is the Fourier transform of a cylindrical measure $\mathfrak m_t$. On the other side, one checks that if $t_{j(k)} \rightarrow t$, the sequence of probability measures $(\mu_{t_{j(k)}})_k \subset \mathcal P (H^1 \oplus L^2)$ is tight. Indeed, Proposition \ref{propagation} and Lemma \ref{boundmeasure} ensures the existence of $C>0$ such that for every $k \in \N$,
\begin{equation}
    \int (\norm{\nabla u}^2_{L^2} + \norm{\alpha}^2_{L^2}) \di \tilde \mu_{t_{j(k)}}(u,\alpha) \leq C.
\end{equation}
Therefore there exists a weak narrow limit $\tilde \mu_t \in \mathcal P (H^1 \oplus L^2)$ of $\tilde \mu_{t_j}$. In fact, one has $\mathfrak m_t = \tilde \mu_t$. Indeed,
$$ \int e^{\sqrt{2} i \mathrm{Re} \langle \xi, z \rangle}  \di \tilde \mu_t (z) = \underset{t_j \rightarrow t} \lim \int e^{\sqrt{2} i \mathrm{Re} \langle \xi, z \rangle}  \di \tilde \mu_{t_j} (z)= \underset{t_j \rightarrow t} \lim  G_0(t_j,\xi) = G_0 (t,\xi) =  \int e^{\sqrt{2} i \mathrm{Re} \langle \xi, z \rangle}  \di \mathfrak m_t (z). $$
We finally have to check that
$$
\underset{k \rightarrow +\infty} \lim G_{\eps_{\varphi(k)}} (t,\xi) = G_0(t,\xi).
$$
We compute
\begin{align*}
    \abs{G_{\eps_{\varphi(k)}} (t,\xi) - G_0(t,\xi)} &\leq \abs{G_{\eps_{\varphi(k)}} (t,\xi)-G_{\eps_{\varphi(k)}} (t_j,\xi)} \\
    &+ \abs{G_{\eps_{\varphi(k)}} (t_j,\xi) - \int_{L^2 \oplus L^2} e^{\sqrt{2}i \mathrm{Re} \langle \xi, z \rangle} \di \tilde \mu_{t_j} (z)} \\
    &+ \abs{\int_{L^2 \oplus L^2} e^{\sqrt{2}i \mathrm{Re} \langle \xi, z \rangle} \di \tilde \mu_{t_j} (z) - \int_{L^2 \oplus L^2} e^{\sqrt{2}i \mathrm{Re} \langle \xi, z \rangle} \di \tilde \mu_{t} (z)}.
\end{align*}
Therefore,
\begin{align*}
\underset{k \rightarrow +\infty} \limsup \abs{G_{\eps_{\varphi(k)}} (t,\xi) - G_0(t,\xi)}  &\leq C(\norm{\xi}_{H^1 \oplus L^2}) \abs{t-t_j} \\ & + \abs{\int e^{\sqrt{2}i \mathrm{Re} \langle \xi, z \rangle} \di \tilde \mu_{t_j} (z) - \int e^{\sqrt{2}i \mathrm{Re} \langle \xi, z \rangle} \di \tilde \mu_{t} (z)}.
\end{align*}
Taking the limit $t_j \rightarrow t$ yields the result.
\\
    \textbf{Step 3.} Linking $\mu_t$ and $\tilde \mu_t$. \\
    By a similar argument we can assume that the subsequence $\eps_{\varphi(k)}$ also works for the convergence $\rho_{\eps(k)}(t) \rightarrow \mu_t$. A computation shows that
    $$
    \Tr(\tilde \rho_{\eps_\varphi(k)}(t) W(\xi)) = \Tr( \rho_{\eps_{\varphi(k)}} (t) W(\xi_t)) = \Tr( \rho_{\eps_{\varphi(k)}} (t) W(\phi^0_t (\xi))).
    $$
    Therefore, by taking the limit, we have 
    $$
    \int_{L^2 \oplus L^2} e^{\sqrt{2} i \mathrm{Re} \langle \xi ,z \rangle} \di \tilde \mu_t (z) = \int_{L^2 \oplus L^2} e^{\sqrt{2} i \mathrm{Re} \langle \xi ,\phi_{-t}^0(z) \rangle} \di \mu_t (z)=\int_{L^2 \oplus L^2} e^{\sqrt{2} i \mathrm{Re} \langle \xi ,z \rangle} \di (\phi_{-t}^0)_\# \mu_t (z).
    $$
    The Fourier transform of the two measures $\tilde \mu_t$ and $(\phi_{-t}^0)_\# \mu_t$ coincide, hence they must be equal.
    \textbf{Step 4.} Proof of \eqref{wignerexistence3}. \\
    For all $\xi,\eta \in L^2 \oplus L^2$, using the Weyl commutation relations,
    $$
    \Tr(\tilde \rho_{\eps_{\varphi(k)}} (t) W(\xi_t) W(\eta)) = \Tr (\tilde \rho_{\eps_{\varphi(k)}} (t) W(\xi_t +\eta)) e^{- i \frac{ \eps_{\varphi(k)}}{2} \mathrm{Im}\langle \xi_t, \eta \rangle}.
    $$
    Tkaing the limit $k \longrightarrow + \infty$, it holds 
    $$
    \underset{k \rightarrow + \infty} \lim \Tr (\tilde \rho_{\eps_{\varphi(k)}} (t) W(\xi_t) W(\eta)) = \int e^{\sqrt{2}i \mathrm{Re} \langle \xi_t + \eta,z \rangle} \di \tilde \mu_t (z) = \int  e^{\sqrt{2}i \mathrm{Re} \langle  \eta,z \rangle} e^{\sqrt{2}i \mathrm{Re} \langle  \xi_t,z \rangle}\di \tilde \mu_t (z),
    $$
    proving \eqref{wignerexistence3}.
\end{proof}

\subsection{Limit in the Duhamel formula}
Let us recall a useful result from \cite[Lemma 4.15]{ammari2017sima}. It is an adaptation of a result from \cite{ammari2008ahp} concerning the semiclassical limit of Wick quantizations of compact symbols.
\begin{prop}
    Consider a family of normal states $(\rho_{\eps})_{\eps \in (0,1]}$ such that $\rho_{\eps_j} \rightarrow \mu \in \mathcal P(L^2 \oplus L^2)$ and that for some $\delta>0$,
    $$
    \underset{j \in \N}{\sup} \norm{(N_1+N_2)^{\delta/2} \rho_{\eps_j} (N_1+N_2)^{\delta/2} } < + \infty.
    $$
    Then for all $b \in \bigoplus_{p+q < 2 \delta} \mathcal P_{p,q}^\infty (L^2 \oplus L^2)$,
    $$
    \underset{j \rightarrow \infty} \lim \Tr \left(\rho_{\eps_j} b^{\mathrm{Wick}} \right) = \int_{L^2 \oplus L^2} b(z) \di \mu (z).
    $$
\end{prop}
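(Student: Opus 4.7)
The plan is a two-step approximation: first reduce to symbols that depend only on a finite-dimensional subspace of $L^2 \oplus L^2$, and then handle such tame symbols by differentiating characteristic functions of Wigner measures at the origin. For $b \in \mathcal P_{p,q}^\infty$, I would write $b(z) = \langle z^{\otimes q}, \tilde b\, z^{\otimes p}\rangle$ with $\tilde b$ a compact operator between symmetric tensor powers, and approximate $\tilde b$ in operator norm by finite-rank operators $\tilde b_n$, with corresponding polynomial symbols $b_n$. The standard relative bound
$$
\norm{(N_1+N_2+1)^{-q/2} (b-b_n)^{\mathrm{Wick}} (N_1+N_2+1)^{-p/2}} \leq C_{p,q}\, \norm{\tilde b - \tilde b_n},
$$
combined with the hypothesis $\sup_j \norm{(N_1+N_2)^{\delta/2} \rho_{\eps_j} (N_1+N_2)^{\delta/2}} < +\infty$ and the strict inequality $p+q < 2\delta$, controls $\abs{\Tr(\rho_{\eps_j}(b-b_n)^{\mathrm{Wick}})}$ uniformly in $j$. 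On the classical side, the bound is inherited by $\mu$ (in the spirit of Lemma \ref{boundmeasure}), yielding $\int \norm{z}^{p+q} \di\mu(z) < \infty$, which controls $\abs{\int(b-b_n)\di\mu}$. It therefore suffices to prove the result for tame symbols.

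For a tame monomial $b(z) = \prod_{i=1}^p \langle z, f_i\rangle \prod_{j=1}^q \langle g_j, z\rangle$ with $f_i, g_j$ in a fixed finite-dimensional subspace $V \subset L^2 \oplus L^2$, the Wick quantization is the operator product $a^*(f_1)\cdots a^*(f_p) a(g_1)\cdots a(g_q)$. Such monomials are recovered by polarized derivatives of Weyl operators: writing $\phi(h) = -i \partial_t|_{t=0} W(th)$ and expanding products of field operators in terms of Wick monomials via the canonical commutation relations, I would express $\Tr(\rho_{\eps_j} b^{\mathrm{Wick}})$ as a finite linear combination of mixed $(p+q)$-th order derivatives at $\xi=0$ of the finite-dimensional map $\xi \in V \mapsto \Tr(\rho_{\eps_j} W(\xi))$. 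Uniform-in-$j$ equicontinuity of these derivatives, obtained from commutator bounds $[a^\#(h), W(\xi)]$ together with the $(N_1+N_2)^{\delta}$-bound, would justify interchanging the differentiation with the semiclassical limit $\rho_{\eps_j} \to \mu$. The resulting derivatives of $\xi \mapsto \int e^{\sqrt{2}\, i\, \mathrm{Re}\langle z,\xi\rangle} \di\mu(z)$ at $\xi=0$ yield precisely $\int b(z)\di\mu(z)$, and linearity closes the argument.

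The main obstacle will be making this differentiation--limit interchange rigorous in infinite dimensions: one needs uniformly-in-$j$ estimates on the field-monomial moments $\Tr(\rho_{\eps_j} a^\#(h_1)\cdots a^\#(h_n))$ and their continuity in the test vectors $h_k$ on bounded sets. The $(N_1+N_2)^\delta$-bound is exactly the ingredient that supplies these estimates through a dominated-convergence argument, and simultaneously guarantees the finiteness of the limiting polynomial integrals against $\mu$. The finite-dimensional reduction of the first step is what makes the whole problem a matter of exchanging the semiclassical limit with a finite number of derivatives on a finite-dimensional subspace, which is the only place where any delicate analysis is needed.
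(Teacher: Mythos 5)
This proposition is recalled from \cite[Lemma 4.15]{ammari2017sima}, which itself adapts \cite{ammari2008ahp}; the present paper does not contain its own proof, so the comparison is against those references. Your first step---writing $b(z)=\langle z^{\otimes q},\tilde b\, z^{\otimes p}\rangle$ with $\tilde b$ compact, approximating by finite-rank $\tilde b_m$, and controlling the Wick error via the number estimate $\|(N+1)^{-q/2}(b-b_m)^{\mathrm{Wick}}(N+1)^{-p/2}\|\lesssim\|\tilde b-\tilde b_m\|$ together with the uniform $N^\delta$ bound and its classical counterpart on $\mu$---is exactly the reduction used in the references and is sound (modulo a correct distribution of $N$-powers so that only $(p+q)/2\le\delta$ is needed, not $p\le\delta$ and $q\le\delta$ separately).

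Your handling of the finite-rank (tame) step, however, differs from the cited argument and is where the gaps are. The references do not differentiate the characteristic function at the origin: they trace out $\mathscr F(V^\perp)$ so as to work on the finite-dimensional Fock space $\mathscr F(V)$, and compare the Wick quantization with the \emph{anti-Wick} (Husimi) quantization. The latter produces a bona fide probability density $G_j$ on $V$ converging narrowly to the marginal of $\mu$; the Wick/anti-Wick discrepancy is $O(\eps)$; and the uniform bound $\Tr(\rho_j N^\delta)\le C$ supplies uniform integrability of $|z|^{p+q}$ against $G_j\,dz$ (de la Vallée-Poussin), precisely where the strict inequality $p+q<2\delta$ is consumed. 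In your route, you must instead (a) account for the fact that polarized derivatives of $W(\xi)$ at $\xi=0$ produce \emph{Weyl-ordered} field monomials, which must be normal-ordered via the CCR, generating $\eps$-weighted lower-degree Wick terms that need a separate estimate; and (b) justify the interchange of the $(p+q)$-th derivative at $\xi=0$ with the limit $j\to\infty$, which requires a compactness (Arzel\`a--Ascoli) argument based on \emph{equicontinuity} of those derivatives. The moment bound supplies a uniform bound on derivatives of order up to $2\delta$ only; since generically $p+q<2\delta<p+q+1$, one must convert the fractional excess $2\delta-(p+q)$ into a uniform H\"older modulus for the $(p+q)$-th derivatives---this is not a dominated-convergence argument, as your sketch suggests, and it is not set up. These gaps are fillable, but the anti-Wick route is the one that avoids them, and is what the cited lemma actually uses.
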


\begin{prop} \label{lemmacompact}
    Consider a family of normal states $(\rho_{\eps})_{\eps \in (0,1]}$ satisfying Assumptions \eqref{assumption1} and \eqref{assumption2}. Let $t\mapsto \tilde \mu_t$ be the weakly narrow continuous map of probability measures obtained from Proposition \ref{wignerexistence}. Then $t\mapsto \tilde \mu_t$ follows the following transport equation: for all $\xi \in L^2 \oplus L^2$,
    $$
    \int_{L^2 \oplus L^2} e^{\sqrt{2} i \mathrm{Re} \langle \xi,z \rangle} \di \tilde \mu_t(z) = \int_{L^2 \oplus L^2} e^{\sqrt{2} i \mathrm{Re} \langle \xi,z \rangle} \di \mu_0(z) + \int_0^t \int_{L^2 \oplus L^2} b_0(\xi_s)(z) e^{\sqrt{2} i \mathrm {Re} \langle \xi_s, z \rangle} \di  \mu_s (z) \di s, 
    $$
    with $b_0(\xi_s)(z) e^{\sqrt{2} i \mathrm {Re} \langle \xi_s, z \rangle} \in L^\infty_t ( \R, L^1 (L^2 \oplus L^2 , \mu_t))$.
\end{prop}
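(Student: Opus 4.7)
The plan is to take the $\eps \to 0$ limit in the Duhamel formula of the preceding subsection along the subsequence $\eps_{\varphi(k)}$ provided by Proposition \ref{wignerexistence}, and identify each term with the corresponding object on the classical side.

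First, I would rewrite the Duhamel formula using Lemma \ref{commutator}: for $\xi \in H^1 \oplus L^2$,
\begin{align*}
\Tr(\tilde \rho_\eps(t) W(\xi)) &= \Tr(\rho_\eps W(\xi)) + \int_0^t \Tr\bigl(\rho_\eps(s)\, W(\xi_s) \sum_{j=0}^3 \eps^j \mathscr B_j(\xi_s) \bigr) \di s.
\end{align*}
Proposition \ref{wignerexistence} gives the limit of the left-hand side and of the $\eps$-independent boundary term as the Fourier transforms of $\tilde \mu_t$ and $\mu_0 = \tilde \mu_0$ respectively. Corollary \ref{highorderzeroes} handles the contributions $j=1,2,3$, which vanish uniformly in $t$ on compact intervals. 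The whole problem therefore reduces to identifying the limit, inside the $s$-integral, of
$$
\Tr\bigl(\rho_{\eps_{\varphi(k)}}(s)\, W(\xi_s)\, \mathscr B_0(\xi_s)\bigr).
$$

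The core step is to show, for each $s$, the pointwise convergence
$$
\underset{k \to \infty} \lim \Tr\bigl(\rho_{\eps_{\varphi(k)}}(s)\, W(\xi_s)\, \mathscr B_0(\xi_s)\bigr) = \int_{L^2 \oplus L^2} b_0(\xi_s)(z)\, e^{\sqrt{2} i \mathrm{Re} \langle \xi_s, z\rangle} \di \mu_s(z).
$$
For this, I would approximate $b_0(\xi_s)$ by the compact symbols $b_0^m(\xi_s)$ of Lemma \ref{lemmeb0}. For the compact piece, writing $\mathscr B_0^m(\xi_s) = (b_0^m(\xi_s))^{\mathrm{Wick}}$, one combines \eqref{wignerexistence3}, which expresses $\tilde \rho_{\eps_{\varphi(k)}}(s)\, W(\xi_s)$ as converging to the complex measure $e^{\sqrt{2} i \mathrm{Re} \langle \xi_s, z\rangle} \di \tilde\mu_s(z)$ against Weyl operators, with the preceding Proposition on the semiclassical limit of Wick quantization of compact symbols. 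Density of Weyl operators in the CCR algebra lets these two limits combine, giving convergence toward $\int b_0^m(\xi_s)(z)\, e^{\sqrt{2} i \mathrm{Re} \langle \xi_s, z\rangle} \di \mu_s(z)$ (after undoing the free flow via \textbf{Step 3} of Proposition \ref{wignerexistence}). The remainder $\mathscr B_0(\xi_s) - \mathscr B_0^m(\xi_s)$ is controlled uniformly in $k$ by Lemma \ref{lemmeb0} together with the propagated bounds of Proposition \ref{propagation} and Lemma \ref{WxiQH0}, and vanishes as $m\to \infty$.

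Finally I would invoke dominated convergence in $s \in [0,t]$ to pass the limit under the integral: the integrand is uniformly bounded in $(k,s)$ by the quantitative estimates of Lemma \ref{commutator} combined with Proposition \ref{propagation}, which also directly yields the claimed regularity $b_0(\xi_s)(z)\, e^{\sqrt{2} i \mathrm{Re} \langle \xi_s,z\rangle} \in L^\infty_t(\R, L^1(L^2\oplus L^2, \mu_t))$ since $b_0(\xi_s)$ is a polynomial of degree at most three in $z$ and the moments of $\mu_s$ up to that order are uniformly bounded by Lemma \ref{boundmeasure}. The main obstacle is the combined Weyl/compact-Wick convergence described above; the rest is a clean application of the tools already assembled in this section.
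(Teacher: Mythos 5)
Your sketch follows essentially the same route as the paper: take the limit in the Duhamel formula along $\eps_{\varphi(k)}$, dispose of the $j\ge 1$ terms via Corollary \ref{highorderzeroes}, approximate $\mathscr B_0(\xi_s)$ by the compactly-quantized $\mathscr B_0^m(\xi_s)$ of Lemma \ref{lemmeb0}, pass to the limit $k\to\infty$ for fixed $m$ via the compact-symbol proposition together with \eqref{wignerexistence3}, dominate in $s$, and send $m\to\infty$ at the end. The remark on $L^\infty_t L^1_\mu$ regularity from the moment bounds of Lemma \ref{boundmeasure} is also the paper's reasoning. One small point of divergence: for the $m\to\infty$ step you invoke Lemma \ref{lemmeb0} and the operator bounds, which handles the quantum-side remainder $\Tr(\rho_\eps W(\xi_s)(\mathscr B_0-\mathscr B_0^m))$ uniformly in $k$, but the paper additionally and explicitly needs Lemma \ref{lemmeb0m} to control the \emph{classical}-side remainder $\int (b_0-b_0^m)e^{\sqrt{2}i\mathrm{Re}\langle\xi_s,z\rangle}\di\mu_s$; your phrase ``vanishes as $m\to\infty$'' glosses over the fact that these are two separate estimates (quantum and classical), while the paper performs a three-term decomposition and treats them independently. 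Also, where the paper asserts that the compact-symbol proposition applies to the non-normal operator $\rho_{\eps_{\varphi(k)}}(s)W(\xi_s)$, you instead appeal to density of Weyl operators in the CCR algebra to combine the two limits; both are abbreviations of the same underlying argument from \cite{ammari2008ahp}, so this is a stylistic rather than substantive difference.
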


\begin{proof}
We want to take the semiclassical limit of the Duhamel formula, which we recall is
$$
\Tr (\tilde \rho_\eps (t) W(\xi)) = \Tr (\rho_\eps W(\xi)) + \int_0^t \Tr \left( \rho_\eps (s) W(\xi_s) \sum_{j=0}^3 \eps^j \mathscr B_j (\xi_s) \right) \di s,
$$
along the uniform subsequence. From Corollary \ref{highorderzeroes} and Proposition \ref{wignerexistence}, it only remains to show that
$$
\underset{k \rightarrow \infty} \lim \int_0^t \Tr \left(  \rho_{\eps_{\varphi(k)}} (s) W(\xi_s) \mathscr B_0(\xi_s) \right) \di s = \int_0^t \int_{L^2 \oplus L^2} b_0(\xi_s)(z) e^{\sqrt{2} i \mathrm {Re} \langle \xi_s, z \rangle} \di \mu_s (z) \di s.
$$
    This is where we need to approximate the symbol $b_0$ of the operator $\mathscr B_0$. Consider $\mathscr B_0^m (\xi_s) := (b_0^m(\xi_s))^{\mathrm{Wick}}$ constructed from Lemma \ref{lemmeb0}. Since $b_0^m$ is a compact symbol of order $3$ or less, and that the state $  \rho_{\eps_{\varphi(k)}} W(\xi_s)$ satisfies Assumptions \eqref{assumption1} and \eqref{assumption2}, we can use Lemma \ref{lemmacompact} to state that for every $s \in [0,t]$,
    $$
    \underset{k \rightarrow \infty} \lim \Tr \left(  \rho_{\eps_{\varphi(k)}} (s) W(\xi_s) \mathscr B_0^m (\xi_s) \right) = \int_{L^2 \oplus L^2} b_0^m (\xi_s) (z)  e^{\sqrt{2} i \mathrm {Re} \langle \xi_s, z \rangle} \di  \mu_s (z). 
    $$
We will now invoke the dominated convergence theorem to pass to the limit under the time integral. For the domination, we first express $ \rho_\eps (s) W(\xi_s)$ again, using the diagonal form of $\rho_\eps(s)$,
$$
 \rho_{\eps}(s) = \sum_{i \in \N}  \lambda^i_\eps(s) |\varphi_\eps^i (s) \rangle \langle \varphi^i_\eps(s) |,
$$
with $\varphi_\eps^i (s) \in \mathrm{Ran} \Id_{[0,C]} (N_1)$. Now, for every $i \in \N,$ using Lemma $\ref{dominationB0m}$,
\begin{align*}
    &\abs{\langle \varphi^i_\eps (s), W(\xi_s)  \mathscr B_0^m (\xi_s) \varphi^i_\eps (s) \rangle} \\
    &\leq \sum_{n=0}^{\lfloor C \eps^{-1} \rfloor} \abs{\langle (N_1+1)  \varphi_\eps^{i,n}(s) ,  (N_1+1)^{-1} W(\xi_s) \mathscr B_0^m (\xi_s) (N_1+1)^{-1} (N_1+1) \varphi_\eps^{i,n}(s)  \rangle} \\
    &\leq  \sum_{n=0}^{\lfloor C \eps^{-1} \rfloor} (n\eps +1)^2 \abs{  \langle (H_0+1)^{1/2} \varphi_\eps^{i,n}(s) ,   (H_0+1)^{1/2} \varphi_\eps^{i,n}(s)  \rangle}
    \\
    & \ \ \ \ \ \ \ \ \ \ \times \norm{   (N_1+1)^{-1} (H_0+1)^{-1/2}   W(\xi_s) (H_0+1)^{1/2}(N_1+1)}
    \\
    & \ \ \ \ \ \ \ \ \ \  \times\norm{(H_0+1)^{-1/2} (N_1+1)^{-1} \mathscr B_0^m (\xi_s) (N_1+1)^{-1} (H_0+1)^{-1/2}} \\
    &\leq (C+1)^2 C(\norm{\xi_s}_{H^1 \oplus L^2})C_m \norm{\xi_s}_{H^1 \oplus L^2} \sum_{n=0}^{\lfloor C \eps^{-1} \rfloor}  \langle \varphi_\eps^{i,n}(s) ,   (H_0+1) \varphi_\eps^{i,n}(s)  \rangle \\
    &=  C(m ,\norm{\xi}_{H^1 \oplus L^2}) \langle \varphi_\eps^{i}(s) ,   (H_0+1) \varphi_\eps^{i}(s)  \rangle,
\end{align*}
from which we deduce, using Proposition \ref{propagation} the domination
$$
\abs{\Tr( \rho_\eps W(\xi_s) \mathscr B_0^m(\xi_s))} \leq  C(m ,\norm{\xi}_{H^1 \oplus L^2}) \in L^1 ([0,t]).
$$
Hence we have 
$$
\underset{k \rightarrow \infty} \lim \int_0^t \Tr \left(  \rho_{\eps_{\varphi(k)}} (s) W(\xi_s) \mathscr B_{0}^m(\xi_s) \right) \di s = \int_0^t \int_{L^2 \oplus L^2} b_{0}^m(\xi_s)(z) e^{\sqrt{2} i \mathrm {Re} \langle \xi_s, z \rangle} \di  \mu_s (z) \di s.
$$
To get rid of the $m$ dependance, we compute
\begin{align*}
    &\abs{ \int_0^t \Tr \left(  \rho_{\eps_{\varphi(k)}} (s) W(\xi_s) \mathscr B_0(\xi_s) \right) \di s - \int_0^t \int_{L^2 \oplus L^2} b_0(\xi_s)(z) e^{\sqrt{2} i \mathrm {Re} \langle \xi_s, z \rangle} \di  \mu_s (z) \di s} \\
    &\leq \abs{ \int_0^t \Tr \left(  \rho_{\eps_{\varphi(k)}} (s) W(\xi_s) (\mathscr B_0(\xi_s)-\mathscr B_0^m (\xi_s)) \right) \di s } \\
    &+ \abs{\int_0^t \Tr \left(  \rho_{\eps_{\varphi(k)}} (s) W(\xi_s) \mathscr B_{0}^m(\xi_s) \right) \di s - \int_0^t \int_{L^2 \oplus L^2} b_{0}^m(\xi_s)(z) e^{\sqrt{2} i \mathrm {Re} \langle \xi_s, z \rangle} \di  \mu_s (z) \di s} \\
    &+ \abs{\int_0^t \int_{L^2 \oplus L^2} (b_{0}^m(\xi_s)(z)-b_0(\xi_s)(z) e^{\sqrt{2} i \mathrm {Re} \langle \xi_s, z \rangle} \di  \mu_s (z) \di s}.
\end{align*}
We know that for every $m$ the middle term can be made as small as we want as $\eps$ goes to zero, meaning we have margin to choose $m$. The smallness of the last term is ensured by Lemma \ref{lemmeb0m}. For the first term, doing the same proof as for the domination and using Lemma \ref{lemmeb0}, we conclude that 
$$
\abs{\Tr \left(  \rho_{\eps_{\varphi(k)}} (s) W(\xi_s) \mathscr B_0(\xi_s) \right)} \leq C(m,\norm{\xi}_{H^1 \oplus L^2})
$$
and the right hand side goes to zero as $m \rightarrow \infty$ in $L^1([0,t])$. Therefore, by first choosing $m$ large enough so that the first and last terms are small and then choosing $\eps$ small enough along the subsequence, we deduce the Proposition. 
\end{proof}

\begin{lem} \label{integralformulaX}
     Consider a family of normal states $(\rho_{\eps})_{\eps \in (0,1]}$ satisfying Assumptions \eqref{assumption1} and \eqref{assumption2}. Let $t\mapsto \tilde \mu_t$ be the weakly narrow continuous map of probability measures obtained from Proposition \ref{wignerexistence}. Then $t\mapsto \tilde \mu_t$ follows the following transport equation: for all $\xi \in L^2 \oplus L^2$,
    \begin{align*}
    \int_{L^2 \oplus L^2} e^{\sqrt{2} i \mathrm{Re} \langle \xi,z \rangle} \di \tilde \mu_t(z) &= \int_{L^2 \oplus L^2} e^{\sqrt{2} i \mathrm{Re} \langle \xi,z \rangle} \di \mu_0(z) \\ &+ \sqrt{2} i \int_0^t \int_{L^2 \oplus L^2} \mathrm{Re}\langle X(s,z),\xi \rangle e^{\sqrt{2} i \mathrm {Re} \langle \xi, z \rangle} \di \tilde \mu_s (z) \di s, 
    \end{align*}
    with $X(s,\cdot) = - i \phi_{-s}^0 \circ \nabla_{\overline{z}} \hat h_{\infty,I} \circ \phi^0_s $.
\end{lem}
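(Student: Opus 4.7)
The statement is a rewriting of the transport equation of Proposition \ref{lemmacompact} in a form that makes the vector field $X$ appear explicitly. The strategy has three ingredients: (i) push the integral on $\mu_s$ forward to $\tilde\mu_s$ using $\mu_s = (\phi^0_s)_\# \tilde\mu_s$ (an immediate consequence of \eqref{wignerexistence2} and the invertibility of the free flow); (ii) identify the symbol $b_0$ in terms of the complex gradient of $\hat h_{\infty,I}$; (iii) check integrability of the resulting integrand.

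For (i), applying the pushforward identity and exploiting the unitarity of $\phi^0_s$ on $L^2 \oplus L^2$, which gives $\langle \xi_s, \phi^0_s(w)\rangle = \langle \xi, w\rangle$, one converts the right-hand side of Proposition \ref{lemmacompact} into
\begin{equation*}
\int_0^t \int_{L^2\oplus L^2} b_0(\xi_s)(\phi^0_s(w))\, e^{\sqrt{2} i \mathrm{Re} \langle \xi, w \rangle}\, d\tilde\mu_s(w)\, ds.
\end{equation*}
For (ii), the remark after Lemma \ref{commutator} identifies $b_0(\xi)(z) = \frac{i}{\sqrt 2}\, d\hat h_{\infty,I}(z)(i\xi)$. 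Writing $d\hat h_{\infty,I}(z)(v) = 2\mathrm{Re}\langle \nabla_{\bar z}\hat h_{\infty,I}(z), v\rangle$ and using $\mathrm{Re}\langle a, i\xi\rangle = -\mathrm{Im}\langle a, \xi\rangle$ together with unitarity once more, one obtains
\begin{align*}
b_0(\xi_s)(\phi^0_s(w))
&= -\sqrt 2\, i\, \mathrm{Im}\langle \nabla_{\bar z}\hat h_{\infty,I}(\phi^0_s(w)), \xi_s\rangle \\
&= -\sqrt 2\, i\, \mathrm{Im}\langle \phi^0_{-s}\nabla_{\bar z}\hat h_{\infty,I}(\phi^0_s(w)), \xi\rangle \\
&= \sqrt 2\, i\, \mathrm{Re}\langle -i\,\phi^0_{-s}\nabla_{\bar z}\hat h_{\infty,I}(\phi^0_s(w)), \xi\rangle = \sqrt 2\, i\, \mathrm{Re}\langle X(s,w), \xi\rangle,
\end{align*}
which is exactly the integrand on the right-hand side of the claimed formula.

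For (iii), one must check that $\mathrm{Re}\langle X(s, z), \xi\rangle \, e^{\sqrt 2 i \mathrm{Re}\langle \xi, z\rangle}$ lies in $L^\infty_t L^1(\tilde\mu_t)$. The oscillatory factor has modulus $1$, so it suffices to control $\|X(s,z)\|_{L^2 \oplus L^2} = \|\nabla_{\bar z}\hat h_{\infty,I}(\phi^0_s(z))\|_{L^2\oplus L^2}$. This gradient is a polynomial expression of total degree at most three in $(u,\alpha,\nabla u)$, each term of which is integrable against $\tilde\mu_s$ thanks to Corollary \ref{supportmeasure} (which provides a deterministic bound on $\|u\|_{L^2}$ on $\mathrm{supp}\,\tilde\mu_s$, making the quartic $V_\infty$-contribution harmless since $V_\infty \in L^\infty$) and Lemma \ref{boundmeasure} (which provides uniform in $s$ control of $\int (\|\nabla u\|_{L^2}^2 + \|\alpha\|_{L^2}^2)\, d\tilde\mu_s$). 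The norms $\|\phi^0_s\cdot\|$ coincide with $\|\cdot\|$ on both $L^2$ and $H^1$, so these estimates transport under the free flow without loss.

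The main obstacle is not conceptual; it is purely the bookkeeping of sign and conjugation conventions between the Wick quantization (which produces the derivative of $\hat h_{\infty,I}$ along $i\xi_s$) and the Hamiltonian form of the vector field $X$. Once the correct real-vs-imaginary part identities are tracked, the lemma reduces to the two routine manipulations above.
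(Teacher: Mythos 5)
Your proof is correct and, modulo the order in which the two manipulations are performed, is essentially the paper's own argument: the paper first rewrites $b_0(\xi_s)(z) = \sqrt{2}\,i\,\mathrm{Re}\langle X(s,\phi^0_{-s}(z)),\xi\rangle$ using the identification of $b_0$ as the directional derivative of $\hat h_{\infty,I}$ and the unitarity of $\phi^0_s$, and then pushes forward by $\phi^0_{-s}$ to pass from $\mu_s$ to $\tilde\mu_s$; you perform the pushforward first and evaluate $b_0(\xi_s)(\phi^0_s(w))$, which is the same computation. Your step (iii) is a harmless addition, since the integrability $b_0(\xi_s)(z)e^{\sqrt{2}i\mathrm{Re}\langle\xi_s,z\rangle}\in L^\infty_t L^1(\mu_t)$ was already asserted in Proposition~\ref{lemmacompact}.
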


\begin{proof}
    The expression for $b_0(\xi_s) (z)$ is
    \begin{align*}
        b_0(\xi_s) (z) &= \frac{i}{\sqrt{2}} \di \hat h_{\infty,I}(z) (i\xi_s) \\
        &= \frac{i}{\sqrt{2}} 2 \mathrm{Re} \langle \nabla_{\overline{z}} \hat h_{\infty,I} (z), i \phi_s^0 (\xi) \rangle \\
        &= \sqrt{2} i \mathrm{Re} \langle -i  \phi^0_{-s} \circ   \nabla_{\overline{z}} \hat h_{\infty,I} (z), \xi \rangle \\
        &= \sqrt{2} i \mathrm{Re} \langle  X(s,\phi_{-s}^0(z)), \xi \rangle.
    \end{align*}
    Therefore by Proposition \ref{wignerexistence},
    \begin{align*}
         \int_0^t \int_{L^2 \oplus L^2} b_0(\xi_s)(z) e^{\sqrt{2} i \mathrm {Re} \langle \xi_s, z \rangle} &\di \tilde \mu_s (z) \di s \\ &= \sqrt{2} i \int_0^t \int_{L^2 \oplus L^2} \mathrm{Re} \langle X(s,\phi^0_{-s}(z)),\xi \rangle e^{\sqrt{2} i \mathrm{Re} \langle \xi, \phi^0_{-s}(z) \rangle} \di \mu_s (z) \di s \\
         &=\sqrt{2} i \int_0^t \int_{L^2 \oplus L^2} \mathrm{Re}\langle X(s,z),\xi \rangle e^{\sqrt{2} i \mathrm {Re} \langle \xi, z \rangle} \di ( \phi^0_{-s})_\# \mu_s (z) \di s \\
         &=\sqrt{2} i \int_0^t \int_{L^2 \oplus L^2} \mathrm{Re}\langle X(s,z),\xi \rangle e^{\sqrt{2} i \mathrm {Re} \langle \xi, z \rangle} \di \tilde \mu_s (z) \di s
    \end{align*}
    \end{proof}
Writing the transport equation in the latter form allow us to use a powerful machinery for solving an abstract Liouville equation, which is shortly summarized in Appendix \ref{appendixliouville}.
\begin{prop} \label{propdresseddynamics}
    Consider a family of normal states $(\rho_\eps)_{\eps \in (0,1]}$ satisfying Assumptions \eqref{assumption1} and \eqref{assumption2}. Assume that
    $$
    \mathscr M (\rho_\eps , \eps \in (0,1]) = \{ \mu_0 \} \subset \mathcal P (L^2 \oplus L^2).
    $$
    Then, 
    $$
    \mathscr M (e^{- \frac{i t}{\eps} \hat H_\infty} \rho_\eps e^{ \frac{i t}{\eps} \hat H_\infty} , \eps \in (0,1]) = \{(\hat \phi_t)_\# \mu_0 \} \subset \mathcal P (L^2 \oplus L^2),
    $$
    where $\hat \phi_t$ is the classical Hamiltonian flow of the dressed evolution.
\end{prop}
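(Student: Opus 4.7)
The plan is to combine the transport equation already derived in Lemma \ref{integralformulaX} with the abstract Liouville-uniqueness machinery of Appendix \ref{appendixliouville}, and then to undo the interaction picture via the identity $\hat\phi_t = \phi^0_t \circ \tilde\phi_t$ established in the subsection on the interaction picture for the dressed PDE. A standard ``every subsequence has a sub-subsequence with the same limit'' argument will then upgrade subsequential convergence to convergence of the whole family.

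Concretely, I would first fix an arbitrary sequence $\eps_n \to 0$. Proposition \ref{wignerexistence} produces an extraction $\varphi$ and, for every $t \in \R$, probability measures $\mu_t, \tilde\mu_t \in \mathcal P(L^2 \oplus L^2)$ such that $\rho_{\eps_{\varphi(n)}}(t) \to \mu_t$ and $\tilde\rho_{\eps_{\varphi(n)}}(t) \to \tilde\mu_t = (\phi^0_{-t})_\# \mu_t$, with $\mu_{t=0} = \mu_0$ by hypothesis. Lemma \ref{integralformulaX} then says that $t \mapsto \tilde\mu_t$ is a weakly narrow continuous solution of the non-autonomous transport equation driven by $X(s,\cdot) = -i\phi^0_{-s} \circ \nabla_{\overline z} \hat h_{\infty,I} \circ \phi^0_s$. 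Feeding this into the Liouville uniqueness result of Appendix \ref{appendixliouville} would identify $\tilde\mu_t$ with $(\tilde\phi_t)_\# \mu_0$, where $\tilde\phi_t$ is the flow of $X$. Using $\hat\phi_t = \phi^0_t \circ \tilde\phi_t$, I would then conclude
\begin{equation*}
\mu_t = (\phi^0_t)_\# \tilde\mu_t = (\phi^0_t \circ \tilde\phi_t)_\# \mu_0 = (\hat\phi_t)_\# \mu_0.
\end{equation*}
Since this limit is independent of the chosen subsequence, the full family $(\rho_\eps(t))_{\eps \in (0,1]}$ admits $(\hat\phi_t)_\# \mu_0$ as its only Wigner measure, as claimed.

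The main obstacle will be verifying the hypotheses of the Liouville uniqueness result. The field $X(s,\cdot)$ inherits from $\hat h_{\infty,I}$ the $D_x$ terms, so it is naturally defined on $H^1 \oplus L^2$ rather than on the ambient $L^2 \oplus L^2$ where the Wigner measures formally live. The rescue, which I expect to be the core technical point, combines Corollary \ref{supportmeasure} (confining $\mathrm{supp}\,\tilde\mu_s \subset H^1 \oplus L^2$), the uniform moment bound $\int (\norm{\nabla u}_{L^2}^2 + \norm{\alpha}_{L^2}^2)\,d\tilde\mu_s \leq C$ from Lemma \ref{boundmeasure}, and the global well-posedness of $\hat\phi_t$ in $H^1 \oplus L^2$ (from which $\tilde\phi_t$ inherits global well-posedness by conjugation with the free flow). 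Matching these ingredients to the integrability and flow-existence hypotheses of the abstract theorem of Appendix \ref{appendixliouville} is the technical heart of the argument; the rest is bookkeeping.
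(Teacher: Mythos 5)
Your proposal is correct and follows the same route as the paper: combine the transport equation from Lemma \ref{integralformulaX} with the superposition principle of Appendix \ref{appendixliouville}, identify the abstract flow with the concrete one via global well-posedness, and undo the interaction picture. In fact your explicit chain $\mu_t = (\phi^0_t)_\#\tilde\mu_t = (\phi^0_t\circ\tilde\phi_t)_\#\mu_0 = (\hat\phi_t)_\#\mu_0$ is cleaner than the paper's own phrasing, which conflates the interaction-picture flow $\tilde\phi_t$ with $\hat\phi_t$ when identifying the measurable flow produced by Proposition \ref{globalsuperposition}.
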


\begin{proof}
    The only thing that remains to show is that $\tilde \mu_t = (\hat \phi_t)_\# \mu_0$. We make use of a probabilistic approach as presented in Appendix \ref{appendixliouville}. Let us show that $t \mapsto \tilde \mu_t$ satisfies the hypothesis of Proposition \ref{globalsuperposition}. 

It is clear from Lemma \ref{boundmeasure} that $X$ satisfies the following bound 
$$
t \mapsto \int_{L^2 \oplus L^2} \norm{X(t,z)}_{L^2 \oplus L^2} \di \tilde \mu_t (z) \di t \in L^1_{\mathrm{loc}}(\R).
$$

Proposition \ref{lemmacompact} and Lemma \ref{liouvilleequivalence} ensures that $t \mapsto \tilde \mu_t$ satisfies the following transport equation:

$\forall \varphi \in  \mathscr C^\infty_{0,\text{cyl}} (\R \times (L^2 \oplus L^2))$,
    $$
    \int_0^t \int_{L^2 \oplus L^2} \left(  \partial_t \varphi (t,z) + \mathrm{Re} \langle X(t,z) ,  \nabla_{L^2 \oplus L^2} \varphi (t,z)   \rangle_{ L^2 \oplus L^2}   \right) \di \tilde \mu_t (z) \di t = 0.
    $$

    Then there exists a measurable flow $\hat \Phi_t$, associated to the initial value problem, which satisfies $\tilde \mu_t = (\hat \Phi_t)_\# \mu_0.$ In fact, $\hat \Phi_t$ is the classical flow $\hat \phi_t$ of the dressed evolution. Indeed, by the global well-posedness, the Borel set
    $$
     \mathscr G = \{ z_0 \in H^1 \oplus L^2 | \exists ! z \text{ global mild solution stating from } z_0 \} = H^1 \oplus L^2,
    $$
    and the map $\hat \Phi_t : z_0 \in \mathscr G \mapsto z(t) $ coincide with the flow $\hat \phi_t$.
\end{proof}

\subsection{Semiclassical limit of the dressing transform}

This subsection consists of linking the quantum dressing with its classical counterpart. As in the classical setting, there is nothing new from \cite[Subsection 4.7]{ammari2017sima}, but we recall here the results for clarity. The generator of the Gross transform is the self-adjoint operator
$$
T_\infty = \int \psi_x^* ( a^*(i B_\infty e^{-i k .x}) + a(i B_\infty e^{-ik.x}))\psi_x \di x.
$$
We recognize the Wick quantization of $\mathscr D_{i B_\infty}$, where we recall that the classical dressing generator is
$$
\mathscr D_{i B_\infty}(u,\alpha) = \iint (i B_\infty(k) e^{-i k . x} \overline{\alpha}(k) - i B_\infty(k) e^{i k . x} \alpha(k) )  \abs{u(x)}^2 \di x \di k.
$$
For a given parameter $\theta \in \R$ and a family of normal states $(\rho_\eps)_{\eps \in (0,1)}$, we define 
$$
\hat \rho_\eps (\theta) = e^{-\frac{i \theta}{\eps} T_\infty} \rho_\eps e^{\frac{i \theta}{\eps} T_\infty}.
$$
The main result is the following:
\begin{prop} \cite[Proposition 4.25]{ammari2017sima} \label{dressingevolution}
    Consider a family of normal states $(\rho_\eps)_{\eps \in (0,1]}$ satisfying Assumptions \eqref{assumption1} and \eqref{assumption2}. Assume that
    $$
    \mathscr M (\rho_\eps , \eps \in (0,1]) = \{ \mu_0 \} \subset \mathcal P (L^2 \oplus L^2).
    $$
    Then for all $\theta \in \R$, the family $(\hat \rho_\eps(\theta))_{\eps \in (0,1]}$ also satisfies Assumptions \eqref{assumption1} and \eqref{assumption2}, and
    $$
    \mathscr M (\hat \rho_\eps (\theta) , \eps \in (0,1]) = \{(\mathrm{D}_{i B_\infty}(\theta))_\# \mu_0 \} \subset \mathcal P (L^2 \oplus L^2),
    $$
    where $\mathrm{D}_{i B_\infty}(\theta)$ is the dressing flow as defined in Proposition \ref{gwpdressing}.
\end{prop}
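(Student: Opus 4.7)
The plan mirrors the strategy used for Proposition~\ref{propdresseddynamics}, exploiting that $T_\infty$ is (the restriction to $\mathrm{Ran}(\Id_{N_1 \leq C})$ of) the Wick quantization of the classical dressing generator $\mathscr{D}_{iB_\infty}$, and that the classical dressing equations \eqref{PDEdressing} are exactly solvable with closed-form solutions \eqref{PDEdressingsolution}. The proof splits naturally into four steps.

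First, I would propagate Assumptions~\eqref{assumption1} and \eqref{assumption2}. The first is immediate since $[T_\infty, N_1] = 0$, giving $\Tr(\hat\rho_\eps(\theta) N_1^k) = \Tr(\rho_\eps N_1^k) \leq C^k$ with the same constant. For \eqref{assumption2}, I would establish a KLMN-type bound in the spirit of Lemma~\ref{klmnbound}: on $\{N_1 \leq C\}$, both $T_\infty$ and the Wick commutator $[T_\infty, H_0]$ are relatively bounded by $H_0 + 1$ up to factors of $(N_1+1)^2$. Differentiating $\theta \mapsto \Tr(\hat\rho_\eps(\theta)(H_0+1))$ and applying Gronwall then yields a uniform bound, locally in $\theta$.

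Next, the Duhamel formula
$$\Tr(\hat\rho_\eps(\theta) W(\xi)) = \Tr(\rho_\eps W(\xi)) + \frac{i}{\eps} \int_0^\theta \Tr\!\left(\hat\rho_\eps(s)[T_\infty, W(\xi)]\right) \di s, \quad \xi \in H^1 \oplus L^2,$$
follows by differentiation as in the analogous dynamical statement, with integrability ensured by Step~1 combined with Lemma~\ref{WxiQH0}. The Weyl--Wick commutation identity gives
$$\frac{i}{\eps}[T_\infty, W(\xi)] = W(\xi) \left(\frac{i}{\sqrt{2}} \di\mathscr{D}_{iB_\infty}(\cdot)(i\xi) + \eps\, r_\eps(\xi)\right)^{\mathrm{Wick}},$$
where, $\mathscr{D}_{iB_\infty}$ being cubic, the remainder $r_\eps(\xi)$ is of lower degree and uniformly controlled by $(N_1+1)^2$. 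Mimicking Proposition~\ref{wignerexistence}, I would extract a subsequence along which $\hat\rho_{\eps_{\varphi(k)}}(\theta) \longrightarrow \hat\mu_\theta$ narrowly, with $\theta \mapsto \hat\mu_\theta$ weakly narrowly continuous; the non-compact leading symbol is then approximated by compact symbols in the spirit of Lemma~\ref{lemmeb0}, and the $\eps$-remainder drops out in the limit, yielding the transport equation
$$\int e^{\sqrt{2}\, i\, \mathrm{Re}\langle \xi, z\rangle} \di\hat\mu_\theta(z) = \int e^{\sqrt{2}\, i\, \mathrm{Re}\langle \xi, z\rangle} \di\mu_0(z) + \sqrt{2}\, i \int_0^\theta \!\!\int \mathrm{Re}\langle Y(z), \xi\rangle\, e^{\sqrt{2}\, i\, \mathrm{Re}\langle \xi, z\rangle} \di\hat\mu_s(z)\, \di s,$$
with $Y(z) = -i \nabla_{\overline z} \mathscr{D}_{iB_\infty}(z)$ the classical dressing vector field.

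Finally, the abstract superposition/Liouville machinery (Appendix~B), combined with the global well-posedness of \eqref{PDEdressing} from Proposition~\ref{gwpdressing}, identifies the unique solution with datum $\mu_0$ as $\hat\mu_\theta = (\mathrm{D}_{iB_\infty}(\theta))_\# \mu_0$; uniqueness of the Wigner measure then removes the subsequence extraction. The main obstacle lies in Step~1: since $[T_\infty, H_0] \neq 0$, cyclicity of the trace does not give the $H_0$ bound for free, and one must carefully track the $n\eps$ dependence of the relative bounds to keep the Gronwall constant uniform in $\eps$ throughout the mean-field regime.
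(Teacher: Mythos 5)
The paper does not actually supply a proof of this proposition: it is cited verbatim from \cite[Proposition 4.25]{ammari2017sima} with the explicit remark that ``nothing is new in this subsection'' from the cited reference. So strictly speaking there is no ``paper's own proof'' to compare against. That said, your proposal is a correct and faithful reconstruction of the argument, and it mirrors precisely the Duhamel/transport-equation/superposition strategy that the cited reference (and this paper, for Proposition~\ref{propdresseddynamics}) deploys. You correctly isolate the only non-routine point, namely the propagation of Assumption~\eqref{assumption2}: since $[T_\infty,H_0]\neq 0$, cyclicity gives nothing, and one needs a KLMN-type relative bound for $\frac{1}{\eps}[T_\infty,H_0]$ on $\mathrm{Ran}(\Id_{N_1\leq C})$ together with a Gronwall estimate; the $N_1$-dependence of the constants is absorbed by Assumption~\eqref{assumption1}, which propagates trivially because $[T_\infty,N_1]=0$. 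The remaining steps (Duhamel for $\Tr(\hat\rho_\eps(\theta)W(\xi))$, Weyl--Wick commutation to extract the leading symbol $\frac{i}{\sqrt2}\di\mathscr D_{iB_\infty}(\cdot)(i\xi)$, compact approximation of the symbol, diagonal extraction in $\theta$, and identification of the limiting flow via Proposition~\ref{globalsuperposition} and the global well-posedness from Proposition~\ref{gwpdressing}) are exactly as in the dynamical case, as you say.

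One small inaccuracy: you state that the remainder $r_\eps(\xi)$ is ``uniformly controlled by $(N_1+1)^2$.'' Since $\mathscr D_{iB_\infty}$ is trilinear (degree $2$ in $(u,\bar u)$ and $1$ in $(\alpha,\bar\alpha)$), after subtracting the linear term in $\xi$ the remainder is a Wick quantization of a polynomial of degree at most one, hence a sum of field operators in $\psi^\#$ and $a^\#$. It is therefore controlled by $(N_1+N_2+1)^{1/2}$, which is not dominated by $(N_1+1)^2$ alone; one needs to invoke $N_2\leq H_0$ and the propagated $H_0$-moment bound. This is a cosmetic slip, not a gap, since both moments are available after Step 1.
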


\subsection{Proof of Theorem \ref{thmdynamics}}

\begin{proof}[Proof of Theorem \ref{thmdynamics}]
    Remind that
    $$
    e^{- \frac{i t}{\eps} H_\infty} \rho_\eps e^{\frac{i t}{\eps} H_\infty} = U_\infty(1)^* e^{- \frac{i t}{\eps} \hat H_\infty} U_\infty(1)  \rho_\eps U_\infty(1)^* e^{\frac{i t}{\eps} \hat H_\infty} U_\infty (1).
    $$
    By Proposition \ref{dressingevolution}, we have
    $$
    \mathscr M (U_\infty(1) \rho_\eps U_\infty(1)^*,\eps \in (0,1]) = \{ (\mathrm{D}_{i B_\infty}(1))_\# \mu_0 \}.
    $$
    Now, by Proposition \ref{dressingevolution}, the family $(U_\infty(1) \rho_\eps U_\infty(1)^*)_{\eps \in (0,1]}$ satisfies Assumptions \eqref{assumption1} and \eqref{assumption2}. Therefore by Proposition \ref{propdresseddynamics},
    $$
    \mathscr M (e^{- \frac{i t}{\eps} \hat H_\infty} U_\infty(1)  \rho_\eps U_\infty(1)^* e^{\frac{i t}{\eps} \hat H_\infty} , \eps \in (0,1]) = \{(\hat \phi_t)_\# \mathrm{D}_{i B_\infty}(1)_\# \mu_0 \} \subset \mathcal P (L^2 \oplus L^2).
    $$
    Again, by Proposition \ref{propagation}, $(e^{- \frac{i t}{\eps} \hat H_\infty} U_\infty(1)  \rho_\eps U_\infty(1)^* e^{ \frac{i t}{\eps} \hat H_\infty})_{\eps \in (0,1]}$ satisfies \eqref{assumption1} and \eqref{assumption2}, and therefore by Proposition \ref{dressingevolution},
    $$
    \mathscr M (e^{- \frac{i t}{\eps} H_\infty} \rho_\eps e^{\frac{i t}{\eps} H_\infty} , \eps \in (0,1]) = \{ \mathrm{D}_{i B_\infty}(-1)_\# (\hat \phi_t)_\# \mathrm{D}_{i B_\infty}(1)_\# \mu_0  \}.
    $$
    By Proposition \ref{linkflows}, we know that 
    $$
    (\mathrm{D}_{i B_\infty}(-1))_\# (\hat \phi_t)_\# (\mathrm{D}_{i B_\infty}(1))_\# \mu_0 = (\mathrm{D}_{i B_\infty}(-1)
    \circ \hat \phi_t \circ \mathrm{D}_{i B_\infty}(1))_\# \mu_0  = (\phi_t)_\# \mu_0,
    $$
   
    which concludes.
\end{proof}

\appendix

\section{Proof of Lemma \ref{WxiQH0}, \ref{lemmeb0} and estimates on \texorpdfstring{$\mathscr B_0(\xi)$ and $b_0(\xi)$}{Lg}} \label{appendixlemma}
\begin{proof}[Proof of Lemma \ref{WxiQH0}]
    We use a density argument and consider $\xi = (\xi_1,\xi_2) \in H^2 \oplus L^2$. For $\varphi \in Q(H_0)$, we compute
    \begin{align*}
        \norm{H_0^{1/2} W(\xi) \varphi}^2 & = \langle W(\xi) \varphi , H_0 W(\xi) \varphi \rangle \\
        &= \langle \varphi, W(\xi)^* ( \norm{\nabla u}_{L^2}^2 + \norm{\alpha}_{L^2}^2 )^{\mathrm{Wick}} W(\xi) \varphi \rangle.
    \end{align*}
    Now, by \cite[Proposition 2.10]{ammari2008ahp}, we have
    \begin{align*}
 W(\xi)^* ( \norm{\nabla u}_{L^2}^2 + \norm{\alpha}_{L^2}^2 )^{\mathrm{Wick}} W(\xi) &=  \left( \norm{\nabla u + \frac{i \eps}{\sqrt{2}} \nabla \xi_1}_{L^2}^2 + \norm{\alpha + \frac{i \eps}{\sqrt{2}} \xi_2}_{L^2}^2 \right)^{\mathrm{Wick}} \\
 &= \Bigg(  \norm{\nabla u}_{L^2}^2 + \norm{\alpha}_{L^2}^2 + \sqrt{2} \mathrm{Re} \langle u , i \eps (- \Delta \xi_1) \rangle  \\
 &\ \ \ \ \ \ + \sqrt{2} \mathrm{Re} \langle \alpha, i \eps \xi_2 \rangle + \frac{\eps^2}{2} (\norm{\nabla \xi_1}_{L^2}^2 + \norm{\xi_2}^2_{L^2})
 \Bigg)^{\mathrm{Wick}} \\
 &= H_0 + \frac{1}{\sqrt{2}} (\psi( i \eps (- \Delta \xi_1)) +\psi^*( i \eps (- \Delta \xi_1)))  \\
 &+ \frac{1}{\sqrt{2}} (a( i \eps \xi_2) + a^* ( i \eps \xi_2)) + \frac{\eps^2}{2}(\norm{\nabla \xi_1}_{L^2}^2 + \norm{\xi_2}^2_{L^2}),
    \end{align*}
which gives 
\begin{align*}
        \norm{H_0^{1/2} W(\xi) \varphi}^2
        & = \langle \varphi , \left(H_0 + \frac{\eps^2}{2} \norm{\xi}^2_{H^1 \oplus L^2} \right) \varphi \rangle + \sqrt{2} \mathrm{Re} \langle \varphi, \psi(i \eps (-\Delta \xi_1)) \varphi \rangle + \sqrt{2} \mathrm{Re} \langle \varphi, a(i \eps \xi_2) \varphi \rangle \\
        &\leq  \langle \varphi , \left(H_0 + \frac{\eps^2}{2} \norm{\xi}^2_{H^1 \oplus L^2} \right) \varphi \rangle + \sqrt{2} \eps \norm{\varphi} \norm{ (-\Delta)^{1/2} \xi_1}_{L^2} \norm{ \di \Gamma_1 (- \Delta)^{1/2} \varphi}\\
        & \ \ \ \ \ + \sqrt{2} \eps \norm{\varphi} \norm{\xi_2}_{L^2} \norm{N_2^{1/2} \varphi} \\
        & \leq   \langle \varphi , \left(H_0 + \frac{\eps^2}{2} \norm{\xi}^2_{H^1 \oplus L^2} \right) \varphi \rangle + \sqrt{2} \eps \norm{\xi}_{H^1 \oplus L^2} ( 2 \norm{\varphi}^2 + \langle \varphi, H_0 \varphi \rangle  ) \\
        &\leq (1+2\eps \norm{\xi}_{H^1 \oplus L^2})^2 \langle \varphi, (H_0+1) \varphi \rangle \\
        &= (1+2\eps \norm{\xi}_{H^1 \oplus L^2})^2 \norm{(H_0+1)^{1/2} \varphi }^2,
    \end{align*}
    where from the second to third line we have used the following inequality (see the book \cite[Theorem 5.16]{araiasao}): if $A$ is nonnegative self-adjoint one-to-one operator, $\xi \in D(A^{-1/2})$ and $\varphi \in D(\di \Gamma_1(A)^{1/2})$,
    $$
    \norm {\psi(\xi) \varphi} \leq \norm{A^{-1/2}\xi} \norm{\di \Gamma_1(A)^{1/2} \varphi }.
    $$
    Since all the terms make sense when $\xi \in H^1 \oplus L^2$, we conclude by density of $H^2 \oplus L^2$ into $H^1 \oplus L^2$. The second inequality is proven in the same fashion. For the last one, we first express
    \begin{align*}
    W(\xi) - W(\eta) &= W(\xi) (1- W(-\xi)W(\eta)) \\
    &= W(\xi) ( 1- W(-\xi+ \eta) e^{- \frac{i \eps}{2} \mathrm{Im} \langle -\xi,\eta \rangle }) \\
    &= W(\xi)(1-W(\eta - \xi)) + W(\xi) W(\eta-\xi)(1-e^{\frac{i \eps}{2} \mathrm{Im} \langle \xi,\eta \rangle }).
    \end{align*}
    By functional calculus,
    $$
    1-W(\eta-\xi) = 1- e^{i \phi (\eta-\xi)} =A(\eta-\xi) \phi (\eta-\xi),
    $$
    with $A(\eta-\xi) =  \frac{1- e^{i \phi (\eta-\xi)}}{\phi (\eta-\xi)} $ and $ \norm{A(\eta-\xi)} \leq 1$. Furthermore,
    $$
    \abs{1-e^{\frac{i \eps}{2} \mathrm{Im} \langle \xi,\eta \rangle }} \leq \frac{\eps}{2} \abs{\mathrm{Im} \langle \xi , \eta \rangle } = \frac{\eps}{2} \abs{\mathrm{Im} \langle \xi -\eta, \eta \rangle } \leq \norm{\xi-\eta}_{L^2 \oplus L^2} \norm{\eta}_{L^2 \oplus L^2}.
    $$
    Therefore,
    \begin{align*}
        \norm{(W(\xi) - W(\eta)) (N_1+N_2+1)^{-1/2}}
        & \leq \norm{W(\xi)} \norm{A(\eta-\xi)} \norm{\phi(\eta-\xi) (N_1+N_2+1)^{-1/2} } \\&+ \norm{W(\xi)} \norm{W(\eta-\xi)} \norm{\xi-\eta}_{L^2 \oplus L^2} (\norm{\xi}_{L^2 \oplus L^2}+\norm{\eta}_{L^2 \oplus L^2}) \\
        &\leq C \norm{\xi-\eta}_{L^2 \oplus L^2} (1+\norm{\xi}_{L^2 \oplus L^2}+\norm{\eta}_{L^2 \oplus L^2}).
    \end{align*}
\end{proof}

Before the proof of Lemma \ref{lemmeb0}, let us recall a standard but useful result:

\begin{lem} \label{compactness}
    Let $f,g \in L^\infty(\R^d)$ such that $f,g \underset{\pm \infty}{\longrightarrow} 0$. Then $f(x)g(D_x)$ and $g(D_x)f(x)$ are compact operators on $L^2$.
\end{lem}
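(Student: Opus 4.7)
The plan is a standard two-step density argument: first establish Hilbert-Schmidt compactness for Schwartz-class symbols, then extend to general symbols vanishing at infinity by operator-norm approximation, exploiting the fact that the compact operators form a closed two-sided ideal in $\mathcal{B}(L^2)$.

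For the first step, for $f, g \in \mathscr{S}(\R^d)$ the operator $f(x) g(D_x)$ is the integral operator with kernel
$$
K(x,y) = (2\pi)^{-d/2} f(x) \check{g}(x-y),
$$
where $\check{g}$ denotes the inverse Fourier transform of $g$. A direct Fubini/Plancherel computation yields
$$
\norm{K}_{L^2(\R^{2d})}^2 = (2\pi)^{-d} \norm{f}_{L^2}^2 \norm{g}_{L^2}^2 < \infty,
$$
so that $f(x) g(D_x)$ is Hilbert-Schmidt, and in particular compact.

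For the second step, viewing the hypothesis as $f, g \in C_0(\R^d)$ (which is the uniform closure of $\mathscr{S}(\R^d)$), I would pick Schwartz approximants $f_n, g_n$ with $\norm{f_n-f}_{L^\infty} + \norm{g_n - g}_{L^\infty} \to 0$. From the elementary bounds $\norm{f(x)}_{\mathcal{B}(L^2)} \leq \norm{f}_{L^\infty}$ and $\norm{g(D_x)}_{\mathcal{B}(L^2)} \leq \norm{g}_{L^\infty}$ (the latter via Plancherel), a one-line telescoping estimate gives operator-norm convergence $f_n(x) g_n(D_x) \to f(x) g(D_x)$, and compactness is preserved in that limit. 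The companion operator $g(D_x) f(x)$ is then handled by taking the adjoint: $(g(D_x) f(x))^* = \overline{f}(x) \overline{g}(D_x)$, whose symbols satisfy the same hypotheses, so compactness of $g(D_x) f(x)$ follows from the previous step and stability of compactness under adjunction.

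No real obstacle is expected; the only mild subtlety is interpreting the hypothesis ``$f, g \to 0$ at $\pm \infty$'' as placing $f, g$ in $C_0(\R^d)$ so that uniform Schwartz approximation is legitimate. In the applications of this lemma within the body of the paper the symbols are continuous, so this identification is automatic.
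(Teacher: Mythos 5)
Your proof is correct, and it takes a genuinely different route from the paper in the approximation step. Both arguments share the same first step: for nice (compactly supported smooth, or Schwartz) symbols $f,g$ the operator $f(x)g(D_x)$ is Hilbert--Schmidt, with $\norm{f(x)g(D_x)}_{\mathfrak S^2} = (2\pi)^{-d/2}\norm{f}_{L^2}\norm{g}_{L^2}$. The difference is in how one passes to the general case. The paper approximates $f,g$ by $\mathscr C_c^\infty$ functions \emph{in $L^2$}, and controls the error $\norm{f(x)g(D_x)-f_m(x)g_m(D_x)}$ via the Hilbert--Schmidt estimate, which forces $\norm{g}_{L^2}<\infty$ (and likewise for $f$). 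This implicitly requires $f,g\in L^2$, which the stated hypotheses ($L^\infty$, vanishing at infinity) do not guarantee. You instead approximate $f,g$ by Schwartz functions \emph{in $L^\infty$} (legitimate once one reads the hypothesis as $f,g\in C_0(\R^d)$), and control the error via the trivial operator-norm bounds $\norm{f(x)}\leq\norm{f}_{L^\infty}$, $\norm{g(D_x)}\leq\norm{g}_{L^\infty}$. This is arguably cleaner and more faithful to the stated hypotheses: it covers, e.g., symbols that are Fourier transforms of $L^1$ functions (Riemann--Lebesgue gives $C_0$ but not $L^2$), which is precisely the situation that arises in the paper's applications of this lemma, where $x\mapsto\langle\xi_2,kB_\infty e^{-ik\cdot x}\rangle$ appears. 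Your adjoint trick for $g(D_x)f(x)$ is also a small simplification over the paper's ``handled in the same way.'' The one caveat you already flag — that $L^\infty$ approximation is only available for $C_0$ symbols, not for arbitrary discontinuous $L^\infty$ functions vanishing essentially at infinity — is real, but it is the correct reading of the lemma and matches how it is used downstream.
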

\begin{proof}
    We will only prove this result for $f(x)g(D_x)$, the proof for $g(D_x)f(x)$ being handled in the same way. First we consider the case $f,g \in L^2(\R^d)$. In that case $f(x)g(D_x)$ and $g(D_x)f(x)$ are Hilbert-Schmidt operators and the corresponding norm can be explicitely computed. Indeed, let $(e_n)_{n \in \N}$ be an orthonormal basis of $L^2(\R^d)$. Then 
    \begin{align*}
        f(x) g(D_x) e_n(x) &= (2 \pi)^{-d/2} f(x)  \int_{\R^d} g(k) e^{i x .k} \mathcal F( e_n)(k) \di k \\
        &= (2 \pi)^{-d/2} f(x) \langle \overline{ \mathcal F( g e^{i x . \cdot} )}, e_n \rangle_{L^2}.
    \end{align*}
    Therefore,
    \begin{align*}
        \norm{f(x) g(D_x)}_{\mathfrak S^2(L^2(\R^d))}^2 &= \sum_{n \in \N} \norm{f(x) g(D_x) e_n}_{L^2(\R^d)}^2 \\
        &= \sum_{n \in \N} ( 2 \pi)^{-d} \int_{\R^d} \abs{f(x)}^2 \abs{\langle \overline{ \mathcal F( g e^{i x . \cdot} )}, e_n \rangle_{L^2}}^2 \di x \\
        &= (2 \pi)^{-d} \int_{\R^d} \abs{f(x)}^2 \sum_{n \in \N} \abs{\langle \overline{ \mathcal F( g e^{i x . \cdot} )}, e_n \rangle_{L^2} }^2 \di x \\
        &= (2 \pi)^{-d} \int_{\R^d} \abs{f(x)}^2  \norm{\mathcal F( g e^{i x . \cdot} )}^2_{L^2(\R^d)} \di x \\
        &= (2 \pi)^{-d} \norm{f}_{L^2(\R^d)}^2 \norm{g}_{L^2(\R^d)}^2.
    \end{align*}
    We now assume that $f,g \in L^\infty(\R^d)$ and $f,g \underset{\pm \infty}{\longrightarrow} 0$. Let $(f_m)_{m \in \N}, (g_m)_{m \in \N} \subset \mathscr C_c^\infty(\R^d)$ converging respectively to $f,g$ in $L^2(\R^d)$. Then,
    \begin{align*}
        \norm{f(x) g(D_x) - f_m(x) g_m(D_x)}&\leq  \norm{f(x) g(D_x) - f_m(x) g_m(D_x)}_{\mathfrak S^2} \\&\leq \norm{(f(x)-f_m(x))g(D_x)}_{\mathfrak S^2} + \norm{f_m(x)(g(D_x)-g_m(D_x))}_{\mathfrak S^2} \\
        &\leq (2 \pi)^{-d/2} (\norm{f-f_m}_{L^2(\R^d)} \norm{g}_{L^2(\R^d)} +\norm{f_m}_{L^2(\R^d)} \norm{g-g_m}_{L^2(\R^d)})\\
        & \underset{m \rightarrow \infty} \longrightarrow 0.
    \end{align*}
    Hence $f(x) g(D_x)$ is the limit in the operator norm of the $f_m(x) g_m(D_x)$, which are compact operators because Hilbert-Schmidt, therefore $f(x) g(D_x)$ is compact.
\end{proof}

\begin{proof}[Proof of Lemma \ref{lemmeb0}] Let $\xi = (\xi_1,\xi_2) \in H^1 \oplus L^2$. Recall that 
$$
\mathscr B_0 (\xi) = \left( \frac{i}{\sqrt{2}} \di \hat h_{\infty,I}(\cdot)(i \xi) \right)^{\mathrm{Wick}}.
$$
    We expand $\mathscr B_0 (\xi)$:
    $$
    \mathscr B_0 (\xi) = \sum_{j=1}^5 \mathscr B_{0}^j (\xi),
    $$
    with
    \begin{align*}
        \mathscr B_{0}^1 (\xi)  &= -\frac{1}{\sqrt{2}} \int \psi_x^* [a^*(f_{\sigma_0} e^{-i k .x})+a(f_{\sigma_0} e^{-i k .x})  +a^*(k B_\infty e^{-ik . x})^2 +a(k B_\infty e^{-ik . x})^2      \\
        &  + 2a^*(k B_\infty e^{-ik . x})a(k B_\infty e^{-ik . x}) -2 D_x . a(k B_\infty e^{-ik. x}) -2 a^*(k B_\infty e^{-ik. x}).D_x ] \xi_1 (x) \di x, \\
        \mathscr B_{0}^2 (\xi) &= \frac{1}{\sqrt{2}} \int \overline{\xi_1 (x)} [a^*(f_{\sigma_0} e^{-i k .x})+a(f_{\sigma_0} e^{-i k .x})  +a^*(k B_\infty e^{-ik . x})^2 +a(k B_\infty e^{-ik . x})^2      \\
        & \ \ \ \ \ \  + 2a^*(k B_\infty e^{-ik . x})a(k B_\infty e^{-ik . x}) -2 D_x . a(k B_\infty e^{-ik. x}) -2 a^*(k B_\infty e^{-ik. x}).D_x ] \psi_x \di x,  \\
        \mathscr B_{0}^3 (\xi) &= \frac{1}{\sqrt{2}} \int \psi_x^* [ \langle \xi_2, f_{\sigma_0} e^{-i k . x} \rangle- \langle f_{\sigma_0} e^{-i k .x} , \xi_2 \rangle  \\
        & - 2 D_x \langle k B_\infty e^{-i k . x  }, \xi_2 \rangle + 2 \langle \xi_2, k B_\infty e^{-ik. x} \rangle D_x] \psi_x \di x, \\
        \mathscr B_{0}^4 (\xi) &= \frac{1}{\sqrt{2}} \int \psi_x^* [- 2 a^* (k B_\infty e^{-i k . x}) \langle \xi_2, k B_\infty e^{-i k . x}  \rangle + 2 a(k B_\infty e^{-i k . x}) \langle  k B_\infty e^{-i k . x} , \xi_2 \rangle \\ 
        & +2 a^*(k B_\infty e^{-i k . x}) \langle k B_\infty e^{-i k . x}, \xi_2 \rangle - 2 \langle \xi_2, k B_\infty e^{-i k . x} \rangle a(k B_\infty e^{-i k . x}) ] \psi_x \di x, \\
        \mathscr B_{0}^5 (\xi) &= \frac{1}{ \sqrt{2}} \iint V_\infty(x-y) [\overline{\xi_1(x)} \psi_y^* \psi_x \psi_y - \psi_x^* \psi_y^* \xi_1(x) \psi_y  ] \di x \di y.
    \end{align*}
We will deal with each $\mathscr B_{0}^j (\xi)$ independently. Let $\chi \in \mathscr C_c^\infty(\R^d, [0,1])$, being identically $1$ in a neighborhood of the origin, and $\chi_m = \chi( \cdot /m)$ for every $m \in \N$. \\ \\
\textbf{Approximation of $\mathscr B_{0}^1 (\xi)$ and $\mathscr B_{0}^2 (\xi)$.} \\
We will only deal with $\mathscr B_{0}^1 (\xi)$, the second term being dealt in the same manner. We define an operator $\mathscr B_{0}^{1,m} (\xi)$ as the Wick quantization of a symbol, denoted $b_0^{1}(\xi)$ having the following expression: 
 \begin{align*}
        b_{0}^{1,m} (\xi)(u,\alpha)  &= -\frac{1}{\sqrt{2}} \int \overline{u(x)} [\langle \alpha, f_{\sigma_0} e^{-i k .x} \rangle +\langle f_{\sigma_0} e^{-i k .x}, \alpha \rangle  + \langle \alpha, k B_\infty e^{-ik . x} \rangle^2 + \langle k B_\infty e^{-ik . x},\alpha \rangle^2      \\
        & \ \ \ \ \ \  + 2\langle \alpha, k B_\infty e^{-ik . x}\rangle \langle k B_\infty e^{-ik . x}, \alpha \rangle \\
        & \ \ \ \ \ \  -2 \chi_m (D_x) D_x. \langle k B_\infty e^{-ik. x}, \alpha \rangle -2 \langle \alpha ,k B_\infty e^{-ik. x}\rangle. \chi_m (D_x) D_x ] \xi_1 (x) \di x.
\end{align*}
We first check that $b_{0}^{1,m} (\xi)$ is indeed a compact symbol. The first two terms can be rewritten under the form
$$
 \langle (u,\alpha),  \tilde b_0^{1,m} (u,\alpha) \rangle,
$$
where $\tilde b : L^2 \oplus L^2 \rightarrow L^2 \oplus L^2$ is defined as
$$
\tilde b_0^{1,m}(u,\alpha) = \left(- \frac{1}{\sqrt{2}} \xi_1(x)\left( \langle \alpha, f_{\sigma_0} e^{-i k .x} \rangle +\langle f_{\sigma_0} e^{-i k .x}, \alpha \rangle  \right) , 0 \right).
$$
Since $f_{\sigma_0}(k) \xi_1 (x) \in L^2(\R^d \times \R^d)$, $\tilde b_0^{1,m}$ is a finite rank operator, hence compact. The next three terms are dealt in the same way, writing them in the form $\langle (u,\alpha)^{\otimes_s 2} , \tilde b (u,\alpha) \rangle $ or $\langle (u,\alpha) , \tilde b (u,\alpha)^{\otimes_s 2} \rangle$. It remains to deal with the last two terms. Without loss of generality, we may only deal with the expression 
$$
\int \overline{u(x)} \chi_m (D_x) D_x . \langle k B_\infty e^{-i k . x}, \alpha \rangle  \xi_1(x) \di x = \langle (u, \alpha), ( \chi_m (D_x) D_x . \langle k B_\infty e^{-i k . x}, \alpha \rangle  \xi_1(x) ,0)).
$$
The operator 
$$
(u,\alpha) \in L^2 \oplus L^2 \mapsto( \chi_m (D_x) D_x . \langle k B_\infty e^{-i k . x}, \alpha \rangle  \xi_1(x) ,0) \in L^2 \oplus L^2
$$
is compact, by Lemma \ref{compactness}. Therefore $b_0^{1,m}$ is a compact symbol. It remains to see if $\mathscr B_0^{1,m}(\xi) := (b_0^{1,m}(\xi))^{\mathrm{Wick}}$ approximates well $\mathscr B_0^1 (\xi)$. Indeed, for every $\varphi \in D(N_1)$,
\begin{align*}
    &\abs{ \langle   (H_0+1)^{-1/2} \varphi ,  (b_0^1(\xi)-b_0^{1,m}(\xi))^{\mathrm{Wick}} (H_0+1)^{-1/2} \varphi \rangle  } \\ 
    &\lesssim   \abs{ \langle   (H_0+1)^{-1/2} \varphi ,  \int \psi_x^* (1-\chi_m(D_x)) D_x . a(k B_\infty e^{-i k . x})  \xi_1(x) \di x (H_0+1)^{-1/2} \varphi \rangle  } \\
    &+ \abs{ \langle   (H_0+1)^{-1/2} \varphi ,  \int \psi_x^* a^*(k B_\infty e^{-i k . x})  .  (1-\chi_m(D_x)) D_x   \xi_1(x) \di x (H_0+1)^{-1/2} \varphi \rangle} \\
    & \lesssim \norm{(1-\chi_m (D_x))D_x (1-\Delta_x)^{-1}} \norm{k B_\infty}_{L^2} \norm{\xi_1}_{L^2} \norm{(N_1+1)\varphi}^2,
\end{align*}
which concludes since
$$
 \norm{(1-\chi_m (D_x))D_x (1-\Delta_x)^{-1}}\underset{m \rightarrow \infty} \longrightarrow 0.
$$
\textbf{Approximation of $\mathscr B_{0}^3 (\xi)$.} \\ 
We define the symbol $b_0^{3,m}(\xi)$ as
\begin{align*}
    b_0^{3,m} (\xi) &:= \frac{1}{\sqrt{2}} \int \overline{u(x)} \chi_m(D_x) [  \langle \xi_2, f_{\sigma_0} e^{-i k . x} \rangle - \langle f_{\sigma_0} e^{-i k .x} , \xi_2 \rangle \\
    & \ \ \ \ \ \ \   - 2 D_x \langle k B_\infty e^{-i k . x  }, \xi_2 \rangle + 2 \langle \xi_2, k B_\infty e^{-ik. x} \rangle ] \chi_m(D_x) u(x) \di x
\end{align*}
We can express the latter as $b_0^{3,m}(\xi)(u,\alpha) = \langle (u,\alpha), \tilde b_0^{3,m} (u,\alpha)\rangle$, with
\begin{align*}
\tilde b_0^{3,m}(\xi)(u,\alpha) &= \bigg( \frac{1}{\sqrt{2}} \chi_m(D_x) [  \langle \xi_2, f_{\sigma_0} e^{-i k . x} \rangle - \langle f_{\sigma_0} e^{-i k .x} , \xi_2 \rangle \\
    & \ \ \ \ \ \ \   - 2 D_x \langle k B_\infty e^{-i k . x  }, \xi_2 \rangle + 2 \langle \xi_2, k B_\infty e^{-ik. x} \rangle D_x] \chi_m(D_x) u(x), 0 \bigg).
\end{align*}
Since $\xi_2, kB_\infty \in L^2$, the maps $x \mapsto  \langle \xi_2, f_{\sigma_0} e^{-i k . x} \rangle$, $x \mapsto  \langle \xi_2, k B_\infty e^{-i k . x} \rangle$ and their conjugate belong $L^\infty$ and go to zero at infinity. The same holds for $\chi_m$.
Therefore by Lemma \ref{compactness}, $\tilde b_0^{3,m}(\xi)$ is compact. Now, for every $\varphi \in D(N_1)$,
\begin{align*}
    &\abs{ \langle   (H_0+1)^{-1/2} \varphi ,  (b_0^3(\xi)-b_0^{3,m}(\xi))^{\mathrm{Wick}} (H_0+1)^{-1/2} \varphi \rangle  } \\ 
    &\lesssim \bigg| \langle   (H_0+1)^{-1/2} \varphi ,  \int \psi_x^* (1-\chi_m(D_x)) [  \langle \xi_2, f_{\sigma_0} e^{-i k . x} \rangle - \langle f_{\sigma_0} e^{-i k .x} , \xi_2 \rangle \\
    & \ \ \ \ \ \ \   - 2 D_x \langle k B_\infty e^{-i k . x  }, \xi_2 \rangle - 2 \langle \xi_2, k B_\infty e^{-ik. x} \rangle D_x]  \chi_m(D_x) \psi_x \di x (H_0+1)^{-1/2} \varphi \rangle  \bigg| \\
    &+  \bigg| \langle   (H_0+1)^{-1/2} \varphi ,  \int \psi_x^* \chi_m(D_x) [  \langle \xi_2, f_{\sigma_0} e^{-i k . x} \rangle - \langle f_{\sigma_0} e^{-i k .x} , \xi_2 \rangle \\
    & \ \ \ \ \ \ \   - 2 D_x \langle k B_\infty e^{-i k . x  }, \xi_2 \rangle - 2 \langle \xi_2, k B_\infty e^{-ik. x} \rangle D_x]  (1-\chi_m(D_x)) \psi_x \di x (H_0+1)^{-1/2} \varphi \rangle \bigg| \\
    &\lesssim \left(\norm{(1-\chi_m(D_x))(1-\Delta_x)^{-1}}_{\mathscr L(L^2)}+
    \norm{(1-\chi_m(D_x)) D_x(1-\Delta_x)^{-1}} \right) \\
    &\times \norm{\xi_2}_{L^2} \abs{ \langle   (H_0+1)^{-1/2} \varphi , \di \Gamma_1(1-\Delta_x) (H_0+1)^{-1/2} \varphi \rangle} \\
    &\lesssim \left(\norm{(1-\chi_m(D_x))(1-\Delta_x)^{-1}}+
    \norm{(1-\chi_m(D_x)) D_x(1-\Delta_x)^{-1}}_{\mathscr L(L^2)} \right) \\
    &\times \norm{\xi_2}_{L^2} \norm{ (N_1+1)^{1/2} \varphi}, \\
\end{align*}
which concludes since
$$
\norm{(1-\chi_m(D_x))(1-\Delta_x)^{-1}}+
    \norm{(1-\chi_m(D_x)) D_x(1-\Delta_x)^{-1}} \underset{m \rightarrow \infty} \longrightarrow0.
$$
\\
\textbf{Approximation of $\mathscr B_{0}^4 (\xi)$.} \\
We define the symbol $b_0^{4,m} (\xi) $ as \begin{align*}
b_{0}^{4,m} (\xi) (u,\alpha) &:= \frac{1}{\sqrt{2}} \int \overline{u(x)} \chi_m(D_x) [- 2 \langle \alpha, k B_\infty e^{-i k . x} \rangle \langle \xi_2, k B_\infty e^{-i k . x}  \rangle \\ &+ 2 \langle k B_\infty e^{-i k . x}, \alpha \rangle \langle  k B_\infty e^{-i k . x} , \xi_2 \rangle 
         + 2 \langle \alpha, k B_\infty e^{-i k . x} \rangle \langle k B_\infty e^{-i k . x}, \xi_2 \rangle  \\ &- 2 \langle \xi_2, k B_\infty e^{-i k . x} \rangle \langle k B_\infty e^{-i k . x}, \alpha \rangle ] u(x) \di x.
\end{align*}
We can express the latter as $b_{0}^{4,m} (\xi) (u,\alpha) = \langle (u,\alpha) , \tilde b_{0}^{4,m} (u,\alpha) \rangle,$ with
\begin{align*}
\tilde b_{0}^{4,m} (\xi) (u,\alpha) &= \bigg( \frac{1}{\sqrt{2}} \chi_m(D_x)[- 2 \langle \alpha, k B_\infty e^{-i k . x} \rangle \langle \xi_2, k B_\infty e^{-i k . x}  \rangle + 2 \langle k B_\infty e^{-i k . x}, \alpha \rangle \langle  k B_\infty e^{-i k . x} , \xi_2 \rangle \\ 
        & + 2 \langle \alpha, k B_\infty e^{-i k . x} \rangle \langle k B_\infty e^{-i k . x}, \xi_2 \rangle - 2 \langle \xi_2, k B_\infty e^{-i k . x} \rangle \langle k B_\infty e^{-i k . x}, \alpha \rangle ] u, 0 \bigg).
\end{align*}
Since $\xi_2, kB_\infty \in L^2$, the map $x \mapsto  \langle \xi_2, k B_\infty e^{-i k . x} \rangle$ and its conjugate are in $L^\infty$ and go to zero at infinity. The same holds for $\chi_m$. Therefore, by Lemma \ref{compactness}, the operators
$$
\chi_m(D_x)  \langle \xi_2, k B_\infty e^{-i k . x} \rangle \ \text{ and } \ \chi_m(D_x)  \langle k B_\infty e^{-i k . x},\xi_2 \rangle
$$
are compact operators. We deduce that $\tilde b_{0}^{4,m} (\xi) : L^2 \oplus L^2 \rightarrow L^2 \oplus L^2$ is compact as well. Now, for $\varphi \in D(N_1),$
\begin{align*}
    &\abs{ \langle   (H_0+1)^{-1/2} \varphi ,  (b_0^4(\xi)-b_0^{4,m}(\xi))^{\mathrm{Wick}} (H_0+1)^{-1/2} \varphi \rangle  } \\ 
    &\lesssim  \bigg| \langle   (H_0+1)^{-1/2} \varphi ,  \int \psi_x^* (1-\chi_m(D_x)) [- 2 a^* (k B_\infty e^{-i k . x}) \langle \xi_2, k B_\infty e^{-i k . x}  \rangle \\& + 2 a(k B_\infty e^{-i k . x}) \langle  k B_\infty e^{-i k . x} , \xi_2 \rangle  + 2 a^*(k B_\infty e^{-i k . x}) \langle k B_\infty e^{-i k . x}, \xi_2 \rangle \\ & - 2 \langle \xi_2, k B_\infty e^{-i k . x} \rangle a(k B_\infty e^{-i k . x}) ] \psi_x \di x (H_0+1)^{-1/2} \varphi \rangle \bigg| \\
    &\lesssim \sum_{n=0}^\infty n \eps \bigg| \langle (H_0^{(n)}+1)^{-1/2} \varphi_n, (1-\chi_m(D_{x_1})) [- 2 a^* (k B_\infty e^{-i k . x_1}) \langle \xi_2, k B_\infty e^{-i k . x_1}  \rangle \\& + 2 a(k B_\infty e^{-i k . x_1}) \langle  k B_\infty e^{-i k . x_1} , \xi_2 \rangle + 2 a^*(k B_\infty e^{-i k . x_1}) \langle k B_\infty e^{-i k . x_1}, \xi_2 \rangle \\& - 2 \langle \xi_2, k B_\infty e^{-i k . x_1} \rangle a(k B_\infty e^{-i k . x_1}) ] (H_0^{(n)}+1)^{-1/2} \varphi_n \rangle \bigg| \\ & \lesssim
    \norm{(1-\Delta_{x_1})^{-1/2}(1-\chi_m(D_{x_1}))} \norm{\xi_2}_{L^2}  \\ & \ \ \ \times \sum_{n=0}^\infty n \eps \norm{(1-\Delta_{x_1})^{1/2} (H_0^{(n)}+1)^{-1/2} \varphi_n} \norm{a^\#(k B_\infty e^{- i k .x_1}) (H_0^{(n)}+1)^{-1/2} \varphi_n}\\
    &\leq \norm{(1-\Delta_{x_1})^{-1/2}(1-\chi_m(D_{x_1}))} \norm{\xi_2}_{L^2} \sum_{n=0}^\infty n \eps \norm{\varphi_n}^2\\
    &\leq  \norm{(1-\Delta_{x_1})^{-1/2}(1-\chi_m(D_{x_1}))} \norm{\xi_2}_{L^2} \norm{(N_1+1)\varphi}^2 ,
\end{align*}
which concludes, since  
$$
\norm{(1-\Delta_{x_1})^{-1/2}(1-\chi_m(D_{x_1}))} \underset{m \rightarrow \infty} \longrightarrow 0.
$$
\\ \\
\textbf{Approximation of $\mathscr B_{0}^5 (\xi)$.} \\
We define the symbol $b_0^{5} (\xi) $ as
$$b_{0}^{5} (\xi) = \frac{1}{ \sqrt{2}} \iint V_\infty(x-y) [\overline{\xi_1(x)} \overline{u(y)} u(x) u(y) - \overline{u(x)} \overline{u(y)} \xi_1(x) u(y)] \di x \di y.$$
We can express the latter as $b_0^{5,m}(\xi) (u,\alpha) = \langle (u,\alpha), \tilde b_1 (u,\alpha)^{\otimes_s 2} \rangle + \langle (u,\alpha)^{\otimes_s 2},\tilde b_2 (u,\alpha) \rangle$, with 
\begin{align*}
    \tilde b_1 (u,\alpha)^{\otimes_s 2} &= \left( \frac{1}{\sqrt{2}} \int \overline{\xi_1(x)} V_\infty(x-y) u(x) \di x u(y)  ,0 \right),\\
    \tilde b_2 (u,\alpha) &= \left( -V_\infty(x-y) (\xi_1 \otimes_s  u)(x,y), 0 \right).
\end{align*}
Define
\begin{align*}
    \tilde b_1^m (u,\alpha)^{\otimes_s 2} &= \left( \frac{1}{\sqrt{2}} \int \overline{\xi_1(x)} \chi_m(D_y) V_\infty(x-y) u(x) \di x  \chi_m(D_y) u(y)  ,0 \right),\\
    \tilde b_2^m (u,\alpha) &= \left( - \chi_m(D_y) V_\infty(x-y) \chi_m (D_y) (\xi_1 \otimes_s  u)(x,y), 0 \right).
\end{align*}
Since $\chi_m(D_y) V_\infty(x-y) \chi_m (D_y)$ is compact and uniformly bounded with respect to $x \in \R^d$, $\tilde b_1^m$ and $\tilde b_2^m$ define compact operators. Now, for $\varphi \in D(N_1) \cap D((H_0+1)^{1/2}),$
\begin{align*}
    &\abs{ \langle   \varphi ,  (b_0^5(\xi)-b_0^{5,m}(\xi))^{\mathrm{Wick}} \varphi \rangle  } \\ 
    &\lesssim  \abs{ \langle   \varphi ,  \iint \psi_y^*  \overline{\xi_1(x)} [V_\infty(x-y)-\chi_m(D_y) V_\infty(x-y) \chi_m(D_y)]  \psi_x \psi_y \di x \di y \varphi \rangle} \\
    &+     \abs{ \langle    \varphi ,  \iint \psi_y^*  \psi_x^* [V_\infty(x-y)-\chi_m(D_y) V_\infty(x-y) \chi_m(D_y)]  \xi_1(x) \psi_y \di x \di y \varphi \rangle}\\
    &\lesssim \sum_{n=0}^\infty n \eps \abs{ \langle \varphi_n , \int \overline{\xi_1(x)}(1-\chi_m(D_{y_1})) V_\infty(x-y_1) \chi_m(D_{y_1})  \psi_x  \di x \varphi_n  \rangle} \\
    &+ \sum_{n=0}^\infty n \eps \abs{ \langle  \varphi_n , \int \overline{\xi_1(x)} V_\infty(x-y_1) (1-\chi_m(D_{y_1}))  \psi_x  \di x \varphi_n  \rangle} 
    \\
    &+ \sum_{n=0}^\infty n \eps \abs{ \langle  \varphi_n , \int \psi_x^* (1-\chi_m(D_{y_1})) V_\infty(x-y_1) \chi_m(D_{y_1})  \xi_1(x)  \di x \varphi_n  \rangle} \\
    &+ \sum_{n=0}^\infty n \eps \abs{ \langle  \varphi_n , \int \psi_x^* V_\infty(x-y_1) (1-\chi_m(D_{y_1})) \xi_1(x)  \di x \varphi_n  \rangle} 
    \\
    &\lesssim  \sum_{n=0}^\infty n \eps \norm{(H_0^{(n)}+1)^{1/2}\varphi_n}^2 \norm{(H_0^{(n)}+1)^{-1/2} \int \overline{\xi_1(x)} V_\infty(x-y_1) \psi_x \di x } \\ & \ \ \ \ \ \ \ \ \ \ \ \ \times \norm{(1-\chi_m(D_{y_1}))(1-\Delta_{y_1})^{-1/2}} 
    \\
    &\lesssim \sum_{n=0}^\infty n \eps \norm{(H_0^{(n)}+1)^{1/2}\varphi_n}^2 \norm{(1-\Delta_{y_1})^{-1/2}(1-\chi_m(D_{y_1}))} \norm{\psi(\xi_1  V_\infty(\cdot-y_1)) (H_0^{(n)}+1)^{-1/2} }  \\
    &\lesssim \sum_{n=0}^\infty n \eps \norm{(H_0^{(n)}+1)^{1/2}\varphi_n}^2 \norm{(1-\Delta_{y_1})^{-1/2}(1-\chi_m(D_{y_1}))} \norm{\xi_1}_{H^1}  \norm{V_\infty}_{L^\infty}  \\
    &\lesssim  \norm{\xi_1}_{H^1}  \norm{(1-\Delta_{y_1})^{-1/2}(1-\chi_m(D_{y_1}))} \norm{(N_1+1) (H_0+1)^{1/2}\varphi}^2.
 \end{align*}
\end{proof}

\begin{lem} \label{dominationB0m}
We have, for every $m \in \N$, the following bound
$$
\norm{(H_0+1)^{-1/2}(N_1+1)^{-1} \mathscr B_0^m(\xi) (N_1+1)^{-1}(H_0+1)^{-1/2}}  \lesssim \norm{\xi}_{H^1 \oplus L^2}.
$$
\end{lem}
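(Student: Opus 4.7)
The plan is to decompose $\mathscr B_0^m(\xi) = \sum_{j=1}^5 \mathscr B_0^{j,m}(\xi)$ exactly as in the proof of Lemma \ref{lemmeb0} and to bound each piece separately, then sum. Since every $\mathscr B_0^{j,m}(\xi)$ is linear in $\xi$, the resulting estimate will automatically be linear in $\norm{\xi}_{H^1 \oplus L^2}$, as claimed.

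The key analytic inputs will be the standard bounds $\norm{a^\#(f)(N_2+1)^{-1/2}} \lesssim \norm{f}_{L^2}$ and $\norm{\psi^\#(g)(N_1+1)^{-1/2}} \lesssim \norm{g}_{L^2}$, together with the uniform-in-$m$ functional calculus estimates $\norm{\chi_m(D_x)} \leq 1$ and $\norm{\chi_m(D_x) D_x (1-\Delta_x)^{-1/2}} \leq 1$. The two factors of $(H_0+1)^{-1/2}$ will absorb either one space derivative $D_{x_j}$, one $N_2^{1/2}$ or one $\di\Gamma_1(-\Delta)^{1/2}$, while the two factors of $(N_1+1)^{-1}$ compensate for the $\psi, \psi^*$'s flanking each $\mathscr B_0^{j,m}$. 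For $j=1,2$ the estimate will mimic the KLMN argument of Lemma \ref{klmnbound} with $\xi_1$ in place of one $\psi^\#$, the derivatives $\chi_m(D_x) D_x$ being handled by inserting $(1-\Delta_x)^{1/2}(1-\Delta_x)^{-1/2}$. For $j=3,4$ the dependence on $\xi_2$ enters only through scalar products of the form $\langle \xi_2, f_{\sigma_0} e^{-ik.x}\rangle$ and $\langle \xi_2, kB_\infty e^{-ik.x}\rangle$, each pointwise bounded in $x$ by $\norm{\xi_2}_{L^2}$ times the $L^2$ norm of the corresponding form factor, so the remaining analysis is identical.

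The main obstacle will be the quartic-in-$\psi$ term $\mathscr B_0^{5,m}$ involving $V_\infty(x-y)$: its Wick form contains three $\psi^\#$'s, so the bookkeeping of $N_1$ powers must be done carefully. My plan is to work sector by sector on $\Hi_n$, use $\norm{V_\infty}_{L^\infty}<\infty$ from Lemma \ref{klmnbound}, and invoke the sub-estimate $\norm{\psi(\xi_1 V_\infty(\cdot-y_1))(H_0^{(n)}+1)^{-1/2}} \lesssim \norm{\xi_1}_{H^1}\norm{V_\infty}_{L^\infty}$ already appearing at the end of the proof of Lemma \ref{lemmeb0}. Each of the resulting contributions should then be dominated by $\norm{\xi_1}_{H^1}\norm{V_\infty}_{L^\infty} \sum_n n\eps \norm{(H_0^{(n)}+1)^{1/2}\varphi_n}^2$, with the extra $n\eps$ factor being precisely what a single $(N_1+1)^{-1}$ absorbs. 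Summing the five pieces and using $\norm{\xi_1}_{H^1}+\norm{\xi_2}_{L^2} \lesssim \norm{\xi}_{H^1 \oplus L^2}$ will close the argument.
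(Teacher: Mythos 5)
Your proposal is correct and corresponds to the first of the two routes the paper flags: a direct, term-by-term estimate of the five pieces $\mathscr B_0^{j,m}(\xi)$, reusing the $\psi^\#$, $a^\#$ and $\chi_m(D_x)$ bookkeeping from the proof of Lemma~\ref{lemmeb0}, but without the vanishing $(1-\chi_m(D_x))$ factors (which are replaced by $\norm{\chi_m(D_x)}\leq 1$). The paper's actual chosen proof, however, is the second route it mentions: one observes that each $\tilde b_0^{j,m}(\xi)$ is a compact operator satisfying $\sum_{j=1}^5 \norm{\tilde b_0^{j,m}} \lesssim \norm{\xi}_{H^1 \oplus L^2}$, and then invokes the abstract number estimate for Wick quantizations of compact symbols, \cite[Lemma 2.4]{ammari2008ahp}. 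Both routes need the same analytic inputs — linearity of each piece in $\xi$, the uniform bounds $\norm{\chi_m(D_x)}\leq 1$ and $\norm{\chi_m(D_x) D_x (1-\Delta_x)^{-1/2}}\leq 1$, $\norm{V_\infty}_{L^\infty}<\infty$, the Cauchy--Schwarz bounds $\abs{\langle \xi_2, f_{\sigma_0} e^{-ik.x}\rangle}\leq\norm{\xi_2}_{L^2}\norm{f_{\sigma_0}}_{L^2}$ (and similarly with $kB_\infty$), and, for the $V_\infty$ term, the sector-by-sector estimate $\norm{\psi(\xi_1\,V_\infty(\cdot-y_1))(H_0^{(n)}+1)^{-1/2}}\lesssim\norm{\xi_1}_{H^1}\norm{V_\infty}_{L^\infty}$. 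Your route makes the $(N_1+1)^{-1}$- and $(H_0+1)^{-1/2}$-weighted combinatorics fully explicit, which is the more transparent proof; the paper's route is a one-liner because it delegates precisely that bookkeeping to the ready-made lemma. There is no gap in your plan.
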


\begin{proof}
    One can prove it directly doing similar estimates as in the proof of Lemma \ref{lemmeb0}. Another way of seeing it is by applying \cite[Lemma 2.4]{ammari2008ahp}, considering we have the bound on the operator norms of the (compact) symbols
    $$
    \sum_{j=1}^5 \norm{\tilde b_0^{j,m}} \lesssim \norm{\xi}_{H^1 \oplus L^2}. 
    $$
\end{proof}

\begin{lem} \label{lemmeb0m} 
    We have the estimate on the classical symbols: for every $\xi \in H^1 \oplus L^2$ and $(u,\alpha) \in H^1 \oplus L^2$,
    $$
    \abs{b_0(\xi)(u,\alpha)-b_0^m(\xi)(u,\alpha)} \leq C_m (\norm{\xi}_{H^1 \oplus L^2})(\norm{\nabla u}^2_{L^2}+ \norm{\alpha}^2_{L^2}+1)(\norm{u}_{L^2}^4+1), 
    $$
    where 
    $$
    C_m (\norm{\xi}_{H^1 \oplus L^2}) \underset{m \rightarrow \infty} \longrightarrow 0.
    $$
\end{lem}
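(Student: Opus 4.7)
The plan is to follow the same five-term decomposition $b_0(\xi) = \sum_{j=1}^5 b_0^j(\xi)$ and $b_0^m(\xi) = \sum_{j=1}^5 b_0^{j,m}(\xi)$ used in the proof of Lemma~\ref{lemmeb0}, and to bound each classical difference $b_0^j(\xi)(u,\alpha) - b_0^{j,m}(\xi)(u,\alpha)$ separately. By construction, each $b_0^{j,m}$ differs from $b_0^j$ only by the insertion of one or two Fourier cutoffs $\chi_m(D_x)$ (or $\chi_m(D_y)$) at prescribed places, so every difference can be rewritten as an $L^2$ pairing containing exactly one factor $(1-\chi_m(D_\bullet))$. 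The guiding Fourier multiplier estimate is
\begin{equation*}
    \eta_m := \norm{(1-\chi_m(D_x))(1-\Delta_x)^{-1/2}}_{\mathcal{L}(L^2)} \xrightarrow[m \to \infty]{} 0,
\end{equation*}
which follows at once from $\norm{(1-\chi_m(k))(1+\abs{k}^2)^{-1/2}}_{L^\infty} = O(1/m)$.

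For $j=1,2,3,4$, the strategy is: (i) write the difference as an integral of the form $\int \overline{u(x)} (1-\chi_m(D_x))[\cdots] \di x$, where the bracket contains products of $g_\alpha(x) := \langle \alpha, kB_\infty e^{-ik.x} \rangle$, of scalar functions built from $f_{\sigma_0}$ and $\alpha$, of $\xi_1$, and possibly one derivative $D_x \xi_1$; (ii) transfer $(1-\chi_m(D_x))$ onto $u$ (or onto the product $u\,\overline{g_\alpha}$) using self-adjointness of $\chi_m(D_x)$; (iii) apply Cauchy--Schwarz and bound the $L^\infty_x$ norms of $g_\alpha$ and $\nabla g_\alpha$ by $\norm{\alpha}_{L^2}$, which is valid because $kB_\infty$ and each component of $k^2 B_\infty$ belong to $L^2(\R^d)$ for $d \geq 3$. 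This yields bounds of the shape $\eta_m\, P_j(\norm{\xi}_{H^1\oplus L^2})\, \norm{u}_{H^1} \, \norm{\alpha}_{L^2}^{\ell_j}$ with $\ell_j \in \{0,1,2\}$, which are absorbed into the desired right-hand side by $\norm{u}_{H^1}^2 \leq 2(\norm{\nabla u}_{L^2}^2 + \norm{u}_{L^2}^2)$ and Young's inequality.

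The only part requiring slightly more care is $b_0^5$, the quartic term containing the potential $V_\infty$. Writing
\begin{equation*}
V_\infty(x-y) - \chi_m(D_y) V_\infty(x-y) \chi_m(D_y) = (1-\chi_m(D_y))\, V_\infty(x-y) + \chi_m(D_y)\, V_\infty(x-y)\, (1-\chi_m(D_y)),
\end{equation*}
using $V_\infty \in L^\infty$, Cauchy--Schwarz in $(x,y)$, and transferring the $y$-multiplier onto $u(y)$ as above produces a bound of order $\eta_m \norm{\xi_1}_{H^1} \norm{u}_{L^2}^2\, \norm{u}_{H^1}$. The factor $\norm{u}_{L^2}^2 \norm{u}_{H^1}$ is absorbed into $(\norm{\nabla u}_{L^2}^2 + \norm{\alpha}_{L^2}^2 + 1)(\norm{u}_{L^2}^4 + 1)$ by another use of Young's inequality together with $\norm{u}_{H^1}^2 \leq 2(\norm{\nabla u}_{L^2}^2 + \norm{u}_{L^2}^2)$. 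Summing the five contributions and setting $C_m(\norm{\xi}_{H^1 \oplus L^2}) := \eta_m \, P(\norm{\xi}_{H^1\oplus L^2})$ with $P$ a polynomial, the statement follows. The main obstacle is purely bookkeeping: collecting the different polynomial weights in $u$ and $\alpha$ produced by each $b_0^j$ so that they all uniformly fit under the single envelope on the right-hand side of the inequality.
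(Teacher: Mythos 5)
Your approach is a direct classical-symbol estimate; the paper instead obtains this lemma almost for free from the already-proved operator estimate of Lemma~\ref{lemmeb0}, by recognizing $(H_0+1)(N_1+1)^2$ as the Wick quantization of an explicit $\eps$-dependent polynomial symbol and evaluating everything on coherent states $W\bigl(\tfrac{\sqrt{2}(u,\alpha)}{i\eps}\bigr)\Omega$ via \cite[Prop.\ 2.9]{ammari2008ahp}. That transfer is the whole content of the paper's proof, and it entirely bypasses the analytic issues a direct classical estimate has to confront.

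And there are genuine issues with the direct route as you wrote it. The first is concrete: you claim that each component of $k^2 B_\infty$ belongs to $L^2(\R^d)$, which is false. Since $B_\infty(k)\sim |k|^{-(d+3)/2}$ at infinity, one has $|k|^2 B_\infty(k)\sim |k|^{-(d-1)/2}$, and $\int_{|k|\geq 1}|k|^{-(d-1)}\,\di k = \int_1^\infty \di r = \infty$ in every dimension. Consequently the inequality $\norm{\nabla g_\alpha}_{L^\infty}\lesssim \norm{\alpha}_{L^2}$ is not valid, and step (iii) of your scheme breaks down exactly where the derivative terms appear. The second issue is structural and survives any repair of the first: the single multiplier $\eta_m=\norm{(1-\chi_m(D_x))(1-\Delta_x)^{-1/2}}$ is the wrong quantity for the $D_x$-terms, because $C_m$ must depend only on $\norm{\xi}_{H^1\oplus L^2}$, not on $\xi$ itself. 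Transferring $(1-\chi_m(D_x))$ onto $u$ leaves you with $D_x(g_\alpha\,\xi_1)$ in $L^2$, which cannot be controlled without $\nabla g_\alpha$; whereas keeping the cutoff paired with the derivative forces you to bound $\norm{(1-\chi_m(D_x))D_x u}_{L^2}$ or $\norm{(1-\chi_m(D_x))D_x\xi_1}_{L^2}$, and the corresponding multiplier $(1-\chi_m(k))\,k\,(1+|k|^2)^{-1/2}$ is bounded but does \emph{not} tend to zero, so the resulting bound is not uniform over $\norm{u}_{H^1}\leq R$ or $\norm{\xi_1}_{H^1}\leq R$. The decaying quantity one actually needs is $\norm{(1-\chi_m(D_x))D_x(1-\Delta_x)^{-1}}$ (the paper uses exactly this in Lemma~\ref{lemmeb0}, at the quantum level, where the weight $(H_0+1)^{-1/2}$ acting on both sides supplies the full $(1-\Delta_x)^{-1}$), i.e.\ a "two-sided" $H^1$ weight, and the envelope on the right-hand side of the lemma must then absorb $\norm{u}_{H^1}\norm{\xi_1}_{H^1}$ rather than a single $\norm{u}_{H^1}$. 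Your proposal does not set this bookkeeping up, and closing the gap directly would moreover require replacing the false $L^\infty$ bound on $\nabla g_\alpha$ by a finer Hölder estimate (e.g.\ $\nabla g_\alpha\in L^d$ paired with $\xi_1\in L^{2d/(d-2)}$). In short, the plan is salvageable in principle but is not correct as written, and the paper's coherent-state argument is both shorter and cleaner.
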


\begin{proof}
Proposition \ref{lemmeb0} tells us that for any $\varphi \in D( (H_0+1)^{1/2}(N_1+1))$,
$$
\abs{\langle  \varphi ,(b_0 (\xi) - b_0^m (\xi))^{\mathrm{Wick}} \varphi \rangle} \leq C_m(\norm{\xi}_{H^1 \oplus L^2}) \abs{\langle \varphi, (H_0+1)(N_1+1)^2 \varphi \rangle }.
$$
Furthermore, one can recognize $ (H_0+1)(N_1+1)^2$ as the Wick quantization of an $\eps$-dependent symbol:
\begin{align*}
 (H_0+1)(N_1+1)^2 &= \bigg(  \langle u,-\Delta u\rangle (\langle u,u \rangle^2 +(2+3 \eps) \langle u,u \rangle +(\eps+1)^2) \\
 &\ \ \ \ \ \ \ \ \ +  (\langle \alpha,\alpha \rangle +1) (\langle u, u\rangle^2 +(2+\eps)\langle u,u\rangle +1) \bigg)^{\mathrm{Wick}}.
\end{align*}
According to \cite[Proposition 2.9]{ammari2008ahp}, applying the latter to the coherent states $W \left( \frac{\sqrt{2} (u,\alpha) }{i \eps} \right)  \Omega$ yields
\begin{align*}
\abs{b_0(\xi)(u,\alpha)-b_0^m(\xi)(u,\alpha)}& \leq C_m(\norm{\xi}_{H^1 \oplus L^2}) |  \langle u,-\Delta u\rangle (\langle u,u \rangle^2 +(2+3 \eps) \langle u,u \rangle +(\eps+1)^2) \\
 &\ \ \ \ \ \ \ \ \ +  (\langle \alpha,\alpha \rangle +1) (\langle u, u\rangle^2 +(2+\eps)\langle u,u\rangle +1) | \\& \leq C_m (\norm{\xi}_{H^1 \oplus L^2})(\norm{\nabla u}^2_{L^2}+ \norm{\alpha}^2_{L^2}+1)(\norm{u}_{L^2}^4+1).
\end{align*}
\end{proof}

\section{Uniqueness results of the transport equation for measures} \label{appendixliouville}

The setting is the following: we consider  a separable Hilbert space $(\mathscr Z,\langle \cdot,\cdot \rangle_{\mathscr Z})$, and a continuous vector field $X : \R \times \mathscr Z \rightarrow \mathscr Z $ bounded on bounded sets. We are interested in the initial value problem
$$
\frac{\di z}{\di t} = X(t,z), \  z(0) = z_0 \in \mathscr Z.
$$
Since in general $\mathscr Z$ is infinite-dimensional, we need to define an appropriate set of test functions. This leads to the following definition:
\begin{defn}
    We say that $\varphi : \mathscr Z \rightarrow \C$ is a smooth cylindrical function whenever there exists $n \in \N$, $(e_1,...,e_n) \in \mathscr Z^n$ and $\psi \in \mathscr C^\infty_0 (\R \times \R^n)$ such that for all $z \in \mathscr Z$,
    $$
    \varphi (t,z) = \psi (t, \mathrm{Re}\langle z, e_1 \rangle_{\mathscr Z},...,\mathrm{Re}\langle z, e_n \rangle_{\mathscr Z}).
    $$
    We denote by $\mathscr C^\infty_{0,\text{cyl}} (\R \times \mathscr Z)$ the set of cylindrical functions.
\end{defn}

    Using the same notations as above, the gradient of a cylindrical test function $\varphi \in \mathscr C^\infty_{0,\text{cyl}} (\R \times \mathscr Z)$ can be expressed as
    $$
    \nabla_{\mathscr Z} \varphi (t,z) = \sum_{i=1}^n \partial_i \psi (t, \mathrm{Re}\langle z, e_1 \rangle_{\mathscr Z},...,\mathrm{Re}\langle z, e_n \rangle_{\mathscr Z}) e_i.
    $$

\begin{rem}
    Here, the gradient is computed using the real structure of $\mathscr Z$, i.e. in the real Hilbert space $(\mathscr Z, \mathrm{Re}\langle \cdot, \cdot \rangle_{\mathscr Z})$.
\end{rem}

\begin{prop} \label{liouvilleequivalence}
Consider a continuous vector field $X: \R \times \mathscr Z \rightarrow \mathscr Z $, bounded on bounded sets. Let $t \in \R \mapsto \mu_t \in \mathscr P (\mathscr Z)$ weakly narrowly continuous such that
$$
t \mapsto \int_{\mathscr Z} \norm{X(t,z)}_{\mathscr Z} \di \mu_t (z) \di t \in L^1_{\mathrm{loc}}(\R).
$$
Then for every interval $I$ whose interior contains $0$, there is an equivalence between:
\begin{enumerate}
    \item  $\{ \mu_t  \}_{t \in I }$ satisfies the Liouville equation: $\forall \varphi \in  \mathscr C^\infty_{0,\text{cyl}} (I \times \mathscr Z)$,
    $$
    \int_I \int_{\mathscr Z} \left(  \partial_t \varphi (t,z) + \mathrm{Re} \langle  X(t,z) ,  \nabla_{\mathscr Z} \varphi (t,z)   \rangle_{\mathscr Z}   \right) \di \mu_t (z) \di t = 0.
    $$
    \item $\{ \mu_t \}_{t \in I}$ satisfies the characteristic equation: $\forall t \in I, \forall y \in \mathscr Z$,
    $$
    \mu_t \left( e^{2 i \pi \mathrm{Re}\langle y, \cdot \rangle_{\mathscr Z}  } \right) =  \mu_0 \left( e^{2 i \pi \mathrm{Re}\langle y, \cdot \rangle_{\mathscr Z}} \right  ) + 2 i \pi \int_0^t \mu_s \left( \mathrm{Re}\langle X(s,\cdot) , y \rangle_{\mathscr Z} e^{2 i \pi \mathrm{Re}\langle y, \cdot \rangle_{\mathscr Z}} \right) \di s.
    $$
\end{enumerate}
\end{prop}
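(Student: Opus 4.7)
The plan is to prove both implications separately, exploiting the fact that the Fourier characters $z\mapsto e^{2i\pi\mathrm{Re}\langle y,z\rangle}$ generate, by Fourier inversion on cylindrical coordinates, the whole class $\mathscr C^\infty_{0,\mathrm{cyl}}(I\times\mathscr Z)$. Under our assumptions, both sides of the identities in (1) and (2) are well defined: the hypothesis $t\mapsto\int_{\mathscr Z}\|X(t,z)\|_{\mathscr Z}\di\mu_t(z)\in L^1_{\mathrm{loc}}(\R)$ gives integrability of $\mathrm{Re}\langle X(t,z),\nabla_{\mathscr Z}\varphi\rangle$ (because $\nabla_{\mathscr Z}\varphi$ is bounded for $\varphi\in\mathscr C^\infty_{0,\mathrm{cyl}}$) and of $\mathrm{Re}\langle X(s,\cdot),y\rangle e^{2i\pi\mathrm{Re}\langle y,\cdot\rangle}$.

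For (2)$\Rightarrow$(1), take $\varphi(t,z)=\psi(t,\mathrm{Re}\langle z,e_1\rangle,\ldots,\mathrm{Re}\langle z,e_n\rangle)$ with $\psi\in\mathscr C^\infty_0(\R\times\R^n)$ and write $\psi(t,x)=\int_{\R^n}\hat\psi(t,\eta)e^{2i\pi x\cdot\eta}\di\eta$ with $\hat\psi(t,\cdot)\in\mathscr S(\R^n)$. Setting $y_\eta=\sum_{i=1}^n\eta_i e_i\in\mathscr Z$, one has $\varphi(t,z)=\int_{\R^n}\hat\psi(t,\eta)e^{2i\pi\mathrm{Re}\langle y_\eta,z\rangle}\di\eta$ and $\nabla_{\mathscr Z}\varphi(t,z)=\int_{\R^n}\hat\psi(t,\eta)\,2i\pi y_\eta\,e^{2i\pi\mathrm{Re}\langle y_\eta,z\rangle}\di\eta$. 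Multiplying the characteristic equation (2) applied to $y_\eta$ by $\partial_t\hat\psi(t,\eta)$ and using integration by parts in $t$, then applying Fubini (justified by the Schwartz decay of $\hat\psi$ and the $L^1_{\mathrm{loc}}$ bound on $X$), one recovers exactly the Liouville identity of (1).

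For (1)$\Rightarrow$(2), fix $y\in\mathscr Z$ and $t\in I$. Choose a smooth time cutoff $\chi_\delta\in\mathscr C^\infty_0(I)$ converging pointwise to $\mathbf 1_{[0,t]}$, and a spatial cutoff $\eta\in\mathscr C^\infty_c(\R,[0,1])$ with $\eta\equiv1$ near $0$, $\eta_R(\cdot)=\eta(\cdot/R)$. Then
\[
\varphi_R(s,z)=\chi_\delta(s)\,\eta_R(\mathrm{Re}\langle y,z\rangle)\,e^{2i\pi\mathrm{Re}\langle y,z\rangle}
\]
has real and imaginary parts in $\mathscr C^\infty_{0,\mathrm{cyl}}(I\times\mathscr Z)$. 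Plugging into (1), the time derivative produces $\chi_\delta'$ and the gradient produces two contributions: one where the derivative hits $e^{2i\pi\mathrm{Re}\langle y,z\rangle}$, giving the target integrand, and one where it hits $\eta_R$, contributing a remainder of pointwise size $\|y\|\|X(s,z)\|_{\mathscr Z}\|\eta'\|_\infty/R$. Passing $R\to\infty$ using the hypothesis on $X$ and dominated convergence, and then letting $\chi_\delta\to\mathbf 1_{[0,t]}$ using weak narrow continuity of $s\mapsto\mu_s$, yields the integral form of (2) at time $t$.

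The main obstacle lies in the limit $R\to\infty$ in the direction (1)$\Rightarrow$(2): one must control the remainder term
\[
\int_I\chi_\delta(s)\int_{\mathscr Z}\tfrac{1}{R}\eta'\!\left(\tfrac{\mathrm{Re}\langle y,z\rangle}{R}\right)\mathrm{Re}\langle y,X(s,z)\rangle e^{2i\pi\mathrm{Re}\langle y,z\rangle}\di\mu_s(z)\di s,
\]
which is where the integrability assumption on $\|X\|$ against $\mu_t$ is essential, together with the uniform pointwise bound $\tfrac{1}{R}|\eta'|\to 0$. All other steps (justification of Fubini, identifying the $\chi_\delta'$ term with a Stieltjes integral via the continuity of $s\mapsto\mu_s(e^{2i\pi\mathrm{Re}\langle y,\cdot\rangle})$) are routine once this bound is established.
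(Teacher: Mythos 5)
The paper states Proposition~\ref{liouvilleequivalence} without proof: it is cited as a known fact from the Wigner‑measure literature (cf.\ \cite{ammari2011jmpa,MR3721874}), so there is no internal proof to compare against. That said, your blind proof is a correct self-contained derivation and it is, as far as I can tell, the standard one.

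Both directions are handled cleanly. For $(2)\Rightarrow(1)$ the Fourier-inversion trick on the cylindrical coordinates $\eta\mapsto y_\eta=\sum_i\eta_i e_i$ reduces the Liouville identity to an integration by parts in $t$ of $\hat\psi(t,\eta)\,\mu_t(e^{2i\pi\mathrm{Re}\langle y_\eta,\cdot\rangle})$, and you correctly identify that the needed absolute continuity of $t\mapsto\mu_t(e^{2i\pi\mathrm{Re}\langle y_\eta,\cdot\rangle})$ is exactly what $(2)$ provides, while the Schwartz decay of $\hat\psi$ together with $\int_{\mathscr Z}\|X(t,z)\|\,\di\mu_t\in L^1_{\mathrm{loc}}$ justifies Fubini. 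For $(1)\Rightarrow(2)$ the choice of test function $\chi_\delta(s)\,\eta_R(\mathrm{Re}\langle y,z\rangle)\,e^{2i\pi\mathrm{Re}\langle y,z\rangle}$ is the right one, and you correctly isolate the only delicate term (the $\tfrac{1}{R}\eta'$ remainder), which indeed vanishes thanks to the integrability hypothesis on $X$. Two minor points worth making explicit in a final write-up: first, the Liouville equation is stated for a priori complex-valued cylindrical test functions, so one should either work with real and imaginary parts separately (as you briefly note) or fix the convention that $v\mapsto\mathrm{Re}\langle X,v\rangle$ is extended $\C$-linearly to complex multiples of the $e_i$; second, if $t$ is a boundary point of $I$ the cutoff $\chi_\delta$ cannot be supported in the interior and still approach $\mathbf 1_{[0,t]}$, but the identity $(2)$ then follows for such $t$ by continuity of $s\mapsto\mu_s(e^{2i\pi\mathrm{Re}\langle y,\cdot\rangle})$, which is guaranteed by weak narrow continuity. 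Neither point affects the correctness of your argument.
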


Let us now state a result obtained in \cite{afs24}. We endow $\mathscr C (\R,\mathscr Z)$ with the compact open topology, and denote $e_t : (x,\gamma) \in \mathscr Z \times \mathscr C (\R,\mathscr Z) \mapsto \gamma(t) \in \mathscr Z $.

\begin{prop} (global superposition principle) \label{globalsuperposition}
    Consider a continuous vector field $X: \R \times \mathscr Z \rightarrow \mathscr Z $, bounded on bounded sets. Let $t \in \R \mapsto \mu_t \in \mathscr P (\mathscr Z)$ be a weakly narrowly continuous solution of the Liouville equation such that
$$
t \mapsto \int_{\mathscr Z} \norm{b(t,z)}_{\mathscr Z} \di \mu_t (z) \di t \in L^1_{\mathrm{loc}}(\R).
$$
Then there exists a probability measure $\nu \in \mathscr P (\mathscr Z \times \mathscr C(\R,\mathscr Z))$ such that
\begin{itemize}
    \item $\di \nu (z,\gamma)$ a.s., $\gamma$ is an integral solution of the initial value problem such that $\gamma(0) = x$.
    \item $\mu_t = (e_t)_* \nu$ for all $t \in \R$.
\end{itemize}
\end{prop}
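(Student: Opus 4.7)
The plan is to reduce to Ambrosio's classical finite-dimensional superposition principle via a projective approximation over finite-dimensional subspaces of $\mathscr Z$.

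Fix an orthonormal basis $(e_k)_{k\geq 1}$ of the real Hilbert space $(\mathscr Z,\mathrm{Re}\langle\cdot,\cdot\rangle_{\mathscr Z})$ and let $\pi_n:\mathscr Z\to V_n\simeq\R^n$ be the orthogonal projection onto the span of the first $n$ basis vectors. Push the data to $V_n$: set $\mu^n_t:=(\pi_n)_\#\mu_t$, disintegrate $\mu_t=\int_{V_n}\mu^n_{t,y}\,d\mu^n_t(y)$, and define
$$
X_n(t,y):=\int_{\pi_n^{-1}(y)}\pi_n(X(t,z))\,d\mu^n_{t,y}(z).
$$
Testing the Liouville equation for $\mu_t$ against cylindrical functions of the form $\chi(t)\psi(\pi_n\cdot)$ and using the identity $\nabla_{\mathscr Z}(\psi\circ\pi_n)(z)=\pi_n^{*}\nabla\psi(\pi_n z)$, one verifies that $\mu^n_t$ is a weakly narrowly continuous solution of the finite-dimensional Liouville equation driven by $X_n$ on $V_n$. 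Ambrosio's classical theorem then produces $\nu^n\in\mathscr P(V_n\times\mathscr C(\R,V_n))$ concentrated on integral curves of $X_n$ with $(e_t)_\#\nu^n=\mu^n_t$ and $\gamma(0)=y$ $\nu^n$-a.s. Lifted through the isometric embedding $V_n\hookrightarrow\mathscr Z$, each $\nu^n$ becomes a probability measure on $\mathscr Z\times\mathscr C(\R,\mathscr Z)$.

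Next, establish tightness of $\{\nu^n\}$ on $\mathscr Z\times\mathscr C(\R,\mathscr Z)$ (with the compact-open topology on the second factor). Spatial tightness of each time marginal is inherited from $\mu_t$ since $\pi_n$ is nonexpansive. Equicontinuity of paths is obtained from the pathwise identity $\gamma(t)-\gamma(s)=\int_s^t X_n(r,\gamma(r))\,dr$ (valid $\nu^n$-a.s.), which, together with Jensen's inequality and the identification of $X_n$ as a conditional expectation, yields the uniform bound
$$
\int \sup_{\substack{|t-s|\le\delta\\ s,t\in[-T,T]}} \|\gamma(t)-\gamma(s)\|_{\mathscr Z}\,d\nu^n \le \int_{-T}^{T}\int_{\mathscr Z}\|X(r,z)\|_{\mathscr Z}\,d\mu_r(z)\,dr,
$$
finite by hypothesis. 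A Chebyshev-Ascoli-Arzelà argument then gives tightness, hence narrow convergence along a subsequence $\nu^n\to\nu$. The identity $(e_t)_\#\nu=\mu_t$ passes to the limit by continuity of $e_t$, and $\gamma(0)=z$ persists.

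The main obstacle is showing that $\nu$-a.s.\ $\gamma$ solves the original ODE, not just the projected one. For a cylindrical test function $\varphi$ depending on the first $N$ coordinates and a rational $t$, consider
$$
F_{\varphi,t}(\gamma):=\varphi(\gamma(t))-\varphi(\gamma(0))-\int_0^t\mathrm{Re}\langle\nabla_{\mathscr Z}\varphi(\gamma(s)),X(s,\gamma(s))\rangle_{\mathscr Z}\,ds,
$$
which is continuous and bounded on path-sets of uniformly controlled oscillation (using continuity of $X$ and its boundedness on bounded sets). For $n\ge N$, integrating against $\nu^n$ would vanish if $\pi_N X_n$ equaled $\pi_N X$; the defect is at most $\int_0^t\int\|\pi_N(X-X_n)\|\,d\mu^n_s\,ds$, which tends to $0$ as $n\to\infty$ by the reverse martingale convergence theorem applied to the conditional expectations defining $X_n$. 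Passing to the limit over a countable determining family of pairs $(\varphi,t)$ forces $F_{\varphi,t}=0$ $\nu$-a.s.; standard density and vector-valued Lebesgue differentiation arguments upgrade this weak formulation to the strong integral ODE $\gamma(t)=\gamma(0)+\int_0^t X(s,\gamma(s))\,ds$ holding $\nu$-a.s., completing the construction.
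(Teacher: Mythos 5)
The paper does not actually prove this proposition: it is quoted verbatim (typos included) from the reference [AFS24], so there is no internal proof to compare against. Your finite-dimensional projection scheme --- push $\mu_t$ and a conditional-expectation average of $X$ onto $V_n$, apply the finite-dimensional superposition principle, pass to the limit in $n$ --- is indeed the standard strategy in that line of work (Ambrosio's theorem lifted to a separable Hilbert space as in Ammari--Liard and [AFS24]), and the martingale-convergence identification of $\pi_N X_n\to\pi_N X$ in $L^1(\mu_s)$ is the right mechanism for recovering the unprojected vector field (it is the increasing-filtration martingale convergence theorem, since $\sigma(\pi_n)$ increases to the Borel $\sigma$-algebra, not the reverse one).

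The genuine gap is in the tightness step on $\mathscr C(\R,\mathscr Z)$ with the norm topology. First, the displayed bound $\int\sup_{|t-s|\le\delta}\norm{\gamma(t)-\gamma(s)}\,\di\nu^n\le\int_{-T}^{T}\int\norm{X(r,z)}\,\di\mu_r\,\di r$ has a right-hand side independent of $\delta$, so it is not a modulus of equicontinuity; under the mere $L^1_{\mathrm{loc}}$ hypothesis on $t\mapsto\int\norm{X(t,z)}\di\mu_t(z)$ one only controls $\int\norm{\gamma(t)-\gamma(s)}\di\nu^n$ for each fixed pair $(s,t)$, and no Chebyshev argument upgrades this to a uniform modulus for a large-$\nu^n$-measure set of paths (this is already delicate in $\R^d$, where Ambrosio needs local compactness or superlinear integrability). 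Second, and more fundamentally, Ascoli--Arzel\`a in $\mathscr C([-T,T],\mathscr Z)$ with $\mathscr Z$ infinite dimensional requires a compact containment condition --- the paths must stay in a single norm-compact subset of $\mathscr Z$ for \emph{all} $t\in[-T,T]$ on a set of uniformly large $\nu^n$-measure --- and this does not follow from tightness of the individual time marginals $(e_t)_\#\nu^n$ plus equicontinuity: an equicontinuous path touching a compact set at a dense set of times is only confined to a bounded neighbourhood of it, which is not precompact in norm. This is exactly why the cited references carry out the compactness argument on balls of $\mathscr Z$ endowed with the weak topology (where bounded closed sets are metrizable compacts, so Prokhorov and Ascoli--Arzel\`a apply), obtain a limit measure supported on weakly continuous curves, and only afterwards upgrade those curves to norm-continuous integral solutions using the equation itself. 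Your proof needs either that detour or a substitute for compact containment; as written the passage ``a Chebyshev--Ascoli--Arzel\`a argument then gives tightness'' does not go through. A smaller but real issue of the same flavour: the functionals $F_{\varphi,t}$ are continuous but unbounded on path space (since $X$ is only bounded on bounded sets), so exchanging $\lim_n$ with $\int\cdot\,\di\nu^n$ requires a truncation and uniform integrability argument that you gesture at but do not supply.
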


\bibliographystyle{plain}

\end{document}